%% file: main.tex
\documentclass[a4paper,UKenglish,cleveref,autoref, thm-restate]{lipics-v2021}

\nolinenumbers

\usepackage{xspace}
\usepackage{graphicx,tikz}
\usetikzlibrary{arrows.meta,decorations.pathreplacing,calc}
\usepackage{amsmath,amssymb}
\usepackage[capitalise]{cleveref}
\usepackage{amsthm}
\usepackage{hyperref}
\usepackage[dvipsnames]{xcolor}
\usepackage{algorithm}
\usepackage[noend]{algpseudocode}
\usepackage{float}
\usepackage{subcaption}
\usepackage{algorithm}
\usepackage{thmtools}
\usepackage[para]{footmisc}
\usepackage{bm}
\usepackage{amsbsy}
\usepackage{mathtools}
\usepackage{framed}

\newcommand\algorithmicupon{\textbf{upon}}
\algblockdefx{Upon}{EndUpon}[1]{\algorithmicupon\ #1\ \textbf{do}}{}
\algnotext{EndUpon}

\crefname{ALG@line}{line}{lines}
\crefalias{ALG@line}{line}

\makeatletter
\AddToHook{env/algorithmic/before}{\def\@currentcounter{ALG@line}}
\makeatother

\algblockdefx[EVENT]{Upon}{EndUpon}[1]
  {\textbf{upon }#1\textbf{ do}}
  {\textbf{end upon}}
\algtext*{EndUpon}

\crefname{algorithm}{algorithm}{algorithms}
\Crefname{algorithm}{Algorithm}{Algorithms}

\crefname{theorem}{theorem}{theorems}
\Crefname{theorem}{Theorem}{Theorems}

\crefname{lemma}{lemma}{lemmas}
\Crefname{lemma}{Lemma}{Lemmas}

\crefname{proposition}{proposition}{propositions}
\Crefname{proposition}{Proposition}{Propositions}

\crefname{corollary}{corollary}{corollaries}
\Crefname{corollary}{Corollary}{Corollaries}

\crefname{definition}{definition}{definitions}
\Crefname{definition}{Definition}{Definitions}

\crefname{remark}{remark}{remarks}
\Crefname{remark}{Remark}{Remarks}

\title{Seamless Reconfiguration for DAG BFT} 

\author{Michael Yiqing Hu}{National University of Singapore, Singapore}{}{}{}

\author{Leander Jehl}{University of Stavanger, Norway}{}{}{}

\author{Paul Franke-Bergmann}{Friedrich-Alexander-Universität Erlangen-Nürnberg, Germany}{}{}{}

\author{Jialin Li}{National University of Singapore, Singapore}{}{}{}

\author{Christian Berger}{Friedrich-Alexander-Universität Erlangen-Nürnberg, Germany}{}{}{}

\authorrunning{M.\,Y. Hu, L. Jehl, P. Franke-Bergmann, J. Li, and C. Berger}

\Copyright{Michael Yiqing Hu, Leander Jehl, Paul Franke-Bergmann, Jialin Li, and Christian Berger}

\ccsdesc[100]{} 

\keywords{Dummy keyword} 

\category{} 

\relatedversion{} 

\begin{document}

\hypersetup{pageanchor=false}

\maketitle
\begin{abstract}

Byzantine Atomic Broadcast in Asynchronous networks has been studied extensively for decades. The FLP impossibility result rules out deterministic consensus in a fully asynchronous setting, motivating randomized protocols that combine reliable broadcast with private coins to achieve termination with probability one.

More recently, DAG-Rider popularized a new abstraction in which processes continuously reliably broadcast blocks to construct a directed acyclic graph (DAG) and then locally derive a total order using a randomized perfect coin. This separation of dissemination from ordering has renewed interest in asynchronous DAG-based Byzantine fault tolerance and inspired numerous protocols that improve expected latency or throughput.

Reconfiguration, however, remains largely unexplored. Existing DAG-based BFT protocols generally assume a static set of processes, whereas deployed systems must occasionally add or remove nodes in response to operational demands. Existing solutions are not seamless; they either pause the system or delay membership changes until a predetermined epoch boundary. This may be impractical and can introduce downtime. 
However, a seamless solution is non-trivial; agreement on when to start utilizing a new configuration may yield a logical round number that was already surpassed by the concurrent asynchronous dissemination.

In this work, we present a seamless reconfiguration protocol for asynchronous DAG-based Byzantine atomic broadcast. Our protocol allows the DAG to continue growing while the membership and corresponding quorum thresholds change, without halting the system.
To the best of our knowledge, ours is the first protocol to provide seamless reconfiguration for DAG-based Byzantine atomic broadcast in a fully asynchronous network.
%
%

\vfill

\end{abstract}

\clearpage
\setcounter{page}{1}
\hypersetup{pageanchor=true}

\input{new-sections/Introduction}
\input{new-sections/Background}
\input{new-sections/Strawman_solutions}
\input{new-sections/Problem_def}

\input{new-sections/Our_sol}

\input{new-sections/Proofs}
\input{new-sections/Conclusion}





\bibliography{biblo}

\appendix
\include{Appendix/Full-algos}

\include{Appendix/Full-proofs}

\end{document}

%% file: new-sections/Introduction.tex
\section{Introduction}
\label{sec:introduction}

The search for scalable Asynchronous Byzantine atomic broadcast (BAB) protocols has culminated in DAG-based protocols~\cite{shoalplusplus,narwhal,lemonshark,dagrider,shoal,bullshark}, which decouple transaction dissemination from consensus:
a dissemination layer continuously reliably broadcasts blocks and assembles them, in a sequence of logical rounds, with each message referencing a quorum threshold number of blocks from the previous round.
A separate consensus layer then selects a monotonic canonical list of \emph{leader} blocks from the DAG.
Ordering each block's causal history in the order of that list, yields the total order.
The consensus layer utilizes what has already been disseminated and requires no additional communication.
To overcome the FLP impossibility result~\cite{flp}, leader blocks are selected utilizing a randomized perfect coin~\cite{Boneh2001ShortSF,cachincoin,Libert2014BornAR,Shoup2000PracticalTS}, yielding expected liveness with probability one.


Existing work on DAG-based protocols largely assumes static membership, i.e. the set of processes is fixed and known to all.
Deployed systems, however, do not run on a fixed set of processes:
operators must add and remove nodes in response to operational demands and failures.
This requires more than informing processes of a change:
quorum thresholds must be adjusted, and joining and leaving processes need a procedure for taking up and relinquishing their protocol tasks.
Deployments today, therefore, either pause the protocol while the change is installed or defer it to a predetermined epoch boundary~\cite{aptos_epoch,clement2026,sui_platform}.

However, both options surrender the property that makes these protocols attractive in the first place.
Dissemination achieves throughput by never waiting for agreement, so halting it to install a membership change wastes that advantage;
a reconfiguration is usually prompted by the very failure it is meant to repair.
We would rather have reconfiguration be \emph{seamless}:
dissemination should never need to stop on its account, so correct processes keep producing and delivering blocks and advancing rounds while the membership and its quorum thresholds change, and none waits for a predetermined boundary.


The classical way to reconfigure a long-lived replicated service routes the change through the consensus layer~\cite{clement2026,Lamport2001PMS,Schneider1990}.
The new membership is ordered as an ordinary command and installed at the log position where it was decided, or a fixed number of commands later.
Such a single switching point would be useful for a DAG, since quorum thresholds change with the membership and every process must know which threshold a round demands.
In an asynchronous execution, processes may learn the decision and apply
the new configuration at different times.
Dissemination, which runs ahead of ordering by design, may meanwhile
produce blocks beyond the switching point under the old configuration.
This direct use of a single switching point therefore does not make
reconfiguration seamless: dissemination would have to wait for the
decision, or discard those blocks.

Reconfiguration without consensus avoids the need for such a point, and has been studied for distributed storage~\cite{dynastore,smartmerge,spiegelmanDisc17}, lattice agreement~\cite{reconfLatticeA}, and, most relevant here, Byzantine reliable broadcast~\cite{dynamicbrb}.
These approaches share the idea of multi-configuration operation:
an operation begun in one configuration may complete in another, and different processes may complete it in different ones.
However, reconfiguring a DAG-based BAB asks for more on top of this, since the thresholds that govern DAG construction and the randomized perfect coin that drives ordering have to be reconfigured as well; while staying consistent with the underlying reliable broadcast.

\textbf{In this paper}, we present a seamless reconfiguration procedure for DAG-based BAB.
To allow consistent reconfiguration of reliable broadcast, perfect coin and DAG construction, we extend the idea of multi-configuration operation to the DAG itself.
Every block records the configuration its broadcaster has installed, the reconfiguration it has pending, and the configurations enabled in its round.
Each correct old member enables both configurations for four rounds
after the round in which it installs the transition. Retained processes
continue producing blocks throughout the transition, and already
DAG-delivered blocks remain in the DAG.
Our procedure therefore does not require processes to agree on the exact round in which a new configuration is enabled, while ensuring that new processes take up block production when needed, and that the consensus layer cannot produce conflicting leaders from different configurations.

To the best of our knowledge, we provide the first provably correct \emph{seamless reconfiguration protocol} for DAG-based Byzantine atomic broadcast in a fully asynchronous network.
In total, we make the following contributions:
\begin{itemize}
  \item We define \emph{Dynamic DAG-structured Byzantine Broadcast}, a reconfigurable dissemination abstraction whose integrity, consistency, causal closure and per-round quorum progress hold even while correct processes sit at different stages of a transition (\Cref{sec:problem-definition}). Being independent of the ordering layer, it can carry other ordering rules.
  \item We define \emph{Dynamic DAG Byzantine Atomic Broadcast} on top, and make explicit which guarantees of the static problem must be given up under a changing membership, and why (\Cref{sec:problem-definition}).
  \item We give a seamless reconfiguration protocol implementing both, from configuration-scoped reliable broadcast, the four-round transition window, and a state transfer admitting joining processes without pausing dissemination (\Cref{sec:our-solution}).
  \item We adapt DAG-Rider's randomized leader election to a changing membership by scoping its perfect coin to a single configuration, and prove that the composition satisfies our definitions (\Cref{sec:proof-overview}).
\end{itemize}

%% file: new-sections/Background.tex
\section{Background}
\label{sec:background}

DAG-based protocols for Byzantine atomic broadcast decouple transaction dissemination from consensus~\cite{narwhal, dagrider}.
For illustration, we informally recall the seminal protocol of that line, DAG-Rider~\cite{dagrider}. 
DAG-Rider builds a DAG structured in \emph{logical rounds}, with each process contributing at most one block per round. The logical rounds are further segmented into waves. In DAG-Rider, every subsequent four rounds form a wave. 
Blocks carry a payload and reference at least a quorum of the preceding round's blocks as their \emph{parents}.
Processes share blocks using reliable broadcast~\cite{bracharbc}, this ensures non-equivocation. Since correct processes eventually complete their reliable broadcasts, there will always exist sufficient blocks to proceed. 

The DAG itself only provides a partial order to the blocks. A separate ordering layer is then utilized to obtain a total order from the DAG. 
DAG-Rider achieves this by identify certain blocks as \textit{leaders}. Leaders are identified by a \textit{perfect coin} constructed by obtaining sufficient signature shares from blocks belonging to the last round of the wave. 
This allows ensures there exist a canonical list of leaders ordered by the wave number; and every correct process will eventually derive a prefix of this canonical list. 
Each leader in the list can have it's causal history ordered in the same manner to obtain a total order of blocks that exist in the DAG. Crucially, this ordering layer does not require any additional communication; all information required has already been disseminated. 

In this work, we utilize DAG-Rider as a framework to build a seamless reconfigurable DAG Byzantine Atomic Broadcast algorithm. We make changes to the reliable broadcast, DAG construction as well as leader election components to enable seamless addition and removal of processes.

%% file: new-sections/Problem_def.tex
\section{Problem Definition and Primitives}
\label{sec:problem-definition}
\subsection{System Model}
\label{sec:model}

We consider a message-passing system whose processes are identified by
public keys. Corruptions are adaptive and permanent. A process is
\emph{uncorrupted at an event} if it has not been corrupted before that
event, and is \emph{correct} if it is never corrupted. An uncorrupted
process follows the protocol; after corruption, it becomes \emph{byzantine} and may behave arbitrarily.
The quantitative fault bound is configuration-specific and is stated below.

A process is \emph{active} while it executes at least one prescribed
protocol task. An active correct process continues those tasks and processes every delivered protocol message.

The network is asynchronous: message delays are finite but unbounded. Any message sent to a recipient is eventually delivered. 

\subsection{Configurations and membership}
In an execution, a reconfigurable system can evolve through multiple configurations. 
We write $R_l$ for a coinfiguration with identifier $l$ and $\Pi_l$ for the membership set of $R_l$. 
$R_1$ is the initial configuration of the system. 
For any configuration $R_l$ with processes $\Pi_l$ let 
\[
  n_l=|\Pi_l|\geq4,
  \qquad
  f_l=\left\lfloor\frac{n_l-1}{3}\right\rfloor,
  \qquad
  q_l=n_l-f_l.
\]
Thus
\[
  n_l\geq3f_l+1,
  \qquad
  q_l\geq2f_l+1,
  \qquad
  q_l-f_l=n_l-2f_l\geq f_l+1.
\]

%



We assume that at most $f_l$ identities in $\Pi_l$ are ever
corrupted during an execution. This bounds all identities corrupted over
the execution, not merely simultaneous corruptions. The corruption schedule
and membership policy preserve this bound for every configuration.

For our randomized perfect coin we assume that each $\Pi_l$ is represented as a duplicate-free list sorted by canonical public-key encoding, so all correct processes assign the same position to
each member. Consecutive configurations differ by the addition or removal
of at most one process. A removal is permitted only if $n_l\geq5$, so the
resulting configuration still contains at least four processes. A removed process will not be added back in any future configurations. 

\subsection{Authentication}
\label{sec:authentication}

We assume an unforgeable digital-signature scheme. 
For process $p_i$ and message $x$, $\operatorname{sign}_i(x)$ is $p_i$'s signature and $\operatorname{verify}_i(\sigma,x)$ returns $\mathsf{true}$ exactly when
$\sigma$ is a valid signature by $p_i$ on $x$. 
We also assume a public collision-resistant hash function $\mathsf{Hash}(\cdot)$.

For every configuration $R_l$, each $p_i\in\Pi_l$ receives a key share for each of two independent, domain-separated threshold-signature schemes:
\[
  (q_l,n_l)=(n_l-f_l,n_l)
  \qquad\text{and}\qquad
  (f_l+1,n_l).
\]
The first produces quorum certificates for Byzantine reliable broadcast (BRB);
the second supplies the shares used to select one member of $\Pi_l$ in each wave, as specified in
\cref{sec:configuration-scoped-coin}. 
In a $(t,n_l)$ scheme, any $t$ valid shares on the same message combine into a unique signature verifiable under that scheme's public key, while fewer than $t$ shares cannot produce a valid signature. 
The two schemes have distinct public keys and domains.

We write $\mathsf{ShareSign}^{\mathsf{BRB}}_{i,l}(x)$ and
$\mathsf{ValidShare}^{\mathsf{BRB}}_{i,l}(\sigma,x)$ for BRB share
generation and verification, and
$\mathsf{ShareSign}^{\mathsf{COIN}}_{i,l}(x)$ and
$\mathsf{ValidShare}^{\mathsf{COIN}}_{i,l}(\sigma,x)$ for their coin
counterparts. Note that each threshold-signature is scoped to a particular configuration $R_l$.

\subsection{Rounds and DAG notation}
\label{sec:round-dag-notation}
\label{sec:dag-bb-definition}

The protocol is organized into logical rounds $r=1,2,\ldots$. 
A round number is a protocol index and imposes no timing bound in the asynchronous
network. Four consecutive rounds form a wave. We write $\mathsf{first}(w)=4w-3$ and $\mathsf{last}(w)=4w$ for the first and last round in a wave $w$.


\label{sec: blocks and causal}
For process $p_i$, $DAG_i[r]$ is the round-$r$ layer of its local directed acyclic graph, and $DAG_i=\bigcup_{r\geq1}DAG_i[r]$. 
A stored round-$r$ block has the form
\[
  B=\langle m,r,\mu,C\rangle,
  \qquad
  \mu=\langle h,R,P,E,ID\rangle,
\]
where $m$ is the payload and $C$ is a quorum certificate
from configuration $R$, that certifies delivery by a
configuration-scoped Byzantine reliable broadcast (BRB) instance. 
The structure $\mu$ contains metadata. 
The symbol $\bot$ to denotes absence.
The field $h$ is $\bot$
in round $1$ and otherwise contains references to parent blocks in round
$r-1$; $R$ is the declared configuration; and $ID$ is the broadcaster identity. 
Fields $P$ and $E$ hold reconfiguration related information, detailed below in \cref{sec:configuration-authorization}.
$P$ is a pending reconfiguration message or $\bot$; $E$ is a set of enabled configurations. 
We write
\[
\begin{gathered}
  \mathsf{parents}(\mu)=h,\quad
  \mathsf{local}(\mu)=R,\quad
  \mathsf{pending}(\mu)=P,\quad
  \mathsf{enabled}(\mu)=E,\quad
  \mathsf{ID}(\mu)=ID.
\end{gathered}
\]

The pair $(r,\mathsf{ID}(\mu))$ is the block's
\emph{round-broadcaster slot}. A parent is referenced directly by $h$;
an ancestor is reached by following one or more parent references. A set
of blocks is \emph{ancestor-closed} if it contains every ancestor of each
block it contains. For $B,B'\in DAG_i$, write $B\leadsto_i B'$ if $B=B'$
or $B'$ is an ancestor of $B$. Thus $\leadsto_i$ is reflexive and points
from a block toward its causal history. We omit the subscript when the
local process is immaterial.

\subsection{Reconfiguration Model}

\label{sec:configuration-authorization}
Processes receive reconfiguration messages of the form $R_l=\langle\mathsf{RECONFIG},\Pi_l,l,\delta_l\rangle .$
Here $\mathsf{RECONFIG}$ is a fixed domain tag, $\Pi_l$ is the membership set, $l$ is the configuration identifier, and $\delta_l$ is the earliest logical round in which $R_l$ may occur in a block's pending field.
We write
\[
  \mathsf{members}(R_l)=\Pi_l,
  \qquad
  \mathsf{number}(R_l)=l,
  \qquad
  \mathsf{earliest}(R_l)=\delta_l.
\]

The process to determine the membership of a new configuration is out of scope. 
An externally authorized message may be produced by Byzantine agreement among current members or reliably broadcasted by a trusted reconfiguration dealer; 
for simplicity, we assume the latter.
The trusted dealer needs to ensure that: 
(1) reconfigurations form a sequence, using consecutive numbers as identifiers, 
(2) reconfiguration messages carry unforgeable evidence authenticating its complete tuple, 
(3) each reconfiguration adds or removes at most one process, 
(4) earliest rounds $\delta_l$ ensure configurations are \emph{well-spaced} and do not overlap, and finally 
(5) each execution contains only finitely many reconfigurations.

The passage from $R_l$ to $R_{l+1}$ is a \emph{transition};
\emph{reconfiguration} denotes the overall mechanism. A process in
$\Pi_l\cap\Pi_{l+1}$ is \emph{retained}, one in
$\Pi_{l+1}\setminus\Pi_l$ is \emph{joining}, and one in
$\Pi_l\setminus\Pi_{l+1}$ is \emph{removed}.


We write $\mathsf{ValidReconfiguration}(R_l,R_{l+1})$ when
the reconfiguration messages are correctly authenticated and follow the assumptions above. 
We call $R_l\rightarrow R_{l+1}$ a \emph{valid transition} exactly when $\mathsf{ValidReconfiguration}(R_l,R_{l+1})$ holds.
The distribution mechanism sends each valid $R_{l+1}$ to every correct process in $\Pi_l\cup\Pi_{l+1}$ to which the transition applies. 
Eventual delivery ensures that each such recipient receives it.

For every $l\geq2$, the authenticated message $R_l$ also carries the
identities in $\bigcup_{c=1}^{l-1}\Pi_c$. We omit this list from the tuple
notation. The procedure $\texttt{UpdateKnown}_i$ validates $R_l$ and this identity
list. If $R_l$ is the highest-numbered authenticated reconfiguration message
processed by $\texttt{UpdateKnown}_i$, then $\mathit{known}_i=\bigcup_{c=1}^{l}\Pi_c$. Thus $\mathit{known}_i$ contains the identities of all processes in
$R_1,\ldots,R_l$.



Point (4) above ensures a correct member of $\Pi_l$ starts the transition to $R_{l+1}$ only while installed in $R_l$. A correct joining process may authenticate the reconfiguration message $R_{l+1}$ before installing $R_l$, but invokes the transition only after it has installed $R_l$, as enforced by $\delta_{l+1}$.
In (\cref{sec: config begin}) we provide an issuance rule, that can be used by the dealer to set $\delta_l$.
Disjointness of transitions is not an assumption, but follows from the issuance rule, proved in
\cref{lemma:transition-window-disjointness}.

\paragraph*{Enabled Configurations and Transition Windows.}
A process $p_i$ stores a current configuration $\mathit{local}_i$, the potentially empty next pending authenticated reconfiguration message $\mathit{pending}_i$ and a map $\mathit{enabled\text{-}configs}_i[r]$ with a set of configurations for every round.

Whenever $\mathit{enabled\text{-}configs}_i[r]$ is assigned at a correct process $p_i$, it is assigned only once and is thereafter immutable. 
Its value has one of the following forms: $\{R_c\}$ or $\{R_l,R_{l+1}\},$ where the second form records the two configurations of an authenticated
valid transition $R_l\rightarrow R_{l+1}$. 
We call $\{R_c\}$ a \emph{singleton enabled-configs}. 
The transition protocol determines which of these forms a process assigns in each round. 
Because processes advance asynchronously, an enabled-configs value is local to a process and logical round; 
the occurrence of the pair at one correct process does not imply that every correct process has assigned the pair in that round.
When $p_i$ broadcasts a block in round $r$, it includes $\mathit{local}_i$, $\mathit{pending}_i$, and $\mathit{enabled\text{-}configs}_i[r]$ as $R,P,E$ in the blocks metadata $\mu$.

We say a round $r$ is \emph{singleton-enabled for $R_c$ at $p_i$} when $\mathit{enabled\text{-}configs}_i[r]=\{R_c\}.$
Similarly, a block $B=\langle m,r,\mu,C\rangle$ with $\mathsf{local}(\mu)=R_c$ and $\mathsf{enabled}(\mu)=\{R_c\}$ is \emph{singleton-enabled for $R_c$}.
The \emph{transition window} of $R_l\rightarrow R_{l+1}$ is the set of logical rounds $r$ for which some correct process $p_i$ satisfies $\mathit{enabled\text{-}configs}_i[r]=\{R_l,R_{l+1}\}$.

\paragraph*{Finite Reconfiguration and Stability.}
We assume that each execution contains only finitely many reconfiguration
messages, although their number is not bounded a priori, while logical
rounds remain infinite. Together with $R_1$, these messages define a finite
sequence $R_1,\ldots,R_L$ with final configuration $R_L$. We further assume
that at least $q_L$ correct members of $\Pi_L$ remain active and continue the
prescribed protocol tasks.

A wave is \emph{stable under $R_L$} if all four of its rounds are singleton-enabled for $R_L$ at every correct process in $\Pi_L$ that remains active. 
If all finitely many reconfigurations complete, there exists a round $r_*$ where all waves that follow are \emph{stable}. 


\subsection{Configuration-scoped Byzantine Reliable Broadcast}
\label{sec:configuration-brb}
We use a configuration scoped variant of Byzantine Reliable Broadcast\cite{bracharbc} (BRB) to distribute blocks. 
For round $r$ and broadcaster $p_k$, a call
$\texttt{BRB-Broadcast}_k(m,r,\mu)$ proposes
$\langle m,r,\mu\rangle$ for slot $(r,p_k)$. 
The metadata $\mu$ is the same as in a DAG block, explained in \cref{sec:round-dag-notation}.
A corresponding output at $p_i$ is $\texttt{BRB-Deliver}_i(m,r,\mu,C),$ where $C$ combines $q_l$ distinct valid BRB vote shares and verifies under the BRB public key of $R_l=\mathsf{local}(\mu)$.
The voters, threshold, and key are scoped to the declared configuration, but the slot remains $(r,p_k)$ and is not configuration-indexed.

Configuration-scoped BRB differs from standard BRB 
in that correct processes only participate after validating the metadata $\mu$, and that delivery creates a certificate. 
A correct process $p_i$ participates in a BRB instance $\langle m,r,\mu\rangle$ only if $p_i\in\Pi_l$ for $\mathsf{local}(\mu)=R_l$, and 
$R_l\in \mathit{enabled\text{-}configs}_i[r]$.
Additionally, $\mu$ must be well-formed according to the validity rules given below. 
The certificates enable processes to share valid evidence of BRB delivery, allowing an alternative for processes to add blocks into their DAG (see \cref{sec: state avail}). 
Configuration-scoped BRB provides the following safety properties.
\begin{description}
  \item [Agreement.]
  If two correct processes BRB-deliver values for the same
  round-broadcaster slot, they BRB-deliver the same value.

  \item [Integrity.]
  A correct process BRB-delivers at most one value for a
  round-broadcaster slot. Every delivered value has a valid certificate
  for its declared configuration; if the broadcaster is correct, the
  value is the one supplied to its prescribed
  $\texttt{BRB-Broadcast}$ invocation.
\end{description}
\noindent
We used a modified version of Bracha's\cite{bracharbc} BRB.
\Cref{lemma:brb-integrity,lemma:brb-safety} prove that \cref{alg:brb} satisfies these properties, including when a slot is certified during a transition.



\paragraph*{Configuration-authorized Block Validity.}
A correct process echoes or votes for a BRB instance
$\langle m,r,\mu\rangle$ only after checking the metadata $\mu$ as
follows. Because a certificate requires $q_l$ shares and at most $f_l$
members of the declared configuration are ever corrupted, Byzantine
members alone cannot certify a block that fails any of these checks.

The metadata $\mu$ is valid if the corresponding conditions hold:
The recorded field $\mathsf{enabled}(\mu)$ must be the singleton $\{\mathsf{local}(\mu)\}$ or the authenticated pair of a valid transition, and must contain $\mathsf{local}(\mu)$, and the broadcaster must belong to $\mathsf{members}(\mathsf{local}(\mu))$. 
The pending field must be $\bot$ or a valid reconfiguration message for the successor of $\mathsf{local}(\mu)$ whose earliest inclusion round has been reached:
the authorization for $R_{l+1}$ binds $\delta_{l+1}$, so a block in a round $r<\delta_{l+1}$ whose pending field contains $R_{l+1}$ is rejected.

Except in round $1$, the block must also carry exactly $q_c$ parent references, where $R_c=\mathsf{local}(\mu)$ is its declared configuration. 
The sole exception is the joining process's
\emph{transition block} (\cref{sec:our-solution}): during an addition, a block declared under $R_{l+1}$ by the member in $\Pi_{l+1}\setminus\Pi_l$ may carry $q_l$ references instead of $q_{l+1}$. 
Validators accept the smaller count only when the transition $R_l\rightarrow R_{l+1}$ is valid and both $\mathsf{enabled}(\mu)$ and their own $\mathit{enabled\text{-}configs}_i[r]$ equal $\{R_l,R_{l+1}\}$. 

Certification checks only the number of parent
references, never the referenced parents themselves; parents are validated during DAG delivery (\cref{sec:dynamic-dag-bb}). 
A valid certificate thus attests all of these checks wherever the block is later obtained.
The complete predicate appears in \cref{alg:brb-variables,alg:brb}.


\subsection{Dynamic DAG-structured Byzantine Broadcast}
\label{sec:dynamic-dag-bb}
Our protocol implements Dynamic DAG-structured Byzantine Broadcast (Dynamic DAG-BB).
Dynamic DAG-BB is a repeated, round-indexed broadcast abstraction over the finite configuration sequence $R_1,\ldots,R_L$. 
In round $r$, a process $p_i$ may invoke $\texttt{DAG-Append}_i(m,r)$. 
Consequently, a processes $p_j$ may \emph{DAG-deliver} a block $B=\langle m,r,\mu,C\rangle$ with the same message $m$ and $\mathsf{ID}(\mu)=i$, adding it to a local datastructure $DAG_j$ at level $r$ ($DAG_j[r]$).

A block contains parent and configuration metadata, as defined in \cref{sec:round-dag-notation}.
A referenced parent with metadata $\mu'$ is \emph{authorized} only if
its broadcaster belongs to a configuration in $\mathsf{enabled}(\mu')$
and its declared configuration is no newer than the child's.
Hence a certified block from a member removed by $R_{l+1}$ that records
$\{R_l,R_{l+1}\}$ may serve as a parent and count toward parent quorums,
while its broadcaster cannot broadcast, echo, or vote under $R_{l+1}$.
A block is \emph{admissible} exactly when its certificate is valid, its
metadata passes the instance-validity rules of
\cref{sec:configuration-brb}, and its referenced parents are admissible
and authorized and satisfy the applicable parent-count rule, including
the round-$1$ rule and the joining-process exception.
A \emph{parent-quorum condition} is
the protocol's round-specific requirement that a process hold enough
admissible blocks of the preceding layer to advance.

\begin{definition}[Dynamic DAG-structured Byzantine Broadcast]
\label{def:dynamic-dag-structured-broadcast}
A Dynamic DAG-structured Byzantine Broadcast satisfies the following
properties in every execution, including while correct processes are at
different stages of a transition.
\begin{enumerate}
  \item \textbf{DAG-Integrity.}
  For each round-broadcaster slot, a correct process DAG-delivers at most
  one block, and every DAG-delivered block is admissible. If the broadcaster is correct, the blocks payload $m$ originates from the append operation prescribed for that process and
  round.
  \item \textbf{DAG-Consistency.}
  If two correct processes DAG-deliver blocks $B$ and $B'$ for the same
  round-broadcaster slot, then $B=B'$.
  \item \textbf{Causal Closure.}
  Before a correct process DAG-delivers a block, it has already
  DAG-delivered every referenced parent and, recursively, every ancestor
  of those parents.
  \item \textbf{Per-Round Quorum Progress.}
    Consider a round $r$ with applicable quorum threshold $q_c$.
    If at least $q_c$ active correct members for which the protocol
    prescribes a round-$r$ append enter the round and invoke that
    operation, then every active correct process whose advancement
    depends on this quorum eventually DAG-delivers sufficient admissible
    blocks from distinct broadcaster slots to satisfy an applicable
    round-advancement condition and advance to round $r+1$.
    During $R_l\rightarrow R_{l+1}$, the applicable quorum threshold
    may be either $q_l$ or $q_{l+1}$.
  \item \textbf{Reconfiguration Completion.}
  Every valid transition eventually completes at every correct retained or
  joining process. Every correct removed process eventually reaches the
  point at which it may leave.
\end{enumerate}
\end{definition}
Per-Round Quorum Progress includes dissemination of the blocks needed for
advancement. 
For a protocol satisfying Reconfiguration Completion, the
finite reconfiguration sequence and the continued execution of the
correct processes in $\Pi_L$ that remain active imply an infinite suffix
of stable waves.

\paragraph*{BRB to DAG}
Upon an invocation of $\texttt{DAG-Append}_i(m,r)$, the process snapshots
$\mathit{local}_i$, $\mathit{pending}_i$, and
$\mathit{enabled\text{-}configs}_i[r]$. It then waits for the required
parents, constructs $\mu$ using their references and the snapshotted
values, and starts the configuration-scoped BRB instance for
$\langle m,r,\mu\rangle$. A successful instance produces the certificate $C$.

Similarly, BRB delivery alone does not insert $B=\langle m,r,\mu,C\rangle$ into the DAG. 
A correct process first validates the certificate and metadata, obtains every referenced parent and missing ancestor, and only then inserts $B$ into $DAG_i[r]$; we then say that it \emph{DAG-delivers} $B$.
A correct joining process preserves Causal Closure by
validating its ancestor-closed transferred state before inserting its
first block, and its two transition blocks (\cref{sec:our-solution})
supply the additional broadcaster slot needed for progress when a correct
process joins.

\paragraph*{Why Dynamic DAG-BB Does Not Require Validity or Totality.}
Traditional Validity requires every value broadcast by a correct process
to be delivered by every correct process; Totality requires a value
delivered by one correct process to be delivered by all. Both presume a
fixed membership: a joining process did not participate in earlier
broadcasts, and a removed process must eventually be allowed to stop.
Indeed, the strong, departure-oblivious forms of Validity and Totality
are impossible in an asynchronous dynamic system, even with reliable
links and a single possible crash
failure~\cite[Appendix~A.6]{dynamicbrb}. Dynamic DAG-BB therefore demands
only what progress and safety need: Per-Round Quorum Progress supplies
every active correct process whose advancement depends on the applicable
quorum with sufficient admissible blocks to satisfy an applicable
round-advancement condition, while DAG-Integrity, DAG-Consistency, and Causal Closure
provide safety. Universal delivery of every block is unnecessary; the
atomic-broadcast layer separately ensures agreement on the finalized
leader sequence. However, once the stable wave is reached, then we may ensure totality and validity once more.

\subsection{Dynamic DAG Byzantine Atomic Broadcast}
As in DAG-Rider~\cite{dagrider}, we define a rule to consistently select blocks from the DAG, based on a global perfect coin included in the last round of a wave.
However, our coin is scoped to configuration.
Blocks causally located between selected blocks can also be ordered deterministically as in previous work.

\subsubsection{Configuration-scoped Perfect Coin}
\label{sec:configuration-scoped-coin}
As with DAG-Rider's global perfect coin, each coin instance $(R_l,w)$ 
selects one process of $\Pi_l$. 
Unlike there, every
instance is scoped to one configuration: instances of different
configurations use separate keys, and their shares are never combined.
In the threshold implementation used by the protocol, a member invokes an
instance by releasing its valid share for that instance. For security
parameter $\lambda$, let $\operatorname{negl}(\lambda)$ denote a
negligible function. The coin satisfies the following four properties.
\begin{description}
  \item[Agreement.]
  If two correct processes obtain an output for $(R_l,w)$, they obtain the
  same process.
  \item[Termination.]
  If at least $f_l+1$ correct processes invoke the coin for $(R_l,w)$, every
  invocation by a correct process eventually returns.
  \item[Unpredictability.]
  As long as fewer than $f_l+1$ members have invoked the coin for
  $(R_l,w)$, its output, conditioned on the adversary's current view, is
  computationally indistinguishable from a uniformly random member of
  $\Pi_l$, except with probability $\operatorname{negl}(\lambda)$.
  \item[Fairness.]
  Every process in $\Pi_l$ is selected with probability $1/n_l$.
\end{description}
The threshold-share implementation and the rule that maps the selected
member to a first-round block are described in \cref{sec:our-solution}.
Coin Unpredictability is used only to prove Probabilistic progress; safety uses only Agreement and signature uniqueness.
For each correct process's block $B=\langle m,r,\mu,C\rangle,$ where $r$ is the last round of a wave, $m$ includes a $\mathsf{local}(\mu)$-scoped coin share.

\subsubsection{Atomic-broadcast Abstraction}
To realize an atomic-broadcast abstraction, we define a rule to consistently select blocks from the DAG.
These blocks are called leaders.
For an active correct process $p_i$, $\mathit{finalized\text{-}values}_i$
is its local sequence of DAG blocks selected by the ordering layer; these
blocks are its \emph{finalized leaders}. A block is \emph{finalized} at
$p_i$ when it is appended to this sequence, and
$\mathit{finalized\text{-}values}_i[k]$ denotes the $k$-th block in the
sequence.

DAG-Rider states Byzantine Atomic Broadcast using reliable-broadcast
integrity, agreement, and validity together with total
order~\cite{dagrider}. A changing membership forces two relaxations. Our
ordering primitive does not claim that every proposed block is eventually
finalized, so we omit proposal validity. A removed process must be
allowed to leave, so eventual agreement is required only of the correct
members of the final configuration that remain active. The dynamic DAG Byzantine Atomic Broadcast is the goal of this work. 

\begin{definition}[Dynamic DAG Byzantine Atomic Broadcast]
\label{def:dynamic-dag-bab}
A leader-level Dynamic DAG Byzantine Atomic Broadcast consists of an
underlying Dynamic DAG-structured Byzantine Broadcast satisfying
\cref{def:dynamic-dag-structured-broadcast} and, at each process, a sequence
of finalized leader blocks satisfying the following properties.
\begin{enumerate}
  \item \textbf{Integrity.}
  For every active correct process $p_i$ and wave $w$,
  $\mathit{finalized\text{-}values}_i$ contains at
  most one block from $w$, and no block occurs more than once. Every
  finalized block has already been inserted into $DAG_i$.
  \item \textbf{Total Order.}
  At every point in an execution, the finite sequences
  $\mathit{finalized\text{-}values}_i$ of active correct processes are
  pairwise comparable by prefix. Consequently, if one active correct
  process finalizes $B$ before $B'$, no active correct process finalizes
  $B'$ without first finalizing $B$.
  \item \textbf{Probabilistic Progress.}
  Let $R_L$ be the final configuration of the execution. With
  probability one, every correct process $p_j\in\Pi_L$ that remains
  active finalizes infinitely many blocks.
  
\end{enumerate}
\end{definition}
After a correct process is removed and leaves, no requirement is imposed
on further extensions of its $\mathit{finalized\text{-}values}$ sequence.

Our rule, specified in \cref{sec: leader elect} selects at most one block per wave.
\Cref{lemma:dynamic-dag-structured-broadcast} proves that the broadcast,
DAG, state-transfer, and reconfiguration algorithms implement Dynamic
DAG-BB. Leader Integrity, Total Order, and Probabilistic Progress
are established in \cref{corollary:leader-integrity,%
theorem:leader-total-order,theorem:leader-probabilistic-progress}, and
\cref{theorem:dynamic-dag-bab-correctness} composes them into Dynamic DAG
Byzantine Atomic Broadcast. A proof sketch appears in
\cref{sec:proof-overview}.

%% file: new-sections/Our_sol.tex
\section{Seamless DAG Reconfiguration}
\label{sec:our-solution}
Our protocol extends ordinary round-based DAG production with a four-round
configuration transition. During a transition, blocks declared under the
installed configuration may carry the pending reconfiguration, and every
block records the configurations enabled in its round. This lets retained
processes continue producing blocks while a joining process connects the
existing causal history to blocks declared under the next configuration. The
ordering layer counts coin shares only in singleton-enabled rounds, so a
transition preserves Causal Closure and Per-Round Quorum Progress without
combining shares from different configurations. We describe the protocol in
words below; its complete
validation predicates, message handlers, and retry loops appear in
\cref{sec:full-algorithms}.

\subsection{Ordinary DAG Operation}

At the beginning of round $r$, process $p_i$ uses its installed configuration $\mathit{local}_i$, pending reconfiguration $\mathit{pending}_i$, and $\mathit{enabled\text{-}configs}_i[r]$, which is immutable once assigned, to construct a block for $DAG_i[r]$.




Suppose $\mathit{local}_i=R_l$ when $p_i$ enters round $r>1$. At the start
of its append invocation, it snapshots $\mathit{local}_i$,
$\mathit{pending}_i$, and $\mathit{enabled\text{-}configs}_i[r]$. It then
waits until the preceding DAG layer contains the admissible blocks required
by the applicable parent-quorum condition, selects their hashes and
broadcaster identities as the parent set $h$, and constructs $ \mu=\left\langle h,R_l,P,E,p_i\right\rangle,$
where $R_l$, $P$, and $E$ are the snapshotted values. A reconfiguration
received while the process waits for the applicable parent-quorum
condition therefore cannot change this block's pending field or enabled
set. If $P\neq\bot$ but
$r<\mathsf{earliest}(P)$, the process does not broadcast the block; correct
BRB echoers and voters enforce the same check.

It then invokes the $R_l$-scoped BRB of
\cref{sec:configuration-brb} on $\langle m,r,\mu\rangle$ by calling $\texttt{DAG-Append}_i(m,r)$.
A successful BRB instance produces a certificate $C$, but this certificate
alone does not make the certified object a DAG block. A recipient validates
the block according to the admissibility rules of
\cref{sec:problem-definition}, recursively obtains every referenced parent
and missing ancestor, and only then inserts
$B=\langle m,r,\mu,C\rangle$ into its local DAG and DAG-delivers it.
Consequently, every block DAG-delivered by a correct process is admissible
and is accompanied by its complete causal history.


A process enters round $r+1$ once $DAG_i[r]$ contains enough admissible
round-$r$ blocks to satisfy the applicable parent-quorum condition.
Thus, a correct process may advance even if it has not yet delivered the block itself proposed.
In round $r$, a process installed in $R_l$ ordinarily waits for $q_l$
admissible round-$r$ blocks before entering round $r+1$. If $r=r_0+3$
is its last round before completing a transition installed in round $r_0$,
it instead waits for $q_{l+1}$ admissible round-$r$ blocks that are eligible
as parents of the next block declared under $R_{l+1}$.
\paragraph*{State Availability}
\label{sec: state avail}
The procedures mentioned here further enable processes to include blocks in their DAG indirectly.
Three actions leads to a process $p_i$ sending a block $B\in DAG_i$ to every identity in $\mathit{known}_i$ without first receiving a request. 
(1) The BRB vote-receive handler sends one when it combines $q_c$ valid
shares into a certificate, (2) when a process adds a $B$ into its local DAG, or (3) when $\texttt{FloodDAG}$ is called and one of every block in the local DAG is broadcasted to $known_i$. 
When a new configuration message $R$ is received, and $\texttt{UpdateKnown}_i$ updates $\mathit{known}_i$; $\texttt{FloodDAG}$ will be invoked. 
Once a process receives a block in these manners, it is buffered until it becomes admissible.



\subsection{Beginning and Completing a Configuration Change}
\label{sec: config begin}


In this section, we consider one transition from $R_l$ to $R_{l+1}$.
By well-spacing (\cref{sec:configuration-authorization}), different transitions happen in non-overlapping logical rounds. 
If a correct process records $\{R_l,R_{l+1}\}$ in one
round, no other correct process records the pair for a different transition in
that round.
To ensure well-spacing, a dealer can "listen" to the DAG. 
In practice, the dealer may repeatedly request complete,
ancestor-closed DAGs from members of $\Pi_l$ and validate every reply.
Eventual delivery ensures that it receives a complete DAG from a correct
member.
It waits until the downloaded state contains $q_l$ certified blocks
declared under $R_l$ in some round $s$, and then issues
$R_{l+1}$ with $\delta_{l+1}=s+2$. 
We show in \cref{lemma:no-premature-pending,lemma:well-spaced-predecessor-completion,%
lemma:well-spaced-base-round,lemma:predecessor-parent-slots,%
lemma:earliest-install-lower-bound} that this leads to \emph{well-spacing}.

When a member of $\Pi_l$ receives $R_{l+1}$, it authenticates and buffers
the message. Once it has installed $R_l$ and reached the authenticated
earliest inclusion round, it records $R_{l+1}$ in
$\mathit{pending}_i$; only append operations started after that point copy
the message into block metadata. A block is \emph{marked by $R_{l+1}$} if
its own pending field contains $R_{l+1}$ or an ancestor is marked by
$R_{l+1}$. Thus the mark propagates through the DAG's causal history.

When a process with $\mathit{local}_i=R_l$ completes a round $r$ containing
at least $f_l+1$ marked blocks from distinct broadcaster slots in $\Pi_l$,
it invokes $\texttt{InstallTransition}_i(R_l,R_{l+1},r)$. We call $r$ its
\emph{install transition round}. The process enables both configurations
in the next four rounds:
\[
  \mathit{enabled\text{-}configs}_i[s]=\{R_l,R_{l+1}\},
  \qquad s\in\{r+1,r+2,r+3,r+4\}.
\]
It remains installed in $R_l$ through round $r+3$ and installs $R_{l+1}$
at the beginning of round $r+4$, before ordinary block production. Correct
members of $\Pi_l$ need not choose the same install transition round, but
\cref{corollary:install-transition-round-bound} shows that their choices
differ by at most one round.

\begin{figure}[t]
  \centering
  \makebox[\linewidth][c]{%
    \resizebox{0.8\linewidth}{!}{%
      \input{figures/dag-example2}%
    }%
  }
  \caption{Adding $p_5$ when
  $\Pi_l=\{p_1,p_2,p_3,p_4\}$ and
  $\Pi_{l+1}=\Pi_l\cup\{p_5\}$, with Byzantine $p_4$. For each block, $\{\cdot\}$ denotes its $\mathit{enabled\text{-}configs}$ in its $\mu$.
  $p_5$ begins $\texttt{DAG-Transition-Append}$ in round $r+5$ before starting normal
  operations in round $r+7$. Blocks from $p_5$ in rounds $r+5,r+6$ are \emph{transition blocks}; having $q_l$ parents instead of $q_{l+1}$.}
  \label{fig:dag-example}
\end{figure}

\paragraph*{Joining Procedure}

Let $p_j\in\Pi_{l+1}\setminus\Pi_l$. After receiving the authenticated pair
$\langle R_l,R_{l+1}\rangle$, the joining process repeatedly requests
admissible blocks from every identity in $\mathit{known}_j$ and sends another
request after accepting a reply or when $\mathit{known}_j$ changes. 
An active correct process that receives a request
satisfying the state-request handler's conditions replies with a finite, ancestor-closed set of blocks from its local DAG. The joining process accepts
the set only if every block has valid certified metadata, and every referenced parent and its ancestors are included in the~set.

The joining process also inserts blocks it accepts into its local DAG in
increasing round order. Before inserting a block, it validates the block and
ensures that every referenced parent and ancestor has already been inserted.
It applies the same rule to blocks received individually, requesting any missing
parents and ancestors from the identities in $\mathit{known}_j$. The joining
process continues requesting blocks until its local DAG contains a round with
at least $q_l$ admissible blocks in distinct broadcaster slots belonging to
$\Pi_l$, including at least $f_l+1$ blocks marked by $R_{l+1}$.


Per \cref{sec: state avail}, processes will occasionally send their blocks as a broadcast to other processes. 
Messages sent to a joining process before it
activates remain pending and are processed after activation. Consequently,
obtaining the required causal history does not depend on any particular
process remaining active until the joining process later requests it.

As soon as the preceding condition holds, the joining process chooses the
smallest qualifying round currently present in its local DAG and denotes it
by $u$. It uses $u$ as its install transition round and continues obtaining
blocks until it has $q_l$ admissible round-$(u+1)$ blocks in distinct
broadcaster slots belonging to $\Pi_l$. It continues receiving and validating
additional blocks throughout the rest of the joining procedure. In round
$u+1$, it remains passive: it neither produces a block nor sends a BRB echo
or vote share.





In rounds $u+2$ and $u+3$, the joining process invokes
$\texttt{DAG-Transition-Append}$ and produces a transition block
declared under $R_{l+1}$. The round-$(u+2)$ block uses $q_l$ parents from
the downloaded round-$(u+1)$ layer. The round-$(u+3)$ block also uses
$q_l$ admissible parents and may reference transition blocks in the
preceding layer.

By \cref{lemma:joining-transition-round-bound}, every correct member of
$\Pi_l$ enabled-configs are $\{R_l,R_{l+1}\}$ in rounds $u+2$ and $u+3$, thus transition blocks obtain $R_{l+1}$-scoped BRB certificates. These blocks contribute the additional broadcaster slot that may be
needed by processes with an early install transition round; when $q_{l+1}>q_l$. 
 At the beginning of round $u+4$, the joining
process installs $R_{l+1}$ and starts its ordinary protocol tasks.
Processes accept transition blocks only during the rounds, when both configurations in the transition are enabled.
Hence, a Byzantine joining process cannot reuse the exception in a later singleton-enabled round.






\paragraph*{Leaving Procedure}

\begin{figure}[t]
  \centering
  \makebox[\linewidth][c]{%
    \resizebox{0.8\linewidth}{!}{%
      \input{figures/dag_example_leave}%
    }%
  }
  \caption{Removing $p_5$ when
  $\Pi_l=\{p_1,p_2,p_3,p_4,p_5\}$ and
  $\Pi_{l+1}=\Pi_l\setminus\{p_5\}$, with Byzantine $p_4$. $p_5$ completes \texttt{InstallTransition} at round r+6 but does not leave (it continues
helping complete reliable broadcasts and responding to requests for
DAG blocks) until sufficient blocks exist in round r+7.}
  \label{fig:dag-example-leave}
\end{figure}

A leaving process $p_i\in\Pi_l\setminus\Pi_{l+1}$ installs the transition as other nodes in $\Pi_l$. 
However, when installing the transition in round $r$, $p_i$ does not produce blocks in round $r+4$ or later. 
Process $p_i$ continues helping complete reliable broadcasts and responding to requests for
DAG blocks; while its departure condition remains
pending, $p_i$ continues processing BRB messages and receiving, validating,
and inserting blocks and their missing ancestors.
Once it has inserted $f_{l+1}+1$ admissible
round-$(r+5)$ blocks from distinct broadcasters in $\Pi_{l+1}$, it invokes one final $\texttt{FloodDAG}$ before stopping its remaining tasks; finally proceeding to leave. 
This enables a lagging process to eventually obtain the required blocks to progress. 




Let $r_0$ be process $p_j$'s install transition round for
$R_l\rightarrow R_{l+1}$. If $p_j$ is in round
$r_j\in\{r_0+2,r_0+3\}$, it may also advance when its local DAG
contains at least $q_{l+1}$ admissible round-$(r_j+1)$ blocks eligible
as parents of a block declared under $R_{l+1}$.
These blocks need not all be declared under $R_{l+1}$ themselves.
The process can therefore advance even if its own round-$r_j$ block
has not yet been certified. The same condition is available during
additions and removals.
This behavior is formalized in
\cref{lemma:transition-next-layer-advancement,%
lemma:stopping-process-transition-state,%
lemma:transition-advancement-alternatives}.

\subsection{Configuration-scoped Leader Election}
\label{sec: leader elect}

After explaining how block production crosses a configuration change, we turn to ordering the resulting DAG. 
We explain how a sequence of leaders (anchor blocks) is created.
Blocks located between these leaders can be ordered deterministically, as done in previous work~\cite{dagrider}.
As in DAG-Rider, four consecutive rounds form a wave: first-round blocks are leader candidates, while last-round blocks provide causal support and carry the shares used to reconstruct the wave's threshold coin. 
Because a wave may intersect a configuration change, all coin shares and support thresholds are scoped to one configuration.

A last-round block $B=\langle m,r,\mu,C\rangle$ is an
$R_l$-typed \emph{leader-voting block} for wave $w$ when
\[
  r=\mathsf{last}(w),\qquad
  \mathsf{local}(\mu)=R_l,\qquad
  \mathsf{enabled}(\mu)=\{R_l\},
\]
and $B$ carries a valid $R_l$-scoped coin share for $w$. 
This coin share is distinct from a BRB vote share. 
A block with $\mathsf{enabled}(\mu)=\{R_l,R_{l+1}\}$ does not count as a
leader-voting block, but may be a leader candidate.
Further, shares from different configurations (e.g. $R_l$ and $R_{l+1}$-typed) are never combined.

Every broadcaster that is uncorrupted when producing a leader-voting block
fixes the block's parent set before generating and releasing its coin share.
For stable waves under $R_L$, this ordering lets the proof fix the common
core before $f_L+1$ distinct valid shares have been released, as shown in
\cref{lemma:stable-wave-reachability}.
From $f_l+1$ distinct valid shares, $\texttt{WaveLeader}(R_l,w)$ reconstructs the unique $R_l$-scoped coin. 
If the coin selects $p_j\in\Pi_l$, $\texttt{WaveLeader}(R_l,w)$ returns
$p_j$'s round-$\mathsf{first}(w)$ block declared under $R_l$, if that block
is in the local DAG; otherwise it returns $\bot$.

A process may receive certified leader-voting blocks declared under
different configurations in the same last round. 
However, as shown in \cref{lemma:same-round-enabled-configs}, all correct broadcasters whose blocks are singleton-enabled in that round use the same configuration. 
For every other configuration $R_c$, only Byzantine broadcasters can produce leader-voting blocks of that type, yielding at most $f_c$ distinct slots.
This is fewer than the $f_c+1$ shares needed to reconstruct the
$R_c$-scoped coin. Consequently, at most one configuration can have at
least $f_c+1$ leader-voting blocks in a wave, even when a reconfiguration is ongoing.

For configuration $R_l$ and wave $w$, let $B=\texttt{WaveLeader}(R_l,w)\neq\bot$ be the first-round block selected by the reconstructed coin. 
Process $p_i$ identifies $B$ as a \emph{direct anchor} when at least $q_l$ distinct $R_l$-typed leader-voting blocks $X$ in round $\mathsf{last}(w)$ satisfy $X\leadsto_i B.$
Here $X\leadsto_i B$ means that $B$ is in $X$'s causal history; 
the relation follows parent references toward earlier blocks. 
The process checks this condition on completing the wave and whenever it later DAG-delivers another leader-voting block for that wave. 
Rechecking matters because correct processes may receive the same last-round blocks in different orders.

Once a process identifies a direct anchor $B^\star$ in wave $k$, it scans
the ancestors of $B^\star$ for anchors from earlier waves that it has not
previously processed, in reverse wave order.
When it examines a wave $w<k$, it accepts that wave's candidate $B'$ only
when, for some configuration $R_l$, at least $f_l+1$ distinct $R_l$-typed
leader-voting blocks $X$ for wave $w$ satisfy
$B^\star\leadsto_i X\leadsto_i B'.$
Recall that $\leadsto_i$ points toward causal history:
the later anchor $B^\star$ contains $X$ in its ancestry, and $X$ in turn
contains $B'$. Only the voting blocks of $B'$'s own wave are counted,
although voting blocks of later waves may also lie between $B^\star$ and
$B'$.
An accepted $B'$ becomes the scan anchor for the next iteration and is
prepended to the result; a wave with no acceptable candidate is skipped.


The full backward scan rooted at a direct anchor determines its \emph{canonical leader prefix}: a prefix of the total order of leader blocks that all active correct processes will eventually derive (\cref{lemma:deterministic-leader-scans}). 
The serialized local procedure $\texttt{Elect}$ performs this scan and atomically appends the missing suffix to
$\mathit{finalized\text{-}values}_i$. 
When one process identifies a direct anchor, delivering $q_l$ many $R_l$-typed leader-voting blocks in one round, the causal history of any direct anchor in a later wave will include $f_l+1$ of these blocks. 
Thus other processes will accept the same anchor in a later backward scan.
The detailed helper definitions and election procedure appear in \cref{alg:leader-election-helpers,alg:leader-election_2}.

Together, transition blocks preserve Per-Round Quorum Progress across transitions, while requiring the leader-voting blocks for each decision to declare one configuration prevents a decision from combining configurations.
Transitions may prevent a leader from being elected in some waves. 
To ensure the liveness of leader election, we rely on the assumption that reconfigurations eventually cease.

%% file: figures/dag-example2.tex
\providecommand{\dagXScale}{1.32cm}%
\providecommand{\dagLegendXScale}{1.50cm}%
\providecommand{\dagYScale}{0.88cm}%
\providecommand{\dagEllipseYRadius}{1.40cm}%
\providecommand{\dagColumnLabelY}{-0.65}%
\providecommand{\dagLegendRowOne}{-1.80}%
\providecommand{\dagLegendRowTwo}{-2.70}%
\providecommand{\dagArrowBlackY}{-1.80}%
\providecommand{\dagArrowRedY}{-2.70}%
\providecommand{\dagAnnotationFont}{\footnotesize}%

\colorlet{bDraw}{red!70!black}%
\colorlet{bFill}{red!20}%
\colorlet{bMark}{red!70!black}%

\newcommand{\drawtoPrev}[4]{%
    \pgfmathtruncatemacro{\prevcol}{#1-1}%
    \ifnum#1>1
        \ifx&#3&\else
            \foreach \targetRow in {#3} {
                \ifcsname pgf@sh@ns@b-\targetRow-\prevcol\endcsname
                    \draw[#4] (b-#2-#1) -- (b-\targetRow-\prevcol);
                \fi
            }
        \fi
    \fi
}%

\newcommand{\placeA}[2]{\node[blockA] (b-#2-#1) at ({#1},{5-#2}) {};}%
\newcommand{\placeMarked}[3]{\node[blockA] (b-#2-#1) at ({#1},{5-#2}) {\textcolor{#3}{\scriptsize $\ast$}};}%
\newcommand{\placeAB}[2]{\node[blockAB] (b-#2-#1) at ({#1},{5-#2}) {};}%
\newcommand{\placeBA}[2]{\node[blockBA] (b-#2-#1) at ({#1},{5-#2}) {};}%
\newcommand{\placeB}[2]{\node[blockB] (b-#2-#1) at ({#1},{5-#2}) {};}%

\newcommand{\blockA}[3]{\placeA{#1}{#2}\drawtoPrev{#1}{#2}{#3}{arrowA}}%
\newcommand{\blockMarked}[4]{\placeMarked{#1}{#2}{#4}\drawtoPrev{#1}{#2}{#3}{arrowA}}%
\newcommand{\blockAB}[3]{\placeAB{#1}{#2}\drawtoPrev{#1}{#2}{#3}{arrowA}}%
\newcommand{\blockBA}[4]{\placeBA{#1}{#2}\drawtoPrev{#1}{#2}{#3}{#4}}%
\newcommand{\blockB}[4]{\placeB{#1}{#2}\drawtoPrev{#1}{#2}{#3}{#4}}%

\begin{tikzpicture}[x=\dagXScale,y=\dagYScale,every node/.style={font=\small}]
    \tikzset{
        blockBase/.style={draw, circle, minimum size=0.62cm, inner sep=0pt, line width=0.45pt},
        blockA/.style={blockBase, draw=black, fill=white},
        blockAB/.style={
            blockBase,
            draw=black,
            fill=white,
            path picture={\fill[bFill] (path picture bounding box.center) rectangle (path picture bounding box.north west);}
        },
        blockBA/.style={
            blockBase,
            draw=bDraw,
            fill=bFill,
            path picture={\fill[white] (path picture bounding box.center) rectangle (path picture bounding box.south east);}
        },
        blockB/.style={blockBase, draw=bDraw, fill=bFill},
        legendText/.style={anchor=west, align=left, font=\footnotesize, inner sep=0pt},
        groupBox/.style={
            draw=black!45,
            fill=white,
            rounded corners=5pt,
            line width=0.45pt
        },
        arrowA/.style={draw=black, line width=0.45pt, -{Stealth[length=2.0mm,width=1.6mm]}},
        arrowB/.style={draw=bDraw, line width=0.55pt, -{Latex[length=2.4mm,width=1.9mm]}}
    }

    \blockMarked{1}{1}{}{bMark} 
    \blockA{1}{2}{} 
    \blockA{1}{3}{} 
    \blockA{1}{4}{}

    \blockMarked{2}{1}{1,2,3}{bMark}
    \blockA{2}{2}{2,3,4}
    \blockA{2}{3}{2,3,4}{}
    \blockA{2}{4}{2,3,4}

    \blockMarked{3}{1}{2,3,4}{bMark}
    \blockMarked{3}{2}{1,2,3}{bMark}
    \blockA{3}{3}{2,3,4}
    \blockA{3}{4}{2,3,4}{}

    \blockAB{4}{1}{1,2,3}
    \blockMarked{4}{2}{2,3,4}{bMark}
    \blockAB{4}{3}{2,3,4}
    \blockMarked{4}{4}{2,3,4}{bMark}

    \draw[very thick, dashed, rounded corners=12pt, draw=red!70] 
        (4.0, 2) ellipse [x radius=0.6cm, y radius=\dagEllipseYRadius];
    \node[
        groupBox,
        anchor=north,
        align=center,
        font=\dagAnnotationFont\linespread{0.94}\selectfont,
        inner xsep=4pt,
        inner ysep=2.5pt
    ] at (3.70, 0.40) {$p_5$ sees $f_l+1$ marked blocks\\ in round $r+3$ and starts in round $r+5$};
    
    \blockAB{5}{1}{2,3,4}
    \blockAB{5}{2}{1,2,3}
    \blockAB{5}{3}{2,3,4}
    \blockAB{5}{4}{2,3,4}

    \blockAB{6}{1}{1,2,3}
    \blockAB{6}{2}{1,2,3}
    \blockAB{6}{3}{2,3,4}
    \blockBA{6}{5}{2,3,4}{arrowA}

    \blockBA{7}{1}{1,2,3,5}{arrowB}
    \blockAB{7}{2}{1,2,3}
    \blockBA{7}{3}{1,2,3,5}{arrowB}
    \blockBA{7}{5}{1,2,3}{arrowA}

    \blockB{8}{1}{1,2,3,5}{arrowB}
    \blockBA{8}{2}{1,2,3,5}{arrowB}
    \blockB{8}{3}{1,2,3,5}{arrowB}
    \blockBA{8}{5}{1,2,3,5}{arrowB}

    \blockB{9}{1}{1,2,3,5}{arrowB}
    \blockB{9}{2}{1,2,3,5}{arrowB}
    \blockB{9}{3}{1,2,3,5}{arrowB}
    \blockB{9}{5}{1,2,3,5}{arrowB}

    \foreach \c in {1,...,9} {
        \pgfmathtruncatemacro{\roundOffset}{\c-1}
        \node[font=\small] at ({\c}, {\dagColumnLabelY})
            {\ifnum\roundOffset=0 $r$\else $r+\roundOffset$\fi};
    }

    \foreach \r in {1,...,5} {
        \node[anchor=east] at ({0.45}, {5-\r}) {$p_{\r}$};
    }

    \begin{scope}[x=\dagLegendXScale]
    \path[groupBox]
        (0.20,{\dagLegendRowTwo-0.55}) rectangle
        (8.50,{\dagLegendRowOne+0.55});

    \node[blockA] at (0.65,\dagLegendRowOne) {};
    \node[legendText, text width=1.55cm] at (1.00,\dagLegendRowOne) {$R_l$ block,\\$\{R_l\}$};

    \node[blockA] at (2.40,\dagLegendRowOne) {\textcolor{bMark}{\scriptsize $\ast$}};
    \node[legendText, text width=2.35cm] at (2.75,\dagLegendRowOne) {$R_l$ block,\\marked by $R_{l+1}$};

    \node[blockAB] at (4.65,\dagLegendRowOne) {};
    \node[legendText, text width=2.25cm] at (5.00,\dagLegendRowOne) {$R_l$ block,\\$\{R_l,R_{l+1}\}$};

    \draw[arrowA] (6.65,\dagArrowBlackY) -- (7.05,\dagArrowBlackY);
    \node[legendText, text width=1.65cm] at (7.18,\dagArrowBlackY) {$R_l$ quorum};

    \node[blockB] at (0.65,\dagLegendRowTwo) {};
    \node[legendText, text width=1.80cm] at (1.00,\dagLegendRowTwo) {$R_{l+1}$ block,\\$\{R_{l+1}\}$};

    \node[blockBA] at (3.40,\dagLegendRowTwo) {};
    \node[legendText, text width=2.25cm] at (3.75,\dagLegendRowTwo) {$R_{l+1}$ block,\\$\{R_l,R_{l+1}\}$};

    \draw[arrowB] (6.20,\dagArrowRedY) -- (6.60,\dagArrowRedY);
    \node[legendText, text width=1.95cm] at (6.73,\dagArrowRedY) {$R_{l+1}$ quorum};
    \end{scope}
\end{tikzpicture}%

%% file: figures/dag_example_leave.tex
\providecommand{\dagXScale}{1.32cm}%
\providecommand{\dagYScale}{0.88cm}%
\providecommand{\dagEllipseYRadius}{1.40cm}%
\providecommand{\dagColumnLabelY}{-0.65}%
\providecommand{\dagAnnotationFont}{\footnotesize}%

\colorlet{bDraw}{red!70!black}%
\colorlet{bFill}{red!20}%
\colorlet{bMark}{red!70!black}%

\newcommand{\drawtoPrev}[4]{%
    \pgfmathtruncatemacro{\prevcol}{#1-1}%
    \ifnum#1>1
        \ifx&#3&\else
            \foreach \targetRow in {#3} {
                \ifcsname pgf@sh@ns@b-\targetRow-\prevcol\endcsname
                    \draw[#4] (b-#2-#1) -- (b-\targetRow-\prevcol);
                \fi
            }
        \fi
    \fi
}%

\newcommand{\placeA}[2]{\node[blockA] (b-#2-#1) at ({#1},{5-#2}) {};}%
\newcommand{\placeMarked}[3]{\node[blockA] (b-#2-#1) at ({#1},{5-#2}) {\textcolor{#3}{\scriptsize $\ast$}};}%
\newcommand{\placeAB}[2]{\node[blockAB] (b-#2-#1) at ({#1},{5-#2}) {};}%
\newcommand{\placeBA}[2]{\node[blockBA] (b-#2-#1) at ({#1},{5-#2}) {};}%
\newcommand{\placeB}[2]{\node[blockB] (b-#2-#1) at ({#1},{5-#2}) {};}%

\newcommand{\blockA}[3]{\placeA{#1}{#2}\drawtoPrev{#1}{#2}{#3}{arrowA}}%
\newcommand{\blockMarked}[4]{\placeMarked{#1}{#2}{#4}\drawtoPrev{#1}{#2}{#3}{arrowA}}%
\newcommand{\blockAB}[3]{\placeAB{#1}{#2}\drawtoPrev{#1}{#2}{#3}{arrowA}}%
\newcommand{\blockBA}[4]{\placeBA{#1}{#2}\drawtoPrev{#1}{#2}{#3}{#4}}%
\newcommand{\blockB}[4]{\placeB{#1}{#2}\drawtoPrev{#1}{#2}{#3}{#4}}%

\begin{tikzpicture}[x=\dagXScale,y=\dagYScale,every node/.style={font=\small}]
    \tikzset{
        blockBase/.style={draw, circle, minimum size=0.62cm, inner sep=0pt, line width=0.45pt},
        blockA/.style={blockBase, draw=black, fill=white},
        blockAB/.style={
            blockBase,
            draw=black,
            fill=white,
            path picture={\fill[bFill] (path picture bounding box.center) rectangle (path picture bounding box.north west);}
        },
        blockBA/.style={
            blockBase,
            draw=bDraw,
            fill=bFill,
            path picture={\fill[white] (path picture bounding box.center) rectangle (path picture bounding box.south east);},
            postaction={draw=bDraw, line width=0.45pt}
        },
        blockB/.style={blockBase, draw=bDraw, fill=bFill},
        groupBox/.style={
            draw=black!45,
            fill=white,
            rounded corners=5pt,
            line width=0.45pt
        },
        arrowA/.style={draw=black, line width=0.45pt, -{Stealth[length=2.0mm,width=1.6mm]}},
        arrowB/.style={draw=bDraw, line width=0.55pt, -{Latex[length=2.4mm,width=1.9mm]}}
    }

    \blockMarked{1}{1}{}{bMark}
    \blockA{1}{2}{}
    \blockA{1}{3}{}
    \blockA{1}{4}{}
    \blockA{1}{5}{}

    \blockMarked{2}{1}{2,3,4,5}{bMark}
    \blockMarked{2}{2}{1,2,3,4}{bMark}
    \blockA{2}{3}{2,3,4,5}
    \blockA{2}{4}{2,3,4,5}
    \blockA{2}{5}{2,3,4,5}

    \blockAB{3}{1}{1,2,3,4}
    \blockMarked{3}{2}{2,3,4,5}{bMark}
    \blockAB{3}{3}{1,2,3,4}
    \blockMarked{3}{4}{2,3,4,5}{bMark}
    \blockMarked{3}{5}{2,3,4,5}{bMark}

    \blockAB{4}{1}{1,2,3,4}
    \blockAB{4}{2}{2,3,4,5}
    \blockAB{4}{3}{1,2,3,4}
    \blockAB{4}{4}{2,3,4,5}
    \blockAB{4}{5}{2,3,4,5}

    \blockAB{5}{1}{1,2,3,4}
    \blockAB{5}{2}{2,3,4,5}
    \blockAB{5}{3}{1,2,3,4}
    \blockAB{5}{4}{2,3,4,5}
    \blockAB{5}{5}{2,3,4,5}

    \blockBA{6}{1}{1,2,3}{arrowB}
    \blockAB{6}{2}{1,2,3,5}
    \blockBA{6}{3}{1,2,3}{arrowB}
    \blockAB{6}{5}{2,3,4,5}

    \blockB{7}{1}{1,2,3}{arrowB}
    \blockBA{7}{2}{1,2,3}{arrowB}
    \blockB{7}{3}{1,2,3}{arrowB}

    \blockB{8}{1}{1,2,3}{arrowB}
    \blockB{8}{2}{1,2,3}{arrowB}
    \blockB{8}{3}{1,2,3}{arrowB}

    \draw[very thick, dashed, rounded corners=12pt, draw=red!70]
        (8.0, 2.5) ellipse [x radius=0.6cm, y radius=0.92cm];

    \node[
        groupBox,
        anchor=north west,
        align=center,
        font=\dagAnnotationFont\linespread{0.92}\selectfont,
        inner xsep=4pt,
        inner ysep=2.5pt
    ] at (6.28, 1.42) {$p_5$ stops producing blocks\\ at round $r+6$, and completely\\ leaves when it sees $f_{l+1}+1$\\ blocks in round $r+7$};

    \foreach \c in {1,...,8} {
        \pgfmathtruncatemacro{\roundOffset}{\c-1}
        \node[font=\small] at ({\c}, {\dagColumnLabelY})
            {\ifnum\roundOffset=0 $r$\else $r+\roundOffset$\fi};
    }

    \foreach \r in {1,...,5} {
        \node[anchor=east] at ({0.45}, {5-\r}) {$p_{\r}$};
    }
\end{tikzpicture}%

%% file: new-sections/Proofs.tex
\section{Correctness}
\label{sec:proof-overview}

We sketch the main arguments; complete proofs appear in
\cref{sec:proofs}.

\begin{restatable}
  [Correctness of Dynamic DAG Byzantine Atomic Broadcast]
  {theorem}
  {dynamicdagbabcorrectness}
\label{theorem:dynamic-dag-bab-correctness}
Under the model of \cref{sec:problem-definition}, suppose that the dealer
follows the issuance rule of \cref{sec: config begin}. For Probabilistic Progress, also assume the conditional
probability requirement of \cref{sec:probabilistic-progress}.
For every finite sequence of reconfiguration messages, with no a priori bound
on its length, the protocol of \cref{sec:our-solution} implements the
leader-level Dynamic DAG Byzantine Atomic Broadcast of
\cref{def:dynamic-dag-bab}.
\end{restatable}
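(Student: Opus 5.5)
The theorem is a composition result, so I would prove it by assembling the component results named at the end of \cref{sec:problem-definition} rather than reasoning from scratch. \Cref{def:dynamic-dag-bab} has two parts: an underlying Dynamic DAG-BB satisfying \cref{def:dynamic-dag-structured-broadcast}, and three properties of the finalized leader sequences. I would handle each part by invoking the corresponding lemma or theorem, and then check that all of them hold under the same hypotheses: the model of \cref{sec:problem-definition}, the dealer's issuance rule with $\delta_{l+1}=s+2$, finitely many reconfigurations, and, for Probabilistic Progress only, the conditional probability requirement.

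First, for the dissemination part I invoke \cref{lemma:dynamic-dag-structured-broadcast}. It covers DAG-Integrity, DAG-Consistency, Causal Closure, Per-Round Quorum Progress and Reconfiguration Completion. It rests on the configuration-scoped BRB properties (\cref{lemma:brb-integrity,lemma:brb-safety}) and on the well-spacing consequences of the issuance rule (\cref{lemma:transition-window-disjointness}, \cref{corollary:install-transition-round-bound}, \cref{lemma:joining-transition-round-bound}). Here I must confirm that the issuance rule is the only dealer assumption these lemmas use, so that the theorem's hypothesis is sufficient.

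Second, for the ordering part I cite \cref{corollary:leader-integrity} for Integrity. Its key fact is that $\texttt{Elect}$ appends at most one block per wave and only blocks already in $DAG_i$. For Total Order I cite \cref{theorem:leader-total-order}. Its core steps are:
\begin{itemize}
\item \cref{lemma:same-round-enabled-configs} gives uniqueness of the configuration with $f_c+1$ leader-voting blocks in a wave.
\item A quorum-intersection argument shows that a direct anchor's $q_l$ voters leave $f_l+1$ voters in the history of any later anchor.
\item \cref{lemma:deterministic-leader-scans} shows that the canonical leader prefixes computed by different processes are comparable.
\end{itemize}
Since finalized sequences only grow by appending the suffix of such a prefix, pairwise prefix-comparability holds at every point in the execution.

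Third, for Probabilistic Progress I cite \cref{theorem:leader-probabilistic-progress}. Reconfiguration Completion together with the finiteness of reconfigurations yields a round $r_*$ after which all waves are stable under $R_L$. In each stable wave, \cref{lemma:stable-wave-reachability} fixes a common core before $f_L+1$ coin shares are released. Coin Unpredictability and Fairness, under the conditional probability requirement, then give each stable wave a constant probability $\geq (q_L-f_L)/n_L$-ish of producing a direct anchor. A Borel--Cantelli style argument over infinitely many waves yields infinitely many finalizations with probability one.

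I expect the main obstacle to be consistency of the quantifier \emph{active correct} across the parts. Total Order must also hold for removed processes before they leave and for joining processes that start with a transferred prefix. To handle this I would check that a joining process's first $\texttt{Elect}$ call, run on an ancestor-closed transferred DAG, still produces a canonical prefix. That is exactly \cref{lemma:deterministic-leader-scans} applied to its local DAG, since Causal Closure guarantees the scanned history is complete.
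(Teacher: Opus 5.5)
Your proposal is correct and follows the paper's proof. The paper likewise verifies the components of \cref{def:dynamic-dag-bab} by citing \cref{lemma:dynamic-dag-structured-broadcast}, \cref{corollary:leader-integrity}, \cref{theorem:leader-total-order} (via \cref{lemma:nested-leader-scans,lemma:canonical-local-prefixes}), and \cref{theorem:leader-probabilistic-progress}. Two refinements are worth noting:
\begin{itemize}
\item The per-wave bound comes directly from the conditional requirement as $\varepsilon_L=q_L/n_L-\operatorname{negl}(\lambda)$, not from Unpredictability and Fairness. Every block of $\mathcal K_w$ counts, Byzantine or not. The paper stresses that Unpredictability alone does not give this conditional bound, and coin outcomes in different waves need not be independent, so the argument is a conditional chain rule rather than an independence-based Borel--Cantelli.
\item The paper explicitly checks that the protocol enforces the two premises of \cref{lemma:stable-wave-reachability}: correct last-round broadcasters release their shares, and parents are fixed before the share is generated.
\end{itemize}
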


\begin{proof}[Proof sketch.]
\emph{Broadcast safety and reconfiguration progress.}
Certificate validation, authenticated proposal and echo checks, and one
delivery per slot give BRB Integrity. In each round, certificates can use
only one configuration or the two configurations of one valid transition.
Their quorums intersect in a correct process that votes at most once in
the slot, so the slot cannot have certificates for different values or
under both configurations
(\cref{lemma:brb-integrity,lemma:brb-aligned-safety,lemma:brb-safety}).
BRB Integrity and Agreement give DAG-Integrity and DAG-Consistency, while
obtaining and validating every parent and missing ancestor before inserting
any block, including a transferred block, gives Causal Closure
(\cref{lemma:dynamic-dag-structured-broadcast}).

Fix a valid transition $R_l\rightarrow R_{l+1}$.
By \cref{lemma:eventual-first-installation}, let $t$ be the earliest install
transition round among correct processes in $\mathsf{members}(R_l)$.
A correct process that installs in round $t$ has DAG-delivered
$f_l+1$ blocks marked by $R_{l+1}$.
Every round-$(t+1)$ block from a correct broadcaster in
$\mathsf{members}(R_l)$ is marked: either its pending field contains
$R_{l+1}$, or its $q_l$ round-$t$ parent slots intersect those $f_l+1$
marked slots. Every round-$(t+1)$ parent quorum counted under $R_l$ uses
distinct broadcaster slots in $\mathsf{members}(R_l)$, and therefore
contains at least $q_l-f_l\geq f_l+1$ blocks from correct broadcasters
(\cref{lemma:marked-block-propagation,lemma:predecessor-parent-slots}).

Block broadcasts after DAG delivery and $\texttt{FloodDAG}$ let correct
lagging and joining processes obtain and validate certified,
ancestor-closed history
(\cref{lemma:known-coverage,lemma:certified-history-dissemination}).
Using this dissemination and the marking argument above, every correct
process in $\mathsf{members}(R_l)$ installs in round $t$ or $t+1$.
A correct joining process eventually broadcasts its two transition blocks
(\cref{corollary:install-transition-round-bound,%
lemma:eventual-joining-discovery,%
lemma:transition-predecessor-quorum-progress}).

Because install rounds may differ by one, $\texttt{AdvanceRound}$ also accepts
$q_{l+1}$ admissible following-round blocks in the last two transition rounds.
If a required correct process stops before the ordinary
condition holds, its final $\texttt{FloodDAG}$ supplies blocks satisfying the
ordinary or following-round condition, so the wait terminates without
skipping a round
(\cref{lemma:stopping-process-transition-state,%
lemma:transition-lookahead-delivery,corollary:transition-round-progress}).

Even if an added process is Byzantine or a process is removed, the fault
bound leaves enough correct processes to supply the required parent quorums.
Thus the protocol satisfies Per-Round Quorum Progress and Reconfiguration
Completion
(\cref{lemma:reconfiguration-completion,%
lemma:dynamic-dag-structured-broadcast}).
Moreover, a following configuration's earliest inclusion round exceeds
$t+5$, and no installation precedes that inclusion round. Hence distinct
transition windows do not overlap
(\cref{lemma:transition-window-disjointness}); well-spacing is derived rather
than assumed.

\emph{Integrity and Total Order.}
An $R_c$-typed leader-voting block records
$\mathsf{enabled}(\mu)=\{R_c\}$ and carries a valid $R_c$-scoped coin share.
If $R_a$ and $R_b$ respectively supplied $f_a+1$ and $f_b+1$ such blocks
in distinct broadcaster slots, each collection would contain a correct
broadcaster. By \cref{lemma:same-round-enabled-configs}, those broadcasters
cannot record different singleton enabled-configs values in the same
logical round. Thus at most one configuration can supply the $f_c+1$
leader-voting blocks needed to reconstruct a coin for a wave.
Threshold-signature uniqueness and DAG-Consistency then make the selected
leader unique
(\cref{lemma:unique-accepted-leader}).

A direct anchor identified from $R_l$-typed leader-voting blocks is reached
by at least $q_l$ such blocks in the last round of its wave.
Every later-wave block DAG-delivered by a correct process reaches at least
$2f_l+1$ $R_l$-declared blocks in that round.
Both collections use distinct broadcaster slots from
$\mathsf{members}(R_l)$, so DAG-Consistency gives at least
$q_l+(2f_l+1)-n_l=f_l+1$ common blocks.
These blocks satisfy the backward-scan acceptance condition. Hence a later
backward scan accepts the earlier direct anchor when it reaches that wave
(\cref{lemma:later-block-last-round-ancestry,%
corollary:direct-anchor-scan-acceptance}).

Each scan depends only on its starting anchor's immutable causal history.
The finalized leader sequences of active correct processes are therefore
nested prefixes, giving Integrity and Total Order
(\cref{lemma:deterministic-leader-scans,lemma:nested-leader-scans,%
corollary:leader-integrity,theorem:leader-total-order}).

\emph{Probabilistic Progress.}
After the finitely many transitions, the final configuration $R_L$ remains
installed and every correct $R_L$-member that remains active advances through
infinitely many stable waves. Each such wave $w$ has a common core~\cite{Canetti1995StudiesIS,canettiFirstGather}
$\mathcal K_w$ of $q_L$ first-round blocks in the causal history
of every last-round block DAG-delivered by a correct process. Every correct
$R_L$-member that remains active eventually DAG-delivers all blocks in
$\mathcal K_w$ and at least $q_L$ distinct $R_L$-typed leader-voting blocks,
each of which reaches every block in $\mathcal K_w$
(\cref{corollary:eventual-stable-suffix,%
corollary:stable-suffix-round-progress,lemma:stable-wave-reachability,%
corollary:stable-wave-availability}).

In every stable wave, the coin selects the broadcaster of a block in
$\mathcal K_w$ with probability bounded below by a fixed positive constant,
regardless of the outcomes of earlier waves
(\cref{sec:probabilistic-progress}).
When this happens, every correct $R_L$-member that remains active
eventually identifies that block as a direct anchor, unless it has already
identified a direct anchor in a later wave
(\cref{proposition:stable-wave-direct-anchor}).

The probability of seeing no further selections from a common core is
therefore zero, regardless of how earlier waves ended.
Consequently, every correct $R_L$-member that remains active identifies
direct anchors in arbitrarily late stable waves with probability one
(\cref{lemma:infinitely-many-direct-anchors}).
Processing these anchors extends the nested finalized leader sequences
with blocks from arbitrarily late waves. Hence every such process
finalizes infinitely many blocks with probability one
(\cref{lemma:nested-leader-scans}).
\end{proof}

%% file: new-sections/Conclusion.tex
\section{Conclusion and Open Questions}

In this work we present the first seamless reconfiguration protocol for DAG-based Byzantine atomic broadcast in a fully asynchronous network. 
Membership and quorum thresholds change while dissemination continues. Reconfiguration occurs seamlessly by processes locally interpreting their view of the DAG. 

We conclude this paper with several open questions that we leave to future work: Firstly, is it possible to for reconfigurations to be seamless and non-blocking when transition windows are allowed to overlap. Furthermore, is it possible to extend reconfigurations to not preclude removal and addition of more than one processes at a time. Lastly, if a downgrade to a partially-synchronous model would invite even more elegant solutions.

%% file: Appendix/Full-algos.tex
\section{Full Algorithms}
\label{sec:full-algorithms}

Throughout the algorithms, $R_a,R_b,R_c$ denote configurations;
$r,s,u$ denote logical rounds; $m$ denotes a block payload, whose optional
coin-share field is $\mathsf{coin}(m)$; $\mu$ denotes
block metadata; $C$ denotes a BRB certificate; $B$ denotes a certified
block; $ID$ denotes a process identity; and $\bot$ denotes absence.

\subsection{Local Variables and Threshold Functions}

At process $p_i$, $\mathit{local}_i$ is the installed configuration and
$\mathit{pending}_i$ is either the authenticated next reconfiguration
message currently recorded by the process or $\bot$.
The Boolean $\mathit{joining}_i$ is true while $p_i$ is performing the
joining operation. The set $\mathit{passive\text{-}rounds}_i$ contains
rounds in which $p_i$ must not broadcast a block, echo, or vote.
The immutable value
$\mathit{enabled\text{-}configs}_i[r]$ records either a singleton set
$\{R_c\}$ or an authenticated pair $\{R_a,R_b\}$ enabled by $p_i$ in round
$r$.
The value $r_i$ is the current logical round. After the round-advancement
task starts, its serialized $\texttt{AdvanceRound}_i$ loop is the only task
that changes this value and always changes it from $r$ to $r+1$. The value
$ID_i$ is the identity of $p_i$, and $DAG_i[r]$ is the set of blocks that
$p_i$ has DAG-delivered in round $r$.

The set $\mathit{known}_i$ contains the identities of the members of every
configuration covered by the highest-numbered authenticated
reconfiguration message processed by $p_i$. It is used only to choose the
recipients of protocol messages; it does not change membership,
eligibility, or a quorum threshold.

For a configuration $R_c$, the projections
$\mathsf{members}(R_c)$, $\mathsf{number}(R_c)$, and
$\mathsf{earliest}(R_c)$ return its membership, configuration number, and
earliest inclusion round. The values $n_c$, $f_c$, and $q_c$ are the
membership size, fault bound, and quorum threshold of $R_c$.
For an enabled-configs value $E$, $\mathsf{Recipients}(E)$ is the union of
the members of the configurations in $E$.

$\mathsf{ValidReconfiguration}(R_l,R_{l+1})$ validates the authorization
evidence, consecutive configuration numbers, earliest inclusion round,
and membership difference. $\texttt{UpdateKnown}_i(R_c)$ authenticates
$R_c$ and the identities carried by it and adds those identities and the
members of $R_c$ to $\mathit{known}_i$. If this enlarges
$\mathit{known}_i$, the procedure invokes $\texttt{FloodDAG}_i$ so that
the newly known identities are sent the blocks currently stored by $p_i$.

$\mathsf{ValidSign}(C,m,r,\mu)$ validates $C$ against the complete BRB value and the
threshold public key of the configuration declared in that value.
$\mathsf{ValidEnabled}(\mu)$ checks that the enabled field is either the
declared configuration alone or the pair belonging to a valid transition.
$\mathsf{ValidPending}(r,\mu)$ checks that a non-$\bot$ pending field is
the valid successor of the declared configuration and is not used before
its earliest inclusion round.
$\mathsf{JoiningMetadata}(\mu,R_a,R_b)$ checks that $R_a\rightarrow R_b$ is
a valid transition, the block declares $R_b$, its enabled field is
$\{R_a,R_b\}$, and its broadcaster belongs to
$\mathsf{members}(R_b)\setminus\mathsf{members}(R_a)$.

$\mathsf{ValidParentCount}_i(s,\mu)$ is evaluated by a correct process
before it echoes or votes. Ordinarily it requires $q_b$ parents for a block
declared under $R_b$. For metadata satisfying
$\mathsf{JoiningMetadata}(\mu,R_a,R_b)$, the $q_a$-parent exception is
accepted only when the process's immutable enabled-configs value for
round $s$ equals the pair $\{R_a,R_b\}$ recorded by the block. This permits
the exception in any round in which that value is assigned; the metadata
does not prove the added process's locally chosen install transition round. The
correct joining process uses the exception only for its two prescribed
transition appends. A $q_b$ parent count remains the ordinary alternative
and does not depend on the exception.

$\mathsf{CertifiedParentCount}(s,\mu)$ is used when validating an already
certified block. It does not consult the receiving process's local
enabled-configs value, because at least one correct voter applied
$\mathsf{ValidParentCount}$ before the certificate was formed. It checks
the ordinary $q_b$ count or the permitted joining-process count recorded
in the certified metadata.

\input{Algos/variables}

\subsection{Configuration-Scoped Byzantine Reliable Broadcast}

$ECHOS_i[m,r,\mu]$ stores one valid echo signature per sender for the
specified BRB value. $VOTES_i[m,r,\mu]$ similarly stores valid BRB vote
shares. The Boolean values
$\mathit{echoed}_i[r,ID]$, $\mathit{voted}_i[r,ID]$, and
$\mathit{delivered}_i[r,ID]$ ensure that $p_i$ echoes, votes, and delivers
at most once for a round-broadcaster slot. A message for a round whose
enabled-configs value is not yet assigned is buffered until that assignment
is made.

$\mathsf{ValidInstance}_i(r,\mu)$ checks that the declared configuration
is enabled at $p_i$ in round $r$, that the broadcaster belongs to that
configuration, and that the enabled, pending, and parent fields are valid.
$\mathsf{MaySendBRB}_i(R_c,r)$ checks whether $p_i$ may broadcast, echo,
or vote under $R_c$ in round $r$: $R_c$ must be enabled, $p_i$ must be a
member, and the round must not be passive.

$\texttt{BRB-Broadcast}_i(m,r,\mu)$ sends the proposed value to the
members of the configurations recorded in the enabled field.
Let $R_c=\mathsf{local}(\mu)$. After collecting $q_c$ valid echoes,
$p_i$ calls $\texttt{SendVote}_i$. Receipt of $f_c+1$ valid votes also
causes a process that has not yet voted to call $\texttt{SendVote}_i$.
Receipt of $q_c$ valid votes produces the certificate $C$ and certified
block $B$. The complete certified block is then sent to every identity
in $\mathit{known}_i$, and the algorithm triggers
$\texttt{BRB-Deliver}_i$.

The temporary values $e$ and $v$ are the exact echo and vote messages that
are signed. The temporary $R_c$ and $ID$ are the declared configuration
and broadcaster identity obtained from $\mu$.

\input{Algos/brb}

\subsection{DAG Append and DAG Delivery}

For a block $B$, $R_B$, $E_B$, and $ID_B$ denote its declared
configuration, enabled field, and broadcaster identity.
$\mathsf{ref}(B)$ is the hash-and-identity reference stored in a child
block. The set $\mathit{scheduled}_i$ prevents $p_i$ from starting more
than one concurrent fetch operation for the same block.

$\mathsf{ParentBlocks}_i(R,r)$ contains the stored round-$r$ blocks that
may be selected as parents of a block whose declared configuration is
$R$. Every such block comes from an identity permitted by its own enabled
field, and its declared configuration number is no greater than that of
$R$. Every block in this set is certified and admissible because only an
admissible block is inserted in $DAG_i$.

For a supplied block set $D$, let $D[v]$ denote its round-$v$ blocks.
$\mathsf{ValidParentSet}_i(B,D)$ checks that every parent reference of
$B$ names a matching block in round $r-1$ of the supplied block set $D$
and that each named parent satisfies the configuration checks used by
$\mathsf{ParentBlocks}$. For round $1$, it instead requires the parent
field to be $\bot$.

$\mathsf{ValidCertifiedMetadata}_i(B)$ validates the certificate,
configuration evidence, broadcaster membership, enabled field, pending
field, and certified parent count.
$\mathsf{Admissible}_i(B)$ additionally requires all of $B$'s referenced
parents to be present and valid in $DAG_i$.

$\texttt{DAG-Append}_i(m,r)$ records the installed, pending, and
enabled-configs values at the beginning of the invocation. For $r>1$, it
waits for the applicable parent quorum, selects the blocks in the temporary
set $S$, and places their references in $h$. Only after $h$ is fixed, a
singleton-enabled last-round invocation sets $\mathsf{coin}(m)$ to the
configuration-scoped coin share; every other correct invocation sets it to
$\bot$. The procedure then constructs $\mu$ and invokes BRB.

$\texttt{ScheduleFetch}_i(B)$ starts
$\texttt{FetchAndDeliver}_i(B)$ if the block is neither stored nor already
being fetched. $\texttt{FetchAndDeliver}_i(B)$ first validates the
certificate and metadata. It then requests every missing parent from all
identities currently in $\mathit{known}_i$, repeating the request when
$\mathit{known}_i$ changes. Once all parents are present, it calls
$\texttt{AddAndDeliver}_i(B)$.

$\texttt{AddAndDeliver}_i(B)$ prevents two blocks from occupying the same
round-broadcaster slot, inserts an admissible block into $DAG_i$, sends
the certified block to every identity in $\mathit{known}_i$, and outputs
$\texttt{DAG-Deliver}_i$.

$\texttt{FloodDAG}_i$ takes the finite set of blocks currently stored in
$DAG_i$ and sends each certified block to every identity in
$\mathit{known}_i$. An active correct process continues invoking this
procedure while it remains active. The temporary set $K$ used by a fetch
operation is a copy of $\mathit{known}_i$ that lets the operation detect
when new identities have been added.

A \textsf{dag-reply} is sent in response to a \textsf{dag-request}.
In contrast, a \textsf{certified-block} message is sent without waiting
for a request when a certificate is completed, when a block is inserted,
or when $\texttt{FloodDAG}$ is invoked. Both message types are passed to
the same validation and fetching procedure.

\input{Algos/DAG-bb}

\subsection{Reconfiguration}

$\mathit{transition}_i$ is either $\bot$ or
$\langle R_l,R_{l+1},r\rangle$, where $r$ is the install transition round
chosen by $p_i$.

$\mathsf{LocalBlocks}_i(R_l,r)$ contains the round-$r$ blocks stored at
$p_i$ whose broadcaster identities belong to $\Pi_l$. This definition
selects blocks by broadcaster identity; it does not require every selected
block to be declared under $R_l$.

For a valid transition $R_l\rightarrow R_{l+1}$,
$\mathsf{Marked}_i(R_l,R_{l+1},B)$ holds when $B$ or one of its ancestors
contains $R_{l+1}$ in its pending field.

$\mathsf{MarkedBlocks}_i(R_l,R_{l+1},r)$ counts marked blocks in
$\mathsf{LocalBlocks}_i(R_l,r)$.

$\mathsf{AdmissibleIDs}_i(R,u)$ contains the distinct identities in
$\mathsf{members}(R)$ for which $p_i$ stores an admissible round-$u$
block.

On receiving $R_{l+1}$, a member of $R_l$ authenticates it, updates
$\mathit{known}_i$, waits until it has installed $R_l$ and reached the
earliest inclusion round, and then records $R_{l+1}$ in
$\mathit{pending}_i$.

$\texttt{AdvanceRound}_i$ is the only task that changes $r_i$ after round
advancement starts, and its loop processes one value $r=r_i$ at a time.
Normally it waits for the parent quorum required in round $r$ and then sets
$r_i$ to $r+1$. The temporary $R_c$ identifies the configuration whose
quorum is required: it is normally the installed configuration, but it
becomes $R_{l+1}$ in the last round before the transition completes.

There is one additional way to advance by one round while
$\mathit{transition}_i=\langle R_l,R_{l+1},r_0\rangle$. If
$r\in\{r_0+2,r_0+3\}$ and the process has at least $q_{l+1}$ blocks in
$\mathsf{ParentBlocks}_i(R_{l+1},r+1)$, it may set $r_i$ to $r+1$ without
first satisfying the ordinary round-$r$ condition. These are certified,
admissible blocks because $\mathsf{ParentBlocks}_i$ selects blocks already
inserted in $DAG_i$. This rule lets a transitioning process use a later DAG
layer that it received from another process. It never assigns a round beyond
$r+1$: after the increment, the same serialized loop evaluates the rule
again for the new current round. In particular, reaching $r_0+4$ invokes
$\texttt{FinishTransition}_i$ through the beginning-of-round rule before
further advancement can occur. If the additional condition is used in
round $r_0+2$, the same stored round-$(r_0+3)$ set satisfies the ordinary
condition in round $r_0+3$ on the next iteration; the two increments are
separate and no round is skipped.

$\texttt{UpdateConfigs}_i(r)$ first learns a pending reconfiguration that
appears in a marked block, if necessary. If at least $f_l+1$ blocks are
marked, it calls $\texttt{InstallTransition}_i$; otherwise it assigns the
old singleton enabled-configs value for the following round.

$\texttt{InstallTransition}_i(R_l,R_{l+1},r)$ records the transition and
assigns $\{R_l,R_{l+1}\}$ in rounds $r+1$ through $r+4$.
$\texttt{FinishTransition}_i(R_{l+1})$ installs $R_{l+1}$ and clears the
pending and transition values. The beginning-of-round rule calls it before
ordinary block production in round $r+4$, so an ordinary append in that
round snapshots $R_{l+1}$ rather than $R_l$.


After a removed process finishes the transition, it stops starting new
block-production operations and stops round advancement, leader-election,
and finalization tasks. While the departure wait is pending, every
block-production operation invoked in an earlier round continues until it
reaches $\texttt{BRB-Broadcast}$. The process continues its remaining BRB
and DAG tasks, including processing received blocks, obtaining missing
ancestors, and repeatedly invoking $\texttt{FloodDAG}$. It waits until

$$
  |\mathsf{AdmissibleIDs}_i(R_{l+1},r+5)|
  \geq f_{l+1}+1,
$$

invokes one final $\texttt{FloodDAG}$, and then stops its remaining tasks.

\input{Algos/reconfig}

\subsection{Joining Process}

\subsubsection{Transition Append}

$\texttt{DAG-Transition-Append}_i$ is used only by a joining process in
rounds $r_0+2$ and $r_0+3$, where $r_0$ is its install transition round.
It waits for $q_l$ admissible parents, stores their references in $h$,
declares $R_{l+1}$, records the enabled pair
$\{R_l,R_{l+1}\}$, sets $\mathsf{coin}(m)=\bot$, and invokes the
$R_{l+1}$-scoped BRB.

\subsubsection{State Transfer}

$\mathit{accepted\text{-}state\text{-}replies}_i$ counts state replies that
have satisfied the state-reply handler's conditions and have been merged. It
lets $\texttt{ContinueDownload}_i$ send another request after each accepted reply;
the temporary $a$ is the counter value recorded before the current request.

$\mathsf{ValidState}_i(D,R_{l+1})$ validates a finite set $D$ received in
a state reply. It rejects conflicting round-broadcaster slots, conflicts
with $DAG_i$, invalid certificates or metadata, and missing parents. For
each round $v$, $D[v]$ denotes the round-$v$ blocks in that finite set. For
every $B=\langle m,u,\mu,C\rangle\in D$, it also requires
$\mathsf{number}(\mathsf{local}(\mu))\leq\mathsf{number}(R_{l+1})$, every
configuration in $\mathsf{enabled}(\mu)$ to have number at most
$\mathsf{number}(R_{l+1})$, and any non-$\bot$ $\mathsf{pending}(\mu)$ to
have number at most $\mathsf{number}(R_{l+1})$.
$\texttt{MergeState}_i(D)$ processes the rounds of a valid set in
increasing order so that parents are inserted before their children.

$\texttt{DiscoverBegin}_i(R_l,R_{l+1})$ waits until some local DAG layer
contains blocks from at least $q_l$ broadcaster identities in $\Pi_l$,
including at least $f_l+1$ blocks marked by $R_{l+1}$, and returns the
least such round.

The procedure $\texttt{ContinueDownload}_i$, with arguments $R_l$ and
$R_{l+1}$, repeatedly sends a state request to every identity in
$\mathit{known}_i$ while $p_i$ is joining.
It repeats after processing a state reply that satisfies the state-reply
handler's conditions or after $\mathit{known}_i$ changes. The temporary
$K$ is the value of $\mathit{known}_i$ used for the current request. The
received set is written as $D_j$ when it was sent by $p_j$; $D_i$ is the
set sent by $p_i$ when it answers another process's request.

The state-request handler runs at every active process. A request satisfying
its guard---a valid transition and a requester added by that transition---
causes the recipient to send its finite ancestor-closed stored blocks for
which the declared configuration, every configuration in the enabled
field, and any non-$\bot$ pending configuration have number at most
$\mathsf{number}(R_{l+1})$. The state-reply handler and the other state-transfer operations run
only while the receiving process is joining.

\subsubsection{Joining Operation}

$\mathsf{HistoryTasks}_i$ contains certified-block receipt, DAG
request/reply and state-reply processing, fetching, DAG delivery, and
repeated $\texttt{FloodDAG}$ invocations.
$\mathsf{CatchUpTasks}_i$ contains the BRB proposal, echo, and vote receive
handlers and $\texttt{AdvanceRound}$.
$\mathsf{PostJoinTasks}_i$ contains the remaining protocol tasks, including
ordinary block production. The first set starts before state discovery,
the second starts after $\texttt{DiscoverBegin}$ has returned $u$ and the
joining process has obtained the required round-$(u+1)$ blocks, and the
third starts after the transition completes.

\input{Algos/join_process}

\subsection{Leader Election and Finalization}

$\mathsf{VotingBlocks}_i(R_l,w)$ contains the singleton-enabled
last-round blocks for wave $w$ that are declared under $R_l$ and carry a
valid $R_l$-scoped coin share.
$\mathsf{WaveConfigs}_i(w)$ contains the configurations for which at least
one such block is present.
$\mathsf{Shares}_i(R_l,w)$ extracts one valid coin share per broadcaster
from those blocks.

$\mathsf{Support}_i(B,R_l,w)$ counts the $R_l$-typed leader-voting blocks
whose causal history contains $B$.
$\mathsf{HistorySupport}_i(B^\star,B,R_l,w)$ counts the leader-voting
blocks that lie in the causal history of $B^\star$ and whose own causal
history contains $B$.

$\texttt{WaveLeader}(R_l,w)$ combines exactly $f_l+1$ valid coin shares.
The resulting configuration-scoped coin selects one member of $\Pi_l$.
The function returns that member's locally stored round-$\mathsf{first}(w)$
block declared under $R_l$, and returns $\bot$ otherwise. The temporary $S_{l,w}$ is the
combined coin signature.

$\mathit{finalized\text{-}values}_i$ is the sequence of elected leader
blocks finalized by $p_i$.
$\mathit{processed\text{-}wave}_i$ is the greatest wave through which its
backward scan has completed.

$\texttt{TryElect}_i(w)$ constructs $A$, the set of locally available
coin-selected candidates having support at least $q_l$. It calls
$\texttt{Elect}_i$ only when this set contains exactly one block.

$\texttt{Elect}_i(B,k)$ scans the unprocessed waves preceding $k$ in
reverse order. The temporary $v$ is the previously processed wave, $L$ is
the sequence being constructed, and $B^\star$ is the most recently
accepted block used for the next causal-history check. For each earlier
wave, $E_w$ contains the candidates having history support at least
$f_l+1$. A unique such candidate is prepended to $L$. The procedure then
appends $L$ to $\mathit{finalized\text{-}values}_i$ and sets
$\mathit{processed\text{-}wave}_i$ to $k$.

Invocations of $\texttt{TryElect}_i$ and $\texttt{Elect}_i$ at one process
are serialized.

\input{Algos/leader_election2}

%% file: Algos/variables.tex
\begin{algorithm}[H]
\caption{Local variables and threshold functions at process $p_i$}
\label{alg:brb-variables}
\scriptsize
\begin{algorithmic}[1]

\State $\mathit{local}_i\gets R_1$;
       $\mathit{pending}_i\gets\bot$;
       $\mathit{joining}_i\gets\mathsf{false}$;
       $\mathit{passive\text{-}rounds}_i\gets\varnothing$
\State $\mathit{known}_i\gets\mathsf{members}(R_1)$
       \Comment{Known identities}
\State $\mathit{enabled\text{-}configs}_i[1]\gets\{R_1\}$ and
       $\mathit{enabled\text{-}configs}_i[r]\gets\bot$ for every $r\geq2$
\State $ID_i\gets p_i$; $r_i\gets1$

\State For $R_l=\langle\mathsf{RECONFIG},\Pi_l,l,\delta_l\rangle$, let
$\mathsf{members}(R_l)=\Pi_l$, $\mathsf{number}(R_l)=l$, and
$\mathsf{earliest}(R_l)=\delta_l$.
\State For every $R_c$, let $n_c=|\mathsf{members}(R_c)|$,
$f_c=\lfloor(n_c-1)/3\rfloor$, and $q_c=n_c-f_c$.

\Statex $\begin{aligned}
\mathsf{ValidReconfiguration}(R_l,R_{l+1})\iff{}&
 (R_l=R_1\ \lor\ R_l\text{ has valid authorization evidence})\\[-0.2ex]
&{}\land R_{l+1}\text{ has valid authorization evidence}\\[-0.2ex]
&{}\land\text{that evidence covers the exact value }
             \mathsf{earliest}(R_{l+1})\\[-0.2ex]
&{}\land \mathsf{earliest}(R_{l+1})>1
 \land \mathsf{number}(R_{l+1})=\mathsf{number}(R_l)+1\\[-0.2ex]
&{}\land
 |\mathsf{members}(R_l)\mathbin{\triangle}\mathsf{members}(R_{l+1})|\leq1.
\end{aligned}$
\Statex Use the block-metadata tuple and projections from
\cref{sec:problem-definition}.
\State $\mathsf{Recipients}(E)=\bigcup_{R\in E}\mathsf{members}(R)$.
\Statex $\mathit{known}_i$ is used only for communication; it does not
alter membership, eligibility, thresholds, or $\mathsf{Recipients}(E)$.
\Statex Receipt of an authenticated $R_c$ also provides
dealer-authenticated evidence for the identities that appeared in
$R_1,\ldots,R_{c-1}$.

\Statex
\Procedure{UpdateKnown$_i$}{$R_c$}
    \If{$R_c$ or the identities carried by it from
        $R_1,\ldots,R_{c-1}$ fail authentication}
        \State \Return
    \EndIf
    \State $K\gets\mathit{known}_i$
    \State add every member of $R_c$ and every identity carried by $R_c$
           to $\mathit{known}_i$
    \If{$\mathit{known}_i\neq K$}
        \State \Call{FloodDAG$_i$}{}
    \EndIf
\EndProcedure

\State $DAG_i[r]\gets\varnothing$ for every $r\geq1$.
\State $\mathsf{ValidSign}(C,m,r,\mu)$ holds iff $C$ is a valid BRB quorum
certificate for $\langle m,r,\mu\rangle$ under the threshold public key of
$\mathsf{local}(\mu)$.

\Statex $\begin{aligned}
\mathsf{ValidEnabled}(\mu)\iff{}&
 \mathsf{enabled}(\mu)=\{\mathsf{local}(\mu)\}\ \lor\\[-0.2ex]
&\exists R_a,R_b:\ \mathsf{enabled}(\mu)=\{R_a,R_b\}\\[-0.2ex]
&\hspace{7em}\land\mathsf{ValidReconfiguration}(R_a,R_b),\\[-0.2ex]
\mathsf{ValidPending}(r,\mu)\iff{}&
 \mathsf{pending}(\mu)=\bot\ \lor\\[-0.2ex]
&\bigl(\mathsf{ValidReconfiguration}(\mathsf{local}(\mu),
       \mathsf{pending}(\mu))\\[-0.2ex]
&\hspace{7em}\land
       r\geq\mathsf{earliest}(\mathsf{pending}(\mu))\bigr),\\[-0.2ex]
\mathsf{JoiningMetadata}(\mu,R_a,R_b)\iff{}&
 \mathsf{local}(\mu)=R_b\land
 \mathsf{enabled}(\mu)=\{R_a,R_b\}\\[-0.2ex]
&{}\land\mathsf{ValidReconfiguration}(R_a,R_b)\\[-0.2ex]
&{}\land
 \mathsf{ID}(\mu)\in\mathsf{members}(R_b)\setminus\mathsf{members}(R_a).
\end{aligned}$

\Statex
\Function{ValidParentCount$_i$}{$s,\mu$}
    \State $R_b\gets\mathsf{local}(\mu)$;
           $E\gets\mathsf{enabled}(\mu)$
    \If{$s>1$ and $\mathit{enabled\text{-}configs}_i[s]=E$ and
        $\exists R_a:\mathsf{JoiningMetadata}(\mu,R_a,R_b)$}
        \State \Return $|\mathsf{parents}(\mu)|\in\{q_a,q_b\}$
    \EndIf
    \State \Return $|\mathsf{parents}(\mu)|=q_b$
\EndFunction

\Statex
\Function{CertifiedParentCount}{$s,\mu$}
    \If{$s=1$}
        \State \Return $\mathsf{parents}(\mu)=\bot$
    \EndIf
    \If{$\mathsf{parents}(\mu)=\bot$ or its broadcaster IDs are not
        pairwise distinct}
        \State \Return $\mathsf{false}$
    \EndIf
    \State $R_b\gets\mathsf{local}(\mu)$
    \If{$\exists R_a:\mathsf{JoiningMetadata}(\mu,R_a,R_b)$}
        \State \Return $|\mathsf{parents}(\mu)|\in\{q_a,q_b\}$
               \Comment{Joining-process exception}
    \EndIf
    \State \Return $|\mathsf{parents}(\mu)|=q_b$
           \Comment{A correct voter checked \texttt{ValidParentCount}$_i$}
\EndFunction
\end{algorithmic}
\end{algorithm}

%% file: Algos/brb.tex
\begin{algorithm}[tbp]
\caption{Configuration-scoped BRB at process $p_i$}
\label{alg:brb}
\scriptsize
\begin{algorithmic}[1]
\State $ECHOS_i[m,r,\mu]\gets\varnothing$;
       $VOTES_i[m,r,\mu]\gets\varnothing$ on first access
\State $\mathit{echoed}_i[r,ID],\mathit{voted}_i[r,ID],
       \mathit{delivered}_i[r,ID]\gets\mathsf{false}$ for every $(r,ID)$
\Statex Future-round messages are buffered until
$\mathit{enabled\text{-}configs}_i[r]\neq\bot$, then processed normally.

\Statex $\begin{aligned}
\mathsf{ValidInstance}_i(r,\mu)\iff{}&
 \mathit{enabled\text{-}configs}_i[r]\neq\bot
 \land\mathsf{local}(\mu)\in\mathit{enabled\text{-}configs}_i[r]\\[-0.2ex]
&{}\land\mathsf{local}(\mu)\in\mathsf{enabled}(\mu)
 \land\mathsf{ID}(\mu)\in\mathsf{members}(\mathsf{local}(\mu))\\[-0.2ex]
&{}\land\mathsf{ValidEnabled}(\mu)
 \land\mathsf{ValidPending}(r,\mu)\\[-0.2ex]
&{}\land\bigl(r=1\ \lor\
 \Call{ValidParentCount\ensuremath{_i}}{r,\mu}\bigr),\\[0.2ex]
\mathsf{MaySendBRB}_i(R_c,r)\iff{}&
 \mathit{enabled\text{-}configs}_i[r]\neq\bot
 \land R_c\in\mathit{enabled\text{-}configs}_i[r]\\[-0.2ex]
&{}\land p_i\in\mathsf{members}(R_c)
 \land r\notin\mathit{passive\text{-}rounds}_i.
\end{aligned}$

\Statex
\Procedure{\texttt{BRB-Broadcast}$_i$}{$m,r,\mu$}
    \If{$\neg\mathsf{MaySendBRB}_i(\mathsf{local}(\mu),r)$}
        \State \Return
    \EndIf
    \State \textbf{broadcast }$\langle m,r,\mu\rangle$ to
        $\mathsf{Recipients}(\mathsf{enabled}(\mu))$
        \Comment{The BRB proposal is configuration-scoped}
\EndProcedure

\Statex \textbf{upon receiving }$\langle m,r,\mu\rangle$
\textbf{ from }$p_j$
    \State $R_c\gets\mathsf{local}(\mu)$;
           $ID\gets\mathsf{ID}(\mu)$
    \If{$\mathsf{ValidInstance}_i(r,\mu)\land p_j=ID\land
         \mathsf{MaySendBRB}_i(R_c,r)\land
         \neg\mathit{echoed}_i[r,ID]$}
        \State $e\gets\langle\mathsf{echo},m,r,\mu\rangle$;
               $\mathit{echoed}_i[r,ID]\gets\mathsf{true}$
        \State \textbf{broadcast }
        $\langle\mathsf{echo},\operatorname{sign}_i(e),m,r,\mu\rangle$
        to $\mathsf{Recipients}(\mathsf{enabled}(\mu))$
    \EndIf

\Statex \textbf{upon receiving }
$\langle\mathsf{echo},\sigma_j,m,r,\mu\rangle$
\textbf{ from }$p_j$
    \State $R_c\gets\mathsf{local}(\mu)$;
           $e\gets\langle\mathsf{echo},m,r,\mu\rangle$
    \If{$\mathsf{ValidInstance}_i(r,\mu)\land
         p_j\in\mathsf{members}(R_c)\land
         \operatorname{verify}_j(\sigma_j,e)$}
        \State $ECHOS_i[m,r,\mu][j]\gets\sigma_j$
        \If{$|ECHOS_i[m,r,\mu]|\geq q_c$}
            \State \Call{SendVote$_i$}{$m,r,\mu$}
        \EndIf
    \EndIf

\Statex
\Procedure{SendVote$_i$}{$m,r,\mu$}
    \State $R_c\gets\mathsf{local}(\mu)$;
           $ID\gets\mathsf{ID}(\mu)$
    \If{$\mathsf{MaySendBRB}_i(R_c,r)\land
        \neg\mathit{voted}_i[r,ID]$}
        \State $v\gets\langle\mathsf{vote},m,r,\mu\rangle$;
               $\mathit{voted}_i[r,ID]\gets\mathsf{true}$
        \State \textbf{broadcast }
        $\langle\mathsf{vote},
        \mathsf{ShareSign}^{\mathsf{BRB}}_{i,c}(v),m,r,\mu\rangle$
        to $\mathsf{Recipients}(\mathsf{enabled}(\mu))$
    \EndIf
\EndProcedure

\Statex \textbf{upon receiving }
$\langle\mathsf{vote},\sigma_j,m,r,\mu\rangle$
\textbf{ from }$p_j$
    \State $R_c\gets\mathsf{local}(\mu)$;
           $ID\gets\mathsf{ID}(\mu)$;
           $v\gets\langle\mathsf{vote},m,r,\mu\rangle$
    \If{$\mathsf{ValidInstance}_i(r,\mu)\land
         p_j\in\mathsf{members}(R_c)\land
         \mathsf{ValidShare}^{\mathsf{BRB}}_{j,c}(\sigma_j,v)$}
        \State $VOTES_i[m,r,\mu][j]\gets\sigma_j$
        \If{$|VOTES_i[m,r,\mu]|\geq f_c+1$}
            \State \Call{SendVote$_i$}{$m,r,\mu$}
        \EndIf
        \If{$|VOTES_i[m,r,\mu]|\geq q_c\land
             \neg\mathit{delivered}_i[r,ID]$}
            \State $C\gets$ the threshold signature from exactly $q_c$
            entries of $VOTES_i[m,r,\mu]$
            \State $B\gets\langle m,r,\mu,C\rangle$
            \State \textbf{broadcast }
                $\langle\mathsf{certified\text{-}block},B\rangle$
                to $\mathit{known}_i$
                \Comment{Flood the complete certificate}
            \State $\mathit{delivered}_i[r,ID]\gets\mathsf{true}$
            \State \textbf{trigger }
                $\texttt{BRB-Deliver}_{i}(m,r,\mu,C)$
        \EndIf
    \EndIf
\end{algorithmic}
\end{algorithm}

%% file: Algos/DAG-bb.tex
\begin{algorithm}[tbp]
\caption{DAG-Append and DAG-Deliver at process $p_i$}
\label{alg:dag-append-deliver}
\scriptsize
\begin{algorithmic}[1]

\State $\mathit{scheduled}_i\gets\varnothing$
\Statex \Comment{Certified blocks received before activation are queued.}

\Statex For $B=\langle m,r,\mu,C\rangle$, write
$R_B=\mathsf{local}(\mu)$, $E_B=\mathsf{enabled}(\mu)$,
$ID_B=\mathsf{ID}(\mu)$, and
\Statex \hfill
$\mathsf{ref}(B)=\langle\mathsf{Hash}(\langle m,r,\mu\rangle),ID_B\rangle$.

\Statex
$\displaystyle
\mathsf{ParentBlocks}_i(R,r)=
\{B\in DAG_i[r]\mid
ID_B\in\mathsf{Recipients}(E_B)\land
\mathsf{number}(R_B)\leq\mathsf{number}(R)\}.$

\Statex For a candidate DAG $D$,
\State
$\begin{aligned}
\mathsf{ValidParentSet}_i(B,D)\iff{}&
(r=1\land\mathsf{parents}(\mu)=\bot)\ \lor\\[-0.2ex]
&\bigl(r>1\land
\forall\rho\in\mathsf{parents}(\mu),\ \exists B'\in D[r-1]:\\[-0.2ex]
&\qquad \mathsf{ref}(B')=\rho\land
ID_{B'}\in\mathsf{Recipients}(E_{B'})\land
\mathsf{number}(R_{B'})\leq\mathsf{number}(R_B)\bigr).
\end{aligned}$

\State
$\begin{aligned}
\mathsf{ValidCertifiedMetadata}_i(B)\iff{}&
r\geq1\land\mathsf{ValidSign}(C,m,r,\mu)\land
(R_B=R_1\ \lor\ R_B\text{ has valid authorization evidence})\land\\[-0.2ex]
&ID_B\in\mathsf{members}(R_B)\land R_B\in E_B\land
\mathsf{ValidEnabled}(\mu)\land\\[-0.2ex]
&\mathsf{ValidPending}(r,\mu)\land\\[-0.2ex]
&\Call{CertifiedParentCount}{r,\mu}.
\end{aligned}$

\State $\mathsf{Admissible}_i(B)\iff
\mathsf{ValidCertifiedMetadata}_i(B)\land
\mathsf{ValidParentSet}_i(B,DAG_i)$
\Statex \Comment{Before certification, a correct echoer or voter uses
$\Call{ValidParentCount\ensuremath{_i}}$ in \cref{alg:brb}.}

\Statex
\Procedure{\texttt{DAG-Append}$_i$}{$m,r$}
    \State $(R_c,P_c,E_c)\gets
    (\mathit{local}_i,\mathit{pending}_i,
    \mathit{enabled\text{-}configs}_i[r])$
        \Comment{Freeze the round metadata}
    \If{$P_c\neq\bot$ and $r<\mathsf{earliest}(P_c)$}
        \State \Return
    \EndIf
    \If{$r=1$}
        \State $h\gets\bot$
    \Else
        \State \textbf{await }
        $|\mathsf{ParentBlocks}_i(R_c,r-1)|\geq q_c$
        \State $S\gets$ any $q_c$ blocks in
        $\mathsf{ParentBlocks}_i(R_c,r-1)$
        \State $h\gets\{\mathsf{ref}(B'):B'\in S\}$
    \EndIf
    \If{$r=\mathsf{last}(w)$ for some $w$ and $E_c=\{R_c\}$}
        \State set $\mathsf{coin}(m)\gets
        \mathsf{ShareSign}^{\mathsf{COIN}}_{i,c}(\langle w\rangle)$
        \Comment{The parent set is already fixed}
    \Else
        \State set $\mathsf{coin}(m)\gets\bot$
    \EndIf
    \State $\mu\gets\langle h,R_c,P_c,E_c,ID_i\rangle$
    \State \Call{\texttt{BRB-Broadcast}$_i$}{$m,r,\mu$}
\EndProcedure

\Statex \textbf{upon }$\texttt{BRB-Deliver}_{i}(m,r,\mu,C)$ \textbf{ do}
    \State \Call{ScheduleFetch$_i$}{$\langle m,r,\mu,C\rangle$}

\Statex \textbf{upon receiving }
$\langle\mathsf{certified\text{-}block},B\rangle$ \textbf{ or }
$\langle\mathsf{dag\text{-}reply},B\rangle$ \textbf{ do}
    \State \Call{ScheduleFetch$_i$}{$B$}

\Statex
\Procedure{ScheduleFetch$_i$}{$B$}
    \State let $B=\langle m,r,\mu,C\rangle$
    \If{$B\notin DAG_i[r]$ and $B\notin\mathit{scheduled}_i$}
        \State $\mathit{scheduled}_i\gets\mathit{scheduled}_i\cup\{B\}$
        \State \textbf{spawn }\Call{FetchAndDeliver$_i$}{$B$}
    \EndIf
\EndProcedure

\Statex
\Procedure{FetchAndDeliver$_i$}{$B$}
    \State let $B=\langle m,r,\mu,C\rangle$
    \If{$\neg\mathsf{ValidCertifiedMetadata}_i(B)$}
        \State \Return
    \EndIf
    \If{$r>1$}
        \ForAll{$\rho=\langle\alpha,ID'\rangle
        \in\mathsf{parents}(\mu)$}
            \While{$\nexists B'\in DAG_i[r-1]:\mathsf{ref}(B')=\rho$}
                \State $K\gets\mathit{known}_i$
                \State \textbf{broadcast }
                $\langle\mathsf{dag\text{-}request},r-1,ID'\rangle$
                to every process in $K$
                \State \textbf{await }
                $\exists B'\in DAG_i[r-1]:\mathsf{ref}(B')=\rho$
                \textbf{ or }$\mathit{known}_i\neq K$
            \EndWhile
        \EndFor
    \EndIf
    \If{$\mathsf{Admissible}_i(B)$}
        \State \Call{AddAndDeliver$_i$}{$B$}
    \EndIf
\EndProcedure

\Statex \textbf{upon receiving }
$\langle\mathsf{dag\text{-}request},s,ID\rangle$
\textbf{ from }$p_j$ \textbf{ do}
    \If{$\exists B\in DAG_i[s]:ID_B=ID$}
        \State \textbf{send }
        $\langle\mathsf{dag\text{-}reply},B\rangle$ to $p_j$
    \EndIf

\Statex
\Procedure{AddAndDeliver$_i$}{$B$}
    \State let $B=\langle m,r,\mu,C\rangle$
    \If{$B\notin DAG_i[r]$ and
    $\nexists B'\in DAG_i[r]:ID_{B'}=ID_B$}
        \State $DAG_i[r]\gets DAG_i[r]\cup\{B\}$
        \State $\mathit{scheduled}_i\gets\mathit{scheduled}_i\setminus\{B\}$
        \State \textbf{broadcast }
        $\langle\mathsf{certified\text{-}block},B\rangle$
        to every process in $\mathit{known}_i$
        \State \textbf{output }$\texttt{DAG-Deliver}_{i}(m,r,\mu)$
    \EndIf
\EndProcedure

\Statex
\Procedure{FloodDAG$_i$}{}
    \State $D\gets\bigcup_{u\geq1}DAG_i[u]$
        \Comment{Finite snapshot}
    \ForAll{$B\in D$}
        \State \textbf{broadcast }
        $\langle\mathsf{certified\text{-}block},B\rangle$
        to every process in $\mathit{known}_i$
    \EndFor
\EndProcedure


\end{algorithmic}
\end{algorithm}

%% file: Algos/reconfig.tex
\begin{algorithm}[tbp]
\caption{Reconfiguration procedure at process $p_i$}
\label{alg:reconfiguration}
\scriptsize
\begin{algorithmic}[1]

\State $\mathit{transition}_i\gets\bot$

\State $\displaystyle
\mathsf{LocalBlocks}_i(R_l,r)=
\{\langle m,r,\mu,C\rangle\in DAG_i[r]\mid
\mathsf{ID}(\mu)\in\mathsf{members}(R_l)\}$.

\State For $B=\langle m,r,\mu,C\rangle$, define
\Statex $\begin{aligned}
\mathsf{Marked}_i(R_l,R_{l+1},B)\iff{}&
\mathsf{ValidReconfiguration}(R_l,R_{l+1})\ \land\\[-0.2ex]
&\exists r',\,B'=\langle m',r',\mu',C'\rangle\in DAG_i[r']:\\[-0.2ex]
&\qquad B\leadsto_i B'\land\mathsf{pending}(\mu')=R_{l+1}.
\end{aligned}$

\State $\displaystyle
\mathsf{MarkedBlocks}_i(R_l,R_{l+1},r)=
|\{B\in\mathsf{LocalBlocks}_i(R_l,r)\mid
\mathsf{Marked}_i(R_l,R_{l+1},B)\}|$.

\State $\displaystyle
\mathsf{AdmissibleIDs}_i(R,u)=
\{\mathsf{ID}(\mu)\mid
B=\langle m,u,\mu,C\rangle\in DAG_i[u],
\mathsf{ID}(\mu)\in\mathsf{members}(R),
\mathsf{Admissible}_i(B)\}$.

\Upon{\shortstack[l]{receiving
$R_{l+1}=\langle\mathsf{RECONFIG},\Pi_{l+1},l+1,\delta_{l+1}\rangle$,\\
where $R_l$ is its unique authenticated predecessor already known to $p_i$,\\
with $p_i\in\mathsf{members}(R_l)$ and
$\mathsf{ValidReconfiguration}(R_l,R_{l+1})$}}
    \State \Call{UpdateKnown$_i$}{$R_{l+1}$}
    \If{$\mathsf{number}(\mathit{local}_i)>\mathsf{number}(R_l)$}
        \State \Return
    \EndIf
    \State \textbf{await }$\mathit{local}_i=R_l$ and
        $r_i\geq\mathsf{earliest}(R_{l+1})$
    \State $\mathit{pending}_i\gets R_{l+1}$
\EndUpon

\Statex
\Procedure{\texttt{AdvanceRound}$_i$}{}
    \While{\textbf{true}}
        \State $r\gets r_i$
        \State \textbf{await }
        $\mathit{local}_i\in\mathit{enabled\text{-}configs}_i[r]$ and
        $\bigl(\mathit{transition}_i=\bot\ \lor\
        (\mathit{transition}_i=\langle R_l,R_{l+1},r_0\rangle
        \land r\leq r_0+3)\bigr)$
        \State $R_l\gets\mathit{local}_i$
        \If{$\mathit{transition}_i=\langle R_l,R_{l+1},r_0\rangle$
            and $r=r_0+3$}
            \State $R_c\gets R_{l+1}$
        \Else
            \State $R_c\gets R_l$
        \EndIf
        \State \textbf{await either }
            $|\mathsf{ParentBlocks}_i(R_c,r)|\geq q_c$
        \Statex \hspace{2.5em}\textbf{or }
            $\mathit{transition}_i=\langle R_l,R_{l+1},r_0\rangle$,
            $r\in\{r_0+2,r_0+3\}$, and
        \Statex \hspace{2.5em}
            $|\mathsf{ParentBlocks}_i(R_{l+1},r+1)|\geq q_{l+1}$
        \If{$\mathit{transition}_i=\bot$}
            \State \Call{\texttt{UpdateConfigs}$_i$}{$r$}
        \EndIf
        \State $r_i\gets r+1$
    \EndWhile
\EndProcedure

\Statex
\Procedure{FinishTransition$_i$}{$R_{l+1}$}
    \State $\mathit{local}_i\gets R_{l+1}$;
    $\mathit{pending}_i\gets\bot$;
    $\mathit{transition}_i\gets\bot$
\EndProcedure

\Statex
\Procedure{\texttt{UpdateConfigs}$_i$}{$r$}
    \State $R_l\gets\mathit{local}_i$
    \If{$\mathit{pending}_i=\bot$ and there exist $R_{l+1}$ and
        $B\in\mathsf{LocalBlocks}_i(R_l,r)$ with
        $\mathsf{Marked}_i(R_l,R_{l+1},B)$}
        \State \Call{UpdateKnown$_i$}{$R_{l+1}$}
        \State $\mathit{pending}_i\gets R_{l+1}$
    \EndIf
    \State $R_{l+1}\gets\mathit{pending}_i$
    \If{$R_{l+1}\neq\bot$ and
        $\mathsf{MarkedBlocks}_i(R_l,R_{l+1},r)\geq f_l+1$}
        \State \Call{InstallTransition$_i$}{$R_l,R_{l+1},r$}
    \Else
        \State $\mathit{enabled\text{-}configs}_i[r+1]\gets\{R_l\}$
    \EndIf
\EndProcedure

\Statex
\Procedure{InstallTransition$_i$}{$R_l,R_{l+1},r$}
    \State $\mathit{transition}_i\gets\langle R_l,R_{l+1},r\rangle$
    \State $\mathit{enabled\text{-}configs}_i[s]
        \gets\{R_l,R_{l+1}\}$ for every
        $s\in\{r+1,r+2,r+3,r+4\}$
\EndProcedure

\Upon{the beginning of round $r+4$, before ordinary block production, with
$\mathit{transition}_i=\langle R_l,R_{l+1},r\rangle$ and
$p_i\in\mathsf{members}(R_l)$}
    \State \Call{FinishTransition$_i$}{$R_{l+1}$}
    \If{$p_i\notin\mathsf{members}(R_{l+1})$}
        \State \textbf{stop starting new block-production operations, and
        stop round advancement, leader-election, and finalization tasks at }$p_i$
        \State \textbf{while the departure wait below is pending, continue
        every previously invoked }\texttt{DAG-Append}
        \textbf{ or }\texttt{DAG-Transition-Append}
        \textbf{ until it reaches }\texttt{BRB-Broadcast}
        \State \textbf{await }
$|\mathsf{AdmissibleIDs}_i(R_{l+1},r+5)|
 \geq f_{l+1}+1$
        \State \Call{FloodDAG$_i$}{}
        \State \textbf{stop all tasks at }$p_i$
    \EndIf
\EndUpon

\end{algorithmic}
\end{algorithm}

%% file: Algos/join_process.tex
\begin{algorithm}[tbp]
\caption{Joining-process validation and transition append at $p_i$}
\label{alg:joining-transition-validation}
\scriptsize
\begin{algorithmic}[1]
\Procedure{\texttt{DAG-Transition-Append}$_i$}
    {$m,s,R_l,R_{l+1},r_0$}
    \If{$\mathit{joining}_i=\mathsf{false}$ or
        $\mathit{transition}_i\neq\langle R_l,R_{l+1},r_0\rangle$ or
        $s\notin\{r_0+2,r_0+3\}$}
        \State \Return
    \EndIf
    \State \textbf{await }
        $|\mathsf{ParentBlocks}_i(R_{l+1},s-1)|\geq q_l$
    \State $S\gets$ any $q_l$ blocks in
        $\mathsf{ParentBlocks}_i(R_{l+1},s-1)$
    \State $h\gets
        \{\langle\mathsf{Hash}(\langle m',s-1,\mu'\rangle),
        \mathsf{ID}(\mu')\rangle\mid
        \langle m',s-1,\mu',C'\rangle\in S\}$
    \State set $\mathsf{coin}(m)\gets\bot$
    \State $\mu\gets
        \langle h,R_{l+1},\bot,
        \mathit{enabled\text{-}configs}_i[s],ID_i\rangle$
    \State \Call{\texttt{BRB-Broadcast}$_i$}{$m,s,\mu$}
\EndProcedure
\end{algorithmic}
\end{algorithm}

\begin{algorithm}[tbp]
\caption{Complete, ancestor-closed DAG state transfer}
\label{alg:joining-state-transfer}
\scriptsize
\begin{algorithmic}[1]
\State $\mathit{accepted\text{-}state\text{-}replies}_i\gets0$
\Statex The state-request handler below runs at every active process. All
other actions in this algorithm run while $p_i$ is joining.
\Function{ValidState$_i$}{$D,R_{l+1}$}
    \State For every round $v$, let
        $D[v]=\{\langle m,v,\mu,C\rangle\in D\}$
    \If{$D$ is not finite, contains distinct blocks in the same
        round-broadcaster slot, or contains a block distinct from the block
        in the same round-broadcaster slot of $DAG_i$}
        \State \Return $\mathsf{false}$
    \EndIf
    \State \Return $\displaystyle\bigwedge_{B=\langle m,u,\mu,C\rangle\in D}
        \Bigl(\mathsf{ValidCertifiedMetadata}_i(B)\land
        \mathsf{ValidParentSet}_i(B,D)$
    \Statex \hspace{5em}$\land\ 
        \mathsf{number}(\mathsf{local}(\mu))
        \leq\mathsf{number}(R_{l+1})$
    \Statex \hspace{5em}$\land\ 
        \forall R\in\mathsf{enabled}(\mu):
        \mathsf{number}(R)\leq\mathsf{number}(R_{l+1})$
    \Statex \hspace{5em}$\land\ 
        \bigl(\mathsf{pending}(\mu)=\bot\ \lor\
        \mathsf{number}(\mathsf{pending}(\mu))
        \leq\mathsf{number}(R_{l+1})\bigr)\Bigr)$
\EndFunction

\Statex
\Procedure{MergeState$_i$}{$D$}
    \ForAll{rounds $u$ represented in $D$, in increasing order}
        \ForAll{$B=\langle m,u,\mu,C\rangle\in D$}
            \State \Call{AddAndDeliver$_i$}{$B$}
        \EndFor
    \EndFor
\EndProcedure

\Statex
\Function{DiscoverBegin$_i$}{$R_l,R_{l+1}$}
    \State \textbf{await }there exists $u\geq1$ such that
        $|\mathsf{LocalBlocks}_i(R_l,u)|\geq q_l$ and
        $\mathsf{MarkedBlocks}_i(R_l,R_{l+1},u)\geq f_l+1$
    \State \Return the least such $u$
\EndFunction

\Statex
\Procedure{ContinueDownload$_i$}{$R_l,R_{l+1}$}
    \While{$\mathit{joining}_i=\mathsf{true}$}
        \State $K\gets\mathit{known}_i$
        \State $a\gets\mathit{accepted\text{-}state\text{-}replies}_i$
        \State \textbf{broadcast }
        $\langle\mathsf{state\text{-}request},
        R_l,R_{l+1}\rangle$ to every process in $K$
        \State \textbf{await either }$\mathit{joining}_i=\mathsf{false}$,
        $\mathit{accepted\text{-}state\text{-}replies}_i>a$, or
        $\mathit{known}_i\neq K$
    \EndWhile
\EndProcedure

\Upon{\shortstack[l]{receiving
$\langle\mathsf{state\text{-}reply},R_l,R_{l+1},D_j\rangle$
from $p_j\in\mathit{known}_i$\\
with $\mathit{joining}_i=\mathsf{true}$,
$\mathit{pending}_i=R_{l+1}$,\\
$\mathsf{ValidReconfiguration}(R_l,R_{l+1})$, and
$\mathsf{ValidState}_i(D_j,R_{l+1})$}}
    \State \Call{MergeState$_i$}{$D_j$}
    \State $\mathit{accepted\text{-}state\text{-}replies}_i\gets
        \mathit{accepted\text{-}state\text{-}replies}_i+1$
\EndUpon

\Upon{receiving
$\langle\mathsf{state\text{-}request},
R_l,R_{l+1}\rangle$ from $p_j$ with
$\mathsf{ValidReconfiguration}(R_l,R_{l+1})$ and
$p_j\in\mathsf{members}(R_{l+1})\setminus\mathsf{members}(R_l)$}
    \State \Call{UpdateKnown$_i$}{$R_{l+1}$}
    \State $D_i\gets$ the maximal finite ancestor-closed subset of
        $\bigcup_{u\geq1}DAG_i[u]$ for which every
        $B=\langle m,u,\mu,C\rangle$ satisfies
    \Statex \hspace{1em}$\mathsf{number}(\mathsf{local}(\mu))
        \leq\mathsf{number}(R_{l+1})$,
        $\mathsf{number}(R)\leq\mathsf{number}(R_{l+1})$ for every
        $R\in\mathsf{enabled}(\mu)$, and
    \Statex \hspace{1em}$\mathsf{pending}(\mu)=\bot$ or
        $\mathsf{number}(\mathsf{pending}(\mu))
        \leq\mathsf{number}(R_{l+1})$
    \State \textbf{send }
        $\langle\mathsf{state\text{-}reply},R_l,R_{l+1},D_i\rangle$
        to $p_j$
\EndUpon
\end{algorithmic}
\end{algorithm}

\begin{algorithm}[tbp]
\caption{Joining operation at process $p_i$}
\label{alg:joining-process}
\scriptsize
\begin{algorithmic}[1]
\State Let $\mathsf{HistoryTasks}_i$ consist of certified-block receipt,
DAG request/reply and state-reply handlers, fetch/deliver workers, and repeated
\texttt{FloodDAG}.
\State Let $\mathsf{CatchUpTasks}_i$ consist of the BRB proposal, echo, and
vote receive handlers and \texttt{AdvanceRound}.
\State $\mathsf{PostJoinTasks}_i=$ all remaining non-joining protocol
tasks, including ordinary block production.
\Statex The task sets are pairwise disjoint. The state-request handler is
not in these sets and runs whenever $p_i$ is active. History tasks may
insert and DAG-deliver certified blocks but never invoke
\texttt{DAG-Append} or \texttt{DAG-Transition-Append}, echo, vote, or
advance a round.

\Upon{receiving
the authenticated pair $\langle R_l,R_{l+1}\rangle$, where
$R_{l+1}=\langle\mathsf{RECONFIG},\Pi_{l+1},l+1,\delta_{l+1}\rangle$, such that
$\mathsf{ValidReconfiguration}(R_l,R_{l+1})$,
$p_i\notin\mathsf{members}(R_l)$, and
$p_i\in\mathsf{members}(R_{l+1})$}
    \State \Call{UpdateKnown$_i$}{$R_{l+1}$}
    \State \textbf{stop all protocol tasks except the state-request handler
        and this joining operation}
    \State \textbf{start }$\mathsf{HistoryTasks}_i$
    \State $\mathit{pending}_i\gets R_{l+1}$;
        $\mathit{local}_i\gets R_1$;
        $\mathit{enabled\text{-}configs}_i[1]\gets\{R_1\}$
    \State $\mathit{joining}_i\gets\mathsf{true}$
    \State \textbf{spawn }\Call{ContinueDownload$_i$}{$R_l,R_{l+1}$}
    \State $r\gets{}$\Call{DiscoverBegin$_i$}{$R_l,R_{l+1}$}
    \State $\mathit{local}_i\gets R_l$
    \State $\mathit{passive\text{-}rounds}_i
        \gets\mathit{passive\text{-}rounds}_i\cup\{r+1\}$
    \State \Call{InstallTransition$_i$}{$R_l,R_{l+1},r$}
    \State \textbf{await }
        $|\mathsf{LocalBlocks}_i(R_l,r+1)|\geq q_l$
    \State $r_i\gets r+1$
    \State \textbf{start }$\mathsf{CatchUpTasks}_i$
\EndUpon

\Upon{the beginning of round $s$ with input $m$ and
$\mathit{transition}_i=\langle R_l,R_{l+1},r\rangle$ and
$s\in\{r+2,r+3\}$}
    \State \Call{\texttt{DAG-Transition-Append}$_i$}
        {$m,s,R_l,R_{l+1},r$}
\EndUpon

\Upon{the beginning of round $r+4$, before ordinary block production, with
$\mathit{joining}_i=\mathsf{true}$ and
$\mathit{transition}_i=\langle R_l,R_{l+1},r\rangle$}
    \State \Call{FinishTransition$_i$}{$R_{l+1}$}
    \State $\mathit{joining}_i\gets\mathsf{false}$
    \State \textbf{start }$\mathsf{PostJoinTasks}_i$
\EndUpon

\end{algorithmic}
\end{algorithm}

%% file: Algos/leader_election2.tex
\begin{algorithm}[tbp]
\caption{Leader-selection helpers at process $p_i$}
\label{alg:leader-election-helpers}
\scriptsize
\begin{algorithmic}[1]

\Statex Use the four-round wave boundaries $\mathsf{first}(w)$ and
$\mathsf{last}(w)$ defined in \cref{sec:problem-definition}.

\Statex We use $\mathsf{LocalBlocks}_i$ from
\cref{alg:reconfiguration} and $\leadsto_i$ from
\cref{sec:dag-bb-definition}.

\Statex $\begin{aligned}
\mathsf{VotingBlocks}_i(R_l,w)=\bigl\{B\mid{}&
B=\langle m,\mathsf{last}(w),\mu,C\rangle,\\[-0.2ex]
&B\in\mathsf{LocalBlocks}_i(R_l,\mathsf{last}(w)),\\[-0.2ex]
&\mathsf{local}(\mu)=R_l,\\[-0.2ex]
&\mathsf{enabled}(\mu)=\{R_l\},\\[-0.2ex]
&\mathsf{coin}(m)\neq\bot\ \land\
\mathsf{ValidShare}^{\mathsf{COIN}}_{\mathsf{ID}(\mu),l}
(\mathsf{coin}(m),\langle w\rangle)\bigr\}.
\end{aligned}$
\Statex \Comment{Blocks with both configurations enabled remain in the DAG
but do not participate in leader election}

\Statex $\mathsf{WaveConfigs}_i(w)=
\{R_l\mid\mathsf{VotingBlocks}_i(R_l,w)\neq\varnothing\}$.

\Statex Every broadcaster $p_j$ that is uncorrupted at production fixes the
parent set of its singleton-enabled last-round block before generating and
releasing its share.
\Statex The share is
$\tau_{j,l,w}=\mathsf{ShareSign}^{\mathsf{COIN}}_{j,l}(\langle w \rangle)$.

\Statex $\begin{aligned}
\mathsf{Shares}_i(R_l,w)=
\bigl\{\langle j,\tau_{j,l,w}\rangle\;\bigm|\;{}&
\exists B=\langle m,\mathsf{last}(w),\mu,C\rangle:\\[-0.2ex]
&B\in\mathsf{VotingBlocks}_i(R_l,w),\\[-0.2ex]
&j=\mathsf{ID}(\mu)
\text{ and }\tau_{j,l,w}=\mathsf{coin}(m)
\bigr\}.
\end{aligned}$

\Statex $\begin{aligned}
\mathsf{Support}_i(B,R_l,w)=
\Bigl|\bigl\{\mathsf{ID}(\mu_X)\;\bigm|\;{}&
X=\langle m_X,\mathsf{last}(w),\mu_X,C_X\rangle
\in\mathsf{VotingBlocks}_i(R_l,w),\\[-0.2ex]
&X\leadsto_i B
\bigr\}\Bigr|.
\end{aligned}$

\Statex $\begin{aligned}
\mathsf{HistorySupport}_i(B^\star,B,R_l,w)=
\Bigl|\bigl\{\mathsf{ID}(\mu_X)\;\bigm|\;{}&
X=\langle m_X,\mathsf{last}(w),\mu_X,C_X\rangle
\in\mathsf{VotingBlocks}_i(R_l,w),\\[-0.2ex]
&B^\star\leadsto_i X
\;\land\;
X\leadsto_i B
\bigr\}\Bigr|.
\end{aligned}$

\Function{WaveLeader}{$R_l,w$}
    \Statex \Comment{Coin and candidate under $R_l$}

    \If{$|\mathsf{Shares}_i(R_l,w)|<f_l+1$}
        \State \Return $\bot$
    \EndIf

    \State $S_{l,w}\gets$ the threshold signature generated from
        exactly $f_l+1$ entries of $\mathsf{Shares}_i(R_l,w)$
    \State $p_j\gets$ the process selected by the fair
        configuration-scoped coin reconstructed from $S_{l,w}$
        \Comment{The coin selects this first-round slot}

    \State \Return the block
    $B_j=\langle m_j,\mathsf{first}(w),\mu_j,C_j\rangle$
    \Statex \hspace{1em}in
    $\mathsf{LocalBlocks}_i(R_l,\mathsf{first}(w))$ with
    $\mathsf{local}(\mu_j)=R_l$ and $\mathsf{ID}(\mu_j)=j$,
    \Statex \hspace{1em}or $\bot$ if no such block exists
\EndFunction

\end{algorithmic}
\end{algorithm}
\begin{algorithm}[tbp]
\caption{Leader election and finalization at process $p_i$}
\label{alg:leader-election_2}
\scriptsize
\begin{algorithmic}[1]
\State $\mathit{finalized\text{-}values}_i\gets\langle\rangle$;
$\mathit{processed\text{-}wave}_i\gets0$
\Comment{Store elected leader blocks in wave order}
\Statex The event handler and its calls to \texttt{TryElect} and
\texttt{Elect} execute as one serialized local task; a later trigger waits
until the current invocation returns.

\Statex \textbf{upon completing wave }$w$, or upon DAG-delivering a new
block in $\mathsf{VotingBlocks}_i(R_l,w)$ for some $R_l$,\textbf{ do}
    \State \Call{TryElect$_i$}{$w$}

\Statex
\Procedure{TryElect$_i$}{$w$}
    \If{$w\leq\mathit{processed\text{-}wave}_i$}
        \State \Return
    \EndIf
    \Statex $\begin{aligned}
    A\gets\bigl\{B\;\bigm|\;{}&
    \exists R_l\in\mathsf{WaveConfigs}_i(w):
    B=\Call{WaveLeader}{R_l,w}
    \;\land\;B\neq\bot,\\[-0.2ex]
    &\mathsf{Support}_i(B,R_l,w)\geq q_l
    \bigr\}.
    \end{aligned}$
    \Comment{All sufficiently supported candidates}

    \If{$A=\{B\}$}
        \State \Call{Elect$_i$}{$B,w$}
    \EndIf
\EndProcedure

\Statex
\Procedure{Elect$_i$}{$B,k$}
    \State $v\gets\mathit{processed\text{-}wave}_i$
        \Comment{Scan unprocessed waves backward}

    \If{$k\leq v$}
        \State \Return
    \EndIf

    \State $L\gets\langle B\rangle$
    \State $B^\star\gets B$
        \Comment{Current scan anchor}

    \For{$w\gets k-1$ \textbf{ downto } $v+1$}
        \Statex $\begin{aligned}
        E_w\gets\bigl\{B'\;\bigm|\;{}&
        \exists R_l\in\mathsf{WaveConfigs}_i(w):
        B'=\Call{WaveLeader}{R_l,w}
        \;\land\;B'\neq\bot,\\[-0.2ex]
        &\mathsf{HistorySupport}_i(B^\star,B',R_l,w)\geq
        f_l+1
        \bigr\}.
        \end{aligned}$
        \Statex \Comment{All candidates supported through $B^\star$}

        \If{$E_w=\{B'\}$}
            \State prepend $B'$ to $L$
            \State $B^\star\gets B'$
                \Comment{Continue from the accepted earlier leader}
        \EndIf
    \EndFor

    \State atomically append the blocks of $L$, in order, to
        $\mathit{finalized\text{-}values}_i$
    \State $\mathit{processed\text{-}wave}_i\gets k$
        \Comment{Ambiguous skipped waves are not reconsidered}
\EndProcedure

\end{algorithmic}
\end{algorithm}

%% file: Appendix/Full-proofs.tex
\section{Proofs}
\label{sec:proofs}

\paragraph*{Proof Conventions.}
We use the system model, configuration notation, definitions of active and
correct processes, and definitions of configuration-scoped BRB, Dynamic
DAG-structured Byzantine Broadcast, and Dynamic DAG Byzantine Atomic
Broadcast from \cref{sec:problem-definition}. In particular, $n_l$, $f_l$,
and $q_l$ have the meanings fixed there.
The following consequences will be used repeatedly:
\[
  n_l\geq3f_l+1,
  \qquad
  q_l\geq2f_l+1,
  \qquad
  q_l-f_l=n_l-2f_l\geq f_l+1.
\]

Throughout, the \emph{issuance-rule assumption} states that the dealer
follows the issuance rule of \cref{sec: config begin}: before issuing
$R_{l+1}$, it observes, in correct ancestor-closed DAG snapshots, $q_l$
certified blocks from distinct broadcaster slots, all declared under
$R_l$, in round $\mathsf{earliest}(R_{l+1})-2$.

For clarity, a process is \emph{active} while it executes at least one
protocol task. An active correct
process continues every task prescribed at that stage and processes every
delivered protocol message. A correct member of an installed configuration
remains active until the reconfiguration protocol permits it to leave. A
correct joining process becomes active for state-transfer when it receives
the authenticated pair $\langle R_l,R_{l+1}\rangle$ that adds it. A joining process
that is temporarily passive with respect to block production and BRB may
therefore still be active because it continues its prescribed state-transfer
tasks. Once a removed process is permitted to leave and stops all prescribed
tasks, it is no longer active. When a process enters its
transition-completion round, the $\texttt{FinishTransition}$
beginning-of-round action executes before that round's ordinary block
production snapshots its metadata.
For a removed process, stopping block production means that it starts no new
block-production operation. While its departure wait is pending, every
$\texttt{DAG-Append}$ or $\texttt{DAG-Transition-Append}$ operation invoked
in an earlier round continues until it reaches $\texttt{BRB-Broadcast}$.
If the departure condition becomes true first, the process performs its
final $\texttt{FloodDAG}$ and may then stop the still-pending operation with
its other remaining tasks.

For a fixed transition $R_l\rightarrow R_{l+1}$, an \emph{old member} is
a process in $\mathsf{members}(R_l)$. Thus, a \emph{correct old member} is
a correct process in $\mathsf{members}(R_l)$, whether it is retained in
$R_{l+1}$ or removed by the transition. The term \emph{old} describes
membership relative to the transition, not whether the process is active;
activity is stated separately when required.

\paragraph*{Configuration-Level Induction Hypothesis.}
For a valid transition $R_l\rightarrow R_{l+1}$, let
$\delta_{l+1}=\mathsf{earliest}(R_{l+1})$. The
\emph{configuration-level induction hypothesis for $R_l$} states that:

\begin{enumerate}
    \item Every correct member of $R_l$ has installed $R_l$;
    \item Every correct process removed by the preceding transition has
    reached the point at which it may leave;
    \item BRB Agreement holds in every round before $\delta_{l+1}$; and
    \item There is a logical round that every active correct member of $R_l$
eventually enters with $R_l$ installed and in which at least $q_l$ active
correct members invoke the prescribed $R_l$-declared
$\texttt{DAG-Append}$ operation and reach its call to
$\texttt{BRB-Broadcast}$.
\end{enumerate}


For $R_1$, the clauses concerning a
preceding transition are vacuous. This hypothesis is a proof invariant, not
an additional assumption on the execution. It is established for every
issued transition by \cref{lemma:configuration-level-induction}; its
induction step from $R_l$ to $R_{l+1}$ is
\cref{lemma:configuration-step-closure}.

Within the proof for $R_l\rightarrow R_{l+1}$, ``the conclusions for
preceding transitions'' means the conclusions already established for
transitions ending at $R_l$, including the bounds on install transition
rounds and the conclusions about transition completion, permission for
removed correct processes to leave, and BRB Agreement.

We use the install transition round and marked-block terminology of
\cref{sec:our-solution}.
For a process with install transition round $a_i$, round $a_i+3$ is its
final transition round and round $a_i+4$ is its transition-completion round.

Throughout, an \emph{enabled-configs value} is the process-local value
$\mathit{enabled\text{-}configs}_i[r]$. For a round-$r$ block with metadata
$\mu$, $\mathsf{enabled}(\mu)$ is the enabled-configs value recorded by the
broadcaster. If a correct process $p_i$ produces the block, then
\[
  \mathsf{enabled}(\mu)
  =
  \mathit{enabled\text{-}configs}_i[r].
\]
For a Byzantine broadcaster, the recorded field need not represent genuine
process-local state. By contrast, $\mathsf{local}(\mu)$ is the block's
single declared configuration and determines the configuration under which
its BRB certificate is formed.

Except for the joining process's
$\texttt{DAG-Transition-Append}$, a \emph{parent-quorum condition} at process
$p_i$ uses the same configuration $R_c$ both to select eligible round-$s$
parents and to set the required threshold:
\[
  |\mathsf{ParentBlocks}_i(R_c,s)|\geq q_c.
\]
In the exceptional append, the joining process selects eligible parents from
$\mathsf{ParentBlocks}_i(R_{l+1},s)$ but needs only $q_l$ of them. Thus,
$R_{l+1}$ determines which parents are eligible, whereas the predecessor
threshold $q_l$ determines how many are required.

For an $\texttt{AdvanceRound}_i$ iteration with frozen current round $r$ and
selected configuration $R_c$, we call
$|\mathsf{ParentBlocks}_i(R_c,r)|\geq q_c$ the \emph{ordinary parent-quorum
condition}. When
$\mathit{transition}_i=\langle R_l,R_{l+1},r_0\rangle$ and
$r\in\{r_0+2,r_0+3\}$, we call
$|\mathsf{ParentBlocks}_i(R_{l+1},r+1)|\geq q_{l+1}$ the
\emph{additional transition-round condition}.
For a correct process, an \emph{ordinary block} is a block produced by its
prescribed $\texttt{DAG-Append}$ operation, rather than by
$\texttt{DAG-Transition-Append}$.

Whenever a result below says that BRB Agreement holds for a
round-broadcaster slot, we prove and use the following
certificate-level statement: any two valid quorum certificates for that
slot bind the same complete BRB value. This statement implies the BRB
Agreement property of \cref{sec:configuration-brb}, and it also rules out a
conflicting certified block received through a certified-block message,
state reply, or DAG reply even when the recipient did not first
BRB-deliver that block. The certificate-level statement is established by
\cref{lemma:brb-aligned-safety} and is preserved in every later result that
invokes that lemma to establish BRB Agreement.
We say that BRB Agreement holds in round $r$ when it holds for every
round-broadcaster slot in round $r$, and that it holds through round $r$
when it holds for every round-broadcaster slot whose round index is at most
$r$.

For a correct process $p_i$ and blocks in its DAG, we use the causal
relation $Y\leadsto_i X$ from \cref{sec:problem-definition}; thus $X$ is
in the causal history of $Y$. When a statement has already fixed the
correct process whose DAG is under discussion, we may omit the subscript
and write $Y\leadsto X$.

For any block $Y=\langle m_Y,r_Y,\mu_Y,C_Y\rangle$ and any
round $s\leq r_Y$, let
\[
\mathcal H_{Y,i}(s)=
\left\{
X=\langle m_X,s,\mu_X,C_X\rangle\in DAG_i[s]
\;\middle|\;
Y\leadsto_i X
\right\}
\]
denote the round-$s$ layer of the causal history of $Y$ at $p_i$. After
fixing $p_i$, we abbreviate this set by $\mathcal H_Y(s)$.
\paragraph*{Proof Roadmap.}
The proof proceeds in six steps.
\begin{enumerate}
  \item \emph{BRB Integrity and Agreement.}
  Under the configuration-level induction hypothesis, the issuance rule
  gives the starting conditions for each transition
  (\cref{lemma:well-spaced-predecessor-completion,lemma:well-spaced-base-round}).
  \Cref{lemma:brb-integrity} proves Integrity, and
  \cref{lemma:brb-aligned-safety} proves Agreement when certificates use one
  configuration or the two configurations of one valid transition.
  \Cref{lemma:brb-agreement-transition-rounds} establishes alignment and
  Agreement in every round in which a correct old member's assigned
  enabled-configs value contains $R_l$.

  \item \emph{DAG delivery and round advancement.}
  \Cref{lemma:local-dag-ancestor-closure} proves that every correct process's
  DAG contains the ancestors of each stored block.
  \Cref{lemma:certified-history-dissemination,lemma:correct-block-dissemination}
  establish dissemination under their stated Agreement, validation, and
  activity conditions.
  \Cref{lemma:transition-lookahead-delivery,corollary:round-advancement}
  give the corresponding round-advancement results.

  \item \emph{Transition progress, completion, and departure.}
  \Cref{lemma:eventual-first-installation,lemma:eventual-joining-discovery}
  establish the earliest correct old-member install transition round $t$
  and, for a correct joining process, the discovery round $u$.
  \Cref{corollary:install-transition-round-bound,lemma:joining-install-round-cases}
  bound the install transition rounds, and
  \cref{lemma:brb-safety-joining-round} covers the remaining joining-process
  Agreement case.
  \Cref{corollary:transition-round-progress,lemma:retained-joining-completion,lemma:removed-process-departure}
  prove transition progress, completion, and permission to leave.
  \Cref{lemma:next-outer-step-round} supplies the round required for the
  next configuration-level induction step.

  \item \emph{Configuration-level induction and Dynamic DAG-structured
  Byzantine Broadcast.}
  \Cref{lemma:configuration-step-closure,lemma:configuration-level-induction}
  establish the induction hypothesis throughout the authenticated
  configuration sequence for every configuration with an issued successor.
  \Cref{lemma:reconfiguration-completion,lemma:brb-safety,lemma:dynamic-dag-structured-broadcast}
  then establish Reconfiguration Completion and Dynamic DAG-structured
  Byzantine Broadcast.

  \item \emph{Leader Integrity and Total Order.}
  \Cref{lemma:transition-window-disjointness} proves that transition windows
  are disjoint.
  \Cref{lemma:unique-accepted-leader,lemma:deterministic-leader-scans,lemma:nested-leader-scans}
  establish consistent leader choices and nested leader sequences.
  \Cref{corollary:leader-integrity,theorem:leader-total-order} conclude
  Leader Integrity and Total Order.

  \item \emph{Probabilistic Progress.}
  After the final reconfiguration,
  \cref{corollary:eventual-stable-suffix,corollary:stable-suffix-round-progress}
  establish continuing round progress under $R_L$.
  \Cref{lemma:stable-wave-reachability,corollary:stable-wave-availability}
  establish the common core and its availability.
  Under the conditional probability requirement,
  \cref{proposition:stable-wave-direct-anchor,lemma:infinitely-many-direct-anchors,theorem:leader-probabilistic-progress}
  prove that every correct $R_L$-member that remains active finalizes
  infinitely many blocks with probability one.
\end{enumerate}
\Cref{theorem:dynamic-dag-bab-correctness} combines these results to establish
Dynamic DAG Byzantine Atomic Broadcast.

\subsection{Dynamic DAG-structured Byzantine Broadcast Progress and Reconfiguration}
\label{sec:dynamic-dag-progress}

\paragraph*{Assumptions and Process Operations.}
We use the system and reconfiguration assumptions of
\cref{sec:problem-definition}. Correct processes invoke the prescribed
$\texttt{DAG-Append}$ or
$\texttt{DAG-Transition-Append}$ operation. Each authenticated $R_c$ carries
dealer-authenticated evidence for every identity that appeared in
$R_1,\ldots,R_{c-1}$. On accepting $R_c$, a correct process $p_i$ invokes
$\texttt{UpdateKnown}_i(R_c)$. If $R_c$ is the highest-numbered reconfiguration
message it has processed, then
$\mathit{known}_i=\bigcup_{d=1}^{c}\mathsf{members}(R_d)$, and immediately
$\texttt{UpdateKnown}_i(R_c)$ invokes one finite $\texttt{FloodDAG}$ whenever
this set grows. Every later insertion sends its
complete certified block to every identity in $\mathit{known}_i$, every
active correct process repeatedly invokes
$\texttt{FloodDAG}$, and a removed correct process performs one final finite
flood before stopping its remaining DAG tasks. Certified-block messages
addressed to an identity before it activates are queued and processed when it
activates. While active, a correct process executes the prescribed
certified-block receive, fetch, $\texttt{FloodDAG}$, and applicable
state-request and state-reply handlers. These are
protocol actions, not an assumption that a particular process storing a block remains
available until every lagging process catches up.
When $R_{l+1}$ adds a process, the dealer sends it the authenticated pair
$\langle R_l,R_{l+1}\rangle$. The joining process therefore knows the full
membership and thresholds of $R_l$ without receiving all earlier
reconfiguration messages.
Every valid quorum certificate for a complete value
$\langle m,s,\mu\rangle$ contains a vote from a correct process $p_v$ that
checked $\mathsf{ValidParentCount}_v(s,\mu)$ when the certificate was formed.
A process $p_i$ later validating a certified block
$B=\langle m,s,\mu,C\rangle$ may therefore apply
$\mathsf{CertifiedParentCount}(s,\mu)$; it need not reconstruct its own
enabled-configs value for the earlier round. The predicate
$\mathsf{ValidCertifiedMetadata}_i(B)$ requires the declared
configuration to be $R_1$ or to carry valid authorization evidence, and
applies the corresponding checks to every configuration named by the
metadata.
In particular, suppose a certified block
$B=\langle m,s,\mu,C\rangle$ satisfies
$\mathsf{JoiningMetadata}(\mu,R_l,R_{l+1})$ and uses $q_l$ parents while
declaring $R_{l+1}$. If $q_l<q_{l+1}$, a correct process $p_v$ accepts
this parent count only if
$\mathit{enabled\text{-}configs}_v[s]=\{R_l,R_{l+1}\}$ and
$\mathsf{ValidReconfiguration}(R_l,R_{l+1})$ hold.
If $q_l=q_{l+1}$, the parent count satisfies the ordinary rule and does
not require the voter's enabled-configs value to equal the pair.
In either case, a process that later receives $B$ applies
$\mathsf{CertifiedParentCount}(s,\mu)$ to the immutable metadata and does
not require its own enabled-configs value for round $s$. The analysis
uses the issuance-rule assumption; \cref{lemma:transition-window-disjointness}
derives from it that transition windows of distinct transitions are
disjoint.

A block is \emph{admissible} at process $p_i$ if it passes all checks
required for insertion into $p_i$'s DAG: its certificate and metadata are
valid, and its referenced parents satisfy the applicable identity,
configuration-number, and parent-count checks. In particular, if $\mu'$ is
the metadata of a
parent, that parent's broadcaster belongs to
$\mathsf{members}(R_c)$ for some
$R_c\in\mathsf{enabled}(\mu')$. During a transition,
$\mathsf{enabled}(\mu')$ may be the pair $\{R_l,R_{l+1}\}$ comprising the
two configurations of a valid transition; hence a certified block from a
process removed by
$R_{l+1}$ may still be used as a parent during that transition. The parent
must also satisfy
\[
  \mathsf{number}(\mathsf{local}(\mu'))
  \leq
  \mathsf{number}(\mathsf{local}(\mu)),
\]
where $\mu$ is the child metadata. A valid round-$r$ certificate declared
under $R_c$ contains a share from at least one correct voter $p_v$ satisfying
$R_c\in\mathit{enabled\text{-}configs}_v[r]$. The lemmas below therefore
determine which configurations may be declared by valid round-$r$
certificates. Correct processes apply the same metadata checks to
parents selected by Byzantine broadcasters.

Accordingly, at process $p_i$, $B\in\mathsf{LocalBlocks}_i(R_l,r)$ means that
$B$ is already
in $DAG_i[r]$ and hence has passed the applicable DAG-insertion predicate,
and that its broadcaster slot belongs to $\mathsf{members}(R_l)$. It does
not assert that the block is declared under $R_l$, i.e., it need not satisfy
$\mathsf{local}(\mu)=R_l$.

The prescribed $\texttt{DAG-Append}$ invocation by $p_i$ snapshots
$\mathit{local}_i$, $\mathit{pending}_i$, and
$\mathit{enabled\text{-}configs}_i[r]$ before waiting for parents and records
those values in the block metadata. Moreover, a correct process rejects
a pending reconfiguration in any round earlier than the authenticated
earliest-inclusion round $\mathsf{earliest}(R_c)$ of that pending
configuration $R_c$. The authenticated reconfiguration message covers that
round. Hence neither a delayed correct append nor a
Byzantine broadcaster can obtain a certificate that places a later message
in a block's pending field before that round.

For a fixed transition $R_l\rightarrow R_{l+1}$, we say that the round-$r$
enabled-configs values assigned by correct processes in
$\mathsf{members}(R_l)$ are \emph{aligned} if every assigned value is one of
\[
  \{R_l\},\qquad
  \{R_l,R_{l+1}\},\qquad
  \{R_{l+1}\},
\]
and there do not exist correct
$p_i,p_j\in\mathsf{members}(R_l)$ such that
\[
  \mathit{enabled\text{-}configs}_i[r]=\{R_l\}
  \qquad\text{and}\qquad
  \mathit{enabled\text{-}configs}_j[r]=\{R_{l+1}\}.
\]
Here and below, an ``assigned'' value means a value different from $\bot$.
Thus the pair may coexist with either
singleton, but the old and new singletons cannot coexist. Here the old
singleton is $\{R_l\}$ and the new singleton is $\{R_{l+1}\}$. A pair
belonging to another transition in the same round is excluded within the
alignment proofs by the predecessor-completion and earliest-inclusion
bounds, and in general by \cref{lemma:transition-window-disjointness}.

Whenever one of these correct processes $p_i$ starts a round-$r$ append with
metadata $\mu$, it records
\[
  \mathsf{enabled}(\mu)
  =\mathit{enabled\text{-}configs}_i[r].
\]
Consequently, the $\mathsf{enabled}(\mu)$ fields recorded by round-$r$ blocks
produced by these correct processes are drawn from the aligned process-local
values. The alignment condition is stated for process-local enabled-configs
values
because a correct process may assign a value for a round without producing a
block in that round.

\paragraph*{BRB Integrity and Agreement when Certificates Use One
Configuration or the Two Configurations of One Valid Transition.}
The BRB algorithm supplies Integrity independently of reconfiguration.
Agreement is proved separately after restricting the configurations that
can certify one slot.

\begin{lemma}[BRB Integrity]
\label{lemma:brb-integrity}
The configuration-scoped BRB algorithm in \cref{alg:brb} satisfies BRB
Integrity. Moreover, every valid certificate in a correct broadcaster's
round-broadcaster slot binds the complete value supplied to that
broadcaster's prescribed BRB invocation.
\end{lemma}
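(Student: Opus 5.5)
The plan is to check the two clauses of BRB Integrity separately against the handlers of \cref{alg:brb}, and then to prove the stronger certificate-level statement about a correct broadcaster's slot.

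\emph{At most one delivery per slot.} This follows directly from the flag $\mathit{delivered}_i[r,ID]$. A correct $p_i$ triggers $\texttt{BRB-Deliver}_i$ only inside the vote handler, under the guard $\neg\mathit{delivered}_i[r,ID]$, and it sets the flag before the trigger. Since invocations at one process are serialized, a second delivery for the same slot $(r,ID)$, even for a different $\langle m,\mu\rangle$, is impossible.

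\emph{Every delivered value carries a valid certificate for its declared configuration.} Delivery happens only after $|VOTES_i[m,r,\mu]|\geq q_c$. Every entry was admitted only when $p_j\in\mathsf{members}(R_c)$ and $\mathsf{ValidShare}^{\mathsf{BRB}}_{j,c}(\sigma_j,v)$ held, for $R_c=\mathsf{local}(\mu)$ and $v=\langle\mathsf{vote},m,r,\mu\rangle$. The map is indexed by sender, so the $q_c$ shares come from distinct members. By the $(q_c,n_c)$ threshold property they combine into a unique signature. It verifies under the $R_c$ public key on the complete value, so $\mathsf{ValidSign}(C,m,r,\mu)$ holds.

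\emph{Correct broadcaster: any valid certificate in its slot binds the value it supplied.} Suppose $C$ is a valid certificate for $\langle m',r,\mu'\rangle$ in slot $(r,p_k)$, with $p_k$ correct and $R_c=\mathsf{local}(\mu')$.
\begin{enumerate}
\item Since fewer than $q_c$ shares cannot form a valid signature, and shares are unforgeable for correct holders, at least $q_c-f_c\geq f_c+1$ correct members of $\Pi_c$ produced vote shares on exactly $\langle m',r,\mu'\rangle$.
\item Consider the first correct member that voted for this value. It cannot have voted via the $f_c+1$-votes path, because that would require a correct voter before it. So it voted after collecting $q_c$ valid echoes, of which at least $q_c-f_c\geq 1$ come from correct members.
\item A correct process echoes only a proposal received from $p_j=ID=\mathsf{ID}(\mu')$, checking the sender on an authenticated channel. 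Hence $p_k$ itself sent $\langle m',r,\mu'\rangle$.
\item A correct $p_k$ sends a proposal for slot $(r,p_k)$ only from its single prescribed append invocation for round $r$. It invokes at most one append per round, and the snapshotted metadata is fixed before $\texttt{BRB-Broadcast}$. Therefore $\langle m',r,\mu'\rangle$ is exactly the value it supplied.
\end{enumerate}
The same argument applies to a joining process's $\texttt{DAG-Transition-Append}$, since it is its unique prescribed operation in rounds $r_0+2$ and $r_0+3$.

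The main obstacle is step 4. One must argue that a correct process never issues two different proposals for the same round-broadcaster slot. This has to cover the case where the proposals would be declared under different configurations, for instance an append around $\texttt{FinishTransition}$. It also has to cover an operation left pending at a removed process. I would handle this by observing that $\texttt{AdvanceRound}$ changes $r_i$ monotonically by one, so each round begins exactly once. At the beginning of a round exactly one of $\texttt{DAG-Append}$ or $\texttt{DAG-Transition-Append}$ is triggered: a joining process uses only the latter in $r_0+2,r_0+3$ and is passive in $r_0+1$. $\texttt{FinishTransition}$ runs before the round-$(r_0+4)$ snapshot, so that round's single append uses $R_{l+1}$. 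Integrity then holds without any reliance on the Agreement lemmas.
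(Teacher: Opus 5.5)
Your proposal is correct and follows the paper's argument. The $\mathit{delivered}$ flag gives at most one delivery per slot. For a correct broadcaster, the earliest correct voter cannot have used the $f_c+1$-vote relay rule, so it collected $q_c$ valid echoes including a correct one, and that echo required the authenticated proposal from the broadcaster itself. Your step~4, arguing that a correct process issues only one prescribed append per round-broadcaster slot, makes explicit what the paper leaves implicit in the phrase ``that broadcaster's prescribed BRB invocation.''
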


\begin{proof}
A correct process BRB-delivers at most once per slot and only after verifying
a certificate under the declared configuration's threshold key. Under the
BRB algorithm, a correct process first votes for a value declared under
$R_c$ only after receiving either $q_c$ valid echoes or $f_c+1$ valid votes
for that value. When the declared broadcaster is correct, consider the earliest
correct process that votes for the certified value. This process cannot use
the $f_c+1$-vote relay rule: among those votes at least one is correct,
contradicting the choice of the earliest correct voter. It must therefore
have received $q_c$ valid echoes, including a correct echo. A correct
process sends an echo only after receiving the authenticated initial message
from the declared broadcaster. Hence a value attributed to a correct
broadcaster originates in that broadcaster's prescribed BRB invocation.
\end{proof}

\begin{lemma}[Two valid certificates for one round-broadcaster slot cannot
bind different values within one configuration or one valid transition]
\label{lemma:brb-aligned-safety}
Fix a round-broadcaster slot $(r,ID)$. Any two valid BRB quorum
certificates for this slot under the same declared configuration bind the
same complete BRB value $\langle m,r,\mu\rangle$ and, by
threshold-signature uniqueness, are the same certificate. For any valid
transition $R_l\rightarrow R_{l+1}$, a valid certificate for this slot
declared under $R_l$ and a valid certificate for this slot declared under
$R_{l+1}$ cannot both be formed. Consequently, if certificates for the slot
can be formed only under one configuration or under the two configurations
of one valid transition, any two valid certificates for the slot bind the
same complete BRB value. In particular, any two correct processes that
BRB-deliver for the slot deliver that same value.
\end{lemma}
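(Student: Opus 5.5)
The proof is a quorum-intersection argument, applied once within a single configuration and once across the two configurations of a valid transition. In both cases the key fact is that a correct process votes at most once per round-broadcaster slot. The flag $\mathit{voted}_i[r,ID]$ is indexed by slot only, not by configuration or value, so a correct process contributes at most one BRB vote share to slot $(r,ID)$ in total.

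\emph{Same configuration.} Suppose two valid certificates for $(r,ID)$ are declared under the same $R_c$ and bind values $\langle m,r,\mu\rangle$ and $\langle m',r,\mu'\rangle$. Each combines $q_c$ distinct valid shares from members of $\Pi_c$. Since at most $f_c$ members of $\Pi_c$ are ever corrupted, the two voter sets intersect in at least $2q_c-n_c=n_c-2f_c\geq f_c+1$ members. At least one of these is correct. That process voted for both values in the same slot, which contradicts the single-vote flag unless the two values are equal. Threshold-signature uniqueness then makes the two certificates identical.

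\emph{Two configurations of one transition.} Suppose one certificate is declared under $R_l$ and another under $R_{l+1}$ for the same slot. The two voter sets lie in $\Pi_l$ and $\Pi_{l+1}$, with sizes $q_l$ and $q_{l+1}$. Because $|\Pi_l\mathbin{\triangle}\Pi_{l+1}|\leq 1$, we have $|\Pi_l\cap\Pi_{l+1}|\geq \min(n_l,n_{l+1})-\,$(at most one extra). I would do a case split on addition versus removal. In the addition case, $\Pi_{l+1}=\Pi_l\cup\{p\}$ with $n_{l+1}=n_l+1$, so the $R_{l+1}$ voters include at least $q_{l+1}-1$ members of $\Pi_l$. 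Intersecting with the $q_l$ voters from $\Pi_l$ gives at least $q_l+q_{l+1}-1-n_l$ common old members. Removal is symmetric with the roles of $l$ and $l+1$ swapped.

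The main obstacle is showing that this count exceeds the number of corrupted identities among the common voters. That number is at most $\max(f_l,f_{l+1})$, because the intersection lies inside both configurations, and hence at most $\min(f_l,f_{l+1})$. I would verify arithmetically that $q_l+q_{l+1}-1-n_l\geq \min(f_l,f_{l+1})+1$, distinguishing whether $f$ increments at the step $n\to n+1$:
\begin{itemize}
\item If $f_{l+1}=f_l$, then $q_{l+1}=q_l+1$ and the bound is $2q_l-n_l=n_l-2f_l\geq f_l+1$.
\item If $f_{l+1}=f_l+1$, then $n_{l+1}=3f_l+4$, $n_l=3f_l+3$, and $q_{l+1}=q_l$. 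The bound becomes $2q_l-1-n_l=n_l-2f_l-1=f_l+2\geq f_l+1$.
\end{itemize}
So some correct process lies in both voter sets. It cast one vote in the slot under $R_l$ and another under $R_{l+1}$. This is impossible: even for identical $m$ the two values differ in $\mathsf{local}(\mu)$, and in every case the process would have voted twice in one slot. Hence the two certificates cannot both be formed.

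\emph{Conclusion.} If certificates for the slot can only be formed under one configuration, or under the two configurations of a single valid transition, then any two of them are either both under the same configuration, and so bind the same value, or split across the pair, which was just excluded. Finally, a correct process BRB-delivers only values accompanied by a valid certificate (\cref{lemma:brb-integrity}). Therefore two correct processes delivering for the slot deliver the same complete value.
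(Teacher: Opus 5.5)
Your proposal is correct and follows essentially the same quorum-intersection argument as the paper, resting on the slot-indexed vote flag $\mathit{voted}_i[r,ID]$; your cross-configuration counts $q_l+q_{l+1}-n_{l+1}$ (addition) and $q_l+q_{l+1}-n_l$ (removal) are exactly the paper's, which it bounds directly by $f_{l+1}+1$ and $f_l+1$ instead of case-splitting on whether $f$ increments. The only case you leave implicit is a transition with unchanged membership ($\Pi_l=\Pi_{l+1}$). The paper lists it separately, and it reduces immediately to your same-configuration bound $2q_l-n_l\geq f_l+1$.
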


\begin{proof}
If two certificates are declared under the same configuration $R_c$, their
voter sets
intersect in at least
\[
  2q_c-n_c\geq f_c+1
\]
members and hence in a correct process. If one certificate is declared under
$R_l$ and the other under $R_{l+1}$, an addition gives a voter-set
intersection of at least
\[
  q_l+q_{l+1}-n_{l+1}\geq f_{l+1}+1,
\]
whereas a removal gives an intersection of at least
\[
  q_l+q_{l+1}-n_l\geq f_l+1.
\]
If the memberships are unchanged, then
$n_{l+1}=n_l$, $f_{l+1}=f_l$, and $q_{l+1}=q_l$, so the intersection has
size at least
\[
  2q_l-n_l\geq f_l+1.
\]
In all three cases, the intersection contains a correct process. A correct
process casts at most one vote in a fixed round-broadcaster slot, and that
vote binds the complete value $\langle m,r,\mu\rangle$. Therefore,
certificates under one configuration bind the same tuple and, by
threshold-signature uniqueness under that configuration's key, are the same
certificate. Certificates declared under the two configurations of the
valid transition
cannot both be formed, because a correct process in their voter-set
intersection would have to vote twice in the slot.
\end{proof}

\subsubsection{Consequences of the Issuance Rule for Earliest-Inclusion and
Install Transition Rounds}

\paragraph*{Consequences of the Issuance Rule.}
For the lemmas in this subsection, fix a transition
$R_l\rightarrow R_{l+1}$, let
$\delta_{l+1}=\mathsf{earliest}(R_{l+1})$, and suppose that, before issuing
$R_{l+1}$, the dealer observes, as its issuance rule in
\cref{sec: config begin} prescribes, $q_l$ certified blocks from distinct
broadcaster slots, all declared under $R_l$, in round $\delta_{l+1}-2$.

\begin{lemma}[No certified block below $\delta_{l+1}$ contains $R_{l+1}$ in
its pending field]
\label{lemma:no-premature-pending}
No certified block below round $\delta_{l+1}$ has pending field $R_{l+1}$.
\end{lemma}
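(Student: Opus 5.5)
The argument goes through the correct voter that every valid certificate must contain. Suppose, for contradiction, that $B=\langle m,r,\mu,C\rangle$ is certified, $r<\delta_{l+1}$, and $\mathsf{pending}(\mu)=R_{l+1}$.

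Let $R_c=\mathsf{local}(\mu)$. The certificate $C$ combines $q_c$ distinct valid BRB vote shares from members of $\Pi_c$. By threshold-signature unforgeability, fewer than $q_c$ shares cannot produce a valid signature. At most $f_c<q_c$ members of $\Pi_c$ are ever corrupted. Hence at least one share comes from a process $p_v\in\Pi_c$ that was uncorrupted when it cast its vote, and so followed the protocol at that point.

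Next, inspect the vote path of \cref{alg:brb}. A correct $p_v$ calls $\texttt{SendVote}_v$ only from the echo handler or the vote handler, and both guard the call with $\mathsf{ValidInstance}_v(r,\mu)$. That predicate includes $\mathsf{ValidPending}(r,\mu)$. Since $\mathsf{pending}(\mu)=R_{l+1}\neq\bot$, this requires
\[
  \mathsf{ValidReconfiguration}(\mathsf{local}(\mu),R_{l+1})
  \quad\text{and}\quad
  r\geq\mathsf{earliest}(R_{l+1}).
\]
The earliest-inclusion round is read from the message $R_{l+1}$ carried in the metadata. The validity check requires the dealer's authorization evidence to cover the exact value $\mathsf{earliest}(R_{l+1})$. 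Unforgeability of that evidence then forces $\mathsf{earliest}(R_{l+1})=\delta_{l+1}$; a Byzantine broadcaster cannot substitute a smaller earliest round under the same number $l+1$, because reconfiguration messages form an authenticated sequence with one message per identifier. So $p_v$'s vote implies $r\geq\delta_{l+1}$, which contradicts $r<\delta_{l+1}$.

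The main point needing care is the binding of the pending field to the authenticated $\delta_{l+1}$. We must argue that any message in a pending field that passes $\mathsf{ValidReconfiguration}$ with number $l+1$ is exactly $R_{l+1}$, including its $\delta_{l+1}$. This follows from the dealer assumptions (1) and (2) together with unforgeability. A smaller point is that the lemma does not need the declared configuration to equal $R_l$. Whatever $R_c$ is declared, the certificate's own threshold guarantees an uncorrupted voter, so the argument holds uniformly and needs neither the issuance rule nor alignment. The delayed correct append case, where the metadata was snapshotted early, is likewise covered: the check is applied by the voter at voting time against the round index $r$, not against the broadcaster's state.
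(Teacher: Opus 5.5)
Your proof is correct and follows the paper's argument. A valid $q_c$-share certificate must contain a share from a correct voter, and that voter's $\mathsf{ValidInstance}$ check (through $\mathsf{ValidPending}$) rejects any round-$r$ instance with $r<\mathsf{earliest}(R_{l+1})=\delta_{l+1}$; the at most $f_c<q_c$ Byzantine members cannot form the certificate on their own. Your extra discussion of how the dealer's authentication binds $\delta_{l+1}$ is harmless but not needed: a pending field equal to $R_{l+1}$ already has $\mathsf{earliest}$ equal to $\delta_{l+1}$ by projection.
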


\begin{proof}
Every correct echoer and voter rejects a block in round
$s<\delta_{l+1}$ whose pending field contains $R_{l+1}$. At most $f_c$
members of any declared configuration $R_c$ are Byzantine, and $q_c>f_c$.
Byzantine members alone therefore cannot form a quorum certificate.
\end{proof}

\begin{lemma}[Before acting in round $\delta_{l+1}-1$, an active correct
$R_l$-member has installed $R_l$ and completed any transition into it]
\label{lemma:well-spaced-predecessor-completion}
Fix a valid transition $R_l\rightarrow R_{l+1}$ in an execution satisfying
the issuance-rule assumption, let
$\delta_{l+1}=\mathsf{earliest}(R_{l+1})$, and assume the
configuration-level induction hypothesis for $R_l$ and the conclusions
already established for preceding transitions.
Every correct process
$p_i\in\mathsf{members}(R_l)$ that is still active when it reaches logical
round $\delta_{l+1}-1$ has already installed $R_l$. If $l>1$, it has also
completed the transition into $R_l$ before performing any
round-$(\delta_{l+1}-1)$ protocol action.
\end{lemma}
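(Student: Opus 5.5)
The argument splits on $l=1$ versus $l>1$. The proof of the second case rests on the issuance-rule assumption, the configuration-level induction hypothesis for $R_l$, and the $t$-versus-$t+1$ bound on install transition rounds already established for the preceding transition $R_{l-1}\rightarrow R_l$.

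\emph{Case $l=1$.} Every correct process starts with $\mathit{local}_i=R_1$ and there is no transition into $R_1$. The first clause is then immediate. For the second clause, I would still check that no earlier transition can interfere: no configuration precedes $R_1$, so nothing is pending before $R_2$ and the claim holds vacuously.

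\emph{Case $l>1$, step 1.} Before issuing $R_{l+1}$, the dealer observes $q_l$ certified blocks in round $s=\delta_{l+1}-2$, from distinct broadcaster slots and all declared under $R_l$. Each such certificate contains at least $q_l-f_l\geq f_l+1>0$ votes from correct voters. By \cref{alg:brb}, a correct voter $p_v$ sends a vote only if $\mathsf{MaySendBRB}_v(R_l,s)$ holds, so $R_l\in\mathit{enabled\text{-}configs}_v[s]$.

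\emph{Case $l>1$, step 2.} I would use the preceding transition's conclusions to turn that enabled-configs value into a bound on completion rounds. Let $t'$ be the earliest install transition round of a correct $R_{l-1}$-member for $R_{l-1}\rightarrow R_l$. The install bound gives that every correct old member installs in round $t'$ or $t'+1$, and a correct joining process installs in a round $u\geq t'$. Hence every correct process completes the transition into $R_l$ by round $t'+5$, and $R_l$ is enabled at a correct process no earlier than round $t'+1$. The goal is to show that $R_l$ being certified in round $s$ forces $t'+5\leq s+1=\delta_{l+1}-1$. The naive count only gives $s\geq t'+1$, so the dealer's observation must be sharpened. The key point is that the dealer sees $q_l$ $R_l$-declared blocks in one round. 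A count over voter sets, using alignment from \cref{lemma:brb-agreement-transition-rounds}, should then show that round $s$ lies at or after the last round in which any correct process still holds $R_{l-1}$ enabled. If I cannot extract exactly $t'+5\leq\delta_{l+1}-1$ this way, I would fall back on the spacing argument built into the issuance rule, namely that $R_{l+1}$ is issued only after $R_l$ is in ordinary operation. I would then cite clauses 1 and 4 of the induction hypothesis for $R_l$ to pin the completion round below $\delta_{l+1}-1$.

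\emph{Case $l>1$, step 3.} Now take a correct $p_i\in\mathsf{members}(R_l)$ that is still active when it reaches round $\delta_{l+1}-1$. Clause 1 of the induction hypothesis says $p_i$ eventually installs $R_l$, but the statement needs this to happen before $p_i$ acts in that round. Here I use that \texttt{FinishTransition} executes at the beginning of its completion round $a_i+4$, before ordinary block production. If $a_i+4\leq\delta_{l+1}-1$, then $p_i$ has installed $R_l$ and finished the transition before any round-$(\delta_{l+1}-1)$ action. Retained, joining and newly added members are all covered, because each completes at $a_i+4$ with $a_i\in\{t',t'+1\}$ or $a_i=u$, and step 2 bounds the latest of these.

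The main obstacle is step 2. Showing that a round certified under $R_l$ by a full $q_l$ quorum lies at or after every correct process's completion round needs the per-round alignment of enabled-configs values, not just an existence argument. I would first try to derive it from the joining-round bound together with the install transition bound for the preceding transition.
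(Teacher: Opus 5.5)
Your treatment of the case $l=1$ and your step 3 (completion via \texttt{FinishTransition} at the beginning of round $a_i+4$) match the paper. Step 2 is the crux, as you say, and it is left open. None of the routes you suggest closes it.

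Counting voters cannot work. Throughout the window $t'+1,\ldots,t'+5$, every correct old member has $\{R_{l-1},R_l\}$ enabled and legitimately echoes and votes for $R_l$-declared instances, for example the joining process's transition blocks or blocks of members that completed early. From the voters' side, therefore, certified $R_l$-declared blocks in round $\delta_{l+1}-2$ show only that $R_l$ is enabled there, however many such blocks there are. Alignment only forbids the two singletons from coexisting; it never pushes a round past the window. Your fallback, that ``$R_{l+1}$ is issued only after $R_l$ is in ordinary operation,'' restates what must be proved. Clauses 1 and 4 of the induction hypothesis are existence statements (``eventually installs'', ``some round with $q_l$ appends''), so they give no bound relative to $\delta_{l+1}$.

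The missing idea is to count the \emph{broadcaster} slots of the $q_l$ observed blocks instead of their voters.

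\begin{itemize}
\item A correct old member $p_h$ declares $R_l$ in an ordinary block only after \texttt{FinishTransition} runs at the beginning of round $a_h+4\geq t'+4$.
\item Hence, in any round below $t'+4$, certified $R_l$-declared blocks can occupy only the at most $f_l$ Byzantine slots. For an addition they can also occupy the single joining slot, since its transition blocks do declare $R_l$ early. That is at most $f_l+1<q_l$ slots, using $f_l\geq1$.
\item The dealer observes $q_l$ distinct $R_l$-declared slots in round $\delta_{l+1}-2$, so $\delta_{l+1}-2\geq t'+4$.
\item Every correct $R_l$-member completes by the beginning of round $t'+5\leq\delta_{l+1}-1$. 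This is before any action of that round.
\end{itemize}

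For this last step you need the joining process's install round to satisfy $u\leq t'+1$; the established bound is $u\in\{t'-1,t',t'+1\}$. Your stated $u\geq t'$ is the wrong direction, and it can in fact fail.
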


\begin{proof}
The claim is immediate for the initial configuration $R_1$. Otherwise, let $t'$
be the earliest correct old-member install transition round for the preceding
transition into $R_l$. The conclusions already established for the
preceding transition imply that every correct process retained in $R_l$ installs
in round $t'$ or $t'+1$ and
completes the transition at the beginning of round $t'+4$ or $t'+5$. A
correct process added by that transition has install transition round in
$\{t'-1,t',t'+1\}$ and also completes by round $t'+5$.

Before round $t'+4$, no correct process in
$\mathsf{members}(R_{l-1})$ can produce a block declared under $R_l$. Such
blocks can occupy at most the $f_l$ Byzantine slots and, for an addition,
the joining process's slot. Since $f_l+1<q_l$, the observed $q_l$ blocks in
round $\delta_{l+1}-2$ imply
\[
 \delta_{l+1}-2\geq t'+4.
\]
Thus every active correct process in $\mathsf{members}(R_l)$ completes the
transition into $R_l$ by $t'+5\leq\delta_{l+1}-1$.
\end{proof}

\begin{lemma}[Every assigned round-$\delta_{l+1}$ enabled-configs value of a
correct $R_l$-member is $\{R_l\}$]
\label{lemma:well-spaced-base-round}
Every correct process $p_i\in\mathsf{members}(R_l)$ for which
$\mathit{enabled\text{-}configs}_i[\delta_{l+1}]$ is assigned satisfies
$\mathit{enabled\text{-}configs}_i[\delta_{l+1}]=\{R_l\}$.
\end{lemma}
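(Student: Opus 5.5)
The plan is to look at how a correct old member $p_i$ assigns $\mathit{enabled\text{-}configs}_i[\delta_{l+1}]$ and show that only the singleton branch of $\texttt{UpdateConfigs}_i(\delta_{l+1}-1)$ can run. Round-$\delta_{l+1}$ values are assigned in three ways:
\begin{itemize}
  \item by $\texttt{UpdateConfigs}_i(\delta_{l+1}-1)$, giving either $\{\mathit{local}_i\}$ or, through $\texttt{InstallTransition}_i(\cdot,\cdot,\delta_{l+1}-1)$, a pair;
  \item by an earlier $\texttt{InstallTransition}_i(R_{c},R_{c+1},a)$ with $a+1\leq\delta_{l+1}\leq a+4$;
  \item by the joining operation's $\texttt{InstallTransition}$, for a process that joined by the transition into $R_l$.
\end{itemize}
I would exclude every possibility except $\{R_l\}$ from the first branch.

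\textbf{Step 1: earlier transitions.} First exclude any pair from an earlier transition, and any pre-installation singleton. By \cref{lemma:well-spaced-predecessor-completion}, before acting in round $\delta_{l+1}-1$ the process has installed $R_l$ and completed the transition into $R_l$. That transition finished at the beginning of round $a+4\leq\delta_{l+1}-1$, so the enabled rounds it assigned end at $a+4<\delta_{l+1}$. The same bound covers a joining process, since its install round lies in $\{t'-1,t',t'+1\}$ and it completes by $t'+5\leq\delta_{l+1}-1$.

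Hence $\mathit{transition}_i=\bot$ when $\texttt{AdvanceRound}_i$ processes round $\delta_{l+1}-1$, and $\mathit{local}_i=R_l$. So $\texttt{UpdateConfigs}_i(\delta_{l+1}-1)$ is the call that assigns round $\delta_{l+1}$. By construction, no earlier call assigns it: each call assigns only its successor round, and transitions ending before $\delta_{l+1}$ do not reach it.

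\textbf{Step 2: the pair cannot be installed.} Inside $\texttt{UpdateConfigs}_i(\delta_{l+1}-1)$ with $R_l$ installed, a pair $\{R_l,R_{l+1}\}$ requires $\mathsf{MarkedBlocks}_i(R_l,R_{l+1},\delta_{l+1}-1)\geq f_l+1$. In particular some block in $DAG_i[\delta_{l+1}-1]$ would have to be marked by $R_{l+1}$. That means it, or an ancestor in a round $\leq\delta_{l+1}-1$, carries pending field $R_{l+1}$. Every DAG block is certified, so \cref{lemma:no-premature-pending} forbids this. The count is therefore zero.

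Moreover, $\mathit{pending}_i$ cannot hold $R_{l+1}$ for a reason that would bypass the marking test. Pending is set only in two ways. The first is the receive handler, which awaits $r_i\geq\delta_{l+1}$. The second is the marked-block branch of $\texttt{UpdateConfigs}$, which was just excluded. So $\texttt{UpdateConfigs}_i(\delta_{l+1}-1)$ takes the else branch and assigns $\{\mathit{local}_i\}=\{R_l\}$.

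\textbf{Step 3: no later transition interferes.} Finally, observe that $\mathit{local}_i$ cannot already exceed $R_l$ at that moment. Installing $R_{l+1}$ or later requires completing a transition out of $R_l$, and that in turn requires a pair assignment with $f_l+1$ marked blocks, which Step 2 excludes at or below round $\delta_{l+1}-1$. Immutability of enabled-configs values then fixes the value.

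\textbf{Main obstacle.} I expect Step 1 to be the hardest: bounding the previous transition's pair window strictly below $\delta_{l+1}$, especially for a process joined by the transition into $R_l$. That process assigns $\mathit{enabled\text{-}configs}$ for rounds up to $u+4$, and it assigns $\{R_1\}$ at round $1$ out of band. I would handle it using the bound $t'+5\leq\delta_{l+1}-1$ from the proof of \cref{lemma:well-spaced-predecessor-completion}, together with the inductive fact that joining install rounds lie in $\{t'-1,t',t'+1\}$.
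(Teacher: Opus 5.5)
Your proposal is correct and follows the paper's argument. The predecessor-completion lemma removes any leftover pair from the transition into $R_l$. Then \cref{lemma:no-premature-pending} rules out any marked block at or below round $\delta_{l+1}-1$, so no $\texttt{InstallTransition}$ out of $R_l$ can assign round $\delta_{l+1}$, and only the singleton $\{R_l\}$ remains. Your extra remarks on $\mathit{pending}_i$ and your Step~3 are harmless but unnecessary, since a zero marked-block count already forces the else branch of $\texttt{UpdateConfigs}_i(\delta_{l+1}-1)$.
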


\begin{proof}
By \cref{lemma:well-spaced-predecessor-completion}, $p_i$ has completed the
transition into $R_l$. By \cref{lemma:no-premature-pending}, no block below
$\delta_{l+1}$ is marked by $R_{l+1}$, so no correct process can invoke
$\texttt{InstallTransition}$ with an install transition round below
$\delta_{l+1}$. An invocation with install transition round $a_i$ assigns
$\{R_l,R_{l+1}\}$ only to rounds $a_i+1,\ldots,a_i+4$; it never changes
the value for round $a_i$. Hence the assigned value for round
$\delta_{l+1}$ is $\{R_l\}$.
\end{proof}

\begin{lemma}[From round $\delta_{l+1}-1$ until transition completion, every
block a correct process counts toward an $R_l$ parent quorum has an
$R_l$-member broadcaster]
\label{lemma:predecessor-parent-slots}
Fix a valid transition $R_l\rightarrow R_{l+1}$ in an execution satisfying
the issuance-rule assumption, let
$\delta_{l+1}=\mathsf{earliest}(R_{l+1})$, and assume the
configuration-level induction hypothesis for $R_l$ and the conclusions
already established for preceding transitions.
Let $p_i$ be correct. Before
$p_i$ invokes $\texttt{FinishTransition}_i(R_{l+1})$, let
$s\geq\delta_{l+1}-1$ be any round for which the protocol requires $p_i$
to satisfy
\[
  |\mathsf{ParentBlocks}_i(R_l,s)|\geq q_l.
\]
Then every block
$B=\langle m,s,\mu,C\rangle\in\mathsf{ParentBlocks}_i(R_l,s)$ satisfies
\[
  \mathsf{ID}(\mu)\in\mathsf{members}(R_l).
\]
\end{lemma}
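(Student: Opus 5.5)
We argue by contradiction. Suppose some $B=\langle m,s,\mu,C\rangle\in\mathsf{ParentBlocks}_i(R_l,s)$ has $\mathsf{ID}(\mu)\notin\mathsf{members}(R_l)$.

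Since $B$ was inserted into $DAG_i$, it is admissible. Hence $\mathsf{ID}(\mu)\in\mathsf{members}(\mathsf{local}(\mu))$ and $\mathsf{local}(\mu)\in\mathsf{enabled}(\mu)$. Membership in $\mathsf{ParentBlocks}_i(R_l,s)$ also gives $\mathsf{number}(\mathsf{local}(\mu))\leq l$. The first step is to show $\mathsf{local}(\mu)\neq R_l$: if $\mathsf{local}(\mu)=R_l$, then $\mathsf{ValidCertifiedMetadata}_i$ would force $\mathsf{ID}(\mu)\in\mathsf{members}(R_l)$. So $\mathsf{local}(\mu)=R_c$ for some $c<l$.

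The second step is to rule out a certificate declared under an older $R_c$ in round $s\geq\delta_{l+1}-1$. The certificate $C$ combines $q_c$ shares, and since $q_c>f_c$, at least one share comes from a correct voter $p_v\in\mathsf{members}(R_c)$. That voter satisfied $\mathsf{ValidInstance}_v(s,\mu)$, so $R_c\in\mathit{enabled\text{-}configs}_v[s]$. The task is to show that no correct process still enables any $R_c$ with $c<l$ in round $s$. A correct $p_v$ enables $R_c$ in round $s$ only in one of two situations: $\mathit{local}_v=R_c$, or $p_v$ is inside the transition window $R_c\rightarrow R_{c+1}$. We handle the cases as follows.
\begin{itemize}
\item For $p_v\in\mathsf{members}(R_l)$, \cref{lemma:well-spaced-predecessor-completion} shows that $p_v$ has installed $R_l$ and completed the transition into it before acting in round $\delta_{l+1}-1$. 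From then on its enabled-configs values contain only $R_l$ (and possibly $R_{l+1}$), so they cannot contain $R_c$.
\item For $c=l-1$ and $p_v$ removed by the preceding transition, the earlier conclusions place its install round at $t'$ or $t'+1$. Its enabled values for rounds $\geq t'+6$ are therefore never assigned, because it stops round advancement at completion. The proof of \cref{lemma:well-spaced-predecessor-completion} gives $\delta_{l+1}-2\geq t'+4$, so $s\geq t'+5$. The only value possibly assigned there is the pair from an install at $t'+1$, which is $\{R_{l-1},R_l\}$ and does contain $R_{l-1}$. This boundary case needs care.
\item For $c<l-1$, apply the induction hypothesis recursively: all correct members of $R_c$ have long since finished or left, so no correct process enables $R_c$ in round $s$.
\end{itemize}

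The main obstacle is the boundary case $s=\delta_{l+1}-1=t'+5$ with a pair-enabled, removed old member. Our first attempt is to sharpen the spacing: a correct retained member installing at $t'$ or $t'+1$ finishes at $t'+4$ or $t'+5$. The dealer's $q_l$ observed $R_l$-declared blocks in round $\delta_{l+1}-2$ require at least $q_l-f_l-1\geq1$ correct broadcasters with $R_l$ installed there. Combining this with the installation of the late processes gives $\delta_{l+1}-2\geq t'+5$, hence $s\geq t'+6$, which is outside every correct predecessor window.

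If that sharpening fails, we fall back on the parent-eligibility condition $\mathsf{ID}_B\in\mathsf{Recipients}(E_B)$. If $E_B=\{R_{l-1},R_l\}$, this condition only places the ID in $\Pi_{l-1}\cup\Pi_l$. In that case we would instead invoke the fact that the quorum $p_i$ counts is taken over $R_l$-member slots, as used in \cref{lemma:marked-block-propagation}.

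In every case the supposed correct voter cannot exist. This contradicts the validity of $C$, so $\mathsf{ID}(\mu)\in\mathsf{members}(R_l)$.
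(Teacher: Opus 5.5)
Your reduction needs that no correct process has $R_{l-1}$ in its round-$s$ enabled-configs value, and that is false at the boundary $s=\delta_{l+1}-1$. This affects retained $R_l$-members too, not only the removed process you flag, so your first bullet is wrong as stated.

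\Cref{lemma:well-spaced-predecessor-completion} only gives completion at the beginning of some round $t'+5\leq\delta_{l+1}-1$. Moreover, \texttt{InstallTransition} with install round $a$ assigns $\{R_{l-1},R_l\}$ through round $a+4$, which is the very round in which \texttt{FinishTransition} runs. A correct retained member with install round $t'+1$ is therefore pair-enabled in round $t'+5$.

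The case $\delta_{l+1}-1=t'+5$ is consistent with the issuance rule. The dealer's $q_l$ observed round-$(\delta_{l+1}-2)$ blocks constrain only their own correct broadcasters, at least $q_l-f_l$ of them, and these may all have installed at $t'$. Other correct members may have installed at $t'+1$ and still be declaring $R_{l-1}$ in round $t'+4$. So your sharpening to $\delta_{l+1}-2\geq t'+5$ does not follow.

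The fallback is circular: \cref{lemma:marked-block-propagation} obtains its distinct $R_l$-member slots by invoking the very statement you are proving.

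The missing idea is to count refusers rather than demand that every correct process exclude $R_{l-1}$.
\begin{itemize}
\item If the transition into $R_l$ was an addition, then $\mathsf{members}(R_{l-1})\subseteq\mathsf{members}(R_l)$ and there is nothing to prove. So assume a removal, where $n_{l-1}=n_l+1$ and there is no joining slot to subtract.
\item The at least $n_l-2f_l$ correct broadcasters $p_h$ of the dealer-observed $R_l$-declared blocks are retained. Each produced that block only after \texttt{FinishTransition} at the beginning of round $a_h+4\leq\delta_{l+1}-2$. Their pair windows therefore end before round $\delta_{l+1}-1$.
\item Every round-$s$ value these processes assign contains only $R_l$ or later configurations, hence excludes $R_{l-1}$. (A later pair belongs to $R_l\rightarrow R_{l+1}$ and, by \cref{lemma:earliest-install-lower-bound}, starts only after $\delta_{l+1}$.)
\item Since $n_l-2f_l\geq f_{l-1}+1$ (the case $f_{l-1}=f_l+1$ forces $n_l=3f_l+3$), at least $f_{l-1}+1$ members of $\mathsf{members}(R_{l-1})$ refuse every round-$s$ instance declared under $R_{l-1}$.
\item That leaves at most $n_{l-1}-f_{l-1}-1=q_{l-1}-1$ possible shares. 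Hence no $R_{l-1}$-declared certificate exists in round $s$, whatever the remaining pair-enabled correct processes do.
\end{itemize}
This is the paper's argument. Combined with your correct first step ($R_l$-declared blocks have $R_l$-member broadcasters, and $R_{l+1}$ is excluded by the configuration-number check) and the appeal to preceding transitions for $c<l-1$, it closes the proof.
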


\begin{proof}
The conclusions already established for preceding transitions imply that
configurations earlier than $R_{l-1}$ cannot contribute certified blocks in
these rounds. Hence the only possible
broadcaster identity outside $\mathsf{members}(R_l)$ comes from the
immediately preceding membership $\mathsf{members}(R_{l-1})$. If the
transition into $R_l$ added a member, then
\[
  \mathsf{members}(R_{l-1})\subseteq\mathsf{members}(R_l),
\]
so there is no such identity. Suppose instead that the transition removed
one member, so $|\mathsf{members}(R_{l-1})|=n_l+1$. Among the observed $q_l$
round-$(\delta_{l+1}-2)$ blocks declared under $R_l$, at least
$q_l-f_l=n_l-2f_l$ have correct broadcasters. For each such correct
broadcaster $p_h$, let $a_h$ be its install transition round for
$R_{l-1}\rightarrow R_l$. Because that transition is a removal, $p_h$ is
retained; there is no joining process whose
$\texttt{DAG-Transition-Append}$ can declare $R_l$ before transition
completion. By \cref{alg:reconfiguration}, an ordinary block produced by
$p_h$ declares $R_l$ only after $\texttt{FinishTransition}$ executes at the
beginning of round $a_h+4$. Its observed $R_l$-declared block in round
$\delta_{l+1}-2$ therefore implies
\[
  a_h+4\leq\delta_{l+1}-2.
\]
Thus $\delta_{l+1}-1\geq a_h+5$, whereas
$\texttt{InstallTransition}$ assigns $\{R_{l-1},R_l\}$ only in rounds
$a_h+1,\ldots,a_h+4$. Moreover, by
\cref{lemma:earliest-install-lower-bound}, no correct process installs
$R_l\rightarrow R_{l+1}$ with an install transition round below
$\delta_{l+1}$. Hence every assigned value in round
$s\geq\delta_{l+1}-1$ excludes $R_{l-1}$. For the fixed round $s$, if
$\mathit{enabled\text{-}configs}_h[s]$ is unassigned, $p_h$ emits no share;
otherwise it refuses every round-$s$ BRB instance declared under
$R_{l-1}$.
Because
\[
 n_l-2f_l\geq f_{l-1}+1,
\]
fewer than $q_{l-1}$ vote shares can ever be emitted for such an instance.
Delayed messages cannot increase that number. A block declared under
$R_{l+1}$ cannot belong to $\mathsf{ParentBlocks}_i(R_l,s)$ because it
violates
$\mathsf{number}(R_{l+1})\leq\mathsf{number}(R_l)$. Every remaining counted
block therefore has a broadcaster in $\mathsf{members}(R_l)$, including a
member removed by $R_l\rightarrow R_{l+1}$.
\end{proof}

\begin{lemma}[No correct process invokes $\texttt{InstallTransition}$ with
install transition round below $\delta_{l+1}$]
\label{lemma:earliest-install-lower-bound}
Fix a valid transition $R_l\rightarrow R_{l+1}$ in an execution satisfying
the issuance-rule assumption, and let
$\delta_{l+1}=\mathsf{earliest}(R_{l+1})$. No correct process
$p_i\in\mathsf{members}(R_l)\cup\mathsf{members}(R_{l+1})$ invokes
$\texttt{InstallTransition}_i(R_l,R_{l+1},a_i)$ with
$a_i<\delta_{l+1}$. Consequently, if
\[
  t=\min\bigl\{a_i:
  p_i\in\mathsf{members}(R_l)\text{ is correct and invokes }
  \texttt{InstallTransition}_i(R_l,R_{l+1},a_i)\bigr\}
\]
exists, then $t\geq\delta_{l+1}$.
\end{lemma}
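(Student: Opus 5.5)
The plan is to trace the two code paths that can invoke $\texttt{InstallTransition}_i(R_l,R_{l+1},a_i)$ and show that each requires a block marked by $R_{l+1}$ in round $a_i$. By \cref{lemma:no-premature-pending}, such a block can exist only if $a_i\geq\delta_{l+1}$.

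First, an old member $p_i\in\mathsf{members}(R_l)$ calls $\texttt{InstallTransition}$ only from $\texttt{UpdateConfigs}_i(r)$ with $a_i=r$. It does so only after checking $\mathsf{MarkedBlocks}_i(R_l,R_{l+1},r)\geq f_l+1$, so there is at least one $B\in\mathsf{LocalBlocks}_i(R_l,r)$ with $\mathsf{Marked}_i(R_l,R_{l+1},B)$. Second, a joining process $p_i\in\mathsf{members}(R_{l+1})\setminus\mathsf{members}(R_l)$ calls it with $a_i=u$, where $u$ is returned by $\texttt{DiscoverBegin}_i$. That procedure likewise requires $\mathsf{MarkedBlocks}_i(R_l,R_{l+1},u)\geq f_l+1$. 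No other invocation exists in \cref{alg:reconfiguration,alg:joining-process}.

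In both cases we obtain a round-$a_i$ block $B\in DAG_i[a_i]$ and an ancestor-or-self $B'\in DAG_i[r']$ with $B\leadsto_i B'$ and $\mathsf{pending}(\mu')=R_{l+1}$. Parent references always point to round $r-1$, so $r'\leq a_i$. The block $B'$ is in a correct process's DAG, so it passed $\mathsf{ValidCertifiedMetadata}_i$ and in particular carries a valid certificate. By \cref{lemma:no-premature-pending}, a certified block with pending field $R_{l+1}$ lies in a round at least $\delta_{l+1}$. Hence $\delta_{l+1}\leq r'\leq a_i$.

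One point needs care. \cref{lemma:no-premature-pending} relies on every correct echoer or voter rejecting the early pending field. That is exactly $\mathsf{ValidPending}$, which is part of $\mathsf{ValidInstance}$. A certificate needs $q_c>f_c$ shares under the declared configuration, so it must include a correct voter. I would also note that $\mathsf{ValidCertifiedMetadata}$ re-checks $\mathsf{ValidPending}(r',\mu')$ directly, which gives the bound even without invoking the certificate argument. The consequence about $t$ is then immediate, since $t$ is a minimum over values each at least $\delta_{l+1}$.

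The main obstacle, though a mild one, is making sure the marking relation cannot reach a block from a different pending message that happens to equal $R_{l+1}$ in some unauthenticated form. I would handle this by noting that $\mathsf{ValidPending}$ requires $\mathsf{ValidReconfiguration}(\mathsf{local}(\mu'),\mathsf{pending}(\mu'))$. The authenticated $R_{l+1}$ binds the exact value $\delta_{l+1}$, so any block whose pending field equals $R_{l+1}$ is checked against that same $\delta_{l+1}$.
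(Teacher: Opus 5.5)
Your proof is correct and follows essentially the same argument as the paper. Both invocation paths (\texttt{UpdateConfigs} at an old member, \texttt{DiscoverBegin} at a joining process) require a marked round-$a_i$ block, the ancestor carrying $R_{l+1}$ in its pending field lies in a round at most $a_i$, and \cref{lemma:no-premature-pending} forces that round to be at least $\delta_{l+1}$. Your side remark that $\mathsf{ValidCertifiedMetadata}$ re-checks $\mathsf{ValidPending}$ locally is a sound shortcut that avoids the fault-bound step, but the argument does not need it.
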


\begin{proof}
An invocation with install transition round $a_i$ is justified by $f_l+1$
round-$a_i$ blocks marked for $R_l\rightarrow R_{l+1}$: at a correct old
member through
$\mathsf{MarkedBlocks}_i(R_l,R_{l+1},a_i)\geq f_l+1$ in
\cref{alg:reconfiguration}, and at a joining process through the discovery
condition of \cref{alg:joining-state-transfer}. A marked block has, in its
reflexive causal history, a certified block whose pending field contains
$R_{l+1}$, and the causal history of a round-$a_i$ block lies in rounds at
most $a_i$. By \cref{lemma:no-premature-pending}, no certified block below
round $\delta_{l+1}$ has pending field $R_{l+1}$. Hence no round-$a_i$
block is marked when $a_i<\delta_{l+1}$, and no such invocation occurs.
\end{proof}

\subsubsection{Enabled-Configs Alignment and BRB Agreement During
$R_l\rightarrow R_{l+1}$}

\paragraph*{Alignment of Assigned Enabled-Configs Values and BRB Agreement.}
We first establish two prerequisites: a following reconfiguration
$R_{l+1}\rightarrow R_{l+2}$ cannot become applicable while correct
processes may still assign enabled-configs values containing $R_l$, and
evidence of $R_l\rightarrow R_{l+1}$ propagates
through marked blocks. We then rule out a first logical round in which
alignment or BRB Agreement fails. This establishes that the enabled-configs
values assigned by correct processes remain aligned for
$R_l\rightarrow R_{l+1}$ and that every valid certificate is declared under
either $R_l$ or $R_{l+1}$ throughout the transition.

\begin{lemma}[Issuing $R_{l+2}$ requires a correct $R_l$-member to invoke
$\texttt{InstallTransition}$ for $R_l\rightarrow R_{l+1}$]
\label{lemma:following-reconfiguration-requires-installation}
Fix a valid transition $R_l\rightarrow R_{l+1}$ in an execution satisfying
the issuance-rule assumption. If the
dealer issues $R_{l+2}$, then there exist a correct process
$p_i\in\mathsf{members}(R_l)$ and a round $a_i$ such that $p_i$ invokes
\[
  \texttt{InstallTransition}_i(R_l,R_{l+1},a_i).
\]
In particular, the set of install transition rounds of correct processes
in $\mathsf{members}(R_l)$ is nonempty.
\end{lemma}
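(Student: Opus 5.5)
The plan is to apply the issuance rule to $R_{l+2}$ and trace it back to an \texttt{InstallTransition} call. Let $\delta_{l+2}=\mathsf{earliest}(R_{l+2})$. Before issuing $R_{l+2}$, the dealer observes in correct ancestor-closed snapshots $q_{l+1}$ certified blocks in distinct broadcaster slots, all declared under $R_{l+1}$, in round $\delta_{l+2}-2$. Since at most $f_{l+1}$ members of $\mathsf{members}(R_{l+1})$ are ever corrupted, at least $q_{l+1}-f_{l+1}\geq f_{l+1}+1\geq 2$ of these blocks have correct broadcasters.

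Among these correct broadcasters we need one that is a retained process rather than the joining process. For an addition, at most one correct broadcaster is the joiner in $\mathsf{members}(R_{l+1})\setminus\mathsf{members}(R_l)$, so at least one correct broadcaster $p_h$ lies in $\mathsf{members}(R_l)\cap\mathsf{members}(R_{l+1})$. For a removal or an unchanged membership, every member of $R_{l+1}$ already lies in $\mathsf{members}(R_l)$.

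Now inspect how a correct old member $p_h$ can produce a certified block declared under $R_{l+1}$. By \cref{alg:dag-append-deliver}, an ordinary \texttt{DAG-Append} snapshots $\mathit{local}_h$. \texttt{DAG-Transition-Append} is executed only while $\mathit{joining}_h$ is true, and an old member never enters the joining operation, since that handler requires $p_h\notin\mathsf{members}(R_l)$. Hence $\mathit{local}_h=R_{l+1}$ when the block was produced. Since $p_h$ starts with $\mathit{local}_h=R_1$ (or $R_l$ after installing), the only assignment making $\mathit{local}_h=R_{l+1}$ is $\texttt{FinishTransition}_h(R_{l+1})$. In \cref{alg:reconfiguration}, that call is made in the beginning-of-round handler, which fires only when $\mathit{transition}_h=\langle R_l,R_{l+1},a_h\rangle$. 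The variable $\mathit{transition}_h$ is set only inside $\texttt{InstallTransition}_h(R_l,R_{l+1},a_h)$; for an old member this occurs in \texttt{UpdateConfigs}, and the joining operation is again excluded. So $p_h$ invoked $\texttt{InstallTransition}_h(R_l,R_{l+1},a_h)$, and taking $p_i=p_h$, $a_i=a_h$ proves the claim. Nonemptiness of the set of install transition rounds follows immediately.

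The main obstacle is justifying that $p_h$ never reaches $\mathit{local}_h=R_{l+1}$ by some other route. One possibility is that $p_h$ was a joining process in an earlier transition; even then, its own declared configuration advances only one step per \texttt{FinishTransition}. The other possibility is that a correct process cannot skip configurations. Both points reduce to the observation that every write to $\mathit{local}_h$ is either the initialization, the joining-handler assignment $\mathit{local}_i\gets R_l$ (not applicable to old members for this transition), or \texttt{FinishTransition} with the successor recorded in $\mathit{transition}_h$.
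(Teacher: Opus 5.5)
Your proof is correct and is essentially the paper's argument. The paper argues contrapositively that, absent an \texttt{InstallTransition} by a correct old member, $R_{l+1}$-declared certified blocks can occupy only the $f_{l+1}$ Byzantine slots and the joiner's slot, fewer than $q_{l+1}$; this is the same count as your $q_{l+1}-f_{l+1}\geq 2$ correct broadcasters with at most one joiner, and your explicit trace from $\mathit{local}_h=R_{l+1}$ back to \texttt{InstallTransition} spells out what the paper leaves implicit, tacitly using \cref{lemma:brb-integrity} to attribute the certified block in $p_h$'s slot to $p_h$'s own append (with ``an old member never joins'' better stated as ``never joins for this transition'', since an earlier join only yields a \texttt{DAG-Transition-Append} declaring a configuration numbered at most $l$).
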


\begin{proof}
Without an invocation of $\texttt{InstallTransition}$ by a correct old
member, blocks declared under $R_{l+1}$ can occupy only the
$f_{l+1}$ Byzantine $R_{l+1}$-member slots and, during an addition, the
joining process's single slot. These are fewer than $q_{l+1}$ distinct
round-broadcaster slots. The dealer's issuance rule
(\cref{sec: config begin}) requires it to observe, before issuing
$R_{l+2}$, at least $q_{l+1}$ certified blocks declared under $R_{l+1}$
in distinct broadcaster slots in one round; this
cannot happen unless a correct old member has invoked
$\texttt{InstallTransition}$ for $R_l\rightarrow R_{l+1}$.
\end{proof}

\begin{lemma}[If $t$ is the earliest correct old-member install transition
round for
$R_l\rightarrow R_{l+1}$, then
$\mathsf{earliest}(R_{l+2})\geq t+6$]
\label{lemma:following-reconfiguration-earliest-round}
Fix a valid transition $R_l\rightarrow R_{l+1}$ in an execution satisfying
the issuance-rule assumption, and suppose that the dealer issues
$R_{l+2}$. Let $t$ be the earliest install transition round among correct
processes in $\mathsf{members}(R_l)$ for $R_l\rightarrow R_{l+1}$.
This round exists by
\cref{lemma:following-reconfiguration-requires-installation}. Then
\[
  \mathsf{earliest}(R_{l+2})\geq t+6.
\]
\end{lemma}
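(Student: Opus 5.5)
We prove the bound by applying the issuance rule to $R_{l+2}$ and asking when correct broadcasters can first declare $R_{l+1}$. By the issuance-rule assumption, before issuing $R_{l+2}$ the dealer observes $q_{l+1}$ certified blocks from distinct broadcaster slots, all declared under $R_{l+1}$, in round $s=\mathsf{earliest}(R_{l+2})-2$. It therefore suffices to show $s\geq t+4$, which gives $\mathsf{earliest}(R_{l+2})=s+2\geq t+6$.

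The first step lists who can occupy an $R_{l+1}$-declared slot before round $t+4$. An ordinary block of a correct old member $p_h$ declares $R_{l+1}$ only after $\texttt{FinishTransition}_h(R_{l+1})$ runs at the beginning of round $a_h+4$. This follows from the snapshot of $\mathit{local}_h$ in $\texttt{DAG-Append}$ and from \cref{alg:reconfiguration}, where $a_h$ is its install transition round. Since $t$ is the minimum of these rounds, no correct old member's block declares $R_{l+1}$ in any round below $t+4$. The remaining possible $R_{l+1}$-declared blocks in such a round come from two sources:
\begin{itemize}
\item the at most $f_{l+1}$ Byzantine members of $\Pi_{l+1}$;
\item during an addition, the single joining process, whose transition blocks declare $R_{l+1}$, whether it is correct or Byzantine.
\end{itemize}
A correct joining process is not an old member, so its install round is not bounded below by $t$ through the definition. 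We simply count its slot among the exceptional ones and do not need its timing.

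The second step is the counting. Fewer than $q_{l+1}$ distinct slots are available, because $f_{l+1}+1<q_{l+1}$ holds whenever $n_{l+1}\geq4$. Indeed, $q_{l+1}-f_{l+1}=n_{l+1}-2f_{l+1}\geq f_{l+1}+1\geq2$ gives $q_{l+1}\geq f_{l+1}+2$. So no round $s<t+4$ can contain $q_{l+1}$ certified $R_{l+1}$-declared blocks in distinct slots, and hence $s\geq t+4$.

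The main point to check is that the Byzantine count of $f_{l+1}$ covers every slot outside the correct old members and the joining slot. In a removal, the removed correct member never declares $R_{l+1}$, since it stops producing blocks at round $a_h+4$. In an addition, a Byzantine joiner is counted once: if it is among the $f_{l+1}$ corrupted identities it is not counted again, so the bound only improves. A block certified under $R_{l+1}$ with a Byzantine broadcaster still needs correct voters, but we do not rely on that. Slot counting alone suffices, exactly as in the proof of \cref{lemma:following-reconfiguration-requires-installation}, which this argument refines by tracking the round in which correct old members first declare $R_{l+1}$.
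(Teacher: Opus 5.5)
Your proposal is correct and matches the paper's proof step for step. Both arguments use the issuance rule to place the observed $q_{l+1}$ blocks in round $\mathsf{earliest}(R_{l+2})-2$, note that a correct old member declares $R_{l+1}$ only from round $a_h+4\geq t+4$, and bound the remaining $R_{l+1}$-declared slots below round $t+4$ by the Byzantine members plus the single joining slot, $f_{l+1}+1<q_{l+1}$. Your slot counting also relies on the fact that a certified block in a correct broadcaster's slot is that broadcaster's prescribed value (\cref{lemma:brb-integrity}); the paper uses this implicitly as well.
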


\begin{proof}
A correct process in $\mathsf{members}(R_l)$ declares $R_{l+1}$ in a
block only after invoking $\texttt{FinishTransition}$, which occurs at
the beginning of round $a_i+4$ for its install transition round
$a_i\geq t$. Hence no correct old member produces an $R_{l+1}$-declared
block before round $t+4$. As in the proof of
\cref{lemma:following-reconfiguration-requires-installation}, certified
$R_{l+1}$-declared blocks in any round below $t+4$ therefore occupy at
most the $f_{l+1}$ Byzantine $R_{l+1}$-member slots and, during an
addition, the joining process's single slot; a correct joining process
may indeed produce its $R_{l+1}$-declared transition blocks before round
$t+4$. Since $f_{l+1}+1<q_{l+1}$, no round below $t+4$ contains
$q_{l+1}$ certified $R_{l+1}$-declared blocks in distinct
round-broadcaster slots. The first round in which the dealer can observe
$q_{l+1}$ such blocks is therefore at least $t+4$. The dealer's snapshot
rule sets $\mathsf{earliest}(R_{l+2})$ two rounds above the observed
round, so
$\mathsf{earliest}(R_{l+2})\geq t+6$.
\end{proof}

\begin{lemma}[Every round-$(t+1)$ block produced by a correct $R_l$-member is
marked for $R_l\rightarrow R_{l+1}$ at every correct process that
DAG-delivers it]
\label{lemma:marked-block-propagation}
Fix a valid transition $R_l\rightarrow R_{l+1}$ in an execution satisfying
the issuance-rule assumption and the configuration-level induction
hypothesis for $R_l$, and assume the conclusions already established for
preceding transitions. Let $t$ be the earliest install transition round
among correct processes in $\mathsf{members}(R_l)$.
Suppose BRB Agreement holds through round $t$. Let
\[
  X=\langle m,t+1,\mu_X,C_X\rangle
\]
be a certified block produced by a correct process in
$\mathsf{members}(R_l)$. At every correct process $p_i$ that DAG-delivers
$X$,
\[
  \mathsf{Marked}_i(R_l,R_{l+1},X)
\]
holds. Equivalently, for every such $p_i$, there exist a round $s$ and a
block $X'=\langle m',s,\mu',C'\rangle\in DAG_i[s]$ such that
\[
  X\leadsto_i X'
  \qquad\text{and}\qquad
  \mathsf{pending}(\mu')=R_{l+1}.
\]
\end{lemma}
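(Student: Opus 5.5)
The plan is to split on the snapshotted pending field of $X$ and, in the nontrivial case, use a quorum-intersection argument on round-$t$ parents. Let $p_k\in\mathsf{members}(R_l)$ be the correct producer of $X$. Since $X$ is a round-$(t+1)$ block, $p_k$ invoked $\texttt{DAG-Append}_k(m,t+1)$ and recorded its current values of $\mathit{local}_k$, $\mathit{pending}_k$ and $\mathit{enabled\text{-}configs}_k[t+1]$ in $\mu_X$.

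\textbf{Case 1: $\mathsf{pending}(\mu_X)=R_{l+1}$.} Marking is immediate by reflexivity of $\leadsto_i$; take $X'=X$.

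\textbf{Case 2: the pending field is not $R_{l+1}$.} First I will argue that $\mathsf{local}(\mu_X)=R_l$. By \cref{lemma:earliest-install-lower-bound} we have $t\geq\delta_{l+1}$, so $t+1\geq\delta_{l+1}-1$. By \cref{lemma:well-spaced-predecessor-completion}, $p_k$ has installed $R_l$. Its own install transition round is at least $t$, so it cannot have finished $R_l\rightarrow R_{l+1}$ before round $t+4$. Hence $\mathit{local}_k=R_l$, and $X$ carries exactly $q_l$ parent references. These are drawn from $\mathsf{ParentBlocks}_k(R_l,t)$. By \cref{lemma:predecessor-parent-slots}, applicable since $t\geq\delta_{l+1}-1$ and $p_k$ has not finished the transition, the parents lie in distinct slots whose broadcasters are in $\mathsf{members}(R_l)$.

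Next I use the correct process $p_h$ that installs with round $t$. Its $\texttt{UpdateConfigs}_h(t)$ found at least $f_l+1$ blocks in $\mathsf{LocalBlocks}_h(R_l,t)$ satisfying $\mathsf{Marked}_h$, in distinct slots of $\mathsf{members}(R_l)$. Counting slots in $\mathsf{members}(R_l)$ gives
\[
q_l+(f_l+1)-n_l = n_l-f_l+f_l+1-n_l = 1 .
\]
So at least one slot $(t,p_g)$ is shared by a parent $Y$ of $X$ and a marked block $Y_h\in DAG_h[t]$.

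The main obstacle is transporting the marking from $p_h$'s copy $Y_h$ to $Y$ as seen at $p_i$. Since BRB Agreement holds through round $t$ in its certificate-level form, any two valid certificates for slot $(t,p_g)$ bind the same complete value, so $Y=Y_h$ as certified blocks. The causal history of a block is determined by its immutable parent hashes, and hash collision resistance plus Agreement in all earlier rounds (also covered by ``through round $t$'') make the ancestors identical at every correct process holding it. Thus the ancestor $X'$ of $Y_h$ with $\mathsf{pending}(\mu')=R_{l+1}$ is an ancestor of $Y$ in any correct DAG.

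It remains to show that $X'$ is actually present in $DAG_i$. Causal Closure of insertion, which I will justify from the $\texttt{FetchAndDeliver}$/$\texttt{AddAndDeliver}$ checks (admissibility requires the parents to be in $DAG_i$), gives that when $p_i$ DAG-delivers $X$, it already holds $Y$ and recursively $X'$, with $X\leadsto_i Y\leadsto_i X'$. Hence $\mathsf{Marked}_i(R_l,R_{l+1},X)$ holds. $\mathsf{ValidReconfiguration}(R_l,R_{l+1})$ holds by the hypothesis that the transition is valid.
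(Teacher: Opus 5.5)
Your proposal is correct and follows essentially the same route as the paper. If the snapshotted pending field is $R_{l+1}$ the mark is immediate (the paper phrases this case as the broadcaster having installed in round $t$); otherwise the $q_l$ round-$t$ parent slots of $X$ meet the $f_l+1$ marked slots that justify the earliest installation, and round-$t$ BRB Agreement plus ancestor closure carry the mark to $p_i$. The one thing to add is that ancestor closure must also cover insertion through \texttt{MergeState} when $p_i$ is a joining process, which the paper handles in \cref{lemma:local-dag-ancestor-closure}.
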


\begin{proof}
An installation with install transition round $t$ is justified by $f_l+1$ marked
round-$t$ blocks. Each such block either records $R_{l+1}$ in its own
pending field or has an earlier ancestor that does. By
\cref{lemma:no-premature-pending}, no such pending field can occur below
round $\delta_{l+1}$, so $t\geq\delta_{l+1}$, and the Agreement premise
includes round $\delta_{l+1}$. Let $\mathcal M$ be the set of the $f_l+1$ marked
round-$t$ blocks that justify the earliest installation. By
\cref{lemma:predecessor-parent-slots}, these blocks occupy distinct
round-broadcaster slots in
$\mathsf{members}(R_l)$, and Agreement in round $t$ ensures that whenever
another correct process DAG-delivers a block in one of these slots, it
delivers the same certified block and its required ancestry; the block is
therefore marked there as well.

Let $X$ be a round-$(t+1)$ block produced by a correct member of
$\mathsf{members}(R_l)$. If its broadcaster installed in round $t$, the
pending field in $X$ contains $R_{l+1}$. Otherwise, its prescribed
$R_l$-declared $\texttt{DAG-Append}$ block has $q_l$ round-$t$ parents.
Their slots and $\mathcal M$ lie in the
same $n_l$-member configuration, so
\[
 |\mathcal H_X(t)\cap\mathcal M|
 \geq q_l+(f_l+1)-n_l=1.
\]
Thus $X$ has a marked ancestor and is itself marked.
\end{proof}

\begin{lemma}[Every correct $R_l$-member that advances to round $t+2$ has
install transition round $t$ or $t+1$ for
$R_l\rightarrow R_{l+1}$]
\label{lemma:lagging-process-installation}
Under the assumptions and notation of
\cref{lemma:marked-block-propagation}, let
$p_i\in\mathsf{members}(R_l)$ be correct. If
\[
  |\mathsf{ParentBlocks}_i(R_l,t+1)|\geq q_l
\]
and $p_i$ advances to round $t+2$, then it has invoked
\[
  \texttt{InstallTransition}_i(R_l,R_{l+1},a_i)
\]
for some $a_i\in\{t,t+1\}$. Moreover, every correct process in
$\mathsf{members}(R_l)$ that ever invokes
$\texttt{InstallTransition}$ for this transition has an install transition
round in $\{t,t+1\}$.
\end{lemma}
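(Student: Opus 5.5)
The plan is to prove the first claim by examining the single $\texttt{AdvanceRound}_i$ iteration in which $p_i$ moves from round $t+1$ to round $t+2$. The second claim then follows from the first.

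\emph{Setup.} Let $a_i$ be any install transition round of $p_i$.
\begin{itemize}
\item By minimality of $t$, $a_i\geq t$.
\item If $p_i$ has already invoked $\texttt{InstallTransition}_i(R_l,R_{l+1},t)$ before that iteration, there is nothing to show.
\item Otherwise $\mathit{transition}_i=\bot$ when the iteration for $r=t+1$ runs. In that case the selected configuration is $R_c=R_l$ and the additional transition-round condition is unavailable.
\end{itemize}
So $p_i$ advances only after $|\mathsf{ParentBlocks}_i(R_l,t+1)|\geq q_l$, and then calls $\texttt{UpdateConfigs}_i(t+1)$. It therefore suffices to show that at this call $\mathit{pending}_i=R_{l+1}$ (or is set to it) and $\mathsf{MarkedBlocks}_i(R_l,R_{l+1},t+1)\geq f_l+1$. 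Then $\texttt{InstallTransition}_i(R_l,R_{l+1},t+1)$ is invoked, giving $a_i=t+1$.

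\emph{Counting marked blocks.} Fix $q_l$ blocks in $\mathsf{ParentBlocks}_i(R_l,t+1)$.
\begin{itemize}
\item By \cref{lemma:predecessor-parent-slots}, applicable because $t+1\geq\delta_{l+1}$ by \cref{lemma:earliest-install-lower-bound}, their broadcasters lie in $\mathsf{members}(R_l)$. Hence these blocks belong to $\mathsf{LocalBlocks}_i(R_l,t+1)$ and occupy distinct slots, since $\texttt{AddAndDeliver}$ allows one block per slot.
\item At most $f_l$ of these broadcasters are Byzantine, so at least $q_l-f_l\geq f_l+1$ blocks come from correct members of $R_l$.
\item By DAG-Integrity (\cref{lemma:brb-integrity}), each such block is the one that broadcaster actually produced in round $t+1$. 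It is an ordinary block, because joining processes are not in $\mathsf{members}(R_l)$.
\item By \cref{lemma:marked-block-propagation}, under the standing Agreement premise through round $t$, each such block satisfies $\mathsf{Marked}_i(R_l,R_{l+1},\cdot)$ at $p_i$. The witnessing ancestor is present in $DAG_i$ by ancestor closure of the local DAG.
\end{itemize}
This yields at least $f_l+1$ marked blocks.

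\emph{Pending field.} If $\mathit{pending}_i=\bot$, the first branch of $\texttt{UpdateConfigs}_i$ finds a marked block and sets $\mathit{pending}_i$ to the configuration in its marking ancestor's pending field. I must argue this configuration is $R_{l+1}$ and not some other configuration. Two facts give this:
\begin{itemize}
\item $\mathsf{ValidPending}$ forces a pending field to be the valid successor of the block's declared configuration.
\item By \cref{lemma:predecessor-parent-slots} and the predecessor-completion results, the relevant ancestors are declared under $R_l$ in these rounds, and $R_{l+2}$ cannot yet appear because $\mathsf{earliest}(R_{l+2})\geq t+6$ (\cref{lemma:following-reconfiguration-earliest-round}).
\end{itemize}
If $\mathit{pending}_i\neq\bot$ already, the same facts show it equals $R_{l+1}$, since it was recorded only by the reception handler for this successor. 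I expect this pending-identification step, together with checking that the premises of \cref{lemma:predecessor-parent-slots} hold in round $t+1$, to be the main obstacle.

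\emph{Second claim.} Suppose a correct old member invokes $\texttt{InstallTransition}$ with round $a_i$.
\begin{itemize}
\item Minimality of $t$ gives $a_i\geq t$.
\item If $a_i\geq t+2$, the member had $\mathit{transition}=\bot$ in round $t+1$ yet completed that round's iteration. The iteration could only have ended via the ordinary condition $|\mathsf{ParentBlocks}(R_l,t+1)|\geq q_l$ followed by an advance to $t+2$.
\item By the first part, it would then have installed with round $t$ or $t+1$. Since $\mathit{transition}$ is set only once per transition, this contradicts $a_i\geq t+2$.
\end{itemize}
Hence $a_i\in\{t,t+1\}$.
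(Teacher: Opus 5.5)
Your proposal is correct and follows essentially the same argument as the paper. In the round-$(t+1)$ iteration without a prior install, the $q_l$ parent blocks lie in distinct $R_l$-member slots by \cref{lemma:predecessor-parent-slots}, and at least $q_l-f_l\geq f_l+1$ of them are correct and hence marked by \cref{lemma:marked-block-propagation}, which forces installation at $t+1$; any later installer must first pass round $t+1$. The pending-field step you flagged as the main obstacle is simpler than your detour through $\mathsf{ValidPending}$ and $\mathsf{earliest}(R_{l+2})$: $\mathsf{Marked}_i(R_l,R',\cdot)$ already requires $\mathsf{ValidReconfiguration}(R_l,R')$, and authentication with consecutive numbering then forces $R'=R_{l+1}$.
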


\begin{proof}
Consider a correct process that advances through round $t+1$ without
installing in round $t$. It obtains $q_l$ admissible round-$(t+1)$ blocks
in $\mathsf{ParentBlocks}_i(R_l,t+1)$. By
\cref{lemma:predecessor-parent-slots}, their broadcasters occupy distinct
slots in $\mathsf{members}(R_l)$. At most $f_l$ are Byzantine, so at least
\[
 q_l-f_l=n_l-2f_l\geq f_l+1
\]
were produced by correct broadcasters and are marked by
\cref{lemma:marked-block-propagation}. The process therefore invokes
$\texttt{InstallTransition}$ in round $t+1$. A process that invokes in a
later round necessarily advances through round $t+1$ first, so no correct
install transition round lies outside $\{t,t+1\}$.
\end{proof}

\begin{lemma}[If a correct $R_l$-member's round-$r$ enabled-configs value
contains $R_l$, every valid round-$r$ certificate is declared under $R_l$ or
$R_{l+1}$]
\label{lemma:transition-round-certificate-scope}
Fix a valid transition $R_l\rightarrow R_{l+1}$ in an execution satisfying
the issuance-rule assumption and the configuration-level induction
hypothesis for $R_l$, and assume the conclusions already established for
preceding transitions. Let
$\delta_{l+1}=\mathsf{earliest}(R_{l+1})$. Let
$r\geq\delta_{l+1}$, suppose BRB Agreement holds through round $r-1$,
and suppose there exists a correct process
$p_i\in\mathsf{members}(R_l)$ such that
\[
  \mathit{enabled\text{-}configs}_i[r]\neq\bot
  \qquad\text{and}\qquad
  R_l\in\mathit{enabled\text{-}configs}_i[r].
\]
Then every valid BRB quorum certificate for a complete BRB value
$\langle m,r,\mu\rangle$ satisfies
\[
  \mathsf{local}(\mu)\in\{R_l,R_{l+1}\}.
\]
\end{lemma}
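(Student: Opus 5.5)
Every valid round-$r$ certificate declared under some $R_c$ combines $q_c>f_c$ vote shares from members of $\Pi_c$, at most $f_c$ of which are Byzantine. So the certificate contains a share from a correct voter $p_v\in\mathsf{members}(R_c)$. By $\mathsf{MaySendBRB}_v$, that voter satisfies $\mathit{enabled\text{-}configs}_v[r]\neq\bot$ and $R_c\in\mathit{enabled\text{-}configs}_v[r]$. The task therefore reduces to a statement about assigned enabled-configs values. If some correct $R_l$-member $p_i$ has $R_l\in\mathit{enabled\text{-}configs}_i[r]$, then no correct process $p_v$ (in any configuration) has an assigned round-$r$ value containing a configuration outside $\{R_l,R_{l+1}\}$. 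The possible values at a correct $p_v$ are singletons $\{R_c\}$ or pairs $\{R_c,R_{c+1}\}$. I would bound $c$ from both sides.

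\emph{Lower side ($c\geq l$).} First suppose $p_v$ is still operating in some $R_c$ with $c<l$ in round $r$. Then $p_v$ would not have completed the transition into $R_l$, or an earlier one, before round $r\geq\delta_{l+1}$. I would rule this out with \cref{lemma:well-spaced-predecessor-completion} and the conclusions for preceding transitions. The earliest correct install round $t'$ for $R_{l-1}\rightarrow R_l$ satisfies $t'+5\leq\delta_{l+1}-1$, and every correct old member installs in $t'$ or $t'+1$. Hence every correct member's assigned values from round $t'+6$ on exclude $R_{l-1}$, and by induction any older configuration. The same argument appears in \cref{lemma:predecessor-parent-slots}: a correct process whose round-$r$ value is assigned and contains $R_{l-1}$ would need an install round $a\geq r-4\geq\delta_{l+1}-4$, contradicting $a\leq t'+1\leq\delta_{l+1}-6$. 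A correct process joining at $R_l$ obeys the bound $\{t'-1,t',t'+1\}$ and completes by $t'+5$. Before that, it assigns values only for its transition rounds, which contain $R_l$ anyway, or it is passive. A process joining a later configuration assigns nothing in these rounds before its own install round.

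\emph{Upper side ($c\leq l+1$).} Suppose $p_v$ has an assigned round-$r$ value containing $R_{l+2}$ or higher. Then $p_v$ has invoked $\texttt{InstallTransition}$ for $R_{l+1}\rightarrow R_{l+2}$ with install round $a\geq r-4$. By \cref{lemma:earliest-install-lower-bound} applied to that transition, $a\geq\delta_{l+2}$. By \cref{lemma:following-reconfiguration-earliest-round}, $\delta_{l+2}\geq t+6$, where $t$ is the earliest correct install round for $R_l\rightarrow R_{l+1}$. This requires $R_{l+2}$ to be issued, which holds since $p_v$ received it, and $t$ exists by \cref{lemma:following-reconfiguration-requires-installation}. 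So $r\geq a+1\geq t+7$. Meanwhile, the hypothesis says $p_i$ has $R_l$ in its round-$r$ value. Then $p_i$ is either singleton $\{R_l\}$, which requires $p_i$ not to have installed by round $r-1$, or in a pair with $r\leq a_i+4$.

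To contradict $r\geq t+7$, I would invoke \cref{lemma:lagging-process-installation}. Agreement through $r-1\geq t$ lets it apply, so $a_i\in\{t,t+1\}$ if $p_i$ installs. That forces $r\leq t+5$ in the pair case. In the singleton case, $p_i$ must have advanced through round $t+1$ with $q_l$ parents, which forces installation by $t+1$, again a contradiction.

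\textbf{Main obstacle.} The delicate step is the upper side. It needs the well-spacing lemmas to apply at the next transition, which must be framed without circularity: the induction hypothesis is only for $R_l$, so I would use only the issuance rule and \cref{lemma:no-premature-pending} for $R_{l+2}$. It also needs care with processes joining at $R_{l+2}$ that assign pair values early. These are handled by the same bound $a\geq\delta_{l+2}$, because the joining discovery condition also requires marked blocks.
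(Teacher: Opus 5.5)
Your proposal is correct and follows essentially the paper's route. You reduce to a correct voter that must have $R_c$ in its assigned round-$r$ value, exclude $c<l$ through predecessor completion and the install-round bounds of the preceding transition, and exclude $c\geq l+2$ by playing $r\leq t+5$ (from \cref{lemma:marked-block-propagation,lemma:lagging-process-installation} applied to $p_i$) against $\mathsf{earliest}(R_{l+2})\geq t+6$ and \cref{lemma:earliest-install-lower-bound}.

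Only cosmetic slips remain:
\begin{itemize}
\item You should have $t'+1\leq\delta_{l+1}-5$ rather than $-6$; the contradiction still holds.
\item The claim $a\geq r-4$ on the upper side is false for a singleton $\{R_{l+2}\}$, but it is unused, since you only need $r\geq a+1$.
\item For voters under $R_c$ with $c\geq l+3$, cite the strictly increasing earliest-inclusion rounds rather than an installation of $R_{l+1}\rightarrow R_{l+2}$. A process joining at such an $R_c$ never performs that installation.
\end{itemize}
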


\begin{proof}
There is no configuration earlier than $R_l$ to exclude when $l=1$. Suppose
$l>1$. By
\cref{lemma:well-spaced-predecessor-completion}, every active correct
$R_l$-member has completed the transition into $R_l$ before acting in round
$\delta_{l+1}-1$, and therefore does not enable an earlier configuration in
round $r$. For the case of a correct process removed by the immediately
preceding transition, let $t'$ be that transition's earliest correct install
transition round and let $a_h$ be the removed process's install transition
round. \Cref{corollary:install-transition-round-bound}, applied to that
preceding transition, gives
$a_h\leq t'+1$, while the bound established in the proof of
\cref{lemma:well-spaced-predecessor-completion} gives
$t'+5\leq\delta_{l+1}-1$. Hence
\[
  a_h+4\leq t'+5\leq\delta_{l+1}-1<r.
\]
If the removed process has already stopped, it contributes no share. If it
is still processing BRB messages, its removal procedure nevertheless assigns
no enabled-configs value after round $a_h+4$, so it cannot echo or vote in
round $r$. The conclusions for still earlier transitions exclude their
removed correct processes in the same way. Thus, under every configuration
$R_c$ with $c<l$, only its at most $f_c<q_c$ Byzantine members can contribute
round-$r$ vote shares, which are insufficient to form a quorum certificate.

Suppose $R_{l+2}$ exists. By
\cref{lemma:following-reconfiguration-requires-installation}, let $t$ be the
earliest install transition round for $R_l\rightarrow R_{l+1}$. By
\cref{lemma:following-reconfiguration-earliest-round},
$\mathsf{earliest}(R_{l+2})\geq t+6$. If $r\leq t+1$, then directly
$r<\mathsf{earliest}(R_{l+2})$. If $r\geq t+2$, Agreement through $r-1$
includes Agreement through $t$. Let $p_i$ be the process in the premise. If
$p_i$ invokes $\texttt{InstallTransition}$, then
\cref{lemma:marked-block-propagation,lemma:lagging-process-installation}
place its install transition round in $\{t,t+1\}$. Since
$\texttt{InstallTransition}$ with install transition round $a_i$ assigns the pair
exactly in rounds $a_i+1,\ldots,a_i+4$, its round-$r$ value containing
$R_l$ implies $r\leq t+5$. Otherwise, assigning a value for a
round $r\geq t+2$ requires $p_i$ to advance through round $t+1$; the same
lemmas would then force it to install by that round, a contradiction. Hence
\[
 r\leq t+5<\mathsf{earliest}(R_{l+2}).
\]
In either case, $r<\mathsf{earliest}(R_{l+2})$.
By \cref{lemma:earliest-install-lower-bound,%
lemma:following-reconfiguration-earliest-round}, the earliest-inclusion
rounds strictly increase along the issued configuration sequence.
Thus $r<\mathsf{earliest}(R_c)$ for every issued $R_c$ with $c\geq l+2$.
Correct processes reject a configuration before its authenticated
earliest-inclusion round, so none of these configurations can obtain
enough correct votes for a round-$r$ certificate.
If $R_{l+2}$ does not exist, authentication excludes a later
configuration directly.

Thus every valid round-$r$ certificate is declared under either $R_l$ or
$R_{l+1}$.
\end{proof}

\begin{lemma}[Agreement through round $r$ preserves round-$(r+1)$
enabled-configs alignment while a correct old member still enables $R_l$]
\label{lemma:next-round-configuration-alignment}
Fix a valid issued transition $R_l\rightarrow R_{l+1}$ under the
configuration-level induction hypothesis for $R_l$, let
$r\geq\delta_{l+1}=\mathsf{earliest}(R_{l+1})$, and suppose BRB
Agreement holds through round $r$. If a correct process
$p_i\in\mathsf{members}(R_l)$ has an assigned round-$(r+1)$ enabled-configs
value containing $R_l$, then all round-$(r+1)$ enabled-configs values assigned
by correct processes in $\mathsf{members}(R_l)$ are aligned.
\end{lemma}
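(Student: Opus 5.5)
The proof needs two things: every assigned round-$(r+1)$ value is one of the three admissible forms, and the old and new singletons cannot coexist. A correct old member $p_j$ assigns $\mathit{enabled\text{-}configs}_j[r+1]$ in one of two ways. Either it calls $\texttt{UpdateConfigs}_j(r)$ with $\mathit{transition}_j=\bot$, which assigns $\{R_{\mathsf{number}(\mathit{local}_j)}\}$ or invokes $\texttt{InstallTransition}$. Or it assigns the value inside an earlier $\texttt{InstallTransition}_j(\cdot,\cdot,a_j)$ with $r+1\in\{a_j+1,\ldots,a_j+4\}$.

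\textbf{Step 1: the three forms.} By \cref{lemma:well-spaced-predecessor-completion}, every such $p_j$ acting in round $r+1\geq\delta_{l+1}+1$ has installed $R_l$. By \cref{lemma:earliest-install-lower-bound}, no installation of $R_l\rightarrow R_{l+1}$ has install round below $\delta_{l+1}$. A pair from a later transition $R_{l+1}\rightarrow R_{l+2}$ would require $p_j$ to record $R_{l+2}$ as pending in some round $\geq\mathsf{earliest}(R_{l+2})$.

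To exclude this, I would reuse the argument from \cref{lemma:transition-round-certificate-scope}. Suppose $R_{l+2}$ exists, and let $t$ be the earliest old-member install round. Since $p_i$'s round-$(r+1)$ value contains $R_l$, \cref{lemma:marked-block-propagation,lemma:lagging-process-installation} (using Agreement through $r$) give $r+1\leq t+5$. \Cref{lemma:following-reconfiguration-earliest-round} gives $\mathsf{earliest}(R_{l+2})\geq t+6$. The same reasoning caps $p_j$'s state: $p_j$ has installed $R_{l+1}$ no earlier than round $t+4$, so any later pair for $p_j$ lies at round $\geq t+7$. Hence every assigned value is $\{R_l\}$, $\{R_l,R_{l+1}\}$, or $\{R_{l+1}\}$.

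\textbf{Step 2: no coexisting singletons.} Suppose some correct $p_j$ has $\mathit{enabled\text{-}configs}_j[r+1]=\{R_{l+1}\}$. Then $p_j$ finished a transition with install round $a_j$, and $r+1\geq a_j+5$. By \cref{lemma:lagging-process-installation}, $a_j\in\{t,t+1\}$, so $r+1\geq t+5$.

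Now suppose some correct $p_k$ has $\mathit{enabled\text{-}configs}_k[r+1]=\{R_l\}$. The value $\{R_l\}$ is assigned only by $\texttt{UpdateConfigs}_k(r)$ with no transition installed. Hence $p_k$ completed round $r\geq t+4\geq t+2$ without installing at $t$ or $t+1$. But to assign round $r+1$, $p_k$ must have advanced through round $t+1$ under the ordinary $R_l$ parent-quorum condition. \Cref{lemma:lagging-process-installation} then forces $p_k$ to install at $t$ or $t+1$, a contradiction.

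One subtlety remains. If $p_k$ installed at $a_k\in\{t,t+1\}$ and $r+1\geq a_k+5$, then $p_k$ holds $\{R_{l+1}\}$, not $\{R_l\}$, so it poses no problem. The genuine work is confirming that Agreement through $r$ suffices as a premise for \cref{lemma:marked-block-propagation}, which requires $r\geq t$. If $r<t$, no correct member has installed, no one holds $\{R_{l+1}\}$, and alignment is trivial.

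\textbf{Main obstacle.} The step I expect to be hardest is Step 1's exclusion of a following transition's pair. It requires bounding $p_j$'s own round relative to $t$, not $p_i$'s, and handling a correct joining process of $R_{l+1}$ that is also an old member of nothing. That process is excluded because alignment concerns only members of $R_l$. I would try first to cite \cref{lemma:following-reconfiguration-earliest-round} together with the fact that a pending $R_{l+2}$ is recorded only after $\mathit{local}_j=R_{l+1}$ and $r_j\geq\mathsf{earliest}(R_{l+2})$.
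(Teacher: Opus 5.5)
Your proof is correct and follows essentially the paper's route. First you rule out every configuration other than $R_l,R_{l+1}$ using $\mathsf{earliest}(R_{l+2})\geq t+6$ together with the premise-derived bound $r+1\leq t+5$. Then you rule out coexisting singletons via \cref{lemma:lagging-process-installation}. Your second step is by contradiction: a holder of $\{R_{l+1}\}$ forces $r\geq t+4$, so a holder of $\{R_l\}$ would have crossed round $t+1$ without installing. This is a slightly tidier variant of the paper's case split on $t$ versus $r$, and it does not need the premise process at all.

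One small repair is needed in Step 1. The $t+7$ bound does not follow from $p_j$ having completed $R_{l+1}$ no earlier than round $t+4$. It follows from \cref{lemma:earliest-install-lower-bound} applied to $R_{l+1}\rightarrow R_{l+2}$, which covers both ways a process records the pending message: the reconfiguration handler and the marked-block path in $\texttt{UpdateConfigs}$. The paper cites it in exactly this way.
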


\begin{proof}
We first show that no round-$(r+1)$ value assigned by a correct
$R_l$-member involves a configuration other than $R_l$ and $R_{l+1}$. By
\cref{lemma:well-spaced-predecessor-completion}, every correct
$R_l$-member completes the transition into $R_l$ before acting in round
$\delta_{l+1}-1$, so values involving $R_{l-1}$ are assigned only for
rounds below $\delta_{l+1}\leq r$; the conclusions for preceding
transitions exclude earlier configurations. If a value involving
$R_{l+2}$ or a later configuration is ever assigned, then $R_{l+2}$ is
issued, so by
\cref{lemma:following-reconfiguration-requires-installation} the
earliest install transition round $t$ for $R_l\rightarrow R_{l+1}$
exists, and by
\cref{lemma:following-reconfiguration-earliest-round,%
lemma:earliest-install-lower-bound} such a value is assigned only for
rounds at least $\mathsf{earliest}(R_{l+2})+1\geq t+7$. If $t>r$, then
$r+1\leq t<t+7$; if $t\leq r$, the last paragraph below shows
$r+1\leq t+5<t+7$. In every case, every assigned round-$(r+1)$ value is
$\{R_l\}$, $\{R_l,R_{l+1}\}$, or $\{R_{l+1}\}$.

It remains to exclude the coexistence of $\{R_l\}$ and $\{R_{l+1}\}$.
If no correct process has invoked $\texttt{InstallTransition}$ by round
$r$, no pair or new singleton has been assigned for round $r+1$, and
alignment is immediate. If the earliest install transition round is
$t=r$, a correct process has either not yet invoked
$\texttt{InstallTransition}$ and assigns $\{R_l\}$, or has invoked it
and assigns $\{R_l,R_{l+1}\}$ to round $r+1$; only the old singleton
and the pair coexist, and $r+1=t+1\leq t+5$.

Suppose $t\leq r-1$. Agreement through round $r$ includes Agreement through
$t$. By \cref{lemma:marked-block-propagation,lemma:lagging-process-installation},
every correct old member that advances through round $t+1$ invokes
$\texttt{InstallTransition}$ with install transition round $t$ or $t+1$.
The process in the premise has assigned a value for round $r+1\geq t+2$;
hence it has either already invoked $\texttt{InstallTransition}$ or advanced
through round $t+1$ and was forced to invoke it. Since its round-$(r+1)$
value still contains $R_l$, and an invocation with install transition round
$a_i$ assigns $\{R_l,R_{l+1}\}$ only to rounds
$a_i+1,\ldots,a_i+4$, we have $r+1\leq t+5$. Any correct process with an
assigned round-$(r+1)$ value that has advanced through round $t+1$
likewise has install transition round $t$ or $t+1$, so its value is
$\{R_l,R_{l+1}\}$ or $\{R_{l+1}\}$; one that has not advanced through
round $t+1$ either installed in round $t$ and assigned the pair through
round $t+4$, or has assigned no value beyond round $t+1<r+1$. Hence the
old and new singleton values cannot coexist. The assigned values are
aligned in every case.
\end{proof}

\begin{lemma}[Enabled-configs values remain aligned and BRB Agreement holds
while a correct $R_l$-member still enables $R_l$]
\label{lemma:brb-agreement-transition-rounds}
Fix a valid issued transition $R_l\rightarrow R_{l+1}$ under the
configuration-level induction hypothesis for $R_l$, and let
$\delta_{l+1}=\mathsf{earliest}(R_{l+1})$. For every logical round
$r\geq\delta_{l+1}$ such that some correct
$p_i\in\mathsf{members}(R_l)$ has an assigned round-$r$ enabled-configs
value containing $R_l$:
\begin{enumerate}
  \item the assigned round-$r$ enabled-configs values of correct processes
  in $\mathsf{members}(R_l)$ are aligned for
  $R_l\rightarrow R_{l+1}$; and
  \item BRB Agreement holds for every round-broadcaster slot $(r,ID)$.
\end{enumerate}
\end{lemma}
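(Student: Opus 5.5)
We prove both clauses simultaneously by strong induction on the logical round $r\geq\delta_{l+1}$, following the roadmap's ``first failing round'' strategy. Suppose, for contradiction, that some round $r\geq\delta_{l+1}$ satisfies the premise (a correct $R_l$-member has an assigned round-$r$ value containing $R_l$), yet alignment or Agreement fails in $r$. Take the least such $r$. For rounds below $\delta_{l+1}$, clause 3 of the configuration-level induction hypothesis for $R_l$ gives BRB Agreement directly. For every round $r'$ with $\delta_{l+1}\leq r'<r$, we need Agreement in $r'$ even when no correct $R_l$-member enables $R_l$ there.

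To handle this, we first observe monotonicity. If some correct old member enables $R_l$ in round $r$, then it also enables $R_l$ in every earlier round $r'\in[\delta_{l+1},r]$. A correct process's assigned values move from $\{R_l\}$ to the pair and then to $\{R_{l+1}\}$, never back, and assigning round $r$ requires assigning all earlier rounds. Hence the premise holds at every $r'<r$ in that range, and minimality gives Agreement through round $r-1$.

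\emph{Alignment in round $r$.} If $r=\delta_{l+1}$, \cref{lemma:well-spaced-base-round} shows every assigned value is $\{R_l\}$, which is trivially aligned. If $r>\delta_{l+1}$, apply \cref{lemma:next-round-configuration-alignment} with round $r-1\geq\delta_{l+1}$. Its hypotheses are Agreement through round $r-1$ and the existence of a correct old member whose round-$r$ value contains $R_l$, and both hold. It yields alignment of the round-$r$ values.

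\emph{Agreement in round $r$.} By \cref{lemma:transition-round-certificate-scope}, which needs only Agreement through $r-1$ and the premise, every valid round-$r$ certificate is declared under $R_l$ or $R_{l+1}$. Then \cref{lemma:brb-aligned-safety} applies to each slot $(r,ID)$. Two certificates under one configuration bind the same value, and certificates under $R_l$ and $R_{l+1}$ cannot coexist because their voter quorums intersect in a correct process that votes at most once per slot. This gives the certificate-level form of Agreement in round $r$ and contradicts the choice of $r$.

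The step I expect to be the main obstacle is the monotonicity claim. It is exactly what lets the minimal-counterexample argument supply Agreement through $r-1$ in the form the two cited lemmas require. One must check carefully that a correct old member's enabled-configs sequence never revisits $R_l$ after the pair, and that a joining or passive process does not break the ``assigned implies all earlier rounds assigned'' property. If that fails, I would instead strengthen the induction to assert Agreement in every round $r'\geq\delta_{l+1}$ up to the last round in which any correct old member enables $R_l$.
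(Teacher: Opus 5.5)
Your proof is correct and follows the paper's argument essentially step for step. The steps are: the premise propagates down to $\delta_{l+1}$ with the same witness process; you take a least failing round; alignment comes from \cref{lemma:well-spaced-base-round} or from \cref{lemma:next-round-configuration-alignment} applied at $r-1$; and Agreement comes from \cref{lemma:transition-round-certificate-scope} together with \cref{lemma:brb-aligned-safety}. The monotonicity step you flag goes through as you state it (the paper justifies it by distinguishing whether the round-$r$ value was assigned during round advancement or in advance by $\texttt{InstallTransition}$), so your fallback strengthening is not needed.
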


\begin{proof}
First observe that, if the lemma's premise holds in round $r$, it also holds
in every round from $\delta_{l+1}$ through $r$. It is enough to prove this
for the immediately preceding round. Let a correct $R_l$-member have an
assigned round-$r$ enabled-configs value containing $R_l$, where
$r>\delta_{l+1}$. If the process assigned this value while advancing from
round $r-1$ to round $r$, it had already assigned its round-$(r-1)$ value.
For one transition, a correct process's assigned values change only from the
old singleton to the pair and then to the new singleton. Because the
round-$r$ value still contains $R_l$, the round-$(r-1)$ value also contains
$R_l$. If $\texttt{InstallTransition}$ assigned the round-$r$ value in
advance, then $r$ is one of the next four rounds assigned the pair
$\{R_l,R_{l+1}\}$. Its immediately preceding value was either assigned the
same pair by that invocation or was assigned before the invocation and still
contained $R_l$. Repeating this argument reaches $\delta_{l+1}$.

Suppose the lemma is false, and let $r$ be the least round satisfying its
premise in which either conclusion fails. By the Agreement clause of the
configuration-level induction hypothesis for $R_l$, BRB Agreement holds through round
$\delta_{l+1}-1$. If $r=\delta_{l+1}$, alignment follows from
\cref{lemma:well-spaced-base-round}. If $r>\delta_{l+1}$, the property just
proved and the minimality of $r$ give BRB Agreement through round $r-1$.
Applying
\cref{lemma:next-round-configuration-alignment} to round $r-1$ therefore
gives alignment in round $r$.

In either case, BRB Agreement holds through round $r-1$. Hence
\cref{lemma:transition-round-certificate-scope} restricts every valid
round-$r$ certificate to $R_l$ or $R_{l+1}$, and
\cref{lemma:brb-aligned-safety} gives BRB Agreement in round $r$. Both
conclusions hold, contradicting the choice of $r$.
\end{proof}



\begin{lemma}[No correct $R_l$-member's enabled-configs value contains $R_l$
at or after $\mathsf{earliest}(R_{l+2})$]
\label{lemma:old-configuration-round-cutoff}
Fix a valid transition $R_l\rightarrow R_{l+1}$ in an execution satisfying
the issuance-rule assumption and the configuration-level induction hypothesis for $R_l$, and
suppose $R_{l+2}$ exists. For every correct process
$p_i\in\mathsf{members}(R_l)$ and every logical round $s$,
\[
  \mathit{enabled\text{-}configs}_i[s]\neq\bot
  \;\land\;
  R_l\in\mathit{enabled\text{-}configs}_i[s]
  \quad\Longrightarrow\quad
  s<\mathsf{earliest}(R_{l+2}).
\]
\end{lemma}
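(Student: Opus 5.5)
We obtain the bound by locating $s$ relative to the earliest install transition round for $R_l\rightarrow R_{l+1}$. Since $R_{l+2}$ exists, \cref{lemma:following-reconfiguration-requires-installation} gives an earliest correct old-member install transition round $t$. By \cref{lemma:following-reconfiguration-earliest-round}, $\mathsf{earliest}(R_{l+2})\geq t+6$. It therefore suffices to show that every assigned round-$s$ enabled-configs value of a correct $p_i\in\mathsf{members}(R_l)$ that contains $R_l$ satisfies $s\leq t+5$.

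First, rounds below $\delta_{l+1}$ are harmless. By \cref{lemma:earliest-install-lower-bound}, $t\geq\delta_{l+1}$, so $s<\delta_{l+1}$ already gives $s<t+6$. Now fix $s\geq\delta_{l+1}$ with $R_l\in\mathit{enabled\text{-}configs}_i[s]$. Then \cref{lemma:brb-agreement-transition-rounds} applies in round $s$. By the monotonicity observation in its proof, its premise also holds in every round from $\delta_{l+1}$ through $s$. Hence BRB Agreement holds through round $s$, and in particular through round $t$ whenever $s\geq t+1$. This is exactly the Agreement premise needed by \cref{lemma:marked-block-propagation,lemma:lagging-process-installation}.

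We then split into cases.
\begin{itemize}
\item If $s\leq t+1$, we are done immediately.
\item If $s\geq t+2$, then $p_i$ has assigned a value for a round beyond $t+1$. Outside \texttt{InstallTransition}, values are assigned only by \texttt{UpdateConfigs} while advancing. So either $p_i$ invoked \texttt{InstallTransition}, or it advanced through round $t+1$ satisfying $|\mathsf{ParentBlocks}_i(R_l,t+1)|\geq q_l$. In the second case, \cref{lemma:lagging-process-installation} forces an invocation anyway. Either way, $p_i$ has an install transition round $a_i\in\{t,t+1\}$.
\end{itemize}
After installing, the pair $\{R_l,R_{l+1}\}$ is assigned exactly to rounds $a_i+1,\dots,a_i+4$. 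From round $a_i+4$ onward, $p_i$ has $\mathit{local}_i=R_{l+1}$. Later singleton values are therefore $\{R_{l+1}\}$, or involve later configurations, or none at all for a removed process, which stops advancing. Consequently, any value containing $R_l$ lies in a round at most $a_i+4\leq t+5$.

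The main obstacle is avoiding circularity. \cref{lemma:following-reconfiguration-earliest-round} and the Agreement argument must not themselves depend on the cutoff being proved. I would check that \cref{lemma:brb-agreement-transition-rounds} only uses that $R_{l+2}$-declared certificates are impossible below $\mathsf{earliest}(R_{l+2})$, which \cref{lemma:transition-round-certificate-scope} derives independently. I would also verify that a process with $\mathit{local}_i=R_{l+1}$ can never assign a value containing $R_l$. The case of the joining process is vacuous, since it is not in $\mathsf{members}(R_l)$.
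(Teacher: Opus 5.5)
Your proposal is correct and follows essentially the same route as the paper. Obtain $t$ from \cref{lemma:following-reconfiguration-requires-installation} and $\mathsf{earliest}(R_{l+2})\geq t+6$ from \cref{lemma:following-reconfiguration-earliest-round}; then use \cref{lemma:brb-agreement-transition-rounds,lemma:marked-block-propagation,lemma:lagging-process-installation} to force every correct old member that assigns values past round $t+1$ to install in $\{t,t+1\}$, so its values containing $R_l$ end by round $t+5$. Your separate treatment of rounds below $\delta_{l+1}$, and your derivation of Agreement through round $t$ from $p_i$'s own value via the monotonicity step in the proof of \cref{lemma:brb-agreement-transition-rounds}, only make explicit what the paper leaves implicit.
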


\begin{proof}
By \cref{lemma:following-reconfiguration-requires-installation}, an earliest
install transition round $t$ exists, and
\cref{lemma:following-reconfiguration-earliest-round} gives
$\mathsf{earliest}(R_{l+2})\geq t+6$. By
\cref{lemma:brb-agreement-transition-rounds,%
lemma:marked-block-propagation,lemma:lagging-process-installation},
every correct process in $\mathsf{members}(R_l)$ that advances from round
$t+1$ to round $t+2$ invokes $\texttt{InstallTransition}$ in round $t$ or
$t+1$. For such a process,
every assigned enabled-configs value containing $R_l$ has round index at
most $t+5$. A process that does not advance through round $t+1$ either
invoked $\texttt{InstallTransition}$ in round $t$, in which case that
invocation assigned values equal to $\{R_l,R_{l+1}\}$ only through round
$t+4$, or did not invoke it, in which case round advancement assigned no
enabled-configs value beyond round $t+1$.
Hence every such $s$ satisfies
$s\leq t+5<\mathsf{earliest}(R_{l+2})$.
\end{proof}

\begin{lemma}[BRB Agreement holds before $\mathsf{earliest}(R_{l+1})$ and in
every later round where a correct $R_l$-member's enabled-configs value
contains $R_l$]
\label{lemma:brb-safety-current-step}
Fix a valid transition $R_l\rightarrow R_{l+1}$ in an execution satisfying
the issuance-rule assumption and the configuration-level induction hypothesis for $R_l$, and let
$\delta_{l+1}=\mathsf{earliest}(R_{l+1})$. BRB Agreement holds for every
round-broadcaster slot whose round index $s$ satisfies either
\[
  s<\delta_{l+1},
\]
or
\[
\begin{aligned}
  &s\geq\delta_{l+1}
  \quad\text{and}\quad
  \exists\,p_i\in\mathsf{members}(R_l):\\
  &\quad p_i\text{ is correct},\quad
  \mathit{enabled\text{-}configs}_i[s]\neq\bot,\quad
  R_l\in\mathit{enabled\text{-}configs}_i[s].
\end{aligned}
\]
\end{lemma}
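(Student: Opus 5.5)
The proof splits along the two disjuncts of the statement. Each case reduces to a result already in hand; no new quorum-intersection argument is needed.

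\emph{Rounds $s<\delta_{l+1}$.} Here I would appeal directly to the configuration-level induction hypothesis for $R_l$. Its third clause states that BRB Agreement holds in every round before $\delta_{l+1}$.

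Following the proof conventions, I would read that clause in its certificate-level form: any two valid quorum certificates for a slot bind the same complete value. That form was established, for transitions ending at $R_l$, through \cref{lemma:brb-aligned-safety} during the preceding induction step. So this case needs no further work beyond noting the certificate-level reading.

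\emph{Rounds $s\geq\delta_{l+1}$ with a correct $p_i\in\mathsf{members}(R_l)$ satisfying $R_l\in\mathit{enabled\text{-}configs}_i[s]$.} This is exactly the premise of \cref{lemma:brb-agreement-transition-rounds}, and its second conclusion gives BRB Agreement for every slot $(s,ID)$.

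The only point to check is that the hypotheses match. That lemma asks for a valid issued transition under the configuration-level induction hypothesis for $R_l$. The present statement supplies this hypothesis together with the issuance-rule assumption, which the lemmas it depends on (\cref{lemma:well-spaced-base-round,lemma:transition-round-certificate-scope}) require.

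The main obstacle, mild as it is, is making sure no circularity is hidden. \Cref{lemma:brb-agreement-transition-rounds} internally uses Agreement below $\delta_{l+1}$ as its base, and that base is the same induction clause used in the first case, not the statement being proved here.

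I would also remark that the minimal-counterexample argument there already shows the premise set is downward closed to $\delta_{l+1}$. The two cases therefore together cover a contiguous range of rounds, from round $1$ up to the last round in which some correct old member still enables $R_l$.
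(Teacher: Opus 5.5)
Your proposal is correct and follows the paper's proof exactly. The case $s<\delta_{l+1}$ is the Agreement clause of the configuration-level induction hypothesis for $R_l$, and the case $s\geq\delta_{l+1}$ with a correct old member still enabling $R_l$ is the second conclusion of \cref{lemma:brb-agreement-transition-rounds}; your added remarks on circularity and downward closure are not needed for the argument.
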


\begin{proof}
For every $s<\delta_{l+1}$, BRB Agreement is exactly the Agreement clause
of the configuration-level induction hypothesis for $R_l$. For every
$s\geq\delta_{l+1}$ satisfying the second condition in the statement,
\cref{lemma:brb-agreement-transition-rounds} gives BRB Agreement in every
round-broadcaster slot of round $s$.
\end{proof}

\subsubsection{DAG Progress and Round Advancement under BRB Agreement}

Having established BRB Agreement for the rounds considered above, we next
separate certification from advancement. First, if the correct processes
needed by a round-$r$ BRB instance continue those tasks until certificates
form, the resulting certified blocks and their ancestors reach every correct
recipient that remains active while waiting. Second, during a transition, a
correct process in round $r_0+2$ or $r_0+3$ can also advance after receiving
$q_{l+1}$ admissible blocks from the following round. The transition-progress
proofs below show that one of these two ways of completing each affected
$\texttt{AdvanceRound}$ iteration occurs before any correct process on which
that progress depends stops its BRB and DAG tasks.

\begin{lemma}[A correct process's local DAG is ancestor-closed]
\label{lemma:local-dag-ancestor-closure}
At every correct process $p_i$, if $B\in DAG_i[r]$, then every parent and
ancestor named by $B$ is also in $DAG_i$.
\end{lemma}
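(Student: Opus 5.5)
The plan is an invariant argument over the execution of $p_i$. Initially $DAG_i[r]=\varnothing$ for every $r$, so the claim holds trivially. The local DAG changes only through insertions into some $DAG_i[r]$, and no block is ever removed. Hence a property of the form ``every stored block has its parents stored'' is preserved by every step that does not insert a block. It therefore suffices to show that each insertion adds a block whose referenced parents are already in $DAG_i[r-1]$. The statement about all ancestors then follows by induction on the length of the parent chain, or equivalently on the round index, since each parent reference points exactly one round lower and rounds are bounded below by $1$.

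The first step is to identify every code path that writes to $DAG_i$. Inspecting \cref{alg:dag-append-deliver,alg:joining-state-transfer}, the only such operation is $\texttt{AddAndDeliver}_i$. It is reached in exactly two ways:
\begin{itemize}
  \item from $\texttt{FetchAndDeliver}_i$, which serves BRB delivery, certified-block messages and DAG replies;
  \item from $\texttt{MergeState}_i$, which merges a joining process's accepted state reply.
\end{itemize}

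For the first path, $\texttt{FetchAndDeliver}_i$ inserts $B$ only after its loop has ensured that every $\rho\in\mathsf{parents}(\mu)$ has a matching block in $DAG_i[r-1]$. It also calls $\texttt{AddAndDeliver}_i$ only if $\mathsf{Admissible}_i(B)$ holds, which includes $\mathsf{ValidParentSet}_i(B,DAG_i)$. That predicate requires each parent reference to name a block already in $DAG_i[r-1]$, or, for $r=1$, requires $\mathsf{parents}(\mu)=\bot$. So at the moment of insertion all parents of $B$ are stored. By the invariant before the step, their ancestors are stored as well.

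The second path needs more care, and I expect it to be the main obstacle. $\texttt{MergeState}_i$ calls $\texttt{AddAndDeliver}_i$ directly, without the admissibility recheck. Here I would use that the reply $D$ was accepted only when $\mathsf{ValidState}_i(D,R_{l+1})$ held, which includes $\mathsf{ValidParentSet}_i(B,D)$ for every $B\in D$. Thus each parent of a block in $D[u]$ is a block in $D[u-1]$.

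Because $\texttt{MergeState}_i$ processes rounds in increasing order, each such parent has been handled before $B$. Handling the parent either inserts it, or finds that the slot is already occupied. In the occupied case, $\mathsf{ValidState}_i$ rejects any $D$ containing a block that differs from the block in the same slot of $DAG_i$ at validation time, so the occupant equals the parent.

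The subtle case is a conflicting block entering that slot concurrently, between validation and merge. I would handle it by treating the merge as atomic with respect to other insertions, as the serialized handler suggests. Alternatively, I would use that any such concurrent insertion is a certified block for the same slot, and note that the paper's later agreement results apply. For this lemma, I would simply argue at the level of the atomic handler step, so that the check performed in $\mathsf{ValidState}_i$ remains valid throughout the merge.

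With both paths covered, each insertion preserves the invariant, and the lemma follows by induction over the steps of $p_i$.
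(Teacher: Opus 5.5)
Your proof is correct and follows the paper's argument. Both proceed by induction over insertions: $\texttt{AddAndDeliver}_i$ is reached only from $\texttt{FetchAndDeliver}_i$, which waits for every named parent and rechecks via $\mathsf{ValidParentSet}_i(B,DAG_i)$, and from $\texttt{MergeState}_i$, which inserts a validated, ancestor-closed $D$ in increasing round order. There is one gap in the occupied-slot case of the merge: the occupant may be another block of $D$ inserted earlier in the same round. Your check against the validation-time $DAG_i$ does not cover that case, so you also need the clause of $\mathsf{ValidState}_i$ requiring $D$ to contain at most one block per round-broadcaster slot, which the paper's proof cites explicitly.
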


\begin{proof}
Initially the DAG is empty. When the certified block $B$ is received outside
state-transfer, $\texttt{FetchAndDeliver}_i(B)$ invokes
$\texttt{AddAndDeliver}_i(B)$
only after every named parent has been inserted; its recursive parent tasks
apply the same rule. When a joining process $p_i$ receives a state-reply
payload $D$ for $R_{l+1}$,
$\mathsf{ValidState}_i(D,R_{l+1})$ accepts only an ancestor-closed set
containing at most one block in each round-broadcaster slot, and
$\texttt{MergeState}_i(D)$ invokes $\texttt{AddAndDeliver}_i(B)$ for accepted
blocks $B$ in increasing round order. These are the only insertion cases,
so induction over insertions proves the claim.
\end{proof}

\begin{lemma}[A certified block declared under $R_c$ with $q_c$ parents has
an admissible parent quorum]
\label{lemma:ordinary-block-parent-quorum}
Let $p_i$ be correct and let
$X=\langle m,s+1,\mu,C\rangle\in DAG_i[s+1]$ be admissible. Put
$R_c=\mathsf{local}(\mu)$ and suppose
$|\mathsf{parents}(\mu)|=q_c$. Then
\[
 |\{B\in DAG_i[s]:\mathsf{ref}(B)\in\mathsf{parents}(\mu)\}|=q_c,
\]
and every block in this set belongs to
$\mathsf{ParentBlocks}_i(R_c,s)$.
\end{lemma}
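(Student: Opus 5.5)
The plan is to unfold the definition of admissibility at $p_i$ and use the ancestor-closure of the local DAG. Since $X\in DAG_i[s+1]$, it was inserted by $\texttt{AddAndDeliver}_i$. That call happens either from $\texttt{FetchAndDeliver}_i$ after $\mathsf{Admissible}_i(X)$ succeeded, or from $\texttt{MergeState}_i$ after $\mathsf{ValidState}_i$ accepted a set $D$ containing $X$. In both cases $\mathsf{ValidParentSet}_i(X,\cdot)$ held. Hence for each reference $\rho\in\mathsf{parents}(\mu)$ there is a block $B'$ in round $s$ of the relevant set with $\mathsf{ref}(B')=\rho$, $ID_{B'}\in\mathsf{Recipients}(E_{B'})$, and $\mathsf{number}(R_{B'})\leq\mathsf{number}(R_c)$. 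These last two conditions are exactly the membership predicate of $\mathsf{ParentBlocks}_i(R_c,s)$, apart from presence in $DAG_i[s]$.

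Second, I establish presence in $DAG_i[s]$. In the fetch path this is immediate, since the parent loop waits until each referenced block is in $DAG_i[s]$. In the merge path, $D$ is inserted in increasing round order, and $\mathsf{ValidState}_i$ rejects any block in $D$ that conflicts with the slot already occupied in $DAG_i$. So each matching $B'\in D[s]$ is either inserted or already present as the identical block. Alternatively, \cref{lemma:local-dag-ancestor-closure} directly gives that every parent named by $X$ lies in $DAG_i$. Because references include the hash of the complete value, the stored block with reference $\rho$ is the same $B'$ that passed the configuration checks. The checks depend only on the block's immutable metadata, so $B'\in\mathsf{ParentBlocks}_i(R_c,s)$.

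Third, I handle the count. The map $\rho\mapsto B'$ is injective, because distinct references give distinct blocks. It is also surjective onto $\{B\in DAG_i[s]:\mathsf{ref}(B)\in\mathsf{parents}(\mu)\}$. Thus the set has size $|\mathsf{parents}(\mu)|$, provided $\mathsf{parents}(\mu)$ contains no duplicate references, which I check next. $\mathsf{ValidCertifiedMetadata}_i(X)$ includes $\mathsf{CertifiedParentCount}$, which rejects parent sets whose broadcaster IDs are not pairwise distinct. Since $\mathsf{parents}(\mu)$ is a set of reference pairs with distinct IDs, it has exactly $q_c$ elements by hypothesis. A second block with the same reference in $DAG_i[s]$ is also impossible, because $\texttt{AddAndDeliver}_i$ keeps at most one block per round-broadcaster slot. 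So the set has exactly $q_c$ elements.

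The main obstacle is the merge path. There $\mathsf{ValidParentSet}$ is evaluated against $D$ rather than $DAG_i$, so I must argue that the $D$-block and the $DAG_i$-block agree. I would resolve this with hash collision-resistance and the no-conflict check of $\mathsf{ValidState}_i$.
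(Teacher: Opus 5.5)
Your proof is correct and follows essentially the paper's argument. $\mathsf{ValidParentSet}$ supplies exactly the $\mathsf{ParentBlocks}_i(R_c,s)$ checks, local ancestor closure (\cref{lemma:local-dag-ancestor-closure}) places the referenced parents in $DAG_i[s]$, and distinct broadcaster identities together with one block per slot give the exact count $q_c$. The split over insertion paths, including the collision-resistance step for the merge path, is unnecessary: the admissibility premise already gives $\mathsf{ValidParentSet}_i(X,DAG_i)$, so the matching parents are found directly in $DAG_i[s]$.
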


\begin{proof}
The displayed premise gives exactly $q_c$ parent references, and
$\mathsf{CertifiedParentCount}(s+1,\mu)$ ensures that their broadcaster
identities are distinct. Local DAG ancestor closure puts the matching blocks
in $DAG_i[s]$, and
$\mathsf{ValidParentSet}_i(X,DAG_i)$ applies exactly the broadcaster and
configuration-number checks in $\mathsf{ParentBlocks}_i(R_c,s)$.
\end{proof}

\begin{lemma}[Transferred state is validated independently of its sender]
\label{lemma:sender-independent-state-validation}
Let $p_i$ be correct and let $D$ be a state-reply payload for which
$\mathsf{ValidState}_i(D,R_{l+1})$ holds. Every block
$B=\langle m,s,\mu,C\rangle$ that
$\texttt{MergeState}_i(D)$ inserts carries a valid BRB certificate for its
complete payload, satisfies $\mathsf{CertifiedParentCount}(s,\mu)$ and the
referenced-parent checks, and is inserted together with all of its
ancestors. Moreover, $D$ contains at most one block in every
round-broadcaster slot. These conclusions do not depend on whether the
process that sent $D$ is correct.
\end{lemma}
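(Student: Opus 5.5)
The statement concerns only local checks made by $p_i$, so the plan is to unfold the guard of the state-reply handler and the definitions of $\mathsf{ValidState}_i$, $\mathsf{ValidCertifiedMetadata}_i$ and $\texttt{MergeState}_i$ in \cref{alg:joining-state-transfer,alg:dag-append-deliver}. At no point will anything be assumed about the sender. First, I record what $\mathsf{ValidState}_i(D,R_{l+1})$ enforces syntactically. It rejects infinite $D$, distinct blocks in one round-broadcaster slot, and blocks conflicting with the same slot of $DAG_i$. For every $B=\langle m,s,\mu,C\rangle\in D$ it also requires $\mathsf{ValidCertifiedMetadata}_i(B)$ and $\mathsf{ValidParentSet}_i(B,D)$. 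The first of these includes $\mathsf{ValidSign}(C,m,s,\mu)$ under the key of $\mathsf{local}(\mu)$ and $\textsc{CertifiedParentCount}(s,\mu)$. The second says that every parent reference names a matching round-$(s-1)$ block of $D$ that passes the broadcaster and configuration-number checks. The slot-uniqueness claim and the certificate and parent-count claims are therefore immediate.

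Second, I will show that each inserted block enters together with all of its ancestors. $\mathsf{ValidParentSet}_i(B,D)$ places every parent of $B$ in $D$, so an induction on round number shows that $D$ is ancestor-closed. $\texttt{MergeState}_i$ processes rounds in increasing order and calls $\texttt{AddAndDeliver}_i$ on each block. Hence, by the time a round-$s$ block is considered, every round-$(s-1)$ parent in $D$ has already been processed.

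The one subtlety is that $\texttt{AddAndDeliver}_i$ may skip a parent $B'$ because $DAG_i$ already holds a block in its slot. The conflict check in $\mathsf{ValidState}_i$ forces that stored block to equal $B'$, so the reference is matched either way. Thus the parent-matching requirement of $\mathsf{ValidParentSet}_i(B,DAG_i)$ holds at insertion time. Combining this with \cref{lemma:local-dag-ancestor-closure}, which says that blocks already in $DAG_i$ carry their ancestors, gives the ancestor conclusion. A second induction over rounds then gives $\mathsf{Admissible}_i(B)$ for every inserted block.

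Third, sender independence follows because every predicate above is evaluated by $p_i$ on the received bytes of $D$ and on its own $DAG_i$. The certificate is checked against a threshold public key, and unforgeability means a Byzantine sender cannot fabricate it. The identity of the sender appears only in the guard $p_j\in\mathit{known}_i$, which does not affect the conclusions. The step I expect to need the most care is the interaction with blocks already present in $DAG_i$, possibly inserted concurrently; I would handle it with the slot-conflict clause as above, noting that insertion and the check both use $DAG_i$ atomically per block.
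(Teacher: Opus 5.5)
Your proposal matches the paper's proof: you unfold $\mathsf{ValidState}_i$ (slot uniqueness, $\mathsf{ValidCertifiedMetadata}_i$ including $\mathsf{CertifiedParentCount}$, and $\mathsf{ValidParentSet}_i(B,D)$), use the increasing-round order of $\texttt{MergeState}_i$, and observe that the sender's identity is not an input to any check. Your explicit handling of slots already occupied in $DAG_i$ via the conflict clause is a detail the paper leaves implicit; for concurrent insertions, though, what you need is that the whole state-reply handler (validation followed by merge) runs as one atomic step, because per-block atomicity of $\texttt{AddAndDeliver}_i$ does not stop a conflicting block from entering a slot after $\mathsf{ValidState}_i$ was evaluated.
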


\begin{proof}
All stated properties are checked from $D$ and the certificates bound to
its blocks. The predicate $\mathsf{ValidState}_i(D,R_{l+1})$ checks that $D$
is ancestor-closed and contains at most one block in every round-broadcaster
slot. For each block
$B=\langle m,s,\mu,C\rangle$, it also checks
$\mathsf{ValidCertifiedMetadata}_i(B)$, including
$\mathsf{CertifiedParentCount}(s,\mu)$, and every parent's hash and identity.
It rejects a block if its metadata names a configuration numbered later
than $l+1$. The reply-construction rule applies the same bound.
$\texttt{MergeState}_i(D)$ processes accepted blocks in
increasing round order, so every named parent is inserted before its child.
The sender's identity is not an input to any of these checks.
\end{proof}

\begin{lemma}[The known-identity set contains the members of every processed
configuration]
\label{lemma:known-coverage}
If $R_h$ is the highest-numbered authenticated reconfiguration message
processed by $\texttt{UpdateKnown}_i(R_h)$ at a correct process $p_i$, then
\[
  \mathit{known}_i=\bigcup_{c=1}^{h}\mathsf{members}(R_c),
\]
and this set never shrinks.  In particular, upon receiving a valid
$R_{l+1}$, every correct recipient in
$\mathsf{members}(R_l)\cup\mathsf{members}(R_{l+1})$ knows every identity
in that union before the retained-process handler executes its wait for
$\mathit{local}_i=R_l$ and
$r_i\geq\mathsf{earliest}(R_{l+1})$, or the joining process begins state
transfer.
\end{lemma}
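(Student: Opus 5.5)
The plan is to prove the displayed equality by induction over the sequence of invocations of $\texttt{UpdateKnown}_i$ at $p_i$. The invariant we maintain is
\[
  \mathit{known}_i=\bigcup_{c=1}^{h}\mathsf{members}(R_c),
\]
where $R_h$ is the highest-numbered authenticated message processed so far, with $h=1$ initially.

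\emph{Base case.} Initially $\mathit{known}_i=\mathsf{members}(R_1)$ by \cref{alg:brb-variables}. A correct joining process is the exception: its first call is $\texttt{UpdateKnown}_i(R_{l+1})$ in the joining handler, and the carried list already supplies every earlier identity.

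\emph{Inductive step.} Consider a call $\texttt{UpdateKnown}_i(R_c)$ that passes authentication. The procedure only adds elements, so the set never shrinks. It adds $\mathsf{members}(R_c)$ together with the dealer-authenticated identities of $R_1,\ldots,R_{c-1}$ that $R_c$ carries. There are two cases.
\begin{itemize}
  \item If $c>h$, the new value is contained in $\bigcup_{d\le c}\mathsf{members}(R_d)$, because every added identity lies in that union. Conversely, the carried list covers every identity of $R_1,\ldots,R_{c-1}$, and the members of $R_c$ are added directly. Hence equality holds with the new maximum $c$.
  \item If $c\le h$, every added identity already lies in $\bigcup_{d\le h}\mathsf{members}(R_d)$. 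The set is unchanged and the invariant persists.
\end{itemize}
If authentication fails, the procedure returns without change. No other code writes $\mathit{known}_i$ except its initialization, and the joining handler only calls $\texttt{UpdateKnown}$. Checking this against the algorithms completes the induction.

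\emph{The ``in particular'' claim.} This is an ordering argument over the handlers. In the retained-process handler of \cref{alg:reconfiguration}, $\texttt{UpdateKnown}_i(R_{l+1})$ is the first statement, executed before the \textbf{await} on $\mathit{local}_i=R_l$ and $r_i\ge\mathsf{earliest}(R_{l+1})$. Likewise, the joining handler of \cref{alg:joining-process} calls it before starting $\mathsf{HistoryTasks}_i$ or spawning $\texttt{ContinueDownload}_i$.

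After the call, the highest processed number is at least $l+1$. By the invariant, $\mathit{known}_i\supseteq\mathsf{members}(R_l)\cup\mathsf{members}(R_{l+1})$. Authentication succeeds because $R_{l+1}$ is a valid dealer-issued message, and the carried list is authenticated by assumption.

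The main obstacle is the subtle point that a process may process messages out of order: a later $R_h$ could be handled before $R_{l+1}$. Monotonicity and the $c\le h$ case handle this, since the union already contains both memberships. It remains to confirm that the set inclusion survives and is not merely an equality at some earlier time, and monotonicity gives this directly.
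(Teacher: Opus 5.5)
Your proof is correct and takes the same route as the paper. $\texttt{UpdateKnown}_i(R_h)$ unions the dealer-authenticated identities of $R_1,\ldots,R_{h-1}$ with $\mathsf{members}(R_h)$, later calls only enlarge $\mathit{known}_i$, and both handlers call it before the relevant wait or the start of state transfer. Your explicit induction, with the $c\le h$ case covering out-of-order processing, spells out what the paper's short proof leaves implicit. One quibble: the joining process is not actually an exception in the base case, since every process initializes $\mathit{known}_i$ to $\mathsf{members}(R_1)$, but this does not affect the argument.
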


\begin{proof}
Each authenticated $R_h$ carries the identities that appeared in
$R_1,\ldots,R_{h-1}$. The $\texttt{UpdateKnown}_i(R_h)$ procedure
authenticates that carried set and unions it and $\mathsf{members}(R_h)$ into
$\mathit{known}_i$. Processing $R_h$ through
$\texttt{UpdateKnown}_i(R_h)$ therefore
gives the displayed equality,
and subsequent invocations only enlarge the set. Both handlers invoke the
procedure before those actions.
\end{proof}

\begin{lemma}[Conditional DAG delivery of certified causal history to a
correct process]
\label{lemma:certified-history-dissemination}
Let $p_h$ and $p_j$ be correct, and let $\mathcal D$ be a finite
ancestor-closed subset of $DAG_h$. Because every block has finitely many
parents, all in the preceding round, the causal history of a finite set of
blocks is finite; by \cref{lemma:local-dag-ancestor-closure}, the causal
history of blocks stored by $p_h$ is contained in $DAG_h$. Suppose that
$p_j\in\mathit{known}_h$ while $p_h$ still performs its prescribed
communication and DAG tasks. Then $p_h$ sends every block in
$\mathcal D$ to $p_j$; if $p_h$ is removed, it does so before stopping its
remaining tasks.

Suppose additionally that $p_j$ is active or eventually becomes active and,
while any block in $\mathcal D$ is missing, continues processing
certified-block messages and running $\texttt{FetchAndDeliver}$; that every
$B\in\mathcal D$ satisfies
$\mathsf{ValidCertifiedMetadata}_j(B)$ and the referenced-parent checks once
its parents are present; and that no different block already in $DAG_j$
occupies the same round-broadcaster slot as a block in $\mathcal D$. Then
$p_j$ eventually DAG-delivers every block in $\mathcal D$.

For a correct removed process $p_j$, the second conclusion applies only
before it leaves. An application to such a process must show that the
condition for leaving keeps these receive and $\texttt{FetchAndDeliver}$
actions active
until the blocks used to satisfy that condition and their ancestors have
been DAG-delivered. No delivery is claimed after $p_j$ leaves.
\end{lemma}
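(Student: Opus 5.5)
The proof splits into the two conclusions of the statement. The first is a sending claim about $p_h$; the second is a delivery claim at $p_j$, proved by induction on round index over the finite set $\mathcal D$.

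For the sending claim, fix $\mathcal D$ as a finite subset of $DAG_h$ at some moment when $p_j\in\mathit{known}_h$ and $p_h$ still performs its prescribed tasks. By \cref{lemma:known-coverage}, $\mathit{known}_h$ never shrinks, so $p_j$ stays a recipient from then on. There are three routes by which a block reaches $p_j$.
\begin{enumerate}
  \item If $p_h$ remains active, it invokes $\texttt{FloodDAG}_h$ repeatedly. Any invocation after the chosen moment takes a snapshot of $DAG_h\supseteq\mathcal D$ and sends every block to $p_j$.
  \item If $p_h$ is removed, its departure procedure in \cref{alg:reconfiguration} performs one final $\texttt{FloodDAG}_h$ before stopping its remaining tasks. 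That flood covers every block inserted earlier, including all of $\mathcal D$.
  \item If a block of $\mathcal D$ was inserted after $p_j$ joined $\mathit{known}_h$, $\texttt{AddAndDeliver}_h$ already sent it directly.
\end{enumerate}
Eventual delivery of messages, together with queueing of messages addressed to a not-yet-active recipient, then guarantees that $p_j$ receives a $\mathsf{certified\text{-}block}$ message for each $B\in\mathcal D$.

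For the delivery claim, order $\mathcal D$ by round and prove by induction on round index that every $B\in\mathcal D$ is eventually in $DAG_j$.
\begin{enumerate}
  \item On receipt of $B$, $p_j$ calls $\texttt{ScheduleFetch}_j(B)$, so either $B$ is already stored or $\texttt{FetchAndDeliver}_j(B)$ runs. The premise gives $\mathsf{ValidCertifiedMetadata}_j(B)$, so the first check passes.
  \item Because $\mathcal D$ is ancestor-closed, every parent of $B$ lies in $\mathcal D$ in an earlier round. By the induction hypothesis, each parent is eventually inserted, so every await loop in $\texttt{FetchAndDeliver}_j$ terminates. Requests re-sent when $\mathit{known}_j$ changes do no harm, and progress does not depend on any $\mathsf{dag\text{-}reply}$; it relies only on the pushed copies.
  \item A subtle point is that a fetch may be scheduled before $p_j$ activates. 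The premise that $p_j$ keeps processing certified-block messages and running $\texttt{FetchAndDeliver}$ while blocks are missing covers this case.
  \item Once the parents are present, the referenced-parent checks hold by premise, so $\mathsf{Admissible}_j(B)$ is true.
  \item $\texttt{AddAndDeliver}_j$ then inserts $B$ unless its slot is already occupied. The no-conflict premise ensures that any occupant of the slot is $B$ itself, so in either case $B$ ends up DAG-delivered.
\end{enumerate}
The base case is round $1$, or more generally the minimal rounds in $\mathcal D$, where the parent field is $\bot$ and no waiting is needed.

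The main obstacle is the interaction with $p_h$ stopping: the argument must show the blocks are sent before $p_h$ leaves. I would handle this with a case split on whether $p_h$ is removed. In that case the final flood happens after all of $\mathcal D$ is in $DAG_h$, because $\mathcal D$ was fixed while $p_h$ still performed its tasks and insertions are never undone.

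The caveat for a removed $p_j$ needs no separate argument. The conclusion is stated only under the premise that $p_j$ keeps its receive and fetch tasks running while blocks are missing, so nothing is claimed after $p_j$ leaves.
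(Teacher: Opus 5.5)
Your proof is correct, and its delivery half follows the paper's argument. Both use induction on rounds over the ancestor-closed set $\mathcal D$, and each $\texttt{FetchAndDeliver}_j$ task relies only on pushed copies of its parents before passing the admissibility and slot checks.

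The sending half takes a different route. You fix a moment at which both $\mathcal D\subseteq DAG_h$ and $p_j\in\mathit{known}_h$ hold, using that both sets only grow. You then appeal to some later $\texttt{FloodDAG}_h$: a periodic one if $p_h$ stays active, or the final one if it is removed.

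The paper instead argues block by block. For each $X\in\mathcal D$ it orders two local events at $p_h$: the insertion of $X$, and the first $\texttt{UpdateKnown}_h$ call that adds $p_j$ to $\mathit{known}_h$. If the insertion comes later, $\texttt{AddAndDeliver}_h(X)$ pushes $X$ to $p_j$. If it comes earlier, the $\texttt{FloodDAG}_h$ triggered by that $\texttt{UpdateKnown}_h$ call sends it. This ties each send to a definite event at which $p_h$ is still running, so the periodic and final floods serve only as extra copies. It is also the mechanism the later applications cite, e.g.\ \cref{lemma:eventual-joining-discovery}.

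Your version is also valid. It depends on the liveness of periodic flooding, and on the dichotomy that a correct process either stays active forever or stops through the removal procedure with a final flood.

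One step should be tightened. $\texttt{ScheduleFetch}_j(B)$ does nothing when $B\in\mathit{scheduled}_j$, and that set is cleared only on insertion. So ``either $B$ is already stored or $\texttt{FetchAndDeliver}_j(B)$ runs'' is not what the procedure guarantees. A stale entry left by an earlier task that returned without inserting would block every later receipt of $B$.

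You need the paper's observation that, under the premises, no task for a block of $\mathcal D$ ever returns without inserting it:
\begin{itemize}
\item the metadata check is deterministic and true;
\item a missing parent makes the task wait rather than return;
\item once the parents are present, the admissibility and slot checks pass.
\end{itemize}
Your steps~1, 2, 4 and~5 already contain these facts; they just have to be applied to whichever task created the $\mathit{scheduled}_j$ entry. Your step~3, about fetches scheduled before activation, is not a real concern, since messages are queued until $p_j$ activates.
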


\begin{proof}
Fix $X\in\mathcal D$. If $p_j$ is already in $\mathit{known}_h$ when $X$
is inserted, $\texttt{AddAndDeliver}_h(X)$ sends the complete certified block
to $p_j$. Otherwise, let $R_c$ be the authenticated reconfiguration message
processed by the first $\texttt{UpdateKnown}_h(R_c)$ invocation that adds
$p_j$ to $\mathit{known}_h$. That invocation executes a finite
$\texttt{FloodDAG}_h$ over the blocks then stored at $p_h$, and therefore
sends $X$ if $X$ was inserted before that invocation. The invocation and
the insertion of $X$ are ordered local events: if the insertion occurs
first, $\texttt{FloodDAG}_h$ sends $X$; if it occurs second,
$\texttt{AddAndDeliver}_h(X)$ sends $X$. Repeated invocations of
$\texttt{FloodDAG}_h$ provide additional copies while $p_h$ remains active,
and, if $p_h$ is removed, its final invocation sends every block then stored
before it stops. Thus every block in $\mathcal D$ is sent to $p_j$ before a
removed $p_h$ stops its remaining tasks. Reliable delivery is unaffected by
the sender's later departure. A copy sent before $p_j$ activates remains
queued and is processed after $p_j$ activates.

Assume now the additional premises for delivery at $p_j$. We prove the
following statement by induction on $s\geq1$: every block in $\mathcal D$
whose round is at most $s$ is eventually inserted into $DAG_j$. For $s=1$,
let $X\in\mathcal D$ be a round-$1$ block. It has no parents. Upon receipt,
$p_j$ either already has a live $\texttt{FetchAndDeliver}_j(X)$ task or
schedules one. For the fixed block $X$, membership in
$\mathit{scheduled}_j$ cannot refer only to a task that returned after a
failed check: $\mathsf{ValidCertifiedMetadata}_j(X)$ is deterministic and true
by assumption; a task with a missing parent waits instead of returning; and,
once all parents are present, the assumed referenced-parent and slot
conditions permit insertion unless $X$ is already present. The assumed
checks imply
$\mathsf{Admissible}_j(X)$, so the procedure invokes
$\texttt{AddAndDeliver}_j(X)$, which inserts and DAG-delivers the block unless
it is already present.

For the induction step, let $X\in\mathcal D$ be a round-$s$ block with
$s>1$, and assume the statement through round $s-1$. Because $\mathcal D$
is ancestor-closed, every parent referenced by $X$ belongs to $\mathcal D$
and lies in round $s-1$. The induction hypothesis therefore inserts all of
these parents into $DAG_j$. Once the last missing parent is inserted, each
parent condition tested by $\texttt{FetchAndDeliver}_j(X)$ is true. Because
$p_j$ continues that task, the procedure resumes, verifies the now-present
parents, and obtains $\mathsf{Admissible}_j(X)$. The round-broadcaster-slot
premise then permits $\texttt{AddAndDeliver}_j(X)$ to insert and DAG-deliver
$X$, unless it was already present. This proves the induction statement.
Since $\mathcal D$ is finite, it has a greatest round, so the statement for
that round covers every block in $\mathcal D$. Blocks may arrive in any
order because each $\texttt{FetchAndDeliver}_j(X)$ task waits until the
parents of $X$ are present.
\end{proof}

\begin{lemma}[A correct broadcaster's round-$r$ block is certified and
delivered while the required processes remain active]
\label{lemma:correct-block-dissemination}
Fix a configuration $R_l$ and a logical round $r$ for which BRB Agreement
holds for every round-broadcaster slot in rounds at most $r$. Let
$\mathcal G\subseteq\mathsf{members}(R_l)$ contain at least $q_l$ active
correct processes. Suppose every $p_i\in\mathcal G$ enters round $r$, invokes
its prescribed round-$r$ $\texttt{DAG-Append}$ operation with frozen value
$\langle m_i,r,\mu_i\rangle$, and reaches that invocation's call to
$\texttt{BRB-Broadcast}$. At that call, suppose it satisfies
\[
  \mathsf{local}(\mu_i)=R_l
  \qquad\text{and}\qquad
  R_l\in\mathit{enabled\text{-}configs}_i[r],
\]
and suppose $\mathsf{MaySendBRB}_i(R_l,r)$ holds, so the call passes its
guard.
Fix $p_k\in\mathcal G$, and let $\langle m_k,r,\mu_k\rangle$ be the
complete BRB value proposed by its $\texttt{DAG-Append}$ invocation. If,
while no correct process in $\mathcal G$ has formed a certificate for this
value, every process in $\mathcal G$ remains active and continues its
prescribed BRB tasks, then a valid certificate $C_k$ declared under $R_l$
is eventually formed. Moreover, the
resulting block
\[
B_k=\langle m_k,r,\mu_k,C_k\rangle
\]
is eventually DAG-delivered by every correct process
$p_j\in\mathsf{members}(R_l)$ such that, while $B_k$ is missing, $p_j$ remains active
and continues processing certified-block messages and running
$\texttt{FetchAndDeliver}$.
\end{lemma}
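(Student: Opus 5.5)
We prove the two conclusions in sequence: first certificate formation for $p_k$'s value, then delivery of the resulting block.

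\emph{Certificate formation.} Because $p_k\in\mathcal G$ is correct and its call passes $\mathsf{MaySendBRB}_k(R_l,r)$, it sends $\langle m_k,r,\mu_k\rangle$ to $\mathsf{Recipients}(\mathsf{enabled}(\mu_k))$. This set contains $\mathsf{members}(R_l)\supseteq\mathcal G$, since $R_l\in\mathsf{enabled}(\mu_k)=\mathit{enabled\text{-}configs}_k[r]$. Every $p_i\in\mathcal G$ eventually assigns $\mathit{enabled\text{-}configs}_i[r]$, because it enters round $r$; buffered messages are then processed. At that point $\mathsf{ValidInstance}_i(r,\mu_k)$ holds: $R_l$ is enabled at $p_i$, $p_k\in\mathsf{members}(R_l)$, and the enabled, pending and parent-count fields were produced by the correct $p_k$ from its own snapshot and a set of $q_l$ parents. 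Together with $\mathsf{MaySendBRB}_i(R_l,r)$, this is the main obstacle. We must argue that $p_i$ does not refuse $p_k$'s value. The only ways to refuse are an unequal enabled value or a previous echo in slot $(r,p_k)$.

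\begin{itemize}
\item A previous echo for a different value is impossible. A correct $p_i$ echoes only the proposal sent directly by $ID=p_k$, and $p_k$ sends one proposal per round.
\item The parent count is the ordinary $q_l$ rule, so it does not depend on $p_i$'s local value beyond $R_l$ being enabled.
\item $\mathsf{ValidPending}$ holds because $\texttt{DAG-Append}$ returns early when $r<\mathsf{earliest}(P)$.
\end{itemize}

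Each $p_i$ therefore echoes. Since $|\mathcal G|\geq q_l$, each $p_i$ receives $q_l$ valid echoes and calls $\texttt{SendVote}_i$. It votes unless it already voted in the slot. A correct process votes only after $q_l$ echoes or $f_l+1$ votes for a value. By the same earliest-correct-voter argument as in \cref{lemma:brb-integrity}, any such value is $p_k$'s value. Hence all of $\mathcal G$ vote for it. Every $p_i\in\mathcal G$ then collects $q_l$ valid $R_l$-shares and forms $C_k$. This happens while the premise keeps $\mathcal G$ active, since that premise lasts until some member of $\mathcal G$ forms a certificate.

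\emph{Delivery.} Let $p_h\in\mathcal G$ be the first correct process to form $C_k$. It broadcasts $\langle\mathsf{certified\text{-}block},B_k\rangle$ to $\mathit{known}_h$, which contains $\mathsf{members}(R_l)$ by \cref{lemma:known-coverage}. It also delivers $B_k$ to itself, triggering $\texttt{ScheduleFetch}_h$. The parents of $B_k$ were chosen by $p_k$ from $\mathsf{ParentBlocks}_k(R_l,r-1)\subseteq DAG_k$. By \cref{lemma:local-dag-ancestor-closure}, the causal history $\mathcal D$ of $B_k$, taken in $DAG_k$, is a finite ancestor-closed set. We apply \cref{lemma:certified-history-dissemination} with sender $p_k$, which is in $\mathcal G$ and correct, has inserted all of $\mathcal D\setminus\{B_k\}$, and sends each block on insertion or flood. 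For $B_k$ itself we use $p_h$'s certified-block broadcast. The validation premises hold because every block in $\mathcal D$ was already admitted by the correct $p_k$; metadata validation is certificate-based and recipient-independent via $\mathsf{CertifiedParentCount}$. The slot-conflict premise follows from BRB Agreement in rounds at most $r$, in the certificate-level form stated in the conventions: no different certified block can occupy any slot of $\mathcal D$ at $p_j$. The delivery conclusion of \cref{lemma:certified-history-dissemination} then gives that every correct $p_j\in\mathsf{members}(R_l)$ that keeps processing certified-block messages and fetches while $B_k$ is missing eventually DAG-delivers $B_k$.

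One subtle point remains: the parents of $B_k$ may not yet have been sent to $p_j$ when $p_k$ stops. They are sent at insertion time to $\mathit{known}_k$, which already contains $p_j$. These messages are therefore in transit regardless of $p_k$'s later activity. Reliable delivery then suffices.
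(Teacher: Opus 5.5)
Your proposal is correct and follows the paper's proof. The $q_l$ eligible members of $\mathcal G$ echo and vote until some $p_h\in\mathcal G$ forms $C_k$ and floods $B_k$ to $\mathit{known}_h\supseteq\mathsf{members}(R_l)$, while $p_k$'s ancestor-closed parent history reaches each such $p_j$ through \cref{lemma:certified-history-dissemination}, with Agreement through round $r$ excluding conflicting slots. Your explicit $\mathsf{ValidInstance}$ and earliest-correct-voter checks fill in what the paper only asserts. One small correction to your final paragraph: ancestors that $p_k$ inserted before it knew a joining $p_j$ reach $p_j$ through the \texttt{FloodDAG} triggered by $\texttt{UpdateKnown}_k$, not at insertion time, and the lemma you cite already covers this case.
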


\begin{proof}
Let the fixed correct broadcaster propose
$\langle m_k,r,\mu_k\rangle$. Its declared configuration is the frozen
$R_l$, even if the broadcaster installs another configuration before this
delayed call. Since
\[
  \mathcal G\subseteq\mathsf{members}(R_l)
  \subseteq\mathsf{Recipients}(\mathsf{enabled}(\mu_k)),
\]
every process in $\mathcal G$ receives the proposal, echo, and vote messages
for this BRB instance. Every $p_i\in\mathcal G$ passes the stated
$\mathsf{MaySendBRB}_i(R_l,r)$ guard and continues the prescribed BRB
handlers. The required echoes and votes are therefore sent to every process
in $\mathsf{Recipients}(\mathsf{enabled}(\mu_k))$. Eventual message delivery
causes a correct process $p_h\in\mathcal G$ to receive $q_l$ valid votes and
form $C_k$. The Agreement premise makes this the unique BRB value for the
slot.

In the vote-message receive handler, $p_h$ constructs $B_k$ and sends
$\langle\mathsf{certified\text{-}block},B_k\rangle$ to every identity in
$\mathit{known}_h$ before triggering
$\texttt{BRB-Deliver}_h(m_k,r,\mu_k,C_k)$. Since $p_h$ has processed
$R_l$, \cref{lemma:known-coverage} gives
$\mathsf{members}(R_l)\subseteq\mathit{known}_h$.
Thus the complete certified block is sent to every $R_l$-member before
$p_h$ can leave, and its later departure does not prevent delivery.

Before invoking $\texttt{BRB-Broadcast}$, $p_k$ stored every block named by
$\mathsf{parents}(\mu_k)$ and, by
\cref{lemma:local-dag-ancestor-closure}, every ancestor of those blocks.
For each correct $R_l$-member that remains active while any of these blocks
is missing, \cref{lemma:certified-history-dissemination} delivers this finite
ancestor-closed set from $p_k$; if $p_k$ is removed, its final
$\texttt{FloodDAG}$ sends the set before it stops. The Agreement premise
excludes a different block in any relevant round-broadcaster slot.
Consequently, if a correct $R_l$-member $p_x$ receives $B_k$, it obtains all
of its named parents, satisfies $\mathsf{Admissible}_x(B_k)$, and DAG-delivers
$B_k$. Therefore, the $\mathsf{certified\text{-}block}$ message sent by
$p_h$ when it forms $C_k$ makes every correct $R_l$-member that remains
active while $B_k$ is missing DAG-deliver $B_k$.
\end{proof}

\begin{lemma}[Delivery of $q_{l+1}$ admissible round-$(r+1)$ blocks advances
a transitioning process from round $r$]
\label{lemma:transition-lookahead-delivery}
Let $p_j$ be correct with
$\mathit{transition}_j=\langle R_l,R_{l+1},r_0\rangle$, and let
$r\in\{r_0+2,r_0+3\}$. Suppose BRB Agreement holds through round $r+1$.
Let $p_h$ be correct and suppose that $DAG_h[r+1]$ contains a set
$\mathcal Q\subseteq\mathsf{ParentBlocks}_h(R_{l+1},r+1)$ of at least
$q_{l+1}$ blocks in distinct round-broadcaster slots. Suppose
$p_j\in\mathit{known}_h$ while $p_h$ still performs its prescribed
communication and DAG tasks. If, while its round-$r$
$\texttt{AdvanceRound}$ iteration is pending, $p_j$ remains active and
continues processing certified-block messages and running
$\texttt{FetchAndDeliver}$, then $p_j$ eventually advances to round $r+1$.
\end{lemma}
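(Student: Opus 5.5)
We reduce the statement to \cref{lemma:certified-history-dissemination} applied to the causal history of $\mathcal Q$, and then read off the additional transition-round condition in $\texttt{AdvanceRound}_j$.

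\textbf{Step 1: the history to be transferred.} Let $\mathcal D$ be the union of the blocks in $\mathcal Q$ and all their ancestors in $DAG_h$. By \cref{lemma:local-dag-ancestor-closure}, $\mathcal D\subseteq DAG_h$ and $\mathcal D$ is ancestor-closed. Since $\mathcal Q$ is finite and every block has finitely many parents in the preceding round, $\mathcal D$ is finite. The first part of \cref{lemma:certified-history-dissemination}, using the premise that $p_j\in\mathit{known}_h$ while $p_h$ performs its tasks, shows that $p_h$ sends every block of $\mathcal D$ to $p_j$. If $p_h$ is a removed process, its final $\texttt{FloodDAG}$ does this before it stops.

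\textbf{Step 2: the delivery premises at $p_j$.}
\begin{itemize}
\item \emph{Validity.} Every $B\in\mathcal D$ is in $DAG_h$, so it passed $\mathsf{Admissible}_h(B)$ at insertion. Its certificate, $\mathsf{ValidCertifiedMetadata}$ and $\mathsf{CertifiedParentCount}$ depend only on the immutable certified tuple, so they hold at $p_j$ as well. The referenced-parent checks compare hashes, identities and configuration numbers of the parents' own metadata. These hold at $p_j$ once the same parents are present.
\item \emph{Slot conflicts.} Every block of $\mathcal D$ lies in a round at most $r+1$. Suppose $DAG_j$ held a different block in the same round-broadcaster slot. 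Both blocks are certified, which contradicts BRB Agreement through round $r+1$ in its certificate-level form.
\item \emph{Activity.} The premise that $p_j$ remains active and keeps processing certified-block messages and running $\texttt{FetchAndDeliver}$ while its round-$r$ iteration is pending is exactly the activity hypothesis of \cref{lemma:certified-history-dissemination}, restricted to that interval.
\end{itemize}

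\textbf{Step 3: the main obstacle, transferring the predicate.} The delicate point is that $\mathsf{ParentBlocks}_j(R_{l+1},r+1)$ uses only block-intrinsic data: the condition $ID_B\in\mathsf{Recipients}(E_B)$ and the comparison of $\mathsf{number}(R_B)$ with $\mathsf{number}(R_{l+1})$. It therefore transfers verbatim from $p_h$ to $p_j$. This gives $\mathcal Q\subseteq\mathsf{ParentBlocks}_j(R_{l+1},r+1)$ once the blocks are delivered, with at least $q_{l+1}$ distinct slots.

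One case needs care: $p_j$ might finish the round-$r$ iteration by the ordinary condition before the delivery of $\mathcal D$ completes. In that case it has already advanced, and the claim holds. Otherwise the iteration is still pending, so the activity premise applies. \cref{lemma:certified-history-dissemination} then delivers all of $\mathcal D$.

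\textbf{Step 4: conclusion.} Suppose $p_j$ has not yet advanced when delivery completes. Its transition is $\langle R_l,R_{l+1},r_0\rangle$ with $r\in\{r_0+2,r_0+3\}$, and $R_l$ is still installed since $r\le r_0+3$. The second disjunct of the await in $\texttt{AdvanceRound}_j$ is now true, so $p_j$ sets $r_i\gets r+1$. We also check the loop's initial await. It requires $\mathit{local}_j\in\mathit{enabled\text{-}configs}_j[r]$, which holds because $\texttt{InstallTransition}$ assigned the pair to round $r$. It requires $r\le r_0+3$, which holds by assumption. Hence the iteration for round $r$ is the pending one, and it is this iteration that completes.
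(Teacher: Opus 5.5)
Your proposal is correct and follows essentially the same route as the paper: you take $\mathcal D$ to be $\mathcal Q$ together with its ancestors, invoke \cref{lemma:certified-history-dissemination} with certificate-level BRB Agreement through round $r+1$ ruling out slot conflicts, and observe that membership in $\mathsf{ParentBlocks}_j(R_{l+1},r+1)$ depends only on presence in $DAG_j[r+1]$ and certified metadata, so the additional transition-round condition becomes true. Your extra checks, namely the case where the ordinary condition fires first and the verification of the loop's initial await, are harmless elaborations of steps the paper leaves implicit.
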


\begin{proof}
Let $\mathcal D$ contain $\mathcal Q$ and all of their ancestors. It is a
finite ancestor-closed subset of $DAG_h$. The blocks have valid certified
metadata and referenced parents, and Agreement through round $r+1$ excludes
a different block in any of their slots. Thus
\cref{lemma:certified-history-dissemination} makes $p_j$ DAG-deliver
$\mathcal D$ while the iteration remains pending. Because every block in
$\mathcal Q$ is then present in $DAG_j[r+1]$, and the remaining conditions
in $\mathsf{ParentBlocks}_j(R_{l+1},r+1)$ depend only on the certified
metadata, $\mathcal Q\subseteq\mathsf{ParentBlocks}_j(R_{l+1},r+1)$. Hence
\[
 |\mathsf{ParentBlocks}_j(R_{l+1},r+1)|\geq q_{l+1},
\]
which is the permitted additional transition-round condition in round $r$.
The iteration
therefore terminates and increments $r_j$.
\end{proof}

\begin{lemma}[Continued activity delivers a round-$r$ parent quorum to every
correct $R_l$-member]
\label{lemma:dag-quorum-dissemination}
Fix a configuration $R_l$ and a logical round $r$ for which BRB Agreement
holds for every round-broadcaster slot in rounds at most $r$. Let
$\mathcal G\subseteq\mathsf{members}(R_l)$ contain at least $q_l$ correct
processes. Suppose every $p_k\in\mathcal G$ enters round $r$, invokes its
prescribed round-$r$ $\texttt{DAG-Append}$ operation, and reaches that
invocation's call to $\texttt{BRB-Broadcast}$ with proposed metadata $\mu_k$
satisfying $\mathsf{local}(\mu_k)=R_l$,
$R_l\in\mathit{enabled\text{-}configs}_k[r]$, and
$\mathsf{MaySendBRB}_k(R_l,r)$, so the call passes its guard. Suppose
moreover that, while
a certificate for any of these selected round-$r$ values is missing, every
process in $\mathcal G$ remains active and continues its prescribed BRB
tasks.

Let $p_j\in\mathsf{members}(R_l)$ be correct. If, while fewer than $q_l$ of
the resulting blocks are present in $DAG_j[r]$, $p_j$ remains active and
continues its prescribed BRB receive task, processing certified-block
messages, and running $\texttt{FetchAndDeliver}$, then $p_j$ eventually
DAG-delivers at least $q_l$ admissible round-$r$ blocks in distinct
round-broadcaster slots. Hence
\[
  |\mathsf{ParentBlocks}_j(R_l,r)|\geq q_l.
\]
\end{lemma}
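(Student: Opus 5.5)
The plan is to derive this lemma from \cref{lemma:correct-block-dissemination}, applied to each broadcaster in a set of $q_l$ members of $\mathcal G$, followed by a counting step.

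\emph{Selection.} Fix any $q_l$ processes $p_{k_1},\ldots,p_{k_{q_l}}\in\mathcal G$. Each one reaches $\texttt{BRB-Broadcast}$ with $\mathsf{local}(\mu_k)=R_l$, $R_l\in\mathit{enabled\text{-}configs}_k[r]$ and $\mathsf{MaySendBRB}_k(R_l,r)$. These are exactly the per-process hypotheses of \cref{lemma:correct-block-dissemination}. That lemma's activity premise asks that, while no certificate for the value of $p_k$ exists, every member of $\mathcal G$ remains active and runs its BRB tasks. This follows from the present premise, because a missing certificate for $p_k$'s value is a special case of ``a certificate for any of these selected values is missing.''

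\emph{Certification and delivery.} \Cref{lemma:correct-block-dissemination} then yields, for each $p_k$, a valid $R_l$-declared certificate $C_k$ and a block $B_k$. That lemma also DAG-delivers $B_k$ at every correct $R_l$-member that remains active and keeps processing certified-block messages and $\texttt{FetchAndDeliver}$ while $B_k$ is missing.

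The main subtlety is matching this delivery premise to ours, since our premise on $p_j$ only holds while fewer than $q_l$ of the blocks are present. I would argue by contradiction. Suppose $p_j$ never holds $q_l$ of the blocks $B_{k_1},\ldots,B_{k_{q_l}}$. Then $p_j$ stays active with the required tasks forever, so it satisfies the delivery premise for every missing $B_k$. Each missing $B_k$ is therefore eventually delivered, which contradicts the assumption. Hence $p_j$ eventually stores $q_l$ of them. Since $\mathcal G$ has at least $q_l$ members, these can be chosen as distinct blocks.

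\emph{Counting and membership in $\mathsf{ParentBlocks}_j$.} The $B_k$ lie in distinct round-broadcaster slots because their broadcasters $p_k$ are distinct. Each $B_k$ was inserted only after passing $\mathsf{Admissible}_j$, so it is admissible. It remains to show each $B_k\in\mathsf{ParentBlocks}_j(R_l,r)$:
\begin{itemize}
\item $\mathsf{number}(R_{B_k})=l\le l$ holds.
\item $ID_{B_k}=p_k\in\mathsf{members}(R_l)\subseteq\mathsf{Recipients}(E_{B_k})$, because $R_l=\mathsf{local}(\mu_k)\in\mathsf{enabled}(\mu_k)$ by the validity check.
\end{itemize}
This gives $|\mathsf{ParentBlocks}_j(R_l,r)|\ge q_l$.

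BRB Agreement through round $r$ is used only to invoke \cref{lemma:correct-block-dissemination}, so that no conflicting block occupies any of these slots or their ancestors' slots at $p_j$.
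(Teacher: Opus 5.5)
Your proposal is correct and follows essentially the same route as the paper. Like the paper, you choose $q_l$ members of $\mathcal G$, apply \cref{lemma:correct-block-dissemination} once per selected broadcaster, and check that the resulting blocks occupy distinct slots and lie in $\mathsf{ParentBlocks}_j(R_l,r)$. Your contradiction argument, which reconciles the ``fewer than $q_l$ present'' activity premise with that lemma's ``while $B_k$ is missing'' premise, is more careful than the paper, which simply asserts the two conditions coincide. One small fix: state the contradiction hypothesis over all blocks produced by $\mathcal G$, not only the $q_l$ selected ones, since that is what the premise on $p_j$ refers to.
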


\begin{proof}
Choose any $q_l$ processes in $\mathcal G$. For each chosen process, the
certification premises of \cref{lemma:correct-block-dissemination} hold:
the selected processes are correct eligible $R_l$-members, their prescribed
appends reach $\texttt{BRB-Broadcast}$, and they continue the BRB tasks
until every selected certificate is formed. The condition above that $p_j$
remains active and continues its receive and fetch tasks is exactly the
corresponding condition in
\cref{lemma:correct-block-dissemination}. Apply that lemma once to each
selected broadcaster. Process $p_j$ eventually DAG-delivers all $q_l$
blocks. They occupy distinct round-broadcaster slots, and their metadata
make them members of $\mathsf{ParentBlocks}_j(R_l,r)$, proving the displayed
condition.
\end{proof}

\begin{corollary}[An ordinary parent quorum advances a correct process when
the additional transition-round condition is unavailable]
\label{corollary:round-advancement}
Fix $R_l$, $r$, and a set
$\mathcal G\subseteq\mathsf{members}(R_l)$ of at least $q_l$ correct
processes. Suppose BRB Agreement holds through round $r$ and every process
in $\mathcal G$ reaches and passes the $\texttt{BRB-Broadcast}$ call of its
prescribed $R_l$-declared round-$r$ append, with
$\mathsf{MaySendBRB}_k(R_l,r)$ holding for every $p_k\in\mathcal G$. Let
$p_j\in\mathsf{members}(R_l)$ be
correct and executing a round-$r$ $\texttt{AdvanceRound}$ iteration whose
ordinary condition is
\[
  |\mathsf{ParentBlocks}_j(R_l,r)|\geq q_l
\]
and suppose that, throughout these round-$r$ iterations, every
$p_x\in\mathcal G\cup\{p_j\}$ satisfies
\[
 \nexists R_a,R_b,r_x:\quad
 \mathit{transition}_x=\langle R_a,R_b,r_x\rangle
 \ \land\ r\in\{r_x+2,r_x+3\}.
\]
Then $p_j$ eventually advances to round $r+1$.
\end{corollary}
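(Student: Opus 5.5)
The plan is to reduce the corollary to \cref{lemma:dag-quorum-dissemination}, after checking that the only way the relevant processes could stop participating is excluded by the displayed transition hypothesis.

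First, I will confirm that the premises of \cref{lemma:dag-quorum-dissemination} hold for $\mathcal G$, $R_l$ and $r$. BRB Agreement through round $r$ is assumed. Every $p_k\in\mathcal G$ reaches and passes $\texttt{BRB-Broadcast}$ of an $R_l$-declared round-$r$ append with $\mathsf{MaySendBRB}_k(R_l,r)$. This guard contains $R_l\in\mathit{enabled\text{-}configs}_k[r]$, and the declared configuration is $R_l$. What remains is the activity premise: every process in $\mathcal G$ must continue its BRB tasks until the selected certificates form, and $p_j$ must continue its receive and fetch tasks while fewer than $q_l$ blocks are present. This is where the transition hypothesis enters.

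The key step, and the expected obstacle, is arguing that no process in $\mathcal G\cup\{p_j\}$ can stop the relevant tasks during these round-$r$ iterations. The only prescribed way a correct $R_l$-member stops its BRB and DAG tasks is the departure rule of \cref{alg:reconfiguration}. A removed process stops all tasks only after its transition-completion round $r_x+4$ and after the departure wait succeeds. Until then it keeps processing BRB messages, certified blocks and fetches. Before installing, or after completing a retained or joining transition, a correct process remains active.

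I would first try the following case split on $\mathit{transition}_x$ for each $p_x\in\mathcal G\cup\{p_j\}$.
\begin{itemize}
  \item If $\mathit{transition}_x=\bot$, the process is not leaving in this window.
  \item If $\mathit{transition}_x=\langle R_a,R_b,r_x\rangle$ with $r\leq r_x+1$, the process is still installed in the old configuration and active.
  \item The case $r\in\{r_x+2,r_x+3\}$ is excluded by hypothesis. This exclusion is the point of the hypothesis: those are exactly the rounds where a lagging process might depend on an old-configuration quorum while peers finish and a removed peer may leave.
  \item The case $r\geq r_x+4$ is impossible for a process still executing a round-$r$ iteration declared under $R_l$ with $\mathit{transition}_x$ still set, because $\texttt{FinishTransition}$ clears the transition at round $r_x+4$.
\end{itemize}
Hence, throughout the wait, no process in $\mathcal G\cup\{p_j\}$ has executed the departure step, so all required tasks remain running. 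A removed $p_k\in\mathcal G$ that already broadcast continues its BRB tasks until it leaves, and it cannot leave in this round by the case analysis.

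Applying \cref{lemma:dag-quorum-dissemination} then gives $|\mathsf{ParentBlocks}_j(R_l,r)|\geq q_l$ eventually. This is exactly the ordinary condition of the pending $\texttt{AdvanceRound}_j$ iteration. The awaited disjunction becomes true, any $\texttt{UpdateConfigs}_j(r)$ call is a finite local step, and $r_j$ is set to $r+1$. Because $\mathsf{ParentBlocks}_j$ only grows, once the condition holds it stays true, so the await terminates.
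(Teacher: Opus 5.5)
Your reduction to \cref{lemma:dag-quorum-dissemination} has a gap in its activity premise. That lemma needs every process in $\mathcal G$ to keep running its BRB tasks for as long as a selected round-$r$ certificate is missing. However, $p_j$ may wait in round $r$ far longer than any $p_k\in\mathcal G$ stays there.

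The displayed hypothesis constrains $\mathit{transition}_k$ only during $p_k$'s own round-$r$ iteration. It says nothing about what $p_k$ does after it advances. For example, take a removed $p_k\in\mathcal G$ that installs while completing round $r$, so $\mathit{transition}_k=\bot$ during the await. Its round-$r$ append is $R_l$-declared and the hypothesis holds. Yet it can advance through rounds $r+1,\ldots,r+4$, satisfy its departure wait, and stop all tasks while $p_j$ still lacks one of the selected certificates. A process whose install transition round is $r-1$ behaves the same way.

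Your phrases ``it cannot leave in this round'' and ``the process is not leaving in this window'' conflate $p_k$'s current round with $p_j$'s waiting period. Your argument for $p_j$ itself is fine: if it never advances, it stays in round $r$ and cannot reach a departure point.

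The missing idea is the paper's dichotomy. Suppose, for contradiction, that $p_j$ never advances.

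\emph{Case 1: some $p_h\in\mathcal G$ ever advances from round $r$.} The additional condition is unavailable to $p_h$ there, so it advanced through its ordinary condition and stores $q_l$ blocks of $\mathsf{ParentBlocks}_h(R_l,r)$. These blocks together with their ancestors form a finite ancestor-closed set. Since $p_j\in\mathit{known}_h$ and Agreement through round $r$ excludes conflicting slots, \cref{lemma:certified-history-dissemination} delivers this quorum to $p_j$. The required sends are the insertion sends, the repeated \texttt{FloodDAG} calls, and the final \texttt{FloodDAG} that a removed $p_h$ performs before stopping. This works whether or not $p_h$ later leaves.

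\emph{Case 2: no process in $\mathcal G$ ever advances from round $r$.} Then none reaches a departure point, so all of them remain active and keep their BRB tasks running. Only in this case does the activity premise of \cref{lemma:dag-quorum-dissemination} actually hold, and that lemma then gives the contradiction.
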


\begin{proof}
Suppose, for contradiction, that $p_j$ never advances. The additional
condition is unavailable, so its ordinary condition remains false. The
process therefore cannot reach a later departure point and remains active,
processing certified blocks and running $\texttt{FetchAndDeliver}$.

If some $p_h\in\mathcal G$ advances from round $r$, the additional
condition is unavailable there as well, so $p_h$ stores $q_l$ blocks counted
by $\mathsf{ParentBlocks}_h(R_l,r)$. Let $\mathcal D$ contain those blocks
and their ancestors. The set is finite and ancestor-closed. Both processes
have authenticated $R_l$, hence $p_j\in\mathit{known}_h$; Agreement through
round $r$ excludes a conflicting block in any selected slot. The sends made
upon insertion, subsequent $\texttt{FloodDAG}$ invocations, and the final
invocation if $p_h$ is later removed satisfy the sending premise of
\cref{lemma:certified-history-dissemination}. That lemma makes $p_j$
DAG-deliver the selected quorum, a contradiction.

Otherwise no process in $\mathcal G$ advances from round $r$. None can
reach a later departure point, so all remain active and continue their BRB
tasks while their selected certificates are missing. The condition in
\cref{lemma:dag-quorum-dissemination} that these processes continue the
required tasks therefore holds, and that lemma
makes $p_j$ DAG-deliver $q_l$ admissible round-$r$ blocks, again a
contradiction.
\end{proof}

\subsubsection{Install-Transition-Round Bounds and BRB Agreement for a
Correct Joining Process}

\begin{corollary}[If $t$ is the minimum install transition round among
correct old members, every correct old member has install transition round
$t$ or $t+1$]
\label{corollary:install-transition-round-bound}
Fix a valid issued transition $R_l\rightarrow R_{l+1}$ and assume the
configuration-level induction hypothesis for $R_l$ defined above. If $t$ is
the earliest install transition round among correct
processes in $\mathsf{members}(R_l)$, then every such process eventually
invokes $\texttt{InstallTransition}$ in round $t$ or $t+1$.
\end{corollary}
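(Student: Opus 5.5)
The corollary has two parts. The first says that no correct old member has an install transition round outside $\{t,t+1\}$. The second says that every correct old member actually invokes $\texttt{InstallTransition}$. The first part follows from \cref{lemma:lagging-process-installation}, provided its Agreement premise is discharged. So I would first establish BRB Agreement through round $t$ (in fact through $t+1$).

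For rounds below $\delta_{l+1}$, this is the Agreement clause of the configuration-level induction hypothesis for $R_l$. For rounds $s$ with $\delta_{l+1}\le s\le t+1$, I would use \cref{lemma:brb-safety-current-step}. Since $t$ is the earliest install transition round, no correct old member has assigned the pair or $\{R_{l+1}\}$ before round $t+1$. Hence any assigned value of a correct old member in such a round contains $R_l$. Its round-$\delta_{l+1}$ value, for instance, is $\{R_l\}$ by \cref{lemma:well-spaced-base-round}.

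I still need some correct old member to have an assigned value in each such round. The installer at round $t$ has assigned values for every round up to $t+4$, and these all contain $R_l$. With Agreement in hand, \cref{lemma:marked-block-propagation} and \cref{lemma:lagging-process-installation} give the range claim: any invocation by a correct old member has round $t$ or $t+1$.

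For the eventuality part, I would fix a correct old member $p_i$ that has not installed in round $t$. I would show it eventually completes round $t+1$ via \texttt{AdvanceRound}, at which point \cref{lemma:lagging-process-installation} forces an invocation at round $t+1$. Note that \cref{lemma:lagging-process-installation} uses the ordinary $R_l$ quorum. A process without a transition has no additional condition, so \cref{corollary:round-advancement} applies once I exhibit a set $\mathcal G$ of at least $q_l$ correct $R_l$-members that reach and pass \texttt{BRB-Broadcast} in rounds up to $t+1$.

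To build $\mathcal G$, I would use clause (4) of the induction hypothesis: a round $r^\ast$ in which $q_l$ active correct members append under $R_l$. I would then induct on rounds from $r^\ast$ to $t+1$. At each round, \cref{corollary:round-advancement} advances everyone, and the advanced processes append again under $R_l$ with $R_l$ enabled. They are still installed in $R_l$ through round $a+3\ge t+3$.

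I also need these processes to remain active. This holds because no correct old member departs before its transition completes at round $\ge t+4$. I would also check the non-transition hypothesis of \cref{corollary:round-advancement} in rounds $t$ and $t+1$: for $x$ with install round $t$, rounds $t$ and $t+1$ are not in $\{t+2,t+3\}$, so the hypothesis holds.

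The main obstacle is the case where $r^\ast>t$, or where the process $p_i$ is lagging far behind. I expect to handle it by applying \cref{lemma:certified-history-dissemination}. That lemma delivers the round-$(t+1)$ quorum that some advanced correct member stored to $p_i$, and $p_i$ remains active because it cannot depart before installing. This gives the ordinary condition at $p_i$ regardless of how far it lags, and so $p_i$ invokes $\texttt{InstallTransition}$ in round $t$ or $t+1$.
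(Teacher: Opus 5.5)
Your main line is the paper's proof. You discharge BRB Agreement through round $t+1$; the paper cites the induction hypothesis below $\delta_{l+1}$ and \cref{lemma:brb-agreement-transition-rounds} above it, which is exactly what \cref{lemma:brb-safety-current-step} packages. You then push every correct old member through round $t+1$ by induction with \cref{corollary:round-advancement}, and read off $\{t,t+1\}$ from \cref{lemma:lagging-process-installation}. Your checks that the additional transition-round condition is unavailable in rounds at most $t+1$, and that appends there are still $R_l$-declared, match the observations the paper makes in one sentence.

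The one thing to fix is the case you call the main obstacle. It is not an obstacle, because clause~(iv) of the configuration-level induction hypothesis says more than you quote: every active correct member of $R_l$ eventually \emph{enters} the round $r^\ast$.
\begin{itemize}
\item If $r^\ast>t+1$, sequential round execution means every correct old member has already advanced through round $t+1$. It did so via the ordinary $q_l$ condition, since the additional condition is unavailable there, so \cref{lemma:lagging-process-installation} applies directly.
\item If $r^\ast\le t+1$, the same clause gives the base of your induction: everyone is in round $r^\ast$. \Cref{corollary:round-advancement} then carries every correct old member one round at a time, not only those in $\mathcal G$, so lagging processes need no separate treatment.
\end{itemize}
You should drop the \cref{lemma:certified-history-dissemination} workaround. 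As stated it is not justified: delivering a round-$(t+1)$ quorum does not by itself satisfy the ordinary round-$r$ condition of a process stuck in an earlier round $r<t+1$. Also, without the full clause~(iv), your induction has no base for processes that are still below $r^\ast$.
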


\begin{proof}
By the configuration-level induction hypothesis for $R_l$, there is a round $s_0$
that every active correct $R_l$-member eventually enters and in which at
least $q_l$ active correct members invoke the prescribed append and reach
its $\texttt{BRB-Broadcast}$ call. If
$s_0>t+1$, sequential round execution means that every correct old member
has already advanced through round $t+1$ before entering $s_0$. If
$s_0\leq t+1$, induct from $s_0$ through $t+1$. In each such
round, at least $q_l$ active correct old members invoke the prescribed
append and reach its broadcast call. Indeed, the first transition is
installed only while round $t$ is completed, and its additional advancement
condition is unavailable through round $t+1$; ordinary entry into the next
round supplies the same predecessor quorum on which the next append waits.
For rounds below $\delta_{l+1}$, BRB Agreement is supplied by the
Agreement clause of the induction hypothesis; at and above
$\delta_{l+1}$, it is supplied by
\cref{lemma:brb-agreement-transition-rounds}. Therefore
\cref{corollary:round-advancement} makes every correct old member enter the
next round.

Thus every correct old member reaches round $t+2$ after collecting the
required round-$(t+1)$ quorum. By
\cref{lemma:lagging-process-installation}, its install transition round
belongs to $\{t,t+1\}$.
\end{proof}

For the next results, suppose the transition from $R_l$ to $R_{l+1}$ adds a
correct process $p_j$. Let $u$ be the round returned by
$\texttt{DiscoverBegin}_j(R_l,R_{l+1})$, and let $t$ be the earliest install
transition round among correct processes in $\mathsf{members}(R_l)$.

\begin{lemma}[The minimum install transition round among correct old members
is at most $u+1$]
\label{lemma:joining-install-upper-bound}
Fix a valid transition $R_l\rightarrow R_{l+1}$ in an execution satisfying
the issuance-rule assumption and the configuration-level induction hypothesis for $R_l$, and
suppose it adds a correct process $p_j$. Let $u$ be the round returned by
$\texttt{DiscoverBegin}_j(R_l,R_{l+1})$, and assume that $t$ is the
earliest round in which a correct process in $\mathsf{members}(R_l)$
invokes $\texttt{InstallTransition}$ for this transition. Then
\[
  t\leq u+1.
\]
\end{lemma}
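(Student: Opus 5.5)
We argue from the definition of $u$ via the marked-block propagation argument, applied at round $u$ instead of at the earliest installation. By the guard of $\texttt{DiscoverBegin}_j(R_l,R_{l+1})$, the joining process's DAG contains a round-$u$ layer with at least $q_l$ admissible blocks in distinct broadcaster slots of $\mathsf{members}(R_l)$. At least $f_l+1$ of them are marked by $R_{l+1}$.

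First, by \cref{lemma:earliest-install-lower-bound}, $u\geq\delta_{l+1}$, since a marked round-$u$ block requires a pending field with $R_{l+1}$ at or below round $u$. Among the $q_l$ round-$u$ blocks, at least $q_l-f_l\geq f_l+1$ have correct broadcasters in $\mathsf{members}(R_l)$. Each such block was produced by a correct old member that entered round $u$ and completed its round-$(u-1)$ advancement. By \cref{lemma:sender-independent-state-validation}, the blocks the joining process holds are certified and ancestor-closed irrespective of the sender. Where needed, BRB Agreement through round $u$ (\cref{lemma:brb-safety-current-step}) lets us identify these blocks with the ones held by correct old members.

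Suppose, for contradiction, that $t\geq u+2$. Then no correct old member has invoked $\texttt{InstallTransition}$ with install round $\leq u+1$. Fix a correct broadcaster $p_h$ of one of the round-$u$ blocks above. Its round-$(u+1)$ block has $q_l$ round-$u$ parents drawn from $\mathsf{members}(R_l)$ slots (\cref{lemma:predecessor-parent-slots}). To make the intersection count work, I would use the fact that the $f_l+1$ marked round-$u$ blocks in the joining process's DAG are in $\mathsf{members}(R_l)$ slots. Under Agreement, these same blocks are the unique blocks in those slots at every correct process. Any correct old member completing round $u+1$ collects $q_l$ round-$(u+1)$ parents. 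Of these, at least $f_l+1$ are from correct broadcasters. Each such correct round-$(u+1)$ block has $q_l$ round-$u$ parents, and $q_l+(f_l+1)-n_l\geq1$, so each intersects the marked set and is therefore marked. Hence that member sees $f_l+1$ marked round-$(u+1)$ blocks and invokes $\texttt{InstallTransition}$ at round $u+1$ at the latest. This contradicts $t\geq u+2$, provided some correct old member does complete round $u+1$.

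The main obstacle is establishing that some correct old member actually completes round $u+1$ (or $u$). I would first try the following observation. The correct broadcasters of the round-$u$ blocks entered round $u$ with $\mathit{enabled\text{-}configs}[u]$ assigned. Then the configuration-level induction hypothesis (clause 4), \cref{corollary:round-advancement}, and the absence of any earlier installation (so the additional transition condition is unavailable) together give that every correct old member eventually advances through rounds $u$ and $u+1$. At that point, $\texttt{UpdateConfigs}$ sees the required $f_l+1$ marked blocks.

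Alternatively, if some correct old member already sees $f_l+1$ marked round-$u$ blocks when completing round $u$, it installs at $u$. This directly gives $t\leq u$.

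Either way $t\leq u+1$. The other subtle point is that a process with $\mathit{pending}_i=\bot$ still learns $R_{l+1}$ from marked blocks, through the first branch of $\texttt{UpdateConfigs}$. So the marked count suffices even without direct receipt of $R_{l+1}$.
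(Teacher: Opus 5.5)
Your proposal is correct and follows essentially the paper's route: assuming $t\geq u+2$, you mark every correct old-member round-$(u+1)$ block via $q_l+(f_l+1)-n_l\geq1$ and round-$u$ BRB Agreement. You then use clause~(iv) of the induction hypothesis, \cref{corollary:round-advancement}, and the unavailability of the additional condition to bring correct old members through round $u+1$, where they see $f_l+1$ marked blocks and install. The paper makes that advancement step explicit by splitting on whether the clause-(iv) round $s_0$ exceeds $u+1$, and otherwise inducting from $s_0$ through $u+1$. Your detours through the correct broadcasters of the round-$u$ blocks and sender-independent state validation are not needed.
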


\begin{proof}
Let $\mathcal M$ be any $f_l+1$ marked round-$u$ blocks present in
$DAG_j[u]$ when $\texttt{DiscoverBegin}$ returns. Suppose $t>u+1$.
A round-$(u+1)$ block $X$
produced by a correct process in $\mathsf{members}(R_l)$ is then a block
from the prescribed $R_l$-declared $\texttt{DAG-Append}$ operation with
$q_l$ round-$u$ parents. By
\cref{lemma:predecessor-parent-slots}, their slots and $\mathcal M$ lie in
the same $n_l$-member configuration, so
\[
 |\mathcal H_X(u)\cap\mathcal M|
 \geq q_l+(f_l+1)-n_l=1.
\]
BRB Agreement in round $u$, from
\cref{lemma:brb-safety-current-step}, identifies the intersecting slot with
the same marked block. Every such $X$ is therefore marked.

Let $s_0$ be the logical round whose existence is required by clause~(iv)
of the configuration-level induction hypothesis for $R_l$. If $s_0>u+1$,
every correct old member advances
through round $u+1$ before entering $s_0$. If $s_0\leq u+1$, the
supposition $t>u+1$ lets us induct from $s_0$ through $u+1$: in every such
round at least $q_l$ correct old members invoke their prescribed
$\texttt{DAG-Append}$ operation and reach its broadcast call, BRB Agreement
holds by
\cref{lemma:brb-safety-current-step}, and
\cref{corollary:round-advancement} advances every correct old member. Any
such process obtains $q_l$ round-$(u+1)$ blocks,
at least $q_l-f_l\geq f_l+1$ of which have correct broadcasters and are
marked. It invokes
$\texttt{InstallTransition}$, contradicting $t>u+1$.
\end{proof}

\begin{lemma}[The minimum install transition round among correct old members
is at least $u-1$]
\label{lemma:joining-install-lower-bound}
Fix a valid transition $R_l\rightarrow R_{l+1}$ in an execution satisfying
the issuance-rule assumption and the configuration-level induction hypothesis for $R_l$, and
suppose it adds a correct process $p_j$. Let $u$ be the round returned by
$\texttt{DiscoverBegin}_j(R_l,R_{l+1})$, and assume that $t$ is the
earliest round in which a correct process in $\mathsf{members}(R_l)$
invokes $\texttt{InstallTransition}$ for this transition. Then
\[
  t\geq u-1.
\]
\end{lemma}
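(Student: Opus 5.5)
We argue by contradiction and suppose $t\leq u-2$. The plan is to show that the joining process $p_j$, when $\texttt{DiscoverBegin}_j(R_l,R_{l+1})$ returned, would already have had an earlier qualifying round in its local DAG. Since $\texttt{DiscoverBegin}$ returns the \emph{least} qualifying round, this contradicts the choice of $u$. The natural earlier candidate is $t+1\leq u-1$.

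Fix the round-$u$ layer $DAG_j[u]$ at the moment $\texttt{DiscoverBegin}$ returns. It contains at least $q_l$ blocks in $\mathsf{LocalBlocks}_j(R_l,u)$. By \cref{lemma:local-dag-ancestor-closure}, every ancestor of these blocks is also stored at $p_j$. The key step is to descend from round $u$ to round $t+1$ and show that $DAG_j[t+1]$ already contains $q_l$ blocks from distinct slots of $\mathsf{members}(R_l)$, at least $f_l+1$ of them marked.

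For the descent, take any round-$u$ block $Y$ from a correct old member. Because $u\geq t+2$ and all old-member install rounds lie in $\{t,t+1\}$ (\cref{lemma:lagging-process-installation}), I would split on the round $u$ relative to $t+4$ or $t+5$:
\begin{itemize}
  \item[(a)] $Y$ is an $R_l$-declared block with $q_l$ parents in round $u-1$, handled via \cref{lemma:ordinary-block-parent-quorum} and \cref{lemma:predecessor-parent-slots}.
  \item[(b)] $Y$ is an $R_{l+1}$-declared block, reached once old members have finished the transition.
\end{itemize}
Iterating (a) down to round $t+1$ yields, inside $\mathcal H_Y(t+1)$, at least $q_l$ blocks from distinct slots of $\mathsf{members}(R_l)$. \Cref{lemma:brb-safety-current-step} identifies these slots consistently across processes. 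Among these blocks, at least $q_l-f_l\geq f_l+1$ have correct broadcasters, and \cref{lemma:marked-block-propagation} makes each of them marked at $p_j$. So round $t+1$ qualifies at $p_j$, and it was present before $\texttt{DiscoverBegin}$ returned $u>t+1$, which is the desired contradiction.

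The main obstacle is case (b). If $u\geq t+5$, round-$u$ blocks may be declared under $R_{l+1}$. During an addition, $R_{l+1}$-declared blocks have $q_{l+1}$ parents drawn from $\mathsf{ParentBlocks}(R_{l+1},\cdot)$, which may include the joining slot. One must then check that these parent sets still project onto at least $q_l$ old-member slots in round $t+1$ or some round $\leq u-1$. I would first try the counting $q_{l+1}-1\geq q_l$ minus the joining slot. Failing that, I would instead choose the round $t+2$, or the first round after $t$ in which correct old members' blocks are $R_l$-declared with $q_l$ parents, as the qualifying round. Any round in $[t+1,t+4]$ strictly below $u$ suffices, and $R_l$-declared correct blocks exist there.

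A second subtlety is that $\mathsf{LocalBlocks}_j(R_l,u)$ may contain up to $f_l$ Byzantine blocks. Only the correct ones are used in the descent, and at least one correct round-$u$ block exists because $q_l>f_l$. A single such correct block already gives the $q_l$ round-$(u-1)$ ancestors needed.
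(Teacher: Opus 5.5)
Your overall strategy is the paper's: assume $t\le u-2$, descend through the ancestor-closed $DAG_j$ to round $t+1$, and show that round $t+1$ already satisfies both discovery conditions, contradicting the minimality of $u$. The gap is exactly the step you flag as the main obstacle, and neither patch you propose closes it.
\begin{itemize}
\item The inequality $q_{l+1}-1\ge q_l$ is false whenever the addition raises the fault bound. For example, $n_l=6$ and $n_{l+1}=7$ give $q_l=q_{l+1}=5$.
\item Picking some other round in $[t+1,t+4]$ as the qualifying round does not explain why that round's blocks are in $DAG_j$ when $\texttt{DiscoverBegin}$ returns. Supplying those blocks is precisely the job of the descent.
\item Your split into cases (a) and (b) assumes that the round-$u$ blocks on your chain are declared under $R_l$ or $R_{l+1}$. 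Under the contradiction hypothesis $u$ may a priori lie far beyond $t+5$, where \cref{lemma:transition-round-certificate-scope} no longer applies.
\item Insisting on a chain of correct broadcasters forces you to bound the Byzantine parent slots in every intermediate round, which needs configuration information you do not have there.
\end{itemize}

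The missing observation is that you never need $q_l$ old-member slots at intermediate levels. You need only a single chain, and any parent will do.
\begin{itemize}
\item Start from any block of $\mathsf{LocalBlocks}_j(R_l,u)$ and follow an arbitrary parent one round at a time down to some $Z\in DAG_j[t+2]$. A parent exists at every step because every stored block passes $\mathsf{CertifiedParentCount}$, and \cref{lemma:local-dag-ancestor-closure} keeps it in $DAG_j$.
\item Configuration information is needed only at round $t+2$. A correct old member that installs in round $t$ has $R_l$ in its enabled-configs values for rounds $t+1$ and $t+2$, and Agreement holds through round $t+1$.
\item So \cref{lemma:transition-round-certificate-scope} makes $Z$, which may be Byzantine, and each of its parents declared under $R_l$ or $R_{l+1}$.
\item Hence $Z$ has at least $q_l$ distinct round-$(t+1)$ parents, since $q_{l+1}\ge q_l$ for an addition.
\item Every such parent's broadcaster lies in $\mathsf{members}(R_l)$. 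The only identity in $\mathsf{members}(R_{l+1})\setminus\mathsf{members}(R_l)$ is $p_j$, which has produced no block before discovery returns.
\end{itemize}
Your count $q_l-f_l\ge f_l+1$ together with \cref{lemma:marked-block-propagation} then finishes the proof exactly as you wrote.
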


\begin{proof}
Suppose $t\leq u-2$. Agreement through round $t+1$ follows from the
Agreement clause of the configuration-level induction hypothesis for $R_l$ and
\cref{lemma:brb-agreement-transition-rounds}. By
\cref{lemma:earliest-install-lower-bound},
$t\geq\mathsf{earliest}(R_{l+1})$. A correct
old member that installs in round $t$ has $R_l$ in its assigned
enabled-configs values for rounds $t+1$ and $t+2$, so the premises of
\cref{lemma:transition-round-certificate-scope} hold in both rounds. By
\cref{lemma:marked-block-propagation}, every
round-$(t+1)$ block produced by a correct process in
$\mathsf{members}(R_l)$ is marked.

When $\texttt{DiscoverBegin}$ returns, round $u$ satisfies its two local
DAG conditions. Since $u\geq t+2$, choose any block counted in
$\mathsf{LocalBlocks}_j(R_l,u)$ and repeatedly follow a parent in the
immediately preceding round until reaching a block $Z\in DAG_j[t+2]$.
Such a parent exists at every step because every imported round-$s'$ block
with metadata $\mu'$ passes
$\mathsf{CertifiedParentCount}(s',\mu')$, and
\cref{lemma:local-dag-ancestor-closure} places every chosen parent in the
joining process's DAG.

The joining process has produced no block before discovery returns. By
\cref{lemma:transition-round-certificate-scope}, $Z$ declares $R_l$ or
$R_{l+1}$ and has at least $q_l$ distinct round-$(t+1)$ parents. Let
$B=\langle m_B,t+1,\mu_B,C_B\rangle$ be one of these parents. If
$\mathsf{local}(\mu_B)=R_l$, then
$\mathsf{ValidCertifiedMetadata}_j(B)$ gives
$\mathsf{ID}(\mu_B)\in\mathsf{members}(R_l)$. Otherwise,
\cref{lemma:transition-round-certificate-scope} in round $t+1$ gives
$\mathsf{local}(\mu_B)=R_{l+1}$, and
$\mathsf{ValidCertifiedMetadata}_j(B)$ gives
$\mathsf{ID}(\mu_B)\in\mathsf{members}(R_{l+1})$. This identity also belongs
to $\mathsf{members}(R_l)$ because the sole added process has not yet
produced a block. At least
$q_l-f_l\geq f_l+1$ of these parent broadcasters are correct, and their
blocks are marked by \cref{lemma:marked-block-propagation}. Ancestor closure
places all these parents in $DAG_j[t+1]$. Thus round $t+1$ already satisfies
both conditions checked by $\texttt{DiscoverBegin}$. The minimality of $u$
gives $u\leq t+1$, contradicting $t\leq u-2$.
\end{proof}

\begin{lemma}[The install transition round of the joining process differs by at most one
from the minimum install transition round among correct old members]
\label{lemma:joining-install-round-cases}
Fix a valid transition $R_l\rightarrow R_{l+1}$ in an execution satisfying
the issuance-rule assumption and the configuration-level induction hypothesis for $R_l$, and
suppose it adds a correct process $p_j$. Let $u$ be the round returned by
$\texttt{DiscoverBegin}_j(R_l,R_{l+1})$, and assume that $t$ is the
earliest round in which a correct process in $\mathsf{members}(R_l)$
invokes $\texttt{InstallTransition}$ for this transition. Then
\[
  t\in\{u-1,u,u+1\}.
\]
\end{lemma}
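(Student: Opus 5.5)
This lemma combines the two bounds just established. The assumptions here are the same as those of \cref{lemma:joining-install-upper-bound,lemma:joining-install-lower-bound}: a valid transition $R_l\rightarrow R_{l+1}$ adding a correct $p_j$, the issuance-rule assumption, the configuration-level induction hypothesis for $R_l$, and $u$ and $t$ defined identically. So both lemmas apply verbatim. \Cref{lemma:joining-install-upper-bound} gives $t\leq u+1$, and \cref{lemma:joining-install-lower-bound} gives $t\geq u-1$. Since $t$ and $u$ are integers (logical rounds), the interval $[u-1,u+1]$ contains exactly $u-1$, $u$, and $u+1$. This yields $t\in\{u-1,u,u+1\}$.

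The only point I would check explicitly is that both lemmas use the same hypothesis on $t$: that it exists as the earliest correct old-member install transition round. The statement assumes this directly, so no appeal to \cref{lemma:following-reconfiguration-requires-installation} or \cref{corollary:install-transition-round-bound} is needed to produce $t$. I would also confirm that $u$ is well defined, meaning $\texttt{DiscoverBegin}_j$ has returned. The statement presupposes this by referring to ``the round returned''.

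I do not expect a real obstacle. All the content sits in the two preceding lemmas. The upper bound uses the marking argument with a quorum intersection of $q_l+(f_l+1)-n_l\geq 1$. The lower bound traces parents back to round $t+1$ and then uses the minimality of $u$. The present proof is a two-line conjunction of their conclusions.
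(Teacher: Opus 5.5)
Your proposal is correct and matches the paper's proof, which also obtains the claim by combining the upper bound $t\leq u+1$ from \cref{lemma:joining-install-upper-bound} with the lower bound $t\geq u-1$ from \cref{lemma:joining-install-lower-bound}. Those two lemmas have exactly the present lemma's hypotheses, and because $t$ and $u$ are integers the two bounds give $t\in\{u-1,u,u+1\}$.
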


\begin{proof}
The claim follows from
\cref{lemma:joining-install-upper-bound,lemma:joining-install-lower-bound}.
\end{proof}

\begin{lemma}[Every correct old member's install transition round differs by
at most one from the install transition round of the joining process]
\label{lemma:joining-old-member-install-rounds}
Fix a valid transition $R_l\rightarrow R_{l+1}$ in an execution satisfying
the issuance-rule assumption and the configuration-level induction hypothesis for $R_l$, and
suppose it adds a correct process $p_j$. Let $u$ be the round returned by
$\texttt{DiscoverBegin}_j(R_l,R_{l+1})$, and let $t$ be the earliest
install transition round among correct processes in
$\mathsf{members}(R_l)$. Then every correct
$p_i\in\mathsf{members}(R_l)$ eventually invokes
$\texttt{InstallTransition}$ for this transition. Let $a_i$ denote the
round of that invocation. Then
\[
\begin{array}{rcl}
t=u-1 &\Longrightarrow& a_i\in\{u-1,u\},\\
t=u   &\Longrightarrow& a_i\in\{u,u+1\},\\
t=u+1 &\Longrightarrow& a_i=u+1.
\end{array}
\]
\end{lemma}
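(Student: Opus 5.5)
The plan is to combine \cref{corollary:install-transition-round-bound} with \cref{lemma:joining-install-round-cases}. No new quorum-intersection argument is needed.

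First, the existence claim. The earliest correct old-member install transition round $t$ exists: \cref{lemma:joining-install-round-cases} is stated under the assumption that $t$ exists, so we may take that as given here. Alternatively, existence follows from the eventual-first-installation result used in the roadmap. Given $t$, \cref{corollary:install-transition-round-bound} (under the configuration-level induction hypothesis for $R_l$) says that every correct $p_i\in\mathsf{members}(R_l)$ eventually invokes $\texttt{InstallTransition}$ with install transition round
\[
  a_i\in\{t,t+1\}.
\]
By the minimality of $t$ we also have $a_i\geq t$. This proves the first sentence of the statement and fixes $a_i$.

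Second, the case split. \cref{lemma:joining-install-round-cases} gives $t\in\{u-1,u,u+1\}$. Substituting each value into $a_i\in\{t,t+1\}$ immediately yields:
\begin{itemize}
  \item $t=u-1$ gives $a_i\in\{u-1,u\}$;
  \item $t=u$ gives $a_i\in\{u,u+1\}$.
\end{itemize}

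The only non-mechanical step is the third case, $t=u+1$. Here the corollary alone gives $a_i\in\{u+1,u+2\}$, and we must exclude $a_i=u+2$. I would try the argument of \cref{lemma:joining-install-upper-bound}, which runs at round $u$. Let $\mathcal M$ be the $f_l+1$ marked round-$u$ blocks in $DAG_j[u]$, which occupy distinct $R_l$-member slots. BRB Agreement in round $u$ comes from \cref{lemma:brb-safety-current-step}. Since $t=u+1>u$, no correct old member has installed by round $u$. Hence every correct old member's round-$(u+1)$ block is an ordinary $R_l$-declared block with $q_l$ round-$u$ parents, and by \cref{lemma:predecessor-parent-slots} those parents lie in $\mathsf{members}(R_l)$. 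The intersection bound $q_l+(f_l+1)-n_l\geq1$ shows each such block has a parent in $\mathcal M$, so it is marked.

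Now consider any correct old member $p_i$ completing round $u+1$. Its ordinary quorum contains $q_l$ round-$(u+1)$ blocks in $\mathsf{ParentBlocks}_i(R_l,u+1)$, from distinct $R_l$ slots. At least $q_l-f_l\geq f_l+1$ of them come from correct broadcasters and are therefore marked, so $\texttt{UpdateConfigs}_i(u+1)$ invokes $\texttt{InstallTransition}$ with $a_i=u+1$.

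The step I expect to need care is justifying that $p_i$ completes round $u+1$ through the ordinary condition. The additional transition-round condition cannot apply: it requires $\mathit{transition}_i\neq\bot$ with $u+1\in\{r_0+2,r_0+3\}$, which is impossible when $r_0\geq t=u+1$. So $\texttt{UpdateConfigs}$ is indeed called in round $u+1$, giving $a_i=u+1$ and completing the case.
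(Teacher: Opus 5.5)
Your proposal is correct and follows essentially the same route as the paper. The first two cases come directly from \cref{corollary:install-transition-round-bound}, and the case $t=u+1$ uses the same marked-block intersection at round $u$, followed by the $q_l-f_l\geq f_l+1$ count in round $u+1$. Your extra check that the additional transition-round condition is unavailable in round $u+1$ (so $\texttt{UpdateConfigs}_i(u+1)$ actually runs) makes explicit a step the paper leaves implicit. The appeal to \cref{lemma:joining-install-round-cases} is harmless but unnecessary, since the statement is a case-wise implication.
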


\begin{proof}
The first two cases follow from
\cref{corollary:install-transition-round-bound}. Suppose $t=u+1$. Every
correct round-$(u+1)$ block has $q_l$ round-$u$ parents. Their slots
intersect the $f_l+1$ marked round-$u$ slots used by
$\texttt{DiscoverBegin}$, and round-$u$ BRB Agreement identifies an
intersecting slot with the same marked block. Hence every such
round-$(u+1)$ block is marked.
Every correct old member eventually reaches that round. Among the $q_l$
blocks it obtains before advancing, at least
$q_l-f_l\geq f_l+1$ have correct broadcasters and are therefore marked.
It consequently installs while processing round $u+1$, so $a_i=u+1$.
\end{proof}

\begin{lemma}[Every correct old member has enabled-configs value
$\{R_l,R_{l+1}\}$ in rounds $u+2$ and $u+3$]
\label{lemma:joining-transition-round-bound}
Fix a valid transition $R_l\rightarrow R_{l+1}$ in an execution satisfying
the issuance-rule assumption and the configuration-level induction hypothesis for $R_l$, and
suppose it adds a correct process $p_j$. Let $u$ be the round returned by
$\texttt{DiscoverBegin}_j(R_l,R_{l+1})$, and let $t$ be the earliest
install transition round among correct processes in
$\mathsf{members}(R_l)$. Then every correct
$p_i\in\mathsf{members}(R_l)$ satisfies
\[
  \mathit{enabled\text{-}configs}_i[u+2]
  =\mathit{enabled\text{-}configs}_i[u+3]
  =\{R_l,R_{l+1}\}.
\]
The joining protocol prescribes
$\texttt{DAG-Transition-Append}_j$ in precisely these two rounds.
\end{lemma}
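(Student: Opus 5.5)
The plan is a direct case split over $t\in\{u-1,u,u+1\}$, given by \cref{lemma:joining-install-round-cases}. In each case I combine it with the per-process bounds of \cref{lemma:joining-old-member-install-rounds}. The key observation is that $\texttt{InstallTransition}_i(R_l,R_{l+1},a_i)$ assigns the pair $\{R_l,R_{l+1}\}$ exactly to rounds $a_i+1,\ldots,a_i+4$. Moreover, each $\mathit{enabled\text{-}configs}_i[s]$ is assigned at most once and is then immutable. So it suffices to show that every correct old member eventually invokes $\texttt{InstallTransition}$, which \cref{lemma:joining-old-member-install-rounds} guarantees, with some $a_i$ satisfying
\[
  a_i+1\leq u+2
  \qquad\text{and}\qquad
  u+3\leq a_i+4,
\]
that is, $a_i\in\{u-1,u,u+1\}$.

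\textbf{Case analysis.} If $t=u-1$, then $a_i\in\{u-1,u\}$. If $t=u$, then $a_i\in\{u,u+1\}$. If $t=u+1$, then $a_i=u+1$. In every case $a_i\in\{u-1,u,u+1\}$, so rounds $u+2$ and $u+3$ both lie in the window $\{a_i+1,\ldots,a_i+4\}$. Hence each correct $p_i\in\mathsf{members}(R_l)$ assigns the pair to both rounds.

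\textbf{Immutability check.} I still have to rule out that either slot was already assigned a different value before the invocation. Round advancement assigns a value only to round $r+1$ when $\texttt{UpdateConfigs}_i(r)$ completes round $r$, and it does so only while $\mathit{transition}_i=\bot$. Before the invocation in round $a_i$, the process has therefore assigned values only through round $a_i$. Since $u+2\geq a_i+1$, neither target slot has been assigned yet. After the invocation, $\mathit{transition}_i\neq\bot$ until $\texttt{FinishTransition}$ runs at round $a_i+4$, so nothing overwrites these slots. Well-spacing excludes any interference from another transition: by \cref{lemma:earliest-install-lower-bound} and \cref{lemma:following-reconfiguration-earliest-round}, the rounds $u+2,u+3\leq t+4$ lie strictly below $\mathsf{earliest}(R_{l+2})$.

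\textbf{Final sentence of the statement.} This is read off the joining operation in \cref{alg:joining-process}. The joining process invokes $\texttt{InstallTransition}_j(R_l,R_{l+1},u)$ with $r=u$. Its beginning-of-round handler calls $\texttt{DAG-Transition-Append}_j$ exactly for $s\in\{u+2,u+3\}$, and the guard in that procedure rejects any other $s$.

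The main obstacle I expect is the immutability and pre-assignment argument. A process whose $a_i=u+1$ must not have assigned a singleton to round $u+2$ early. This requires that the invocation happens inside $\texttt{UpdateConfigs}_i(u+1)$, before round $u+2$ is assigned. I would cite the control flow of $\texttt{AdvanceRound}_i$ for this.
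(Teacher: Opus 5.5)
Your proposal is correct and follows the paper's proof: you combine \cref{lemma:joining-install-round-cases,lemma:joining-old-member-install-rounds} to get $a_i\in\{u-1,u,u+1\}$, and then observe that every window $\{a_i+1,\ldots,a_i+4\}$ contains both $u+2$ and $u+3$, which is exactly the paper's table argument. Your immutability and pre-assignment check, and your reading of the final sentence off \cref{alg:joining-process}, are sound additions that the paper leaves implicit.
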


\begin{proof}
By \cref{lemma:joining-install-round-cases,%
lemma:joining-old-member-install-rounds}, a correct process's install
transition round $a_i$ belongs to $\{u-1,u,u+1\}$. The four rounds assigned
the pair are:
\[
\begin{array}{c|c}
a_i & \text{rounds }s\text{ with }
\mathit{enabled\text{-}configs}_i[s]=\{R_l,R_{l+1}\}\\
\hline
u-1 & u,u+1,u+2,u+3\\
u   & u+1,u+2,u+3,u+4\\
u+1 & u+2,u+3,u+4,u+5
\end{array}
\]
In every case, the value in rounds $u+2$ and $u+3$ is
$\{R_l,R_{l+1}\}$.
\end{proof}

\begin{lemma}[If the joining process enables both configurations after
correct old members stop voting under $R_l$, BRB Agreement holds in that
round]
\label{lemma:brb-safety-joining-round}
Fix a valid transition $R_l\rightarrow R_{l+1}$ in an execution satisfying
the issuance-rule assumption and the configuration-level induction hypothesis for $R_l$, and
suppose it adds a correct process $p_j$. Let $u$ be the round returned by
$\texttt{DiscoverBegin}_j(R_l,R_{l+1})$; the joining protocol uses $u$ as
$p_j$'s install transition round. Let $s$ be a logical round such that
\[
  \mathit{enabled\text{-}configs}_j[s]=\{R_l,R_{l+1}\},
\]
and suppose every correct $p_i\in\mathsf{members}(R_l)$ satisfies one of
the following conditions: either it never assigns
$\mathit{enabled\text{-}configs}_i[s]$ and therefore never sends a
round-$s$ BRB echo or vote, or its assigned value satisfies
\[
  R_l\notin\mathit{enabled\text{-}configs}_i[s].
\]
Then the configuration-scoped BRB algorithm satisfies Agreement for every
round-broadcaster slot $(s,ID)$.
\end{lemma}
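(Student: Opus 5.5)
The plan is to reduce to \cref{lemma:brb-aligned-safety}: show that every valid round-$s$ certificate is declared under a single configuration, namely $R_{l+1}$.

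\emph{Step one: pin down $s$.} Since $\mathit{enabled\text{-}configs}_j[s]=\{R_l,R_{l+1}\}$ and $p_j$'s only $\texttt{InstallTransition}$ uses round $u$, we have $s\in\{u+1,\dots,u+4\}$. By \cref{lemma:earliest-install-lower-bound}, $u\geq\delta_{l+1}$, so $s>\delta_{l+1}$.

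\emph{Step two: exclude $R_l$.} Suppose a valid round-$s$ certificate is declared under $R_l$. It needs $q_l$ vote shares from $\mathsf{members}(R_l)$. By the hypothesis, no correct member of $R_l$ ever assigns a round-$s$ value containing $R_l$. The guard $\mathsf{MaySendBRB}$ then forbids it from echoing or voting under $R_l$ in round $s$. This holds for a removed process still running its BRB handlers, since its enabled-configs values are all that the guard consults. The only remaining voters are the at most $f_l<q_l$ Byzantine members, which cannot form a certificate. The joining process $p_j$ is not in $\mathsf{members}(R_l)$, so its shares are not $R_l$ shares.

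\emph{Step three: exclude configurations other than $R_l$ and $R_{l+1}$.} I would argue as in the first paragraph of \cref{lemma:transition-round-certificate-scope}.
\begin{itemize}
\item For $R_c$ with $c<l$: $s>\delta_{l+1}-1$, so by \cref{lemma:well-spaced-predecessor-completion} and the bound $a_h+4\leq\delta_{l+1}-1$ for removed processes, no correct member enables $R_c$ in round $s$. Only $f_c<q_c$ Byzantine members can therefore vote.
\item For $R_c$ with $c\geq l+2$: show $s<\mathsf{earliest}(R_{l+2})$. If $R_{l+2}$ is issued, \cref{lemma:following-reconfiguration-requires-installation} gives the earliest correct old-member install round $t$. \Cref{lemma:joining-install-round-cases} gives $t\geq u-1$, so $s\leq u+4\leq t+5$. \Cref{lemma:following-reconfiguration-earliest-round} gives $\mathsf{earliest}(R_{l+2})\geq t+6>s$.
\item Since earliest-inclusion rounds strictly increase, the same bound holds for all later $R_c$. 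Correct processes refuse these configurations in round $s$, and the fault bound again prevents a quorum.
\end{itemize}
The main obstacle is that \cref{lemma:joining-install-round-cases} was stated using Agreement facts for earlier rounds. I would check that its use here only needs Agreement through rounds below $s$, which \cref{lemma:brb-safety-current-step} supplies wherever some correct old member still enables $R_l$. Alternatively, I would argue directly as in \cref{lemma:following-reconfiguration-earliest-round}: no correct old member declares $R_{l+1}$ before its own $a_i+4$, and $a_i\geq t$.

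\emph{Step four: conclude.} All valid round-$s$ certificates are declared under $R_{l+1}$. The single-configuration case of \cref{lemma:brb-aligned-safety}, via quorum intersection $2q_{l+1}-n_{l+1}\geq f_{l+1}+1$, shows that any two such certificates for a slot $(s,ID)$ bind the same complete value. This gives certificate-level Agreement, and hence BRB Agreement, for every slot in round $s$.
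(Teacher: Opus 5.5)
Your proposal is correct and follows essentially the paper's argument. You rule out $R_l$ by the hypothesis, rule out earlier and later configurations by the cutoff and earliest-inclusion bounds (for later ones, $s\le u+4\le t+5<t+6\le\mathsf{earliest}(R_{l+2})$), and finish with same-configuration quorum intersection under $R_{l+1}$.

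The paper additionally shows that this case forces $t=u-1$ and $s=u+4$, and handles $c<l$ via \cref{lemma:old-configuration-round-cutoff}, but your weaker bounds suffice. The only install-round fact you need is $t\ge u-1$ from \cref{lemma:joining-install-lower-bound}. That lemma is proved earlier using Agreement only in rounds where correct old members still enable $R_l$, so there is no circularity. Your suggested fallback would still need exactly this bound to relate $s$ to $t$.
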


\begin{proof}
By \cref{lemma:joining-install-round-cases,%
lemma:joining-old-member-install-rounds}, this case can occur only when
the earliest install transition round is $t=u-1$, every correct process in
$\mathsf{members}(R_l)$ installs in round $u-1$, and $s=u+4=t+5$.
If $R_{l+2}$ exists, then
\cref{lemma:following-reconfiguration-earliest-round} gives
\[
  s=t+5<t+6\leq\mathsf{earliest}(R_{l+2}).
\]
By
\cref{lemma:following-reconfiguration-earliest-round,%
lemma:earliest-install-lower-bound}, earliest-inclusion rounds increase
strictly along the issued sequence. Also,
$s=t+5\geq\mathsf{earliest}(R_{l+1})$.
For any $c<l$, a correct round-$s$ voter under $R_c$ would be a member
of $\mathsf{members}(R_c)$ whose round-$s$ enabled-configs value
contains $R_c$. Applying
\cref{lemma:old-configuration-round-cutoff} to that preceding
transition would give
\[
  s<\mathsf{earliest}(R_{c+2})
  \leq\mathsf{earliest}(R_{l+1}),
\]
a contradiction.

If $R_{l+2}$ exists, then
$s<\mathsf{earliest}(R_{l+2})\leq\mathsf{earliest}(R_c)$
for every $c\geq l+2$. A correct member of $R_c$ enables $R_c$
only in rounds greater than its install transition round for
$R_{c-1}\rightarrow R_c$. By
\cref{lemma:earliest-install-lower-bound}, that install transition round
is at least $\mathsf{earliest}(R_c)$. Hence no such member can vote
under $R_c$ in round $s$.
If $R_{l+2}$ does not exist, authentication excludes every later
configuration directly. Since Byzantine members alone cannot form
a certificate, every round-$s$ certificate is declared under $R_l$
or $R_{l+1}$.

By the first alternative in the lemma statement, a correct old member that
never assigns its round-$s$ value sends no round-$s$ BRB echo or vote. By
the second alternative, a correct old member that does assign the value
excludes $R_l$ and is therefore ineligible to echo or vote for an
instance declared under $R_l$. The joining process is not a member of
$R_l$ and cannot vote in an instance declared under $R_l$. At most the
$f_l$ Byzantine members can vote in such an instance, fewer than the $q_l$
shares required for a certificate. Consequently, every round-$s$
certificate is declared under $R_{l+1}$. The quorum-intersection argument
for two certificates declared under $R_{l+1}$ gives Agreement.
\end{proof}

\subsubsection{Parent-Quorum Progress During Reconfiguration}

The next six results prove conditional progress: they assume that some
correct old member has installed the transition and, when the added process
is correct, that its state-transfer discovery has returned. These
assumptions are discharged later by
\cref{lemma:eventual-first-installation,lemma:eventual-joining-discovery};
thus the conditional arguments are not used to prove their own premises.

\begin{lemma}[The additional transition-round condition advances a correct
process to its transition-completion round]
\label{lemma:transition-next-layer-advancement}
Suppose a correct process $p_i$ has
$\mathit{transition}_i=\langle R_l,R_{l+1},a_i\rangle$. If, while
$r_i=a_i+2$,
\[
  |\mathsf{ParentBlocks}_i(R_{l+1},a_i+3)|\geq q_{l+1},
\]
then $p_i$ eventually reaches the beginning of round $a_i+4$ and invokes
$\texttt{FinishTransition}_i(R_{l+1})$. The same conclusion holds if,
while $r_i=a_i+3$,
\[
  |\mathsf{ParentBlocks}_i(R_{l+1},a_i+4)|\geq q_{l+1}.
\]
\end{lemma}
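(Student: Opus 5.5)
The plan is to read the claim off the serialized $\texttt{AdvanceRound}_i$ loop of \cref{alg:reconfiguration}. The key point is that $\mathsf{ParentBlocks}_i$ only grows: $DAG_i$ only gains blocks, and membership in $\mathsf{ParentBlocks}_i(R,r)$ depends only on the immutable certified metadata of a stored block. So once either displayed inequality holds, it keeps holding.

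\textbf{Case $r_i=a_i+3$.} I will check that the loop is not blocked at its first await. Since $\mathit{transition}_i=\langle R_l,R_{l+1},a_i\rangle$ and $r=a_i+3\leq a_i+3$, the transition clause holds. Also $\mathit{local}_i=R_l$, because $\texttt{FinishTransition}_i$ runs only at the beginning of round $a_i+4$. Moreover $R_l\in\mathit{enabled\text{-}configs}_i[a_i+3]=\{R_l,R_{l+1}\}$, as assigned by $\texttt{InstallTransition}_i$.

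The second await then has its additional alternative true, since $r\in\{a_i+2,a_i+3\}$ and $|\mathsf{ParentBlocks}_i(R_{l+1},a_i+4)|\geq q_{l+1}$. Because $\mathit{transition}_i\neq\bot$, $\texttt{UpdateConfigs}_i$ is skipped, and the loop sets $r_i\gets a_i+4$. At the beginning of round $a_i+4$, the beginning-of-round rule fires before ordinary block production and invokes $\texttt{FinishTransition}_i(R_{l+1})$. This is the old-member handler, or the joining handler in \cref{alg:joining-process} if $p_i$ is joining.

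\textbf{Case $r_i=a_i+2$.} The same guard argument applies with $r=a_i+2$, so the additional condition for round $a_i+2$ holds and $r_i$ becomes $a_i+3$; the loop advances exactly one round. In the next iteration, with $r=a_i+3$, the selected configuration is $R_c=R_{l+1}$. The ordinary condition is therefore $|\mathsf{ParentBlocks}_i(R_{l+1},a_i+3)|\geq q_{l+1}$, which still holds by monotonicity. The loop then advances to $a_i+4$, and the first case finishes the argument.

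\textbf{Main obstacle.} The delicate step is confirming that no other task can interfere: nothing changes $r_i$, $\mathit{local}_i$, or $\mathit{transition}_i$ before round $a_i+4$, and $p_i$ does not stop first. I will argue that $\texttt{AdvanceRound}_i$ is the only writer of $r_i$, and that $\texttt{FinishTransition}_i$ is invoked only at round $a_i+4$. For a removed process, stopping happens only after $\texttt{FinishTransition}$, so a correct process is still executing the loop here. For a joining process, $\mathsf{CatchUpTasks}_i$, which include $\texttt{AdvanceRound}$, have started by round $a_i+1$.
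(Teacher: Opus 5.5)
Your proof is correct and follows the paper's argument. In the round-$(a_i+3)$ case, the additional disjunct of $\texttt{AdvanceRound}_i$ directly sets $r_i\gets a_i+4$; in the round-$(a_i+2)$ case it first sets $r_i\gets a_i+3$, after which the same stored round-$(a_i+3)$ blocks satisfy the ordinary final-transition-round condition with $R_c=R_{l+1}$, and the beginning-of-round rule then invokes $\texttt{FinishTransition}_i(R_{l+1})$, with your explicit checks of the first await's guard, the monotonicity of $\mathsf{ParentBlocks}_i$, and the absence of other writers of $r_i$ spelling out what the paper leaves implicit.
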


\begin{proof}
In the first case, the additional disjunct of $\texttt{AdvanceRound}$ is
true with the frozen value $r=a_i+2$, so the procedure assigns
$r_i\gets a_i+3$. Its next serialized iteration selects $R_{l+1}$ for the
ordinary final-transition-round condition. The same round-$(a_i+3)$ blocks
satisfy that condition, so it assigns $r_i\gets a_i+4$. In the second case,
the additional disjunct is true with $r=a_i+3$ and directly causes the
assignment $r_i\gets a_i+4$. In either case the beginning-of-round rule
then invokes $\texttt{FinishTransition}_i(R_{l+1})$. Each iteration writes
only $r+1$, and the serialized $\texttt{AdvanceRound}$ loop is the only
post-activation writer of $r_i$.
\end{proof}

\begin{lemma}[Before stopping, a correct process stores blocks sufficient
for another correct process's transition-round advancement]
\label{lemma:stopping-process-transition-state}
Fix a valid transition $R_l\rightarrow R_{l+1}$ in an execution satisfying
the issuance-rule assumption and the configuration-level induction
hypothesis for $R_l$. Assume that the earliest correct old-member install transition
round $t$ exists. If the transition adds a correct process, assume that its
$\texttt{DiscoverBegin}$ call returns a round $u$. Let $p_i,p_h$ be correct
retained, joining, or removed processes with install transition rounds
$a_i,a_h$. Let
$r\in\{a_i+1,a_i+2,a_i+3\}$. If $p_h$ stops all protocol tasks, then,
immediately before its final $\texttt{FloodDAG}$, $DAG_h$ has the following
state:
\begin{enumerate}
  \item if $r=a_i+1$, it contains the blocks required by $p_i$'s ordinary
  round-$r$ condition; and
  \item if $r\in\{a_i+2,a_i+3\}$, it contains either the blocks required by
  $p_i$'s ordinary round-$r$ condition or at least $q_{l+1}$ blocks in
  $\mathsf{ParentBlocks}_h(R_{l+1},r+1)$.
\end{enumerate}
These conclusions do not depend on whether $p_i$ begins its round-$r$
iteration before or after $p_h$ sends that final flood.
\end{lemma}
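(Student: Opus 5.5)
The plan is to reduce to the only process that ever stops all of its tasks, and then walk down the causal history that its departure condition forces it to store. By \cref{alg:reconfiguration,alg:joining-process}, retained and joining processes never execute ``stop all tasks at $p_h$''. So $p_h$ is a correct process removed by $R_l\rightarrow R_{l+1}$, with install transition round $a_h$. It performs its final $\texttt{FloodDAG}$ only after
\[
|\mathsf{AdmissibleIDs}_h(R_{l+1},a_h+5)|\geq f_{l+1}+1 .
\]
At most $f_{l+1}$ of these broadcasters are Byzantine, so $DAG_h[a_h+5]$ contains a block $Y$ from a correct member of $\mathsf{members}(R_{l+1})$. By \cref{lemma:local-dag-ancestor-closure}, $DAG_h$ also contains the entire causal history of $Y$. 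From \cref{corollary:install-transition-round-bound,lemma:joining-install-round-cases,lemma:joining-old-member-install-rounds}, every install transition round lies within $\{t-1,t,t+1\}$, and old-member rounds lie within $\{t,t+1\}$. In particular $|a_i-a_h|\leq1$ when $p_i$ is an old member, and the joining case adds at most one further round of slack. Hence $r+1\leq a_i+4\leq a_h+5$, so the layers we need lie at or below the round of $Y$.

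Next I would prove a descent claim. If $Z\in DAG_h[s]$ is a block produced by a correct process, with $s$ above the relevant rounds, then its recorded parents include at least $f+1$ blocks from correct broadcasters, where $f$ is $f_l$ or $f_{l+1}$ according to the declared configuration. This follows from \cref{lemma:ordinary-block-parent-quorum}, together with $q-f\geq f+1$ and \cref{lemma:predecessor-parent-slots}. At most one parent slot belongs to the joining process, so a correct \emph{old-member} or retained parent can always be chosen. Iterating from $Y$ yields, for every round $s\in[r+1,a_h+5]$, a block in $DAG_h[s]$ from a correct process. Moreover, we can choose this process to be an old member whenever a correct old member's block exists there, because at least $f+1\geq2$ correct parents are available.

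The case analysis then reads the required layer off the parents of a suitable correct block $Z_x$ with broadcaster $p_x$. Its declared configuration is fixed by $a_x$: ordinary blocks of $p_x$ declare $R_l$ through round $a_x+3$ and $R_{l+1}$ from round $a_x+4$.

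Case $r=a_i+1$: take $Z_x$ in round $r+1=a_i+2$ from an old member. Since $a_x\geq a_i-1$, we have $a_x+4\geq a_i+3>r+1$, so $Z_x$ declares $R_l$. Its $q_l$ parents then lie in $\mathsf{ParentBlocks}_h(R_l,r)$, which is exactly $p_i$'s ordinary condition.

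Case $r=a_i+2$: the ordinary condition again uses $R_l$. If some correct old member's round-$(r+1)$ block in the descent declares $R_l$, we conclude as before. Otherwise that block declares $R_{l+1}$, so $a_x+4\leq a_i+3$, i.e.\ $a_x=a_i-1$. Then the correct round-$(r+2)$ block whose parents contain it declares $R_{l+1}$ and has $q_{l+1}$ parents in $\mathsf{ParentBlocks}_h(R_{l+1},r+1)$, which is the additional transition-round condition. Here $r+2=a_i+4\leq a_h+5$, so this block is stored.

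Case $r=a_i+3$: the ordinary condition uses $R_{l+1}$. Take a correct block in round $r+1=a_i+4$. If it declares $R_{l+1}$, its $q_{l+1}$ parents lie in $\mathsf{ParentBlocks}_h(R_{l+1},r)$. If instead it declares $R_l$, then $a_x=a_i+1$, which is round $a_x+3$ for $p_x$. In that case use the round-$(a_x+4)=(r+2)$ block from the descent, which declares $R_{l+1}$ and whose $q_{l+1}$ parents lie in $\mathsf{ParentBlocks}_h(R_{l+1},r+1)$; this gives the alternative condition.

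Finally, all these sets are determined by immutable certified metadata in $DAG_h$, so the conclusion is independent of when $p_i$ starts its iteration.

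I expect the main obstacle to be the boundary bookkeeping. Specifically, the round $r+2$ used in the last two cases must not exceed $a_h+5$ when $a_i=a_h+1$, and the joining process, whose round may be $t-1$ and whose transition blocks declare $R_{l+1}$ with only $q_l$ parents, must be excluded as the chosen $Z_x$. If $r+2>a_h+5$ in the extreme case, I would fall back to the ordinary condition, using that then $a_h=a_i-1$ forces the needed layer lower. I would settle this with an explicit table of $(a_i,a_h,a_x)$, as in \cref{lemma:joining-transition-round-bound}.
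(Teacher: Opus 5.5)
Your opening reduction is false, and it leaves a case of the statement unproved. A process retained or added by $R_l\rightarrow R_{l+1}$ also executes ``stop all tasks'' if a later transition removes it. The lemma covers such a $p_h$: it only requires $p_h$ to be a correct retained, joining, or removed process for this transition that stops. For that $p_h$ there is no departure wait on round $a_h+5$ of this transition, so your source block $Y$ is never supplied. The paper treats this case explicitly (``any later removal occurs only after it has continued advancing'').

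The repair fits your framework. By \cref{lemma:following-reconfiguration-earliest-round,lemma:earliest-install-lower-bound}, the later install round is at least $t+6$. So before stopping, $p_h$ completes round $a_h+5$ with $\mathit{transition}_h=\bot$ and $R_{l+1}$ installed, i.e.\ through the ordinary condition. The $q_{l+1}$ blocks it then stores in $\mathsf{ParentBlocks}_h(R_{l+1},a_h+5)$ include an ordinary block with a correct broadcaster, by the same counts as in your descent claim. Take that block as $Y$; the rest of your argument runs unchanged.

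With that fixed, yours is a genuinely different route from the paper's. The paper follows $p_h$'s own round crossings, splitting on $a_h\in\{a_i-1,a_i,a_i+1\}$. For each case it asks whether $p_h$ crossed round $r$ (or $r+1$) by the ordinary or the additional condition. It then reads the needed layer off that stored quorum, or off the parents of a correct old-member block inside it. It falls back on the departure blocks only where a removed $p_h$ has stopped advancing. This crossing view is what makes the later-removal case automatic. You instead start from one correct late block and descend through correct old-member parents. You split only on the declared configuration of the chosen block in round $r+1$ (or $r+2$), which is more uniform.

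Three small points to tighten:
\begin{itemize}
\item What you need, and what the cited lemmas give, is that all correct install rounds, the joining process's included, differ pairwise by at most one. This is stronger than ``one further round of slack''.
\item The boundary case you worry about cannot occur. The subcase of $r=a_i+3$ that uses round $r+2$ needs a correct old member with $a_x=a_i+1$. That forces $a_h\geq a_i$, hence $r+2=a_i+5\leq a_h+5$.
\item After a removal, the Byzantine bound on the parent slots of an $R_{l+1}$-declared block is $f_l$, not $f_{l+1}$, because the removed identity's pair-enabled blocks are eligible parents. The count in your descent claim is therefore $q_{l+1}-f_l\geq f_{l+1}+1$ rather than $q-f\geq f+1$; it still holds.
\end{itemize}
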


\begin{proof}
By \cref{corollary:install-transition-round-bound}, every correct old
member has install transition round in $\{t,t+1\}$. If the transition does
not add a correct process, these are all correct retained or removed
processes. Otherwise,
let $u$ be the correct joining process's install transition round. By
\cref{lemma:joining-install-round-cases,%
lemma:joining-old-member-install-rounds}, if $t=u-1$, all correct retained,
joining, or removed processes install in $\{u-1,u\}$; if $t=u$, they install in
$\{u,u+1\}$; and if $t=u+1$, they install in $\{u,u+1\}$. Thus the install
transition rounds of every two correct retained, joining, or removed
processes differ by at most one. Consequently, if $a_h=a_i-1$, every such
correct process installs in $\{a_i-1,a_i\}$; if $a_h=a_i+1$, every such
correct process installs in $\{a_i,a_i+1\}$.

The DAG never removes a block, so it is enough to inspect how $p_h$ crosses
the relevant logical rounds before it can stop. If $a_h=a_i$, the two
processes have the same applicable conditions in round $r$. In round
$a_i+1$ the additional condition is unavailable, so crossing the round
stores the blocks required by the ordinary condition. In either of the next
two rounds, crossing it stores those blocks or at least $q_{l+1}$ blocks in
$\mathsf{ParentBlocks}_h(R_{l+1},r+1)$.

If $r=a_i+1$, the ordinary condition at $p_i$ uses $q_l$. For
$a_h\in\{a_i,a_i+1\}$, process $p_h$ cannot cross logical round $r$
without storing the blocks required by that ordinary round-$r$ condition.
If $a_h=a_i-1$, the round is $p_h$'s penultimate transition round. It either
stores the $q_l$ blocks required by that ordinary condition or advances using
$q_{l+1}$ blocks in round $r+1$.
That latter set contains a correct old-member block. For an addition,
$n_{l+1}\geq5$ and $q_{l+1}>f_{l+1}+1$, so excluding at most
$f_{l+1}$ Byzantine slots and the single joining-process slot leaves such a
block. A removal leaves $n_{l+1}\geq4$, hence $f_{l+1}\geq1$; together with
$f_l\leq f_{l+1}+1$ and $q_{l+1}\geq2f_{l+1}+1$, this gives
$q_{l+1}>f_l$. With unchanged membership,
$q_{l+1}=q_l>f_l$. The correct old-member block is declared under
$R_l$: if $a_x$ is its broadcaster's install transition round, the bound
above gives $a_x\in\{a_i-1,a_i\}$, and the block is produced in round
$r+1=a_i+2<a_x+4$, before that broadcaster completes. It therefore has
$q_l$ round-$r$ parents. Ancestor closure and
\cref{lemma:ordinary-block-parent-quorum} therefore place the ordinary
round-$r$ set in $DAG_h$ before it stops.

Suppose first that $r=a_i+2$. If $a_h=a_i+1$, then round $r$ is only
$a_h+1$ at $p_h$, where the additional condition is unavailable; crossing
it stores the ordinary $q_l$ round-$r$ set required by $p_i$. If
$a_h=a_i-1$, round $r$ is $p_h$'s final transition round. Crossing it
stores a $q_{l+1}$ set in round $r$ or $r+1$. In the latter case the second
set in the statement is already present. In the former case, $p_h$ cannot
stop all tasks immediately upon completing the transition. If it is removed
by this transition, its departure rule first obtains $f_{l+1}+1$
admissible round-$(r+2)$ blocks from distinct $R_{l+1}$-member
broadcasters. At least one has a correct broadcaster; choose that block.
If $a_x$ is the chosen correct broadcaster's install transition round, the
bound above gives $a_x\leq a_i$, so
$a_x+4\leq a_i+4=r+2$. Completion occurs at the beginning of that round,
before block production, so the chosen block is an ordinary
$R_{l+1}$-declared block with exactly $q_{l+1}$ eligible round-$(r+1)$
parents. They form the required set. If it belongs to $\mathsf{members}(R_{l+1})$,
any later removal occurs only after
it has continued advancing, and its first advancement from round $r+1$
with $R_{l+1}$ installed has no additional disjunct and stores a
$q_{l+1}$ round-$(r+1)$ set. Thus that set is present before the final
flood.

Now suppose that $r=a_i+3$. If $a_h=a_i-1$, then $p_h$ has already
completed this transition at the beginning of round $r$. If it is removed
by this transition, its DAG contains the $f_{l+1}+1$ admissible
round-$(r+1)$ blocks from distinct $R_{l+1}$-member broadcasters. Choose
one with a correct broadcaster. If $a_x$ is the chosen correct
broadcaster's install transition round, the bound above gives
$a_x\leq a_i$, so $a_x+4\leq a_i+4=r+1$. Completion occurs before block
production in that round, so this is an ordinary
$R_{l+1}$-declared block with exactly $q_{l+1}$ eligible round-$r$
parents. If it belongs to $\mathsf{members}(R_{l+1})$,
it cannot be removed later
without first taking its ordinary $R_{l+1}$ advancement from round $r$.
Either way its DAG stores the $q_{l+1}$ round-$r$ set required by $p_i$
before it stops. If $a_h=a_i+1$, then $r$ is
$p_h$'s penultimate transition round. Crossing it stores either a $q_l$
round-$r$ set or a $q_{l+1}$ round-$(r+1)$ set. For a removal or unchanged
membership, $q_l\geq q_{l+1}$, and every block eligible under $R_l$ is also
eligible under $R_{l+1}$, so the former contains the blocks required by
$p_i$'s ordinary condition. For an addition, $q_l$ may equal $q_{l+1}-1$.
Process $p_h$ must
then cross its final transition round $r+1$ before it can stop. It stores a
$q_{l+1}$ set in round $r+1$ or $r+2$. In the first case the second set in
the statement is present. In the second, the set in round $r+2$ contains a
block with a correct broadcaster because $q_{l+1}>f_{l+1}$. Every correct process
that produces a round-$(r+2)$ block has installed $R_{l+1}$ before block
production: if $a_x$ is its install transition round, the bound above gives
$a_x\leq a_i+1$, so $a_x+4\leq a_i+5=r+2$. Thus this is an ordinary
$R_{l+1}$-declared block with exactly
$q_{l+1}$ round-$(r+1)$ parents. By
\cref{lemma:ordinary-block-parent-quorum}, those parents form the required
round-$(r+1)$ set. These cases exhaust the possible relative install
transition rounds of $p_i$ and $p_h$.
\end{proof}

\begin{lemma}[Every $\texttt{AdvanceRound}$ iteration in rounds
$a_i+1,a_i+2,a_i+3$ eventually terminates]
\label{lemma:transition-advancement-alternatives}
Fix a valid transition $R_l\rightarrow R_{l+1}$ in an execution satisfying
the issuance-rule assumption and the configuration-level induction
hypothesis for $R_l$. Assume that the earliest correct old-member install transition
round $t$ exists. If the transition adds a correct process, assume that its
$\texttt{DiscoverBegin}$ call returns a round $u$. Let $p_i$ be a correct
retained, joining, or removed process with install transition round $a_i$.
Then every $\texttt{AdvanceRound}_i$ iteration before the beginning of
round $a_i+4$ eventually terminates. If the transition adds a correct
process $p_j$, then the wait $|\mathsf{LocalBlocks}_j(R_l,u+1)|\geq q_l$
that precedes $r_j\gets u+1$ and the start of $\mathsf{CatchUpTasks}_j$ in
\cref{alg:joining-process} also eventually terminates. In particular:
\begin{enumerate}
  \item an iteration whose ordinary condition uses $R_l$ eventually
  satisfies that condition, or, if its current round is $a_i+2$, eventually
  satisfies
  \[
    |\mathsf{ParentBlocks}_i(R_{l+1},a_i+3)|\geq q_{l+1};
  \]
  \item in round $a_i+3$, process $p_i$ eventually satisfies
  \[
    |\mathsf{ParentBlocks}_i(R_{l+1},a_i+3)|\geq q_{l+1}
  \]
  or
  \[
    |\mathsf{ParentBlocks}_i(R_{l+1},a_i+4)|\geq q_{l+1}.
  \]
\end{enumerate}
Moreover, if two correct $R_{l+1}$-members complete at the beginnings of
consecutive rounds $c$ and $c+1$, then the earlier process eventually
satisfies its ordinary round-$c$ parent-quorum condition.
\end{lemma}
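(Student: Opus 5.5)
Argue by contradiction on the earliest stuck iteration. Suppose some correct retained, joining, or removed process has an $\texttt{AdvanceRound}$ iteration in rounds $a_i+1,\ldots,a_i+3$ that never terminates, or a correct joining process $p_j$ waits forever on $|\mathsf{LocalBlocks}_j(R_l,u+1)|\geq q_l$. Among all such stuck waits, fix one whose logical round $r$ is minimal, at process $p_i$. By minimality, every other correct process that needs to cross a round below $r$ does so. By \cref{corollary:install-transition-round-bound} and \cref{lemma:joining-install-round-cases,lemma:joining-old-member-install-rounds}, the install rounds of any two correct participants differ by at most one. So $r$ lies in the window $[t,t+5]$, and BRB Agreement holds there: \cref{lemma:brb-safety-current-step} covers rounds in which an old member still enables $R_l$, and \cref{lemma:brb-safety-joining-round} covers the single round $t+5$ that may remain.

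Since $p_i$ is stuck, it never reaches a departure point and stays active, processing certified blocks and running $\texttt{FetchAndDeliver}$. I would split on whether some correct process $p_h$ relevant to $p_i$ (a member of $R_l$ or $R_{l+1}$) crosses round $r$ or stops.

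\textbf{Case 1: some $p_h$ crosses round $r$ or stops.} If $p_h$ stops, \cref{lemma:stopping-process-transition-state} gives the state of $DAG_h$ just before its final $\texttt{FloodDAG}$: it holds either the blocks of $p_i$'s ordinary round-$r$ condition or $q_{l+1}$ blocks in $\mathsf{ParentBlocks}_h(R_{l+1},r+1)$. If $p_h$ merely crosses round $r$, the same case analysis on $a_h-a_i\in\{-1,0,1\}$ yields the same two alternatives. This analysis repeats the reasoning inside that lemma: follow a correct broadcaster's next-layer block and apply \cref{lemma:ordinary-block-parent-quorum}. Then \cref{lemma:certified-history-dissemination} delivers the blocks to $p_i$, using that $p_i\in\mathit{known}_h$ by \cref{lemma:known-coverage}. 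In the second alternative, \cref{lemma:transition-lookahead-delivery} applies. In either alternative the iteration terminates, a contradiction. For the joining wait at $u+1$, the same delivery gives $q_l$ round-$(u+1)$ blocks from $\Pi_l$, because correct old members cross $u+1$ by the ordinary $q_l$ condition.

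\textbf{Case 2: no relevant correct process crosses round $r$.} Then all of them remain active in round $r$. Minimality means that at least $q_c$ correct processes of the applicable configuration enter round $r$ and reach $\texttt{BRB-Broadcast}$. For $c=l$, these are correct old members producing $R_l$-declared blocks, which are eligible because of the aligned enabled-configs values from \cref{lemma:brb-agreement-transition-rounds}. For $c=l+1$, these are the processes that have completed the transition; in an addition, the joining process's transition block supplies the extra slot, and it certifies by \cref{lemma:joining-transition-round-bound}. \Cref{lemma:dag-quorum-dissemination} then delivers the quorum to $p_i$, a contradiction.

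\textbf{Enumerated items and the final claim.} Items (1) and (2) follow by reading off which of the two conditions discharged the wait in round $a_i+2$ and in round $a_i+3$. For the claim about consecutive completions at $c$ and $c+1$: the later process crosses round $c$ under $R_l$-type or lookahead conditions, and these show that $q_{l+1}$ eligible round-$c$ blocks eventually exist at some correct process. They are disseminated to the earlier process, which remains active since it cannot leave before round $c+1$ blocks exist, so it satisfies its ordinary round-$c$ condition.

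The main obstacle is Case 2 for the final transition round when $q_{l+1}=q_l+1$ under an addition and install rounds are split between $t$ and $t+1$. There, the early processes need $q_{l+1}$ eligible blocks, but only $q_l$ old correct members may be producing $R_{l+1}$-eligible blocks. I would resolve this by counting the joining process's transition block, or, if the joining process is Byzantine, by using the fault-bound slack $q_{l+1}\le n_l-f_l$. Failing that, I would fall back to the lookahead condition once the late processes produce round $r+1$.
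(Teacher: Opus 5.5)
Your overall strategy (a minimal stuck wait, \cref{lemma:stopping-process-transition-state} for processes that have stopped, and dissemination to reach a contradiction) matches the paper's. The gap is the split between ``some relevant $p_h$ crosses round $r$ or stops'' and ``none crosses''. Your claim that a process that \emph{merely crosses} round $r$ stores either $p_i$'s ordinary round-$r$ set or $q_{l+1}$ blocks of $\mathsf{ParentBlocks}(R_{l+1},r+1)$ is false in two relative-install-round cases.
\begin{itemize}
\item If $r=a_i+2$ and $a_h=a_i-1$, round $r$ is $p_h$'s final transition round, and $p_h$ may cross via $|\mathsf{ParentBlocks}_h(R_{l+1},r)|\geq q_{l+1}$. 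That set may contain certified $R_{l+1}$-declared blocks: Byzantine ones, which correct old members certify because they enable the pair in round $r$, or the joiner's transition block. So it need not contain $q_l$ blocks of $\mathsf{ParentBlocks}(R_l,r)$, and $p_h$ need not yet store any round-$(r+1)$ block.
\item In an addition with $q_{l+1}=q_l+1$, $r=a_i+3$ and $a_h=a_i+1$, process $p_h$ may cross with only $q_l$ blocks, which is one short.
\end{itemize}
\Cref{lemma:stopping-process-transition-state} escapes exactly these cases because a process cannot \emph{stop} before crossing a further round or meeting its departure condition. A process that has only crossed $r$ has done neither, and minimality cannot push it through its next wait, whose round index exceeds $r$. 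Your Case~2 assumes nobody crosses $r$, so the situation where some process crosses $r$ with a mismatched quorum while $p_i$ stays stuck is covered by neither case.

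The paper avoids this split. It fixes round-$r$ producers and BRB participants and shows by minimality that the producers enter $r$ and that their appends reach $\texttt{BRB-Broadcast}$. These processes keep performing round-$r$ BRB tasks after advancing, since $\mathsf{MaySendBRB}$ and $\mathsf{ValidInstance}$ read the immutable round-$r$ value. The only split needed is stopped versus not stopped.

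Several steps inside Case~2 also need repair.
\begin{itemize}
\item A producer that entered $r$ through the additional condition still waits for a round-$(r-1)$ quorum. You must extract it from a correct old member's $R_l$-declared block in its stored round-$r$ set, via \cref{lemma:ordinary-block-parent-quorum}.
\item For $R_c=R_{l+1}$, the eligible producers are not ``the processes that have completed''. $R_l$-declared blocks of old members that have not completed lie in $\mathsf{ParentBlocks}(R_{l+1},\cdot)$, because the number check is $\leq$. The right count is: correct retained members after a removal; after an addition, correct old members plus the correct joiner, which produces in round $a_i+3\in\{u+2,u+3,u+4\}$; and, for a Byzantine joiner, at least $n_l-(f_{l+1}-1)=q_{l+1}$ correct old members. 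This already gives $q_{l+1}$ producers. Your inequality $q_{l+1}\leq n_l-f_l$ is false unless $f_{l+1}=f_l+1$, and with the right count your ``main obstacle'' disappears without any lookahead fallback.
\item Old members with $a_x=u-1$ may cross $u+1$ by the additional condition, not the ordinary $q_l$ one. Your conclusion about the joining wait is still recoverable.
\item In the consecutive-completion claim, the later process's final-round condition is an $R_{l+1}$ condition, not an $R_l$-type one. In its lookahead case you need that the round-$(c+1)$ set contains a correct broadcaster's ordinary $R_{l+1}$-declared block with $q_{l+1}$ round-$c$ parents. This holds because every correct old member has completed by $c+1$.
\item The earlier process stays active because it belongs to $R_{l+1}$ and no following transition can authorize departure this early. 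It is not for the reason you give.
\end{itemize}
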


\begin{proof}
By \cref{corollary:install-transition-round-bound,%
lemma:joining-install-round-cases,lemma:joining-old-member-install-rounds},
the install transition rounds of any two correct retained, joining, or
removed processes differ by at
most one. The threshold counts used below are the following. There are at
least $q_l$ correct old members. After a removal there are at least
$q_{l+1}$ correct retained members. After an addition, if the joining
process is correct, the correct old members together with it contain at
least $q_{l+1}\leq q_l+1$ correct members; if it is Byzantine, the fault
bound for $R_{l+1}$ leaves at least $q_{l+1}$ correct old members. With
unchanged membership the two thresholds are equal. Moreover,
\cref{lemma:joining-transition-round-bound} makes every correct old member
eligible for the correct joining process's round-$(u+2)$ and
round-$(u+3)$ BRB instances.

For every round-broadcaster slot used below, BRB Agreement is already
available without invoking the later global BRB-safety theorem.
While a correct old member still enables $R_l$, it follows from
\cref{lemma:brb-safety-current-step}; for the only round in which the joining
process still enables both configurations and no correct old member still
enables $R_l$, it follows from
\cref{lemma:brb-safety-joining-round}; and once only $R_{l+1}$ can certify a
slot, it follows from the same-configuration quorum-intersection case of
\cref{lemma:brb-aligned-safety}.

If the transition adds a correct process $p_j$, we treat its wait
$|\mathsf{LocalBlocks}_j(R_l,u+1)|\geq q_l$ as a wait of round index
$u+1=a_j+1$ whose ordinary condition is
$|\mathsf{ParentBlocks}_j(R_l,u+1)|\geq q_l$, the ordinary condition of its
round-$(u+1)$ $\texttt{AdvanceRound}$ iteration, with no additional
condition. By \cref{lemma:predecessor-parent-slots}, applied to $p_j$ in
round $u+1\geq\delta_{l+1}$, every block in
$\mathsf{ParentBlocks}_j(R_l,u+1)$ has a broadcaster in
$\mathsf{members}(R_l)$, so that condition implies the wait's condition.
While the wait is pending, $\mathsf{HistoryTasks}_j$ runs, so $p_j$
processes certified-block messages and runs $\texttt{FetchAndDeliver}$.
Suppose that some such wait, or some $\texttt{AdvanceRound}$ iteration
before its process's transition-completion round, never terminates. There
are finitely many correct retained, joining, or removed processes and at
most four such waits at each. Choose a nonterminating wait with the least
logical-round index $s$, and let $p_i$ execute it.
Let $R_c$ and $q_c$ be the configuration and threshold in its ordinary
condition. Thus $R_c=R_l$ and $q_c=q_l$ unless $s=a_i+3$, in which case
$R_c=R_{l+1}$ and $q_c=q_{l+1}$.

Choose correct processes whose prescribed round-$s$ blocks can fill that
ordinary condition. If $R_c=R_l$, choose $q_l$ correct old members. If
$R_c=R_{l+1}$, choose $q_{l+1}$ correct retained members after a removal,
$q_{l+1}$ correct retained members when membership is unchanged, and, after
an addition, choose $q_{l+1}$ correct processes from the correct old
members together with the correct joining process when it is correct, or
from the correct old members when it is Byzantine. A correct joining
process is selected only in a round in which the protocol prescribes one of
its two transition appends or ordinary post-join production; its passive
round $u+1$ is never needed for a new-configuration parent set. For every
selected block, also choose correct members to perform the prescribed BRB
tasks for its declared configuration: $q_l$ correct old members for an
$R_l$-declared block and $q_{l+1}$ correct $R_{l+1}$-members for an
$R_{l+1}$-declared block. None of these chosen members is in a passive round
when the selected instance is prescribed.

First suppose that a selected producer or a member selected to perform the
BRB tasks has already stopped all protocol tasks. It stopped only after a
removal rule and sent a final finite
$\texttt{FloodDAG}$. By
\cref{lemma:stopping-process-transition-state}, its final DAG contained the
blocks required by $p_i$'s ordinary round-$s$ condition, or, when the additional condition
is available, the permitted $q_{l+1}$ round-$(s+1)$ set. Both processes had
authenticated $R_{l+1}$, so $p_i\in\mathit{known}_h$ for the stopped process
$p_h$. The final flood was therefore sent to $p_i$, even if $p_i$ had not
yet begun this iteration. Reliable queued delivery, the certificate-level
form of BRB Agreement established above,
and \cref{lemma:certified-history-dissemination} make $p_i$ insert the set
and its ancestors and satisfy the applicable condition, a contradiction.
Hence no selected process has already stopped.

Every selected producer $p_h$ eventually enters round $s$. A correct old
member enters round $t+2$ by the proof of
\cref{corollary:install-transition-round-bound}, and every round from
$t+2$ to $s-1$ is one of its transition rounds; the correct joining process
is selected only for $s\geq u+2$, and its waits begin at round $u+1$ once
$\texttt{DiscoverBegin}$ has returned. Hence a producer that never enters
round $s$ is permanently blocked in a wait of smaller round index,
contradicting the choice of $s$. If a selected producer entered $s$ through its
ordinary condition, the predecessor set on which its prescribed round-$s$
append waits is already in its DAG. The producer is not in its passive
round, so $\mathsf{MaySendBRB}_h(R_c,s)$ holds, and the lifecycle rule keeps the
already invoked append live until it reaches and passes
$\texttt{BRB-Broadcast}$.

If instead a selected producer $p_h$ entered $s$ through the additional
condition, it already stores $q_{l+1}$ certified round-$s$ blocks. When
$R_c=R_{l+1}$, dissemination of that set and its ancestors directly
satisfies $p_i$'s ordinary condition, contradicting the choice of $p_i$.
It remains to consider this event when
$R_c=R_l$. It can occur only for a producer whose install transition round
is one less than $a_i$. Its certified round-$s$ set contains a block
produced by a correct old member and declared under $R_l$: after an addition, exclude the
at most $f_{l+1}$ Byzantine slots and the single joining-process slot; after
a removal or unchanged membership, the applicable threshold similarly
leaves a correct old-member slot. No correct old member has completed this
transition before producing a block in this case, where the producer's
install transition round is one less than $a_i$. The chosen ordinary block has exactly
$q_l$ round-$(s-1)$ parents. By
\cref{lemma:ordinary-block-parent-quorum}, those parents belong to
$\mathsf{ParentBlocks}_h(R_l,s-1)$. Dissemination of its causal history
therefore discharges the predecessor waits of the selected old-member
round-$s$ appends. Thus the selected appends reach and pass their
$\texttt{BRB-Broadcast}$ calls unless the correct process whose DAG contains
the certified round-$s$ set has already satisfied $p_i$'s condition.

The immutable round-$s$ enabled-configs values make every member selected to
perform the BRB tasks eligible for the selected instance. For every selected
producer $p_x$, advancing a local round does not change
$\mathsf{MaySendBRB}_x(R_c,s)$ because it consults the immutable historical
round-$s$ value. The BRB handlers likewise apply
$\mathsf{ValidInstance}_x(s,\mu)$ to the immutable round-$s$ value. If a
selected producer or a member selected
to perform the BRB tasks is the first one that stops all protocol tasks while
a selected certificate is missing, apply the same final-$\texttt{FloodDAG}$
argument above. The strengthened
\cref{lemma:stopping-process-transition-state} applies independently of
whether the stop precedes or follows the start of $p_i$'s wait, and again
contradicts nontermination.

In the remaining case, every chosen member continues its prescribed BRB
tasks while the selected certificates are missing. For each selected
proposal, the chosen members send the required echoes and votes, so its
certificate is formed. The $\mathsf{certified\text{-}block}$ messages sent
when certificates are assembled and blocks are inserted, together with
$\texttt{FloodDAG}$, deliver the selected blocks and their ancestors to
$p_i$. When $p_i$ is the joining process, each such sender eventually
processes $R_{l+1}$, after which $p_i$ belongs to its known-identity set by
\cref{lemma:known-coverage}, and the $\texttt{FloodDAG}$ invoked by
$\texttt{UpdateKnown}$ resends every block it stored earlier.
BRB Agreement excludes a conflicting block in every
selected slot. Hence \cref{lemma:certified-history-dissemination} makes
$p_i$ obtain the blocks required by its ordinary parent-quorum condition,
again a contradiction. Thus, if the transition adds a correct process
$p_j$, its wait $|\mathsf{LocalBlocks}_j(R_l,u+1)|\geq q_l$ terminates, and
every $\texttt{AdvanceRound}$ iteration before its process's
transition-completion round terminates.

Finally, suppose correct $R_{l+1}$-members $p_x$ and $p_y$ complete at the
beginnings of consecutive rounds $c$ and $c+1$, respectively. Process $p_y$
executes its final transition-round iteration in round $c$. Apply the result
just proved to that iteration. If it satisfies its ordinary parent-quorum
condition, dissemination
directly supplies $p_x$'s ordinary round-$c$ condition. If it uses the
$q_{l+1}$ admissible round-$(c+1)$ blocks permitted by the additional
transition-round condition and $p_x$ has already advanced to
$c+1$, then $p_x$'s round-$c$ condition has already terminated. Otherwise,
that round-$(c+1)$ set contains a block with a correct broadcaster: after
an addition, exclude at most $f_{l+1}$ Byzantine slots and the joining slot;
after a removal or unchanged membership, $q_{l+1}$ exceeds the applicable
old-configuration fault bound as shown above. Every correct old member has
completed before producing in round $c+1$, so this is an ordinary
$R_{l+1}$-declared block. By
\cref{lemma:ordinary-block-parent-quorum}, its $q_{l+1}$ parents belong to
$\mathsf{ParentBlocks}_x(R_{l+1},c)$ after their causal history is
disseminated. Thus $p_x$'s ordinary round-$c$ condition eventually holds.
Correct $R_{l+1}$-members are not removed by this transition, and
\cref{lemma:following-reconfiguration-earliest-round} prevents a following
transition from authorizing a departure during these rounds.
This proves every conclusion.
\end{proof}

\begin{lemma}[Every transition-round $\texttt{AdvanceRound}$ iteration whose
parent-quorum threshold is $q_l$ eventually terminates]
\label{lemma:transition-predecessor-quorum-progress}
Fix a valid transition $R_l\rightarrow R_{l+1}$ in an execution satisfying
the issuance-rule assumption and the configuration-level induction hypothesis for $R_l$. Assume
that an earliest install transition round $t$ among correct processes in
$\mathsf{members}(R_l)$ exists. If the transition adds a correct process
$p_j$, also assume that
$\texttt{DiscoverBegin}_j(R_l,R_{l+1})$ returns a round $u$.
Then every correct retained, joining, or removed process $p_i$ with install
transition round $a_i$ whose current
$\texttt{AdvanceRound}$ iteration has an ordinary $q_l$ condition eventually
advances. It either satisfies that condition or, in round $a_i+2$, satisfies
the additional transition-round condition with threshold $q_{l+1}$. For a correct joining
process with install transition round $u$, both $q_l$ waits in its prescribed
round-$(u+2)$ and round-$(u+3)$
$\texttt{DAG-Transition-Append}$ invocations eventually become true, and
both invocations reach $\texttt{BRB-Broadcast}$.
\end{lemma}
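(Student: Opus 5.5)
The statement has two parts. The first, about $\texttt{AdvanceRound}$ iterations with an ordinary $q_l$ condition, is essentially clause~(1) of \cref{lemma:transition-advancement-alternatives}. I would therefore reduce to that lemma and spend the real work on the joining process's two $\texttt{DAG-Transition-Append}$ waits.

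For a correct process $p_i$ with install transition round $a_i$, the ordinary condition uses $R_l$ and $q_l$ exactly in rounds up to $a_i+2$. Round $a_i+3$ switches to $R_{l+1}$; before installation the condition is $q_l$ by definition. Rounds below $a_i+1$ are not covered by \cref{lemma:transition-advancement-alternatives}. For them I would argue as in the proof of \cref{corollary:install-transition-round-bound}. Every round up to $t+1$ in which the additional condition is unavailable is handled by \cref{corollary:round-advancement}. Its Agreement premise comes from \cref{lemma:brb-safety-current-step}, and its $q_l$ producers from clause~(iv) of the induction hypothesis, propagated round by round. For rounds $a_i+1$ and $a_i+2$, I would cite clause~(1) of \cref{lemma:transition-advancement-alternatives} directly. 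That clause yields either the ordinary condition or, only in round $a_i+2$, the additional condition $|\mathsf{ParentBlocks}_i(R_{l+1},a_i+3)|\geq q_{l+1}$. This is exactly the dichotomy in the statement. The joining process's initial wait in round $u+1$ is also covered there.

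For the transition appends, let $p_j$ be the correct joining process with install round $u$. The round-$(u+2)$ append waits for $q_l$ blocks in $\mathsf{ParentBlocks}_j(R_{l+1},u+1)$. The wait $|\mathsf{LocalBlocks}_j(R_l,u+1)|\geq q_l$ terminates by \cref{lemma:transition-advancement-alternatives}. I would show these blocks are eligible under $R_{l+1}$. Each is admissible, so its broadcaster lies in $\mathsf{Recipients}(E_B)$. By \cref{lemma:transition-round-certificate-scope} it declares $R_l$ or $R_{l+1}$, so its configuration number is at most $l+1$. Hence all of them lie in $\mathsf{ParentBlocks}_j(R_{l+1},u+1)$, and the wait holds.

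The round-$(u+3)$ append needs $q_l$ eligible round-$(u+2)$ blocks. Here I would use the round-$(u+2)$ iteration of $p_j$, which terminates by \cref{lemma:transition-advancement-alternatives}. If it used the ordinary condition, $p_j$ holds $q_l$ blocks in $\mathsf{ParentBlocks}_j(R_l,u+2)\subseteq\mathsf{ParentBlocks}_j(R_{l+1},u+2)$, since eligibility under $R_l$ implies eligibility under $R_{l+1}$. If it used the additional condition, $p_j$ holds $q_{l+1}$ eligible round-$(u+3)$ blocks. By the correct-broadcaster counting in \cref{lemma:stopping-process-transition-state}, one of them has a correct old-member broadcaster and declares $R_l$ with $q_l$ parents, so \cref{lemma:ordinary-block-parent-quorum} and ancestor closure give the needed $q_l$ round-$(u+2)$ blocks.

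Reaching $\texttt{BRB-Broadcast}$ then follows because $\mathsf{MaySendBRB}_j(R_{l+1},s)$ holds for $s\in\{u+2,u+3\}$. The value $\mathit{enabled\text{-}configs}_j[s]=\{R_l,R_{l+1}\}$ is assigned by $\texttt{InstallTransition}$, $p_j\in\Pi_{l+1}$, and only round $u+1$ is passive.

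I expect the main obstacle to be the round-$(u+3)$ wait in the branch where $p_j$ advanced via the additional condition. In that branch I must verify the counting ($q_{l+1}>f_{l+1}+1$ for an addition) and check that the correct old-member round-$(u+3)$ block still declares $R_l$. That holds because its broadcaster's install round is at least $u-1$ by \cref{lemma:joining-old-member-install-rounds}, so it completes no earlier than $u+3$; the boundary case of completing exactly at $u+3$ has to be handled separately.
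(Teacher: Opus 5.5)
Your proposal follows the paper's proof essentially step for step. The first part is conclusion~1 of \cref{lemma:transition-advancement-alternatives}. The round-$(u+2)$ wait follows by inclusion into $\mathsf{ParentBlocks}_j(R_{l+1},u+1)$. The round-$(u+3)$ wait follows by splitting on how $p_j$ left round $u+2$ and extracting $q_l$ eligible parents from a correct old member's round-$(u+3)$ block.

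The boundary case you flag is not exotic. When $t=u-1$, all correct old members may install at $u-1$, and then none of their round-$(u+3)$ blocks declares $R_l$. It still closes in one line. Such a block is an ordinary $R_{l+1}$-declared block, and by \cref{lemma:ordinary-block-parent-quorum} its $q_{l+1}\geq q_l$ parents lie in $\mathsf{ParentBlocks}_j(R_{l+1},u+2)$. This is why the paper claims only ``at least $q_l$ round-$(u+2)$ parents'' rather than an $R_l$ declaration.

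Two smaller points. Your separate treatment of rounds below $a_i+1$ is unnecessary, since $a_i$ is fixed only once the round-$a_i$ wait has completed. The paper also gets the round-$(u+2)$ wait more directly, without the certificate-scope lemma. It uses $p_j$'s ordinary round-$(u+1)$ condition together with $\mathsf{ParentBlocks}_j(R_l,u+1)\subseteq\mathsf{ParentBlocks}_j(R_{l+1},u+1)$.
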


\begin{proof}
Apply conclusion~1 of
\cref{lemma:transition-advancement-alternatives}. If the additional
condition is used, \cref{lemma:transition-next-layer-advancement} also
discharges the final transition-round iteration.

For a correct joining process, \cref{lemma:transition-advancement-alternatives}
shows that the wait preceding $\mathsf{CatchUpTasks}_j$ terminates and that
its round-$(u+1)$ iteration terminates through
$|\mathsf{ParentBlocks}_j(R_l,u+1)|\geq q_l$, the additional condition being
unavailable in round $u+1$. Since
$\mathsf{ParentBlocks}_j(R_l,u+1)\subseteq\mathsf{ParentBlocks}_j(R_{l+1},u+1)$,
the $q_l$ wait of the round-$(u+2)$ transition append is already satisfied
when that append starts. For the round-$(u+3)$ append, either its $q_l$
admissible round-$(u+2)$ blocks are delivered directly or at least
$q_{l+1}$ admissible round-$(u+3)$ blocks are delivered. The latter set
contains a block produced by a correct old member or by the correct joining
process, with at least $q_l$ round-$(u+2)$ parents. Ancestor
closure and the certified parent-count check therefore supply the pending
append's $q_l$ set. In both rounds the enabled-configs premise for the BRB
call follows from \cref{lemma:joining-transition-round-bound}, so each
invocation reaches $\texttt{BRB-Broadcast}$.
\end{proof}

\begin{lemma}[Every final transition-round $\texttt{AdvanceRound}$ iteration
eventually terminates]
\label{lemma:transition-new-quorum-progress}
Fix a valid transition $R_l\rightarrow R_{l+1}$ in an execution satisfying
the issuance-rule assumption and the configuration-level induction hypothesis for $R_l$. Assume
that an earliest install transition round $t$ among correct processes in
$\mathsf{members}(R_l)$ exists. If the transition adds a correct process
$p_j$, also assume that
$\texttt{DiscoverBegin}_j(R_l,R_{l+1})$ returns a round $u$.
Then every correct retained, joining, or removed process $p_i$ with install
transition round $a_i$ eventually advances from its final transition round
$a_i+3$. It either DAG-delivers
$q_{l+1}$ admissible blocks in that round or DAG-delivers $q_{l+1}$
admissible blocks in round $a_i+4$ and uses the additional condition.
Moreover, if two correct $R_{l+1}$-members complete at the beginnings of
consecutive rounds $c$ and $c+1$, then the earlier process eventually
satisfies its ordinary round-$c$ parent-quorum condition.
\end{lemma}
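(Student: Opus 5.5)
This lemma is essentially a repackaging of conclusion~2 and the final clause of \cref{lemma:transition-advancement-alternatives}, combined with \cref{lemma:transition-next-layer-advancement}. The plan is to fix a correct retained, joining, or removed process $p_i$ with install transition round $a_i$. First, show that $p_i$ actually reaches round $a_i+3$. Then show that its round-$(a_i+3)$ iteration terminates via one of the two permitted alternatives.

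\textbf{Reaching round $a_i+3$.} By \cref{lemma:transition-predecessor-quorum-progress}, every iteration of $p_i$ whose ordinary threshold is $q_l$ eventually advances. These are the iterations in rounds $a_i+1$ and $a_i+2$, and for a joining process also the wait preceding $\mathsf{CatchUpTasks}_j$. The round-$(a_i+2)$ iteration may advance through the additional condition, with $q_{l+1}$ admissible round-$(a_i+3)$ blocks in $\mathsf{ParentBlocks}_i(R_{l+1},a_i+3)$. In that case \cref{lemma:transition-next-layer-advancement} shows that the next serialized iteration, with frozen round $a_i+3$ and selected configuration $R_{l+1}$, is satisfied immediately by the same stored set. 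So in this case $p_i$ advances from $a_i+3$ having DAG-delivered $q_{l+1}$ admissible blocks in that round. Otherwise $p_i$ reaches round $a_i+3$ in the ordinary way. Its iteration there selects $R_c=R_{l+1}$, because $\mathit{transition}_i=\langle R_l,R_{l+1},a_i\rangle$ and $r=a_i+3$.

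\textbf{Terminating in round $a_i+3$.} Apply conclusion~2 of \cref{lemma:transition-advancement-alternatives}. It says $p_i$ eventually satisfies either $|\mathsf{ParentBlocks}_i(R_{l+1},a_i+3)|\geq q_{l+1}$ or $|\mathsf{ParentBlocks}_i(R_{l+1},a_i+4)|\geq q_{l+1}$. The first is the ordinary condition. The second is the additional transition-round condition, which is available because $r=a_i+3\in\{a_i+2,a_i+3\}$. Either disjunct ends the \textbf{await either} of $\texttt{AdvanceRound}_i$ and assigns $r_i\gets a_i+4$. Since $\mathsf{ParentBlocks}_i$ contains only blocks in $DAG_i$, and only admissible blocks are inserted, these counted blocks are DAG-delivered admissible blocks. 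Agreement (\cref{lemma:brb-aligned-safety}) ensures they occupy distinct slots. This gives the dichotomy in the statement.

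\textbf{Consecutive completion rounds.} The ``moreover'' clause is verbatim the last conclusion of \cref{lemma:transition-advancement-alternatives}, so I would cite it directly. The one point to double-check is that it applies to any two correct $R_{l+1}$-members, including a correct joining process. That holds because the lemma was stated for correct retained, joining, or removed processes, under the same assumptions that $t$ exists and that $u$ is returned.

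I expect the main obstacle to be bookkeeping rather than new mathematics. One must confirm that the serialized loop never skips round $a_i+3$ when the additional condition was used in round $a_i+2$; this is handled by \cref{lemma:transition-next-layer-advancement}. One must also confirm that a removed $p_i$ does not stop before this iteration. The latter holds because its departure rule is invoked only at the beginning of round $a_i+4$, i.e.\ after this iteration has completed.
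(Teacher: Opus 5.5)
Your proposal is correct and follows the paper's proof, which simply applies conclusion~2 and the final conclusion of \cref{lemma:transition-advancement-alternatives} and invokes \cref{lemma:transition-next-layer-advancement} in the additional-condition case. One small correction: the counted blocks occupy distinct slots because \texttt{AddAndDeliver} admits at most one block per round-broadcaster slot into $DAG_i$, not because of BRB Agreement.
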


\begin{proof}
Apply conclusion~2 and the final conclusion of
\cref{lemma:transition-advancement-alternatives}. In the additional-condition
case, \cref{lemma:transition-next-layer-advancement} reaches the completion
boundary.
\end{proof}

\begin{corollary}[Every correct retained, joining, or removed process reaches
its transition-completion round]
\label{corollary:transition-round-progress}
Under the premises of
\cref{lemma:transition-predecessor-quorum-progress}, every correct retained,
joining, or removed process with install transition round $a_i$ eventually
reaches the beginning of round $a_i+4$. A correct process removed by the
transition then continues the communication, BRB, fetch, and DAG tasks
prescribed by its departure procedure.
\end{corollary}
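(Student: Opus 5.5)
The plan is to chain the per-round termination results already proved for the transition rounds, and then read off the removed process's obligations directly from \cref{alg:reconfiguration}. Fix a correct retained, joining, or removed process $p_i$ with install transition round $a_i$.

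First I would show that $p_i$ actually reaches round $a_i+1$ with $\mathit{transition}_i$ set. For an old member, $\texttt{InstallTransition}_i$ is invoked while completing round $a_i$, and the $\texttt{AdvanceRound}$ iteration of round $a_i$ then writes $r_i\gets a_i+1$. For a correct joining process, \cref{lemma:transition-advancement-alternatives} shows that the wait $|\mathsf{LocalBlocks}_j(R_l,u+1)|\geq q_l$ terminates. That gives $r_j=u+1=a_j+1$ and starts $\mathsf{CatchUpTasks}_j$, so its $\texttt{AdvanceRound}$ loop is running.

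Next I would induct over the three iterations with current rounds $a_i+1$, $a_i+2$, $a_i+3$.
\begin{itemize}
\item \emph{Rounds $a_i+1$ and $a_i+2$.} Here the ordinary condition uses $q_l$, so \cref{lemma:transition-predecessor-quorum-progress} makes the iteration advance. In round $a_i+2$ it may instead use the additional condition; in that case \cref{lemma:transition-next-layer-advancement} already carries $p_i$ to the beginning of round $a_i+4$.
\item \emph{Round $a_i+3$.} The ordinary condition uses $R_{l+1}$. \cref{lemma:transition-new-quorum-progress} shows that $p_i$ eventually delivers $q_{l+1}$ admissible blocks in round $a_i+3$, or $q_{l+1}$ admissible blocks in round $a_i+4$ and uses the additional condition. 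Either way it writes $r_i\gets a_i+4$, and the beginning-of-round rule invokes $\texttt{FinishTransition}_i(R_{l+1})$.
\end{itemize}
Because the serialized loop only ever writes $r+1$, no round is skipped. The premises of those lemmas (existence of $t$, and the return of $\texttt{DiscoverBegin}$ for a correct joiner) are exactly the assumed premises of \cref{lemma:transition-predecessor-quorum-progress}.

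For the second sentence I would inspect the removal branch of the beginning-of-round handler in \cref{alg:reconfiguration}. After $\texttt{FinishTransition}$, it stops only new block production, round advancement, leader election, and finalization. While the $\mathsf{AdmissibleIDs}$ wait is pending, it keeps running BRB handlers, certified-block receipt, $\texttt{FetchAndDeliver}$, repeated $\texttt{FloodDAG}$, and the state-request handler. It stops these only after the final $\texttt{FloodDAG}$.

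The main obstacle is checking that the iteration premises hold uniformly for the joining process, whose loop starts at $u+1$ rather than by ordinary advancement. I would handle this by identifying its pre-$\mathsf{CatchUpTasks}$ wait with a round-$(u+1)$ iteration, exactly as in the proof of \cref{lemma:transition-advancement-alternatives}.
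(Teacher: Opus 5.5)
Your proposal is correct and follows essentially the same route as the paper's proof. The paper applies \cref{lemma:transition-predecessor-quorum-progress,lemma:transition-new-quorum-progress} and notes that the serialized loop increments one round at a time. Your additional steps are already supplied by those lemmas' proofs and by \cref{alg:reconfiguration}: entering round $a_i+1$, the explicit appeal to \cref{lemma:transition-next-layer-advancement}, and reading the removal branch for the second sentence.
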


\begin{proof}
Apply \cref{lemma:transition-predecessor-quorum-progress,%
lemma:transition-new-quorum-progress}.
The serialized algorithm increments one round at a time, so the
beginning-of-round rule is executed at $a_i+4$.
\end{proof}

\subsubsection{Transition Completion}

\begin{lemma}[Once a correct old member installs the transition and any
correct joining process completes discovery, every correct
$R_{l+1}$-member completes the transition]
\label{lemma:retained-joining-completion}
Fix a valid transition $R_l\rightarrow R_{l+1}$ in an execution satisfying
the issuance-rule assumption and the configuration-level induction hypothesis for $R_l$. Assume
that the earliest install transition round $t$ among correct processes in
$\mathsf{members}(R_l)$ exists. If the transition adds a correct process
$p_j$, also assume that
$\texttt{DiscoverBegin}_j(R_l,R_{l+1})$ returns a round $u$.
Then every correct process $p_i\in\mathsf{members}(R_{l+1})$ eventually
installs $R_{l+1}$, completes this transition, and continues its prescribed
protocol tasks with $\mathit{local}_i=R_{l+1}$.
\end{lemma}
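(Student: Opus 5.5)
The plan is to reduce the lemma to \cref{corollary:transition-round-progress} and then read off what the transition-completion handlers do. Every correct $p_i\in\mathsf{members}(R_{l+1})$ is either retained or joining. A correct old member has an install transition round $a_i\in\{t,t+1\}$ by \cref{corollary:install-transition-round-bound}. A correct joining process $p_j$ has install transition round $u$: by assumption $\texttt{DiscoverBegin}_j$ returns $u$, and \cref{lemma:transition-advancement-alternatives} shows that the wait $|\mathsf{LocalBlocks}_j(R_l,u+1)|\geq q_l$ terminates. After that, $p_j$ sets $r_j\gets u+1$, starts $\mathsf{CatchUpTasks}_j$, and has $\mathit{transition}_j=\langle R_l,R_{l+1},u\rangle$. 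So every such process has a well-defined install transition round, and the premises of \cref{lemma:transition-predecessor-quorum-progress} hold.

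\Cref{corollary:transition-round-progress} then says every such process reaches the beginning of round $a_i+4$. The serialized $\texttt{AdvanceRound}_i$ loop increments one round at a time, so the beginning-of-round rule fires exactly at $a_i+4$, before ordinary block production. For a retained process, the reconfiguration handler of \cref{alg:reconfiguration} invokes $\texttt{FinishTransition}_i(R_{l+1})$. This sets $\mathit{local}_i=R_{l+1}$ and clears $\mathit{pending}_i$ and $\mathit{transition}_i$. Because $p_i\in\mathsf{members}(R_{l+1})$, the departure branch is skipped. For the joining process, the handler of \cref{alg:joining-process} invokes $\texttt{FinishTransition}_j$, sets $\mathit{joining}_j=\mathsf{false}$, and starts $\mathsf{PostJoinTasks}_j$. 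In both cases the process has installed $R_{l+1}$ and continues ordinary production, round advancement, and election with $\mathit{local}_i=R_{l+1}$.

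It remains to show that nothing prevents the process from continuing its prescribed tasks. The only stopping rule in the protocol is the removal branch, which applies to members of $\mathsf{members}(R_l)\setminus\mathsf{members}(R_{l+1})$ for this transition. A later removal of $p_i$ could only come from $R_{l+2}$. By \cref{lemma:following-reconfiguration-earliest-round} and \cref{lemma:earliest-install-lower-bound}, that removal cannot take effect before $p_i$ has installed $R_{l+1}$, since $\mathsf{earliest}(R_{l+2})\geq t+6>a_i+4$. I also plan to note that the $\texttt{AdvanceRound}$ guard $\mathit{local}_i\in\mathit{enabled\text{-}configs}_i[a_i+4]$ holds, because round $a_i+4$ carries the pair, which contains $R_{l+1}$. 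Hence the loop is not blocked at the completion round.

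The main obstacle is the joining process. I must check that its install transition round equals the round passed to $\texttt{InstallTransition}_j$, so that its beginning-of-round handler sees the matching $\mathit{transition}_j$. I must also check that $\mathsf{CatchUpTasks}_j$ actually supplies the $\texttt{AdvanceRound}$ iterations covered by the corollary, from $u+1$ through $u+3$. I will discharge both by pointing to the explicit code order in \cref{alg:joining-process} and to the joining-wait clause of \cref{lemma:transition-advancement-alternatives}.
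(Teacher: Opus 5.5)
Your proposal is correct and follows the paper's argument. Install rounds come from \cref{corollary:install-transition-round-bound} for old members and $a_j=u$ for the joining process, advancement through round $a_i+3$ comes from \cref{corollary:transition-round-progress}, and the applicable beginning-of-round handler then invokes $\texttt{FinishTransition}_i(R_{l+1})$ at $a_i+4$; your additional checks (departure branch skipped, $a_i+4\leq t+5<t+6\leq\mathsf{earliest}(R_{l+2})$, and the $\texttt{AdvanceRound}$ guard at $a_i+4$) are sound but more detailed than the paper's own proof.
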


\begin{proof}
Every required round advancement occurs by
\cref{corollary:transition-round-progress}. By
\cref{corollary:install-transition-round-bound}, every correct retained
member invokes $\texttt{InstallTransition}$ with an install transition
round $a_i\in\{t,t+1\}$. If the transition adds a correct process $p_j$,
the joining procedure invokes $\texttt{InstallTransition}$ with
$a_j=u$, where $u$ is returned by $\texttt{DiscoverBegin}$. The progress
corollary makes each process advance through round $a_i+3$. At the
beginning of round $a_i+4$, the applicable protocol rule invokes
$\texttt{FinishTransition}_i(R_{l+1})$, which installs $R_{l+1}$ and
completes the transition.
\end{proof}

\begin{lemma}[A removed correct process eventually reaches the point at which
the protocol permits it to leave]
\label{lemma:removed-process-departure}
Fix a valid transition $R_l\rightarrow R_{l+1}$ in an execution satisfying
the issuance-rule assumption and the configuration-level induction hypothesis for $R_l$, and
assume that its earliest install transition round exists.
Let
\[
  p_i\in
  \mathsf{members}(R_l)\setminus\mathsf{members}(R_{l+1})
\]
be correct. Let $t$ be the earliest correct old-member install transition
round. By \cref{corollary:install-transition-round-bound}, $p_i$ invokes
$\texttt{InstallTransition}$ in some round $a_i\in\{t,t+1\}$. Eventually,
\[
\left|
\left\{
  \mathsf{ID}(\mu)\;\middle|\;
  \begin{array}{l}
  B=\langle m,a_i+5,\mu,C\rangle\in DAG_i[a_i+5],\\
  \mathsf{ID}(\mu)\in\mathsf{members}(R_{l+1}),\quad
  \mathsf{Admissible}_i(B)
  \end{array}
\right\}
\right|\geq f_{l+1}+1.
\]
Once this condition holds, $p_i$ stops its remaining communication and DAG
tasks and may leave.
\end{lemma}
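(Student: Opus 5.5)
The plan is to reduce the departure condition to the transition-progress results already established for the retained members of $R_{l+1}$. Then use the dissemination lemma to move a quorum of their round-$(a_i+5)$ blocks into $DAG_i$. First, by \cref{corollary:install-transition-round-bound}, every correct old member has install transition round in $\{t,t+1\}$. If the transition adds a correct process, \cref{lemma:joining-install-round-cases,lemma:joining-old-member-install-rounds} show that its round $u$ differs by at most one from $t$. Hence every correct $R_{l+1}$-member $p_x$ has $a_x\le a_i+1$, so its transition-completion round satisfies $a_x+4\le a_i+5$. By \cref{corollary:transition-round-progress} and \cref{lemma:retained-joining-completion}, each such $p_x$ eventually installs $R_{l+1}$ and continues ordinary operation. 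Since a removal leaves at least $q_{l+1}$ correct retained members, at least $q_{l+1}$ correct $R_{l+1}$-members eventually enter round $a_i+5$ with $R_{l+1}$ installed. There they invoke the prescribed $R_{l+1}$-declared $\texttt{DAG-Append}$. Its predecessor wait is discharged by the final-transition-round progress of \cref{lemma:transition-new-quorum-progress}, together with its consecutive-rounds clause for a process that completes one round earlier. That clause covers a process that completed at round $a_i+4$ and must satisfy its ordinary round-$(a_i+4)$ condition.

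Second, these round-$(a_i+5)$ blocks must be certified. Here I would invoke \cref{lemma:correct-block-dissemination} with $\mathcal G$ the set of correct $R_{l+1}$-members above. Round $a_i+5$ is singleton-enabled $\{R_{l+1}\}$ at all of them, or possibly the pair at a member with $a_x=a_i+1$. In either case $\mathsf{MaySendBRB}_x(R_{l+1},a_i+5)$ holds. BRB Agreement in round $a_i+5$ comes from \cref{lemma:brb-safety-current-step}, \cref{lemma:brb-safety-joining-round}, or the same-configuration case of \cref{lemma:brb-aligned-safety}, depending on whether some correct old member still enables $R_l$ there. No member of $\mathcal G$ leaves during these rounds, because \cref{lemma:following-reconfiguration-earliest-round} gives $\mathsf{earliest}(R_{l+2})\ge t+6$ and so no following transition can install before round $a_i+5$. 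The certificates therefore form.

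Third, I would deliver at least $f_{l+1}+1$ of these blocks to $p_i$. The forming process $p_h$ has processed $R_{l+1}$, so $p_i\in\mathit{known}_h$ by \cref{lemma:known-coverage}, and the certified-block flood reaches $p_i$. The parents of these blocks and their ancestors are stored at their correct producers. Producers keep flooding, so \cref{lemma:certified-history-dissemination} applies to the finite ancestor-closed set $\mathcal D$. Its activity premise at $p_i$ holds because $p_i$, while its departure wait is pending, continues receiving certified blocks and running $\texttt{FetchAndDeliver}$. This is exactly the self-consistency requirement stated in that lemma for a removed recipient. The blocks have valid certified metadata and Agreement excludes conflicting slots, so $p_i$ DAG-delivers them; they are admissible and have broadcasters in $\mathsf{members}(R_{l+1})$. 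Since $q_{l+1}\ge f_{l+1}+1$, the displayed set reaches the required size. The rule in \cref{alg:reconfiguration} then performs the final $\texttt{FloodDAG}$ and stops.

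The main obstacle is the circularity in the second step. Certification of the round-$(a_i+5)$ blocks may need echo or vote shares, and \cref{lemma:correct-block-dissemination} assumes the required processes stay active. I would show that the only processes that might stop are removed ones, including $p_i$ itself and possibly a Byzantine-free set of other removed members. None of them is needed as an $R_{l+1}$ voter, because removed processes are not in $\mathsf{members}(R_{l+1})$ and $\mathcal G$ consists solely of retained or joining correct members. If $p_i$'s own earlier append is still pending, the proof conventions allow stopping it once the departure condition holds, so no dependency on $p_i$ remains.
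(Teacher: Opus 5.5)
\textbf{Gap in the certification step.} Your second step applies \cref{lemma:correct-block-dissemination} to all $q_{l+1}$ round-$v$ appends, $v=a_i+5$, and concludes that the certificates form. That lemma requires every member of $\mathcal G$ to stay active until each certificate has formed, and you never establish this. Your ``main obstacle'' paragraph does not close the gap. The bound $\mathsf{earliest}(R_{l+2})\geq t+6$ only limits the logical round at which a following transition can be installed. Your claim that the processes that might stop are not $R_{l+1}$ voters is false: a member of $R_{l+1}$ removed by $R_{l+1}\rightarrow R_{l+2}$ is an $R_{l+1}$ voter. Rounds are process-local, so such a process can advance past round $v$ on other processes' blocks, complete the next transition, satisfy its own departure condition, and stop before a particular round-$v$ certificate has formed. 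If its share was needed (for instance, when Byzantine members withhold their votes for that block), the certificate never forms. So your intermediate claim can fail, even though the lemma's conclusion still holds.

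\textbf{The missing case split.} The paper routes the argument through the dichotomy built into \cref{corollary:round-advancement}. In the first case, some correct retained process already stores $q_{l+1}$ admissible round-$v$ blocks in distinct $\mathsf{members}(R_{l+1})$ slots, either because it entered round $v$ through the additional transition-round condition or because it advanced from round $v$. You then disseminate to $p_i$ the at least $q_{l+1}-f_{l+1}\geq f_{l+1}+1$ correct-broadcaster blocks of that set, with their ancestors. This uses BRB Integrity for those round-$v$ slots, and Agreement through $v-1$ obtained from $p_i$'s own enabled-configs values. In the second case, no correct retained process advances from $v$, so none can reach a later departure point, all remain active, and the certificates do form.

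\textbf{Other issues in your first step.}
\begin{itemize}
\item You need the same case split there. A member with $a_x=a_i+1$ may enter round $v$ through the additional condition, so \cref{lemma:transition-new-quorum-progress} does not discharge its round-$v$ predecessor wait; instead, that member already holds a round-$v$ set.
\item You omit the case $a_i=t+1$. There $v=t+6$, and retained members that complete at $t+4$ or $t+5$ must still advance ordinarily through round $t+5$, and through $t+4$ if all complete there. The lemmas you cite cover only transition rounds and the one-round completion gap, so you also need \cref{corollary:round-advancement} for those rounds, again with the dichotomy.
\item This transition removes $p_i$ and therefore adds no process, so the joining-process lemmas you invoke are vacuous.
\end{itemize}
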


\begin{proof}
Process $p_i$ installs $R_{l+1}$ and completes its transition at the
beginning of round $a_i+4$. It stops starting new block-production
operations and stops round advancement, leader-election, and finalization
tasks. While its departure wait is pending, every block-production operation
invoked earlier remains live until it invokes $\texttt{BRB-Broadcast}$.

It also continues processing certified-block messages, running
$\texttt{FetchAndDeliver}$, and performing the remaining communication and
DAG tasks prescribed during the departure wait. Put $v=a_i+5$. Because this
transition removes a process, every $R_{l+1}$-member is retained, and at
least $q_{l+1}$ of them are correct. By
\cref{lemma:retained-joining-completion}, all correct retained members have
completed by the beginning of round $t+5$.


We first show that some correct retained process stores at least
$q_{l+1}$ admissible round-$v$ blocks in distinct $R_{l+1}$-member slots.
If $v=t+5$, let $c_+$ be the latest completion round of a correct retained
process. By \cref{corollary:install-transition-round-bound},
$c_+\in\{t+4,t+5\}$. Suppose first that $c_+=t+5$. The processes that
complete in round $t+5$ enter that round by
\cref{corollary:transition-round-progress}, while the last conclusion of
\cref{lemma:transition-new-quorum-progress} makes every
process that completed in round $t+4$ satisfy its ordinary round-$(t+4)$
condition and enter round $t+5$. If any correct retained process uses the
additional condition to enter round $t+5$, it already stores the required
round-$v$ set. Otherwise every correct retained process enters round $v$
with its ordinary round-$(v-1)$ set, so at least $q_{l+1}$ correct
prescribed round-$v$ appends reach $\texttt{BRB-Broadcast}$.

Suppose instead that $c_+=t+4$. The removed correct process $p_i$ has an
assigned round-$(t+4)$ enabled-configs value containing $R_l$, so
\cref{lemma:brb-safety-current-step} gives Agreement through that round.
Moreover, every correct retained process has completed this transition, and
\cref{lemma:following-reconfiguration-earliest-round,lemma:no-premature-pending}
keeps a following transition from being installed before round $t+6$.
Thus the additional condition is unavailable in the selected
round-$(t+4)$ iterations. If a correct retained process used the additional
condition to enter round $t+4$, it stores $q_{l+1}$ admissible
round-$(t+4)$ blocks. Dissemination of that set and its ancestors makes
every correct retained process satisfy its ordinary round-$(t+4)$
condition, enter round $v$, and make its prescribed round-$v$ append reach
$\texttt{BRB-Broadcast}$. Otherwise every correct retained process entered
round $t+4$ with its ordinary round-$(t+3)$ set, and, because completion
precedes block production, at least $q_{l+1}$ correct prescribed
$R_{l+1}$-declared round-$(t+4)$ appends reach
$\texttt{BRB-Broadcast}$. Applying \cref{corollary:round-advancement} to
each correct retained process makes all of them enter round $v$ with the
ordinary round-$(v-1)$ set, so at least $q_{l+1}$ correct prescribed
round-$v$ appends reach $\texttt{BRB-Broadcast}$.

If $v=t+6$, first apply the same dichotomy to round $t+5$. A process that
used the additional condition already stores $q_{l+1}$ certified
round-$(t+5)$ blocks; otherwise the prescribed round-$(t+5)$
$\texttt{DAG-Append}$ operations reach $\texttt{BRB-Broadcast}$. Here
$a_i=t+1$, so the removed correct process's immutable
round-$(t+5)$ enabled-configs value still contains $R_l$;
\cref{lemma:brb-safety-current-step}, together with the BRB Agreement
clause of the configuration-level induction hypothesis for $R_l$,
gives Agreement through round $t+5$.
Each correct retained process $p_x$ whose prescribed round-$(t+5)$ append
reaches $\texttt{BRB-Broadcast}$ belongs to $R_{l+1}$, has $R_{l+1}$ in its
immutable enabled-configs value, and is not passive, so its call passes
$\mathsf{MaySendBRB}_x(R_{l+1},t+5)$. Each such process has completed the
transition into $R_{l+1}$. If $R_{l+2}$ exists,
\cref{lemma:following-reconfiguration-earliest-round,lemma:no-premature-pending}
prevents installing the following transition before round $t+6$; if it
does not exist, there is no following transition. Hence the additional
condition is unavailable in every selected round-$(t+5)$ iteration. Thus
\cref{corollary:round-advancement} supplies a correct process whose DAG
contains such a set.
Dissemination makes every correct retained member $p_x$ obtain that set and
advance ordinarily to $v$. Its round-$v$ append then uses those
round-$(v-1)$ parents, passes $\mathsf{MaySendBRB}_x(R_{l+1},v)$, and reaches
$\texttt{BRB-Broadcast}$.

In either case, if a correct retained process already stores the selected
round-$v$ set, fix it as $p_h$. Otherwise apply
\cref{corollary:round-advancement} to the $q_{l+1}$ correct processes whose
prescribed round-$v$ appends reach $\texttt{BRB-Broadcast}$. Each such
process $p_x$ is a non-passive $R_{l+1}$-member with $R_{l+1}$ in its
immutable round-$v$ enabled-configs value, so its call passes
$\mathsf{MaySendBRB}_x(R_{l+1},v)$. We next verify the
corollary's Agreement premise. The removed correct process $p_i$ has
$R_l$ in each assigned enabled-configs value before and during its install
round; $\texttt{InstallTransition}$ then assigns $\{R_l,R_{l+1}\}$ in
rounds $a_i+1,\ldots,a_i+4=v-1$. Thus
\cref{lemma:brb-safety-current-step}, together with the BRB Agreement clause
of the configuration-level induction hypothesis for $R_l$, gives Agreement
through round $v-1$.

It remains to prove Agreement in round $v$. If $v=t+5$ and some correct
$R_l$-member still has an assigned
round-$v$ enabled-configs value containing $R_l$, then
\cref{lemma:brb-safety-current-step} gives Agreement. If no such process
exists, a certificate declared under $R_l$ cannot be formed: its Byzantine
members alone are fewer than $q_l$; the conclusions for preceding
transitions similarly exclude every earlier configuration. If $R_{l+2}$
exists, \cref{lemma:following-reconfiguration-earliest-round} gives
$\mathsf{earliest}(R_{l+2})\geq t+6>v$, and authenticated consecutive
processing prevents $R_{l+3}$ or any later configuration from being enabled
before $R_{l+2}$. If $R_{l+2}$ does not exist, authentication excludes every
later configuration directly. Thus only $R_{l+1}$ can certify in round $v$,
and the
same-configuration case of \cref{lemma:brb-aligned-safety} gives Agreement.

If $v=t+6$, every correct old member's four rounds assigned
$\{R_l,R_{l+1}\}$ end by $t+5$. Hence $R_l$, every earlier configuration,
and their Byzantine
members cannot supply a quorum in round $v$. If a following $R_{l+2}$
exists, it cannot be enabled in round $v$: its earliest-inclusion round is
at least $t+6$; therefore no certified block below $v$ can mark it by
\cref{lemma:no-premature-pending} applied to the following transition, and
an old member can invoke
$\texttt{InstallTransition}$ from round-$v$ marked blocks only after its
pending ordinary round-$v$ wait terminates.
That invocation assigns the new enabled pair beginning in round $v+1$.
Likewise, a process joining in $R_{l+2}$ can discover no qualifying marked
round below $v$, and its invocation assigns the pair only after its returned
round. Authenticated consecutive processing prevents $R_{l+3}$ or any later
configuration from being enabled before $R_{l+2}$. If $R_{l+2}$ does not
exist, authentication excludes every later configuration directly.
Consequently only $R_{l+1}$ can certify in round $v$, and
same-configuration quorum intersection again gives Agreement.

By round $v$, every correct retained process has completed the transition
into $R_{l+1}$, so the additional condition for that transition is
unavailable in round $v$. A following transition cannot make that condition
applicable during the same execution of the round-$v$
$\texttt{AdvanceRound}$ loop, because $\texttt{UpdateConfigs}$ runs only
after the ordinary round-$v$ condition.
The corollary therefore makes some correct retained process $p_h$ advance
from $v$, and its ordinary condition places the required set in $DAG_h$.

Every block in the selected round-$v$ set occupies an
$R_{l+1}$-member broadcaster slot. Indeed, this transition removes only
$p_i$. Process $p_i$ is correct and starts no block-production operation for
round $v$ after completing in round $a_i+4$, so BRB Integrity excludes a
valid round-$v$ certificate in its slot. The conclusions for preceding
transitions exclude identities belonging only to a configuration earlier
than $R_l$ from producing a certified block in round $v$. Every other member
of $\mathsf{members}(R_l)$ is retained in $R_{l+1}$.
If $R_{l+2}$ exists,
\cref{lemma:following-reconfiguration-earliest-round} places its
earliest-inclusion round at or after $t+6$, and
\cref{lemma:no-premature-pending} applied to that transition shows that no
earlier certified block can mark it. Thus it cannot be installed before
round $t+6$; even if it is installed when $v=t+6$, that invocation assigns
its enabled pair beginning only in round $v+1$. Authenticated consecutive
processing also prevents a later configuration from being enabled first.
If $R_{l+2}$ does not exist, authentication excludes every later
configuration. Hence no configuration later than $R_{l+1}$ is enabled in
round $v$.
Thus the selected distinct slots belong to
$\mathsf{members}(R_{l+1})$.

Suppose, for contradiction, that the displayed condition never becomes
true. Then $p_i$ does not reach the instruction that stops its remaining
tasks, so it continues processing certified-block messages and running
$\texttt{FetchAndDeliver}$.

The selected round-$v$ set in $DAG_h$ contains at least
$q_{l+1}-f_{l+1}\geq f_{l+1}+1$ blocks with correct broadcasters; choose
that many and include their ancestors. The removed identity belongs to
$\mathit{known}_h$, so the $\mathsf{certified\text{-}block}$ messages sent
when these blocks are inserted and during $\texttt{FloodDAG}$ invocations
are addressed to $p_i$. BRB Agreement through round $v-1=a_i+4$ excludes a
different block in every ancestor slot, and BRB Integrity excludes a
different certified value in each selected correct broadcaster's round-$v$
slot. Hence \cref{lemma:certified-history-dissemination} makes $p_i$
DAG-deliver the selected blocks. Their distinct broadcaster identities all
belong to $R_{l+1}$, so the displayed departure condition holds, a
contradiction.

Hence the condition eventually holds. Process $p_i$ performs the prescribed
final finite $\texttt{FloodDAG}$ and only then stops and leaves; messages it
already sent remain deliverable.
\end{proof}

\begin{lemma}[Every correct $R_{l+1}$-member reaches a round in which
$q_{l+1}$ correct members' prescribed appends reach
$\texttt{BRB-Broadcast}$]
\label{lemma:next-outer-step-round}
Fix a valid transition $R_l\rightarrow R_{l+1}$ in an execution satisfying
the issuance-rule assumption and the configuration-level induction hypothesis for $R_l$. Assume
that the earliest install transition round $t$ among correct processes in
$\mathsf{members}(R_l)$ exists. If the transition adds a correct process
$p_j$, also assume that
$\texttt{DiscoverBegin}_j(R_l,R_{l+1})$ returns a round $u$.
Then there exists a logical-round index $s$ such that:
\begin{enumerate}
  \item every correct process in $\mathsf{members}(R_{l+1})$ eventually
  enters round $s$ with installed configuration $R_{l+1}$; and
  \item at least $q_{l+1}$ distinct active correct processes in
  $\mathsf{members}(R_{l+1})$ invoke their prescribed
  $\texttt{DAG-Append}$ operation in round $s$ and reach its
  $\texttt{BRB-Broadcast}$ call.
\end{enumerate}

\end{lemma}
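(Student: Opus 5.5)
We will take $s=t+6$ and argue, in the manner of the proof of \cref{lemma:removed-process-departure}, that every correct $R_{l+1}$-member enters round $t+6$ with $R_{l+1}$ installed and that at least $q_{l+1}$ of them invoke and pass their prescribed round-$(t+6)$ $\texttt{DAG-Append}$. The first step is to bound completion rounds. By \cref{corollary:install-transition-round-bound}, every correct old member has install transition round in $\{t,t+1\}$. If the transition adds a correct process, \cref{lemma:joining-install-round-cases,lemma:joining-old-member-install-rounds} place its round $u$ within one of every old member's round, hence in $\{t-1,t,t+1\}$. Therefore every correct $R_{l+1}$-member completes at the beginning of a round in $\{t+3,t+4,t+5\}$, by \cref{lemma:retained-joining-completion}. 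So every correct member that enters round $t+6$ does so with $\mathit{local}=R_{l+1}$ and is no longer in a transition. By \cref{lemma:following-reconfiguration-earliest-round,lemma:earliest-install-lower-bound}, no following transition is installed before round $t+6$, so its pair is assigned only from round $t+7$ onward. Consequently every correct member's round-$(t+6)$ value is $\{R_{l+1}\}$, no member is passive in round $t+6$, and $\mathsf{MaySendBRB}_x(R_{l+1},t+6)$ holds.

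The second step is to show that every correct $R_{l+1}$-member actually enters round $t+6$. We induct upward from the completion rounds. A member completing at round $c$ has its ordinary round-$c$ condition discharged by the final conclusion of \cref{lemma:transition-new-quorum-progress} whenever another member completes at $c+1$. For the rounds $c$ through $t+5$ after all members have completed, we apply \cref{corollary:round-advancement} with $\mathcal G$ the set of correct $R_{l+1}$-members. This set has at least $q_{l+1}$ elements: after an addition it includes the correct joiner, or, if the joiner is Byzantine, enough old members by the fault bound. The additional transition-round condition is unavailable in these rounds for every process in $\mathcal G$ because all have completed. The corollary's premise that $q_{l+1}$ appends reach $\texttt{BRB-Broadcast}$ in each such round is obtained inductively from the previous round's ordinary entry, as in the proof of \cref{lemma:removed-process-departure}.

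The third step supplies the BRB Agreement the corollary needs through round $t+5$ and in round $t+6$. Through rounds where a correct old member still enables $R_l$, we use \cref{lemma:brb-safety-current-step}. For the single possible joiner-only round $u+4$, we use \cref{lemma:brb-safety-joining-round}. In the remaining rounds, only $R_{l+1}$ can certify, since earlier configurations are excluded as in \cref{lemma:transition-round-certificate-scope} and later ones by the earliest-inclusion bound. The same-configuration case of \cref{lemma:brb-aligned-safety} then gives Agreement there.

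Clause~2 then follows because every correct member enters round $t+6$ through its ordinary $q_{l+1}$ condition. That condition gives its append's parent wait immediately, so the append reaches $\texttt{BRB-Broadcast}$. We also need these members to remain active. Correct $R_{l+1}$-members are not removed by this transition. By \cref{lemma:following-reconfiguration-earliest-round}, a following removal cannot take effect before round $t+6+4$.

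The main obstacle is the second step, specifically the corner case in which some members enter late rounds through the additional condition, or a removed process stops early. I would handle both exactly as in \cref{lemma:transition-advancement-alternatives}. A stored $q_{l+1}$ set in round $c+1$ contains a correct ordinary $R_{l+1}$-declared block, and \cref{lemma:ordinary-block-parent-quorum} applied to it yields the round-$c$ set. The final $\texttt{FloodDAG}$ of a stopping process, via \cref{lemma:stopping-process-transition-state}, disseminates the needed sets.
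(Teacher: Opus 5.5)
Your argument is correct, but it reaches the conclusion through a different choice of $s$ than the paper.

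The paper lets $c_-$ and $c_+$ be the earliest and latest completion rounds of correct $R_{l+1}$-members and notes $c_+\le c_-+1$. It brings everyone to $c_+$, using the final conclusion of \cref{lemma:transition-new-quorum-progress} for the early finishers, and then splits into two cases. If every member entered $c_+$ through its ordinary condition, it takes $s=c_+$. Otherwise some member holds a round-$c_+$ quorum; dissemination makes everyone advance ordinarily, and it takes $s=c_++1$.

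You instead fix $s=t+6$, which depends only on $t$. This has a real advantage: entry into $s$ is forced through the ordinary condition. In round $t+5$ every member has completed, and a following install round is at least $\mathsf{earliest}(R_{l+2})\ge t+6$, so no additional condition is available. Clause~2 then follows without a case split.

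The price is an extra induction with \cref{corollary:round-advancement} over rounds $c_+,\dots,t+5$. This makes you supply BRB Agreement through $t+5$ explicitly, which amounts to re-running the case analysis of conclusion~4 of \cref{lemma:configuration-step-closure}. That is legitimate, because that proof does not use the present lemma.

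The induction must also handle a member that entered $c_+$ via the additional condition. Your ancestor-closure argument does this correctly: that member's own stored round-$c_+$ quorum contains a correct block whose at least $q_{l+1}$ parents lie in $\mathsf{ParentBlocks}(R_{l+1},c_+-1)$. In a removal, that correct block may be the removed member's $R_l$-declared block. It still works, since $q_l\ge q_{l+1}$ and $R_l$-eligible parents are $R_{l+1}$-eligible.

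One point needs to be made explicit. Placing all completion rounds in $\{t+3,t+4,t+5\}$ is not enough for your use of the final conclusion of \cref{lemma:transition-new-quorum-progress}, which covers only completions in consecutive rounds. You need $c_+\le c_-+1$. This follows from \cref{lemma:joining-old-member-install-rounds}, because $u=t-1$ forces every correct old member to install in round $t$.
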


\begin{proof}
For each correct
$p_i\in\mathsf{members}(R_{l+1})$, let $c_i$ be the round at whose
beginning it completes the transition. By
\cref{lemma:retained-joining-completion}, every $c_i$ is finite. Since
$\mathsf{members}(R_{l+1})$ is finite, let
\[
  c_-=\min\{c_i:
  p_i\in\mathsf{members}(R_{l+1})\text{ is correct}\}
  \qquad\text{and}\qquad
  c_+=\max\{c_i:
  p_i\in\mathsf{members}(R_{l+1})\text{ is correct}\}.
\]
The bounds on the completion rounds give $c_+\leq c_-+1$.

If $c_-=c_+$, every correct member enters round $c_+$ when it
completes the transition. Otherwise $c_+=c_-+1$. Members that complete
in round $c_+$ enter that round by
\cref{lemma:retained-joining-completion}. For each member that completes
in round $c_-$, the final conclusion of
\cref{lemma:transition-new-quorum-progress} supplies its ordinary
round-$c_-$ parent quorum, so it also advances to $c_+$.

Thus every correct member enters round $c_+$ with $R_{l+1}$ installed.
Completion occurs before round-$c_+$ block production.

If every correct member entered $c_+$ through its ordinary condition, each
already has the $q_{l+1}$ round-$(c_+-1)$ blocks on which its ordinary
round-$c_+$ append waits. Thus at least
\[
  n_{l+1}-f_{l+1}=q_{l+1}
\]
correct members invoke that append and reach $\texttt{BRB-Broadcast}$.
Take $s=c_+$.

Otherwise, some correct member entered $c_+$ through the additional
condition and therefore already stores $q_{l+1}$ admissible round-$c_+$
blocks. The $\mathsf{certified\text{-}block}$ messages sent upon insertion
and $\texttt{FloodDAG}$ invocations disseminate that set and its ancestors
to every correct $R_{l+1}$-member. After all have reached
$c_+$, their transitions are complete, so the set satisfies their ordinary
round-$c_+$ advancement condition and they enter $c_++1$. Their
round-$(c_++1)$ appends use that same round-$c_+$ set as parents and reach
$\texttt{BRB-Broadcast}$. Take $s=c_++1$. In both cases the two conclusions
hold.
\end{proof}

\subsubsection{Establishing the Prerequisites and Applying the
Configuration-Level Induction}

The preceding conditional progress results apply once their prerequisites
hold. We now prove that some correct old member invokes
$\texttt{InstallTransition}$ and, when the transition adds a correct
process, that $\texttt{DiscoverBegin}$ returns at the joining process. We
then prove the configuration-level induction hypothesis for $R_{l+1}$.

\begin{lemma}[Some correct $R_l$-member eventually invokes
$\texttt{InstallTransition}$ for $R_l\rightarrow R_{l+1}$]
\label{lemma:eventual-first-installation}
Fix a valid transition $R_l\rightarrow R_{l+1}$ in an execution satisfying
the issuance-rule assumption and the configuration-level induction hypothesis for $R_l$. Then
there exist a correct process $p_i\in\mathsf{members}(R_l)$ and a logical
round $a_i$ such that $p_i$ eventually invokes
$\texttt{InstallTransition}_i(R_l,R_{l+1},a_i)$. Consequently, the set of
install transition rounds of correct processes in
$\mathsf{members}(R_l)$ is nonempty, and its minimum
\[
  t=
  \min\left\{
  a_i\;\middle|\;
  \begin{array}{l}
  p_i\in\mathsf{members}(R_l)\text{ is correct and}\\
  p_i\text{ invokes }
  \texttt{InstallTransition}_i(R_l,R_{l+1},a_i)
  \end{array}
  \right\}
\]
exists.
\end{lemma}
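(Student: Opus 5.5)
We argue by contradiction: suppose no correct member of $\mathsf{members}(R_l)$ ever invokes $\texttt{InstallTransition}$ for $R_l\rightarrow R_{l+1}$. Then every correct old member keeps assigning the singleton $\{R_l\}$ in every round it reaches. In this regime no pair is ever enabled by a correct old member, the additional transition-round condition is never available, and every ordinary condition uses $R_l$ and $q_l$. We then show that the correct old members keep advancing and eventually produce $f_l+1$ marked blocks in one round, which forces an installation and contradicts the supposition.

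\textbf{Step 1 (perpetual progress under $R_l$).} Take the round $s_0$ supplied by clause~(iv) of the configuration-level induction hypothesis for $R_l$. We induct on rounds $s\geq s_0$ and show that every active correct old member enters round $s$ and that at least $q_l$ of them reach $\texttt{BRB-Broadcast}$ of their $R_l$-declared round-$s$ append. BRB Agreement below $\delta_{l+1}$ comes from the hypothesis. At and above $\delta_{l+1}$ it comes from \cref{lemma:brb-safety-current-step}, since some correct old member's value $\{R_l\}$ contains $R_l$. No correct old member has a transition, so the side condition of \cref{corollary:round-advancement} holds, and that corollary advances every correct old member. Entering round $s+1$ ordinarily supplies the predecessor quorum on which the round-$(s+1)$ append waits, which closes the induction. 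No correct old member can leave, because leaving requires completing a transition. In particular, every correct old member reaches every round $r\geq\max(s_0,\delta_{l+1})$.

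\textbf{Step 2 (pending gets recorded).} Every correct old member $p_i$ receives the valid $R_{l+1}$ from the dealer by eventual delivery. It has $\mathit{local}_i=R_l$ throughout, since it never finishes a transition. By Step~1 it eventually reaches $r_i\geq\delta_{l+1}$, so the handler in \cref{alg:reconfiguration} sets $\mathit{pending}_i\gets R_{l+1}$. Fix a round $r^\ast$ by which every correct old member has recorded it. The handler may instead have been triggered earlier by $\texttt{UpdateConfigs}$ via a marked block; either way $\mathit{pending}_i=R_{l+1}$. Every round-$r^\ast$ append invoked after that point snapshots $P=R_{l+1}$. Since $r^\ast\geq\delta_{l+1}$, the $\mathsf{earliest}$ check passes, so these blocks carry $R_{l+1}$ in their pending field.

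\textbf{Step 3 (marked quorum and contradiction).} Apply \cref{lemma:dag-quorum-dissemination} in round $r^\ast$ to the $\geq q_l$ correct old members whose appends reach $\texttt{BRB-Broadcast}$; Agreement is available as in Step~1. Each correct old member $p_j$ then eventually DAG-delivers round-$r^\ast$ blocks. By \cref{lemma:predecessor-parent-slots}, all counted blocks lie in $\Pi_l$ slots, and at least $q_l-f_l\geq f_l+1$ of them have correct broadcasters. Blocks from correct broadcasters are marked for $R_{l+1}$: either their own pending field is $R_{l+1}$, or they descend from such a block. When $p_j$ completes round $r^\ast$ (or the next round), $\mathsf{MarkedBlocks}_j(R_l,R_{l+1},\cdot)\geq f_l+1$ with $\mathit{pending}_j=R_{l+1}$, so $\texttt{UpdateConfigs}_j$ calls $\texttt{InstallTransition}_j$. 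This contradicts the supposition. The existence of the minimum $t$ is then immediate, since the set of install rounds is a nonempty set of naturals.

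\textbf{Main obstacle.} The delicate point is Step~2/3 timing. A correct process's round-$r^\ast$ append may have snapshotted $\mathit{pending}=\bot$ just before the handler fired, and the process may complete round $r^\ast$ using blocks from processes that had not yet recorded pending. To avoid this, I would choose $r^\ast$ strictly after the round in which the last correct old member records pending. Then every correct old member's round-$r^\ast$ append is invoked afterwards, because appends start at round entry, after the handler's await is already satisfied. Any $q_l$ parent quorum in round $r^\ast$ then contains $\geq f_l+1$ correct, hence pending-carrying, blocks. This also covers the case where a process installs earlier via marked blocks, which only helps.
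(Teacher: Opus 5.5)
Your proof is correct and follows the paper's argument: the same contradiction hypothesis, the same induction from clause~(iv) via \cref{corollary:round-advancement} to keep every correct old member advancing under $\{R_l\}$, and the same recording of $\mathit{pending}_i\gets R_{l+1}$ followed by an $f_l+1$ marked-block count. The only variation is that you wait until every correct old member has recorded $R_{l+1}$, taking $r^\ast$ above each process's own recording round (the per-process maximum, as the paper does with $s=\max_i s_i$, not the round of whichever process records last in real time). Correct round-$r^\ast$ blocks then carry the pending field directly, whereas the paper uses only $f_l+1$ such members and propagates the mark one round further by quorum intersection.
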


\begin{proof}
Every correct process that receives $R_{l+1}$ authenticates and buffers it.
Authentication and consecutive numbering exclude a conflicting message.

Suppose, for contradiction, that no correct process in
$\mathsf{members}(R_l)$ ever invokes $\texttt{InstallTransition}$. Every enabled-configs
value subsequently assigned by such a process then equals $\{R_l\}$.
Starting from the logical round whose existence is required by clause~(iv)
of the configuration-level induction hypothesis for $R_l$, BRB Agreement for
certificates declared under $R_l$ and induction on the logical round using
\cref{corollary:round-advancement} make every active correct old member enter
every later logical round. The induction invariant includes at least $q_l$
correct appends that reach $\texttt{BRB-Broadcast}$: clause~(iv) gives the
base, the contradiction hypothesis makes the additional transition
condition unavailable in every round, and ordinary entry supplies the
parent set needed by the next append.

Consequently, after installing $R_l$, every correct old member that has
received $R_{l+1}$ eventually reaches
$\mathsf{earliest}(R_{l+1})$. Its reconfiguration handler then completes
the wait and assigns $\mathit{pending}_i\gets R_{l+1}$.

Choose a set $\mathcal S$ of $f_l+1$ correct old members. Eventual message
delivery ensures that every member of the finite set $\mathcal S$ receives
$R_{l+1}$ and eventually makes this pending assignment. For each member,
let $s_i$ be the first logical round whose append begins after the handler
assigns $\mathit{pending}_i\gets R_{l+1}$, and let
$s=\max_{p_i\in\mathcal S}s_i$.
Because these processes continue advancing, every member of $\mathcal S$
eventually enters round $s$ and produces a certified block whose
pending field contains $R_{l+1}$. Let $\mathcal M$ be these $f_l+1$ marked
round-$s$ blocks.

If a correct old member invokes $\texttt{InstallTransition}$ while
processing round $s$, the
contradiction is immediate. Otherwise, let $X$ be any round-$(s+1)$ block
produced by a correct old member. The round-broadcaster slots occupied by
its $q_l$ round-$s$ parents intersect the $f_l+1$ slots occupied by
$\mathcal M$. BRB Agreement in round $s$
ensures that an intersecting slot denotes the same marked block, so $X$ is
marked. Thus every round-$(s+1)$ block from a correct old broadcaster is
marked. Any correct old member that advances from that round obtains $q_l$
blocks in distinct round-broadcaster slots occupied by members of $R_l$, at
least
$q_l-f_l\geq f_l+1$ of which have correct broadcasters and are marked. It
therefore invokes $\texttt{InstallTransition}$, contradicting the
assumption.
\end{proof}

\begin{lemma}[$\texttt{DiscoverBegin}$ at a correct joining process
eventually terminates]
\label{lemma:eventual-joining-discovery}
Fix a valid transition $R_l\rightarrow R_{l+1}$ in an execution satisfying
the issuance-rule assumption and the configuration-level induction hypothesis for $R_l$, and
suppose it adds a correct process $p_j$. Let $t$ be the earliest install
transition round among correct processes in $\mathsf{members}(R_l)$,
whose existence follows from
\cref{lemma:eventual-first-installation}. Then
$\texttt{DiscoverBegin}_j(R_l,R_{l+1})$ eventually terminates and returns
the least qualifying round $u$ present in $DAG_j$ when its wait is
discharged: that round contains blocks in at least $q_l$ distinct
round-broadcaster slots occupied by members of $R_l$, including at least
$f_l+1$ blocks marked for $R_l\rightarrow R_{l+1}$.
\end{lemma}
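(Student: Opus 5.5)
Write $t$ for the earliest correct old-member install transition round. The plan is to exhibit one finite, ancestor-closed set of certified blocks in round $t+1$ held by a correct old member, and show it eventually reaches $DAG_j$. Once it is there, the wait in $\texttt{DiscoverBegin}_j$ is discharged. The function then returns the least qualifying round present at that moment, and the stated properties of $u$ hold by the definition of the wait.

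First we fix a round. By \cref{corollary:install-transition-round-bound}, every correct old member installs in $t$ or $t+1$, so each one eventually advances through round $t+1$. While doing so it stores a set $\mathcal Q\subseteq\mathsf{ParentBlocks}_h(R_l,t+1)$ of $q_l$ blocks. The additional transition-round condition is not available in round $t+1$ for members installing in $t$ or $t+1$: it needs $r\in\{a_h+2,a_h+3\}$, and $t+1<t+2$. So advancing through round $t+1$ really does require the ordinary $q_l$ condition. By \cref{lemma:predecessor-parent-slots}, the blocks of $\mathcal Q$ occupy distinct slots in $\mathsf{members}(R_l)$, so $|\mathsf{LocalBlocks}_h(R_l,t+1)|\geq q_l$. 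At least $q_l-f_l\geq f_l+1$ of them have correct broadcasters. Agreement through round $t+1$ holds by \cref{lemma:brb-safety-current-step}, because $p_h$'s round-$(t+1)$ value contains $R_l$. With that, \cref{lemma:marked-block-propagation} makes each of these blocks marked at any correct process that holds $\mathcal Q$ together with its ancestors. Round $t+1$ therefore qualifies at $p_h$, and it also qualifies at $p_j$ once $\mathcal Q$ and its ancestry arrive there, because markedness depends only on immutable ancestry.

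Next we deliver the set. Let $\mathcal D$ be $\mathcal Q$ together with all its ancestors. By \cref{lemma:local-dag-ancestor-closure}, $\mathcal D$ is finite and contained in $DAG_h$. Upon receiving $R_{l+1}$, $p_h$ runs $\texttt{UpdateKnown}_h$, and \cref{lemma:known-coverage} then puts $p_j\in\mathit{known}_h$. The blocks of $\mathcal D$ therefore reach $p_j$ through the insertion sends and $\texttt{FloodDAG}$, including the final flood if $p_h$ is later removed. Messages sent before $p_j$ activates stay queued. We then apply \cref{lemma:certified-history-dissemination} to the pair $p_h,p_j$. Its hypotheses need checking as follows.
\begin{itemize}
\item $p_j$ keeps running $\mathsf{HistoryTasks}_j$ while joining, and $\texttt{ContinueDownload}_j$ keeps issuing requests.
\item Every block of $\mathcal D$ passes $\mathsf{ValidCertifiedMetadata}_j$ and the parent checks, since these depend only on certificates and metadata.
\item No conflicting block occupies a slot of $\mathcal D$, by the certificate-level Agreement through round $t+1$. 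This covers blocks below $\delta_{l+1}$ via the induction hypothesis, and later blocks via \cref{lemma:brb-safety-current-step}.
\end{itemize}
Blocks arriving through state replies are also consistent with this, because \cref{lemma:sender-independent-state-validation} rejects only conflicting or invalid sets, and merged blocks carry their own ancestry.

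The main obstacle is making sure $p_j$'s own filters do not drop blocks of $\mathcal D$. Two filters matter here. The state-reply bound requires configuration numbers at most $l+1$, and the certified-block path applies its admissibility checks. I would argue that every block in $\mathcal D$ lies in a round at most $t+1<\mathsf{earliest}(R_{l+2})$, which by \cref{lemma:following-reconfiguration-earliest-round} is at least $t+6$. Consequently no block of $\mathcal D$ names a configuration numbered above $l+1$. This is needed because the reply handler filters on that bound, and the certified-block route must not be confused by it.

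Finally, termination. Once $\mathcal D\subseteq DAG_j$, the wait is true, so $\texttt{DiscoverBegin}_j$ returns. Its returned $u$ is the least round that satisfies the two counting conditions in $DAG_j$ at that time. This is exactly the claimed characterization, and the minimality argument needs nothing beyond the code.
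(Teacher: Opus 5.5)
Your proposal is correct and follows the paper's proof essentially step for step. A correct old member's round-$(t+1)$ quorum $\mathcal Q\subseteq\mathsf{ParentBlocks}_h(R_l,t+1)$ contains at least $f_l+1$ correct-broadcaster blocks that are marked by \cref{lemma:marked-block-propagation}; it is pushed to $p_j$ together with its ancestry through \texttt{UpdateKnown}/\texttt{FloodDAG} and \cref{lemma:certified-history-dissemination}, which discharges the wait.

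Two small points. First, that some correct old member actually completes round $t+1$ comes from the proof of \cref{corollary:install-transition-round-bound} (or directly from \cref{corollary:round-advancement}, as the paper argues), not from its statement alone. Second, your configuration-number check is not needed on the certified-block route. It holds anyway for a structural reason: the $\mathsf{ParentBlocks}_h(R_l,t+1)$ and parent-number checks force every block of $\mathcal D$ to declare a configuration numbered at most $l$, so none names a configuration above $l+1$.
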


\begin{proof}
The distribution mechanism sends the authenticated pair
$\langle R_l,R_{l+1}\rangle$ to $p_j$, so the joining handler can validate
the transition and obtain $\Pi_l$, $q_l$, and $f_l$.
BRB Agreement through round $t$ follows from
\cref{lemma:brb-safety-current-step}. By
\cref{lemma:marked-block-propagation}, every correct
round-$(t+1)$ block in $\mathsf{members}(R_l)$ is marked. At least $q_l$
correct old-member round-$(t+1)$ appends reach $\texttt{BRB-Broadcast}$:
entry into that round used the ordinary round-$t$ condition, and the
additional condition is unavailable before round $t+2$.
BRB Agreement and \cref{corollary:round-advancement} therefore make a
correct old member store at least $q_l$ round-$(t+1)$ blocks in distinct
old-member broadcaster slots. Any such set contains at least
\[
  q_l-f_l=n_l-2f_l\geq f_l+1
\]
blocks with correct broadcasters, and all of those blocks are marked. Fix one
correct old member $p_h$, let $\mathcal Q$ be such a set in $DAG_h$, and let
$\mathcal D$ contain $\mathcal Q$ and all of its ancestors. The set is finite
because each block has finitely many parents in the preceding round, and
\cref{lemma:local-dag-ancestor-closure} gives $\mathcal D\subseteq DAG_h$.

Process $p_h$ eventually receives $R_{l+1}$. Upon that receipt,
$\texttt{UpdateKnown}_h(R_{l+1})$ adds $p_j$ to $\mathit{known}_h$ and invokes
$\texttt{FloodDAG}_h$. That invocation, certified-block messages sent upon
insertion, later invocations of $\texttt{FloodDAG}_h$, and a final invocation
if $p_h$ is later removed send every block in $\mathcal D$ to $p_j$.

Suppose, for contradiction, that
$\texttt{DiscoverBegin}_j(R_l,R_{l+1})$ never returns.
The joining protocol starts $\mathsf{HistoryTasks}_j$ before invoking
$\texttt{DiscoverBegin}_j(R_l,R_{l+1})$; those tasks continue processing certified-block
messages and running $\texttt{FetchAndDeliver}$. The blocks in $\mathcal D$
pass the required metadata and referenced-parent checks, and BRB Agreement
through round $t+1$ excludes a different block in any of their
round-broadcaster slots. Therefore,
\cref{lemma:certified-history-dissemination} makes $p_j$ DAG-deliver
$\mathcal D$. Consequently, $DAG_j$ contains the $q_l$ distinct
blocks in $\mathcal Q$, whose broadcaster identities belong to $R_l$,
including at least $f_l+1$ marked blocks, even if $p_h$ has already
departed.

Thus $DAG_j$ eventually contains a round satisfying the two discovery
conditions, contradicting the assumption that
$\texttt{DiscoverBegin}_j(R_l,R_{l+1})$ never returns. Therefore its wait is
eventually discharged, and it returns
the least qualifying round then present in $DAG_j$; it makes no claim about
blocks merged only after the return. The repeated request
$\langle\mathsf{state\text{-}request},R_l,R_{l+1}\rangle$ to every identity
in $\mathit{known}_j$
provides an additional retrieval path, and one sufficient state reply would
establish the same condition, but the proof does not require a process that
stores a block to remain active until it receives a request. Whenever a reply is used, its safety is
sender-independent by
\cref{lemma:sender-independent-state-validation}.
\end{proof}

\begin{lemma}[A valid transition establishes completion, departure, progress,
and the BRB Agreement premise for the next induction step]
\label{lemma:configuration-step-closure}
Fix a valid transition $R_l\rightarrow R_{l+1}$ in an execution satisfying
the system and reconfiguration assumptions of
\cref{sec:problem-definition} and the issuance-rule assumption, and assume
the configuration-level induction hypothesis for $R_l$. Then the following conclusions hold:
\begin{enumerate}
  \item Every correct process
  $p_i\in\mathsf{members}(R_{l+1})$ eventually invokes
  $\texttt{InstallTransition}_i(R_l,R_{l+1},a_i)$ for some finite round
  $a_i$, reaches the beginning of round $a_i+4$, and invokes
  $\texttt{FinishTransition}_i(R_{l+1})$. Immediately afterward,
  \[
    \mathit{local}_i=R_{l+1},\qquad
    \mathit{pending}_i=\bot,\qquad
    \mathit{transition}_i=\bot.
  \]

  \item Every correct process in
  $\mathsf{members}(R_l)\setminus\mathsf{members}(R_{l+1})$ eventually
  reaches the point at which the protocol permits it to leave.

  \item There is a logical round $s$ that every correct process in
  $\mathsf{members}(R_{l+1})$ eventually enters with $R_{l+1}$ installed
  and in which at least $q_{l+1}$ active correct members invoke the
  prescribed $\texttt{DAG-Append}$ operation and reach its
  $\texttt{BRB-Broadcast}$ call.

  \item If a valid transition $R_{l+1}\rightarrow R_{l+2}$ is issued, then
  BRB Agreement holds for every round-broadcaster slot whose round index is
  less than
  $\delta_{l+2}=\mathsf{earliest}(R_{l+2})$. If no following transition is
  issued, BRB Agreement holds for every round-broadcaster slot whose round
  index is at least $\delta_{l+1}$.
\end{enumerate}
Consequently, if $R_{l+1}\rightarrow R_{l+2}$ is issued, the
configuration-level induction hypothesis holds for $R_{l+1}$.
\end{lemma}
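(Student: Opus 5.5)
The plan is to first discharge the two standing prerequisites of the conditional progress results, and then read each conclusion off the corresponding earlier lemma. By \cref{lemma:eventual-first-installation}, the earliest correct old-member install transition round $t$ exists. If the transition adds a correct process $p_j$, \cref{lemma:eventual-joining-discovery} shows that $\texttt{DiscoverBegin}_j$ returns some round $u$. With both premises in hand, \cref{lemma:retained-joining-completion} and \cref{corollary:transition-round-progress} give conclusion~1. A retained member invokes $\texttt{InstallTransition}$ with $a_i\in\{t,t+1\}$ by \cref{corollary:install-transition-round-bound}, and a joining member does so with $a_j=u$. Each such member then reaches round $a_i+4$, where $\texttt{FinishTransition}_i(R_{l+1})$ sets $\mathit{local}_i=R_{l+1}$ and clears $\mathit{pending}_i$ and $\mathit{transition}_i$, as the procedure text prescribes. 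Conclusion~2 is exactly \cref{lemma:removed-process-departure}. Conclusion~3 is \cref{lemma:next-outer-step-round}.

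Conclusion~4 (BRB Agreement) is the main work. I would split the rounds into four bands.
\begin{itemize}
\item Rounds $s<\delta_{l+1}$: Agreement is clause~(iii) of the hypothesis for $R_l$.
\item Rounds $s\geq\delta_{l+1}$ in which some correct old member still enables $R_l$: use \cref{lemma:brb-safety-current-step}. By the install bounds, this covers every round up to $t+4$, and $t+5$ whenever some old member installed at $t+1$.
\item The single round $u+4=t+5$ in which a correct joiner enables the pair but no old member enables $R_l$: use \cref{lemma:brb-safety-joining-round}.
\item Every remaining round below $\delta_{l+2}$, or every remaining round at all if no $R_{l+2}$ is issued: here I would show that only $R_{l+1}$ can certify, and then apply the same-configuration case of \cref{lemma:brb-aligned-safety}.
\end{itemize}

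Excluding other configurations in that last band has three parts.
\begin{itemize}
\item Configurations $R_c$ with $c\le l$: once every correct $R_c$-member's assigned values exclude $R_c$, only the at most $f_c<q_c$ Byzantine members can vote. For $c=l$ this follows from the four-round window ending by $t+5$. For $c<l$ it follows from \cref{lemma:old-configuration-round-cutoff} applied to the preceding transitions.
\item Configurations $R_c$ with $c\geq l+2$: a correct member enables $R_c$ only after installing its predecessor transition. By \cref{lemma:earliest-install-lower-bound}, that install round is at least $\mathsf{earliest}(R_c)\ge\delta_{l+2}$, so such configurations are enabled only in rounds above $\delta_{l+2}$.
\item Non-issued configurations: authentication excludes them.
\end{itemize}
This is the same argument already used inside the proofs of \cref{lemma:brb-safety-joining-round} and \cref{lemma:removed-process-departure}, so I would reuse it almost verbatim.

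The final sentence follows by assembling the pieces. Conclusion~1 gives clause~(i) for $R_{l+1}$ (every correct member has installed $R_{l+1}$), conclusion~2 gives clause~(ii), conclusion~4 gives clause~(iii), and conclusion~3 gives clause~(iv).

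The main obstacle is verifying that the four bands really partition all rounds below $\delta_{l+2}$, with no gap. One delicate case is the round $t+5$ in the removal case, where no correct old member enables $R_l$ and there is no joiner. The other is the case where no following transition is issued, where the band must extend to all later rounds. I would handle both with the fourth-band argument, since it needs only that no correct member of any other configuration enables that configuration in the given round.
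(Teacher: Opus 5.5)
Your proposal is correct and follows the paper's proof essentially step for step. It uses the same discharge of $t$ and $u$ via \cref{lemma:eventual-first-installation,lemma:eventual-joining-discovery}, the same citations for conclusions 1--3, the same case split for Agreement (rounds below $\delta_{l+1}$, then \cref{lemma:brb-safety-current-step}, then \cref{lemma:brb-safety-joining-round}, then an ``otherwise'' case in which only $R_{l+1}$ can certify), and the same clause mapping. The paper defines these cases by the enabled-configs predicate rather than by round numbers, so they are exhaustive by construction and your partition worry does not arise; in the last case the exclusion of $R_l$ follows directly from the failure of the first two cases, without appealing to the window ending by $t+5$.
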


\begin{proof}
By \cref{lemma:eventual-first-installation}, the earliest install transition
round $t$ among correct processes in $\mathsf{members}(R_l)$ exists. If the
transition adds a correct process $p_j$,
\cref{lemma:eventual-joining-discovery} supplies the round $u$ returned by
$\texttt{DiscoverBegin}_j(R_l,R_{l+1})$.

Every correct retained process invokes $\texttt{InstallTransition}$ in
round $t$ or $t+1$ by
\cref{corollary:install-transition-round-bound}. A correct joining process
invokes $\texttt{InstallTransition}$ with install transition round $u$ by
\cref{alg:joining-process}. Hence the premises of
\cref{corollary:transition-round-progress} hold. That corollary supplies each
required round advancement, and
\cref{lemma:retained-joining-completion} shows that every correct
$R_{l+1}$-member reaches the beginning of its round $a_i+4$. At that point,
the retained-process and joining-process rules in
\cref{alg:reconfiguration,alg:joining-process} invoke
$\texttt{FinishTransition}_i(R_{l+1})$; the definition of that procedure in
\cref{alg:reconfiguration} gives the stated post-transition values. This
proves the first conclusion. If a correct process is removed,
\cref{lemma:removed-process-departure} proves the second conclusion;
otherwise it is vacuous. Finally, \cref{lemma:next-outer-step-round} proves
the third conclusion.

For the fourth conclusion, if $R_{l+1}\rightarrow R_{l+2}$ is issued, fix
a round $r<\delta_{l+2}$; otherwise, fix any round
$r\geq\delta_{l+1}$. If $r<\delta_{l+1}$, BRB Agreement follows directly
from the configuration-level induction hypothesis for $R_l$. Suppose
instead that $r\geq\delta_{l+1}$. There are three cases.
\begin{enumerate}
  \item If some correct old member has an assigned round-$r$
  enabled-configs value containing $R_l$, apply
  \cref{lemma:brb-safety-current-step}.

  \item Suppose no correct old member has such a value, but the transition
  adds a correct process whose round-$r$ value is $\{R_l,R_{l+1}\}$. Every
  correct old member then either has no assigned value and sends no
  round-$r$ BRB echo or vote, or has an assigned value excluding $R_l$.
  \Cref{lemma:brb-safety-joining-round} gives Agreement.

  \item Otherwise, no correct process eligible to vote enables $R_l$.
  Earliest-inclusion rounds increase strictly along the issued sequence
  by
  \cref{lemma:following-reconfiguration-earliest-round,%
  lemma:earliest-install-lower-bound}.
  For any $c<l$, a correct round-$r$ voter under $R_c$ would be a member
  of $\mathsf{members}(R_c)$ whose round-$r$ enabled-configs value
  contains $R_c$. Applying
  \cref{lemma:old-configuration-round-cutoff} to that preceding
  transition would give
  \[
    r<\mathsf{earliest}(R_{c+2})
    \leq\mathsf{earliest}(R_{l+1})
    =\delta_{l+1},
  \]
  contradicting $r\geq\delta_{l+1}$.

  If $R_{l+2}$ is issued, then
  $r<\mathsf{earliest}(R_{l+2})\leq\mathsf{earliest}(R_c)$
  for every $c\geq l+2$. A correct member of $R_c$ enables $R_c$
  only in rounds greater than its install transition round for
  $R_{c-1}\rightarrow R_c$. By
  \cref{lemma:earliest-install-lower-bound}, that install transition
  round is at least $\mathsf{earliest}(R_c)$.
  Hence no such member can vote under $R_c$ in round $r$.
  If $R_{l+2}$ is not issued, authentication excludes every later
  configuration directly. Since Byzantine members alone cannot form
  a certificate, every valid round-$r$ certificate is declared under
  $R_{l+1}$. Two such certificate quorums intersect in a correct process,
  which votes for at most one value in a round-broadcaster slot, so
  BRB Agreement holds.
\end{enumerate}
These cases cover every required round and prove the fourth conclusion.

When a successor transition is issued, conclusions~1, 2, 4, and~3 establish
clauses~(i), (ii), (iii), and~(iv), respectively, of the
configuration-level induction hypothesis for $R_{l+1}$.
\end{proof}

\begin{lemma}[The configuration-level induction hypothesis holds for every
configuration with an issued successor]
\label{lemma:configuration-level-induction}
For every issued valid transition $R_l\rightarrow R_{l+1}$ in the
authenticated configuration sequence, the configuration-level induction
hypothesis for $R_l$ defined above holds.
\end{lemma}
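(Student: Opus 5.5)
The proof is an induction on the configuration number $l$ along the finite authenticated sequence $R_1,\ldots,R_L$. The claim for each $l$ is: if $R_l\rightarrow R_{l+1}$ is issued, the configuration-level induction hypothesis holds for $R_l$.

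\emph{Base case $l=1$.} We check the four clauses directly.
\begin{enumerate}
\item Every correct member of $R_1$ starts with $\mathit{local}_i=R_1$ by the initialization in \cref{alg:brb-variables}.
\item The clause about processes removed by a preceding transition is vacuous.
\item For BRB Agreement before $\delta_2$, use the observation that below $\delta_2$ no correct process can have $R_2$ (or any later configuration) enabled. By \cref{lemma:earliest-install-lower-bound} (whose argument uses only \cref{lemma:no-premature-pending} and the issuance rule), no correct process invokes $\texttt{InstallTransition}$ for $R_1\rightarrow R_2$ with install round below $\delta_2$. Hence every assigned enabled-configs value in rounds $\le\delta_2$ is $\{R_1\}$. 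Byzantine members of any other configuration number fewer than its quorum, so every valid certificate in such rounds is declared under $R_1$. The same-configuration case of \cref{lemma:brb-aligned-safety} then gives Agreement.
\item Take round $1$. Every correct member of $R_1$ enters round $1$ with $R_1$ installed. Its round-$1$ $\texttt{DAG-Append}$ needs no parents, and $\mathsf{MaySendBRB}_i(R_1,1)$ holds because $\mathit{enabled\text{-}configs}_i[1]=\{R_1\}$. So at least $q_1$ correct members reach $\texttt{BRB-Broadcast}$; here we use the fault bound $f_1$ and the assumption that correct members remain active.
\end{enumerate}

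\emph{Inductive step.} Suppose $R_{l+1}\rightarrow R_{l+2}$ is issued. Then $R_l\rightarrow R_{l+1}$ was issued earlier, because the dealer issues reconfigurations as a consecutive sequence. By the induction hypothesis for $R_l$, the premises of \cref{lemma:configuration-step-closure} hold, and its final sentence yields the hypothesis for $R_{l+1}$. The phrase ``conclusions already established for preceding transitions'', used inside the lemmas that step relies on, refers exactly to the instances of \cref{lemma:configuration-step-closure} obtained at the earlier stages of this induction. Since there are only finitely many reconfigurations, the induction runs over a finite index set and terminates.

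The main obstacle is making the base case not circular, since \cref{lemma:earliest-install-lower-bound} lives in the same framework. We must check that its proof for $l=1$ relies only on \cref{lemma:no-premature-pending}, certificate checks, and the issuance rule, not on Agreement. A second point to verify is that clause~(iii) for $R_1$ covers rounds in which some correct process might already have assigned $\{R_1,R_2\}$ for round $\delta_2$ itself. Clause~(iii) only requires rounds strictly below $\delta_2$, so this case does not arise and the argument above suffices.
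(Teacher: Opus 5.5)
Your proposal is correct and follows the paper's own proof. Both argue by induction on $l$. The base case checks the four clauses for $R_1$ directly, with round~$1$ as the witness round for clause~(iv) and quorum intersection under $R_1$ for clause~(iii). The induction step uses consecutive issuance to obtain $R_l\rightarrow R_{l+1}$ and then applies \cref{lemma:configuration-step-closure} to that transition. Your appeal to \cref{lemma:earliest-install-lower-bound} only spells out the paper's one-line claim that below $\delta_2$ only $R_1$-declared instances can receive correct votes. As you note, that lemma needs only \cref{lemma:no-premature-pending}, so there is no circularity.
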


\begin{proof}
Proceed by induction on $l$. For $l=1$, every correct process in
$\mathsf{members}(R_1)$ starts in round $1$ with $R_1$ installed by
\cref{alg:brb-variables}, while the
clauses concerning a preceding transition and removed processes are
vacuous. By the rule in \cref{sec:our-solution} that invokes
$\texttt{DAG-Append}$ at the beginning of a round, every active correct
process in $\mathsf{members}(R_1)$ invokes the prescribed operation in
round~$1$. Since at most $f_1$ of the $n_1$ members of $R_1$ are Byzantine,
at least $n_1-f_1=q_1$ active correct processes in $\mathsf{members}(R_1)$
make that invocation. A round-$1$ append has no parent wait, so each reaches
its $\texttt{BRB-Broadcast}$ call. Before
$\delta_2=\mathsf{earliest}(R_2)$, only an instance declared under $R_1$
can receive correct votes. Quorum intersection for two certificates declared
under $R_1$ therefore
gives BRB Agreement in every round below $\delta_2$. Thus the hypothesis
holds for $R_1$.

For the induction step, suppose the hypothesis holds for $R_l$ and that
$R_{l+1}\rightarrow R_{l+2}$ is issued. Authenticated consecutive
processing implies that $R_l\rightarrow R_{l+1}$ was issued. Apply
\cref{lemma:configuration-step-closure} to that transition. Its four
conclusions establish the configuration-level induction hypothesis for
$R_{l+1}$.
\end{proof}

\begin{lemma}[Every valid transition completes at all correct retained or
joining processes and eventually permits all correct removed processes to
leave]
\label{lemma:reconfiguration-completion}
In every execution satisfying the system and reconfiguration assumptions
of \cref{sec:problem-definition} and the issuance-rule assumption, every
valid transition
$R_l\rightarrow R_{l+1}$ issued in the authenticated configuration sequence
has the following properties. Each correct process
$p_i\in\mathsf{members}(R_{l+1})$ eventually invokes
$\texttt{InstallTransition}_i(R_l,R_{l+1},a_i)$
for some finite round $a_i$, reaches the beginning of round $a_i+4$, and
invokes $\texttt{FinishTransition}_i(R_{l+1})$. Immediately afterward,
\[
  \mathit{local}_i=R_{l+1},\qquad
  \mathit{pending}_i=\bot,\qquad
  \mathit{transition}_i=\bot.
\]
Moreover, every correct process in
$\mathsf{members}(R_l)\setminus\mathsf{members}(R_{l+1})$ eventually reaches
the point at which the protocol permits it to leave.
\end{lemma}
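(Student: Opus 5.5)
The plan is to obtain the statement as a direct consequence of the configuration-level induction (\cref{lemma:configuration-level-induction}) combined with the single-step closure result (\cref{lemma:configuration-step-closure}). Fix an issued valid transition $R_l\rightarrow R_{l+1}$ in the authenticated configuration sequence. The argument has two steps.

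\textbf{Step 1: the induction hypothesis for $R_l$.} For $l=1$, the hypothesis holds by the base case in the proof of \cref{lemma:configuration-level-induction}. For $l>1$, authenticated consecutive numbering shows that $R_{l-1}\rightarrow R_l$ was also issued, since $R_l$ exists and has a predecessor. Applying \cref{lemma:configuration-level-induction} to that earlier transition, and then its induction step (\cref{lemma:configuration-step-closure} applied to $R_{l-1}\rightarrow R_l$), yields the hypothesis for $R_l$.

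There is a subtle point here. \cref{lemma:configuration-level-induction} is stated for configurations $R_l$ that have an issued successor. That is exactly our situation, because $R_{l+1}$ is issued. So \cref{lemma:configuration-level-induction} gives the hypothesis for $R_l$ directly, with no need to pass through $R_{l-1}$. I will phrase Step 1 this way to avoid an off-by-one concern about which transition the lemma quantifies over.

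\textbf{Step 2: apply the closure lemma.} With the hypothesis for $R_l$ established, apply \cref{lemma:configuration-step-closure} to $R_l\rightarrow R_{l+1}$. The system, reconfiguration and issuance-rule assumptions required there are exactly those assumed in the present statement. Its conclusion~1 gives the first part: every correct $p_i\in\mathsf{members}(R_{l+1})$ invokes $\texttt{InstallTransition}_i$ at some finite round $a_i$, reaches round $a_i+4$, and invokes $\texttt{FinishTransition}_i(R_{l+1})$. The displayed post-state then follows from the definition of $\texttt{FinishTransition}$ in \cref{alg:reconfiguration}. Its conclusion~2 gives the departure permission for correct removed processes. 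When no process is removed, that part is vacuous.

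\textbf{Main obstacle.} The only delicate point is the final transition $R_{L-1}\rightarrow R_L$, where $R_L$ has no issued successor. There I must make sure that \cref{lemma:configuration-step-closure} does not require the hypothesis for $R_L$. It does not: it requires only the hypothesis for $R_{L-1}$, and $R_{L-1}$ has the issued successor $R_L$, so Step 1 covers it. Finiteness of the configuration sequence is not needed, because each transition is handled individually.
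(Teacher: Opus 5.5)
Your proposal is correct and matches the paper's proof. The paper likewise obtains the configuration-level induction hypothesis for $R_l$ directly from \cref{lemma:configuration-level-induction}, since $R_{l+1}$ is issued, and then reads off both claims from conclusions~1 and~2 of \cref{lemma:configuration-step-closure}; your initial detour through $R_{l-1}$ is unnecessary but harmless, as you note.
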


\begin{proof}
By \cref{lemma:configuration-level-induction}, the configuration-level
induction hypothesis holds for $R_l$. The first two conclusions of
\cref{lemma:configuration-step-closure} respectively establish completion
at every correct retained or joining process and permitted departure for
every correct removed process.
\end{proof}

\subsubsection{BRB Agreement for Every Round-Broadcaster Slot and Dynamic
DAG-Structured Byzantine Broadcast Correctness}

\begin{lemma}[The configuration-scoped BRB algorithm satisfies Agreement for
every round-broadcaster slot]
\label{lemma:brb-safety}
The configuration-scoped BRB algorithm in \cref{alg:brb} satisfies
BRB Agreement for every round-broadcaster slot.
\end{lemma}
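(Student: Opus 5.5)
The plan is to partition all logical rounds using the finite authenticated configuration sequence $R_1,\ldots,R_L$. Each round-broadcaster slot is then covered by an Agreement conclusion already established for some configuration. By \cref{lemma:configuration-level-induction}, the configuration-level induction hypothesis holds for every $R_l$ with an issued successor. We apply \cref{lemma:configuration-step-closure} to each issued transition $R_l\rightarrow R_{l+1}$, with $l=1,\ldots,L-1$. Its fourth conclusion yields two facts. When $R_{l+2}$ is issued, BRB Agreement holds for every slot with round index below $\delta_{l+2}$. For the last transition $R_{L-1}\rightarrow R_L$, Agreement holds for every slot with round index at least $\delta_L$; its proof also covers rounds below $\delta_L$ through clause (iii) of the hypothesis for $R_{L-1}$.

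First we treat the degenerate case $L=1$, where no reconfiguration is issued. Only instances declared under $R_1$ can gather correct echoes and votes, because every correct process's enabled-configs value is $\{R_1\}$ and authentication excludes any other configuration. Two certificates for the same slot therefore intersect in at least $2q_1-n_1\geq f_1+1$ voters, so some correct process lies in both. By \cref{lemma:brb-aligned-safety}, both certificates bind the same value. This argument is the same as the base case in \cref{lemma:configuration-level-induction}, applied to all rounds.

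For $L\geq2$, fix any slot $(r,ID)$. If $r<\delta_L$, the rounds below $\delta_L$ are exactly those covered by clause (iii) of the hypothesis for $R_{L-1}$. That clause holds by \cref{lemma:configuration-level-induction} because $R_{L-1}$ has the issued successor $R_L$; equivalently, it is conclusion 4 of the step for $R_{L-2}\rightarrow R_{L-1}$ when $L\geq3$, and the base case when $L=2$. If $r\geq\delta_L$, the second half of conclusion 4 for the final transition $R_{L-1}\rightarrow R_L$ applies directly, since no following transition is issued. The two ranges exhaust all rounds. Every statement used is the certificate-level form: any two valid certificates for the slot bind the same complete value. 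This implies that two correct BRB-deliverers output the same value.

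The main obstacle is bookkeeping rather than a new argument. We must check that the finiteness assumption (5) on reconfigurations guarantees a well-defined last issued transition. We must also confirm that the premises of \cref{lemma:configuration-step-closure} for $R_{L-1}\rightarrow R_L$, namely the induction hypothesis for $R_{L-1}$, are available without circularly invoking the present lemma. The earlier lemmas explicitly avoid the global BRB-safety theorem, so the ordering of dependencies is sound.
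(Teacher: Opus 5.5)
Your proposal is correct and follows the paper's proof. The case $L=1$ is settled by single-configuration quorum intersection. For $L\geq2$, Agreement below $\delta_L$ comes from the Agreement clause of the configuration-level induction hypothesis, and Agreement at and above $\delta_L$ comes from the fourth conclusion of \cref{lemma:configuration-step-closure} for $R_{L-1}\rightarrow R_L$. The only difference is cosmetic: you cite clause (iii) for $R_{L-1}$ directly, whereas the paper chains the fourth conclusion across every transition; the two are equivalent because that clause is itself produced by the chain.
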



\begin{proof}
Let $R_1,\ldots,R_L$ be the finite authenticated configuration sequence.
If $L=1$, authentication and configuration authorization imply that every
valid certificate is declared under $R_1$. Any two certificates for the
same round-broadcaster slot then have voter quorums that intersect in a
correct process. Since a correct process votes for at most one value in a
slot, the certificates cannot certify different values.

Suppose now that $L\geq2$. By
\cref{lemma:configuration-level-induction}, the configuration-level
induction hypothesis holds before every issued transition. In particular,
its Agreement clause covers every round before
$\delta_2=\mathsf{earliest}(R_2)$. Apply
\cref{lemma:configuration-step-closure} successively to the transitions
$R_1\rightarrow R_2,\ldots,R_{L-1}\rightarrow R_L$. For every non-final
transition, the fourth conclusion of that lemma extends Agreement to every
round before the following transition. For
$R_{L-1}\rightarrow R_L$, there is no following transition, so the same
conclusion establishes Agreement in every remaining logical round. Thus BRB
Agreement holds for every round-broadcaster slot in the execution.
\end{proof}

\begin{lemma}[The protocol implements Dynamic DAG-structured Byzantine
Broadcast]
\label{lemma:dynamic-dag-structured-broadcast}
Under the system and reconfiguration assumptions of
\cref{sec:problem-definition}, the issuance-rule assumption, and the
protocol's $\mathit{known}_i$ and admissibility checks, the
protocol specified in
\cref{alg:brb,alg:dag-append-deliver,alg:reconfiguration,%
alg:joining-transition-validation,alg:joining-state-transfer,%
alg:joining-process}
implements a Dynamic DAG-structured Byzantine Broadcast satisfying
\cref{def:dynamic-dag-structured-broadcast}.
\end{lemma}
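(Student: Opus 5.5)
We verify the five properties of \cref{def:dynamic-dag-structured-broadcast} one at a time, relying on the global BRB results and the configuration-level induction already established. The three safety properties come essentially from the BRB lemmas and the insertion code. Quorum progress and reconfiguration completion come from the transition-progress results, with one new case in the stable region.

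\emph{DAG-Integrity and DAG-Consistency.} The only insertion point is $\texttt{AddAndDeliver}_i$. It refuses a block if its round-broadcaster slot is already occupied in $DAG_i$, so each slot holds at most one block. It is reached only from $\texttt{FetchAndDeliver}_i$ after $\mathsf{Admissible}_i(B)$ holds, or from $\texttt{MergeState}_i$ after $\mathsf{ValidState}_i$ holds. \Cref{lemma:sender-independent-state-validation} shows the second route checks the same certificate, metadata, parent-count and parent conditions. Hence every delivered block is admissible. If the broadcaster is correct, \cref{lemma:brb-integrity} shows that any valid certificate in its slot binds the value supplied by its prescribed append, so the payload is the prescribed $m$. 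For consistency, both delivered blocks carry valid certificates for the same slot. The certificate-level form of \cref{lemma:brb-safety} (via \cref{lemma:brb-aligned-safety}) says they bind the same complete value. Threshold-signature uniqueness then gives the same certificate, so $B=B'$.

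\emph{Causal Closure.} This is \cref{lemma:local-dag-ancestor-closure}, read at insertion time. $\texttt{FetchAndDeliver}$ waits until every named parent is already in $DAG_i$, and $\texttt{MergeState}$ inserts an ancestor-closed set in increasing round order. By induction on insertions, every ancestor is therefore delivered before the block itself.

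\emph{Reconfiguration Completion.} This is exactly \cref{lemma:reconfiguration-completion}.

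\emph{Per-Round Quorum Progress.} Fix $r$ and the set $\mathcal G$ of at least $q_c$ correct members that invoke the prescribed append; BRB Agreement holds in every round by \cref{lemma:brb-safety}. We split by the position of $r$.
\begin{enumerate}
\item Suppose $r$ lies in a transition round $a_i+1,\ldots,a_i+3$ of some transition at the waiting process. Then \cref{lemma:transition-advancement-alternatives} and \cref{corollary:transition-round-progress} give termination through the ordinary condition or the additional transition-round condition, either of which counts as an applicable advancement condition. The configuration-level induction (\cref{lemma:configuration-level-induction}) supplies their premises for every issued transition.
\item Suppose no transition makes the additional condition applicable. Then \cref{corollary:round-advancement}, fed by the given $\mathcal G$ and by \cref{lemma:dag-quorum-dissemination}, delivers $q_c$ admissible round-$r$ blocks in distinct slots.
\end{enumerate}
I expect the main obstacle to be making this case split exhaustive. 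One must check that every $\texttt{AdvanceRound}$ iteration of a correct process, including a joining process's round $u+1$ wait and a removed process that stops, falls under one of these lemmas. One must also check that the activity premises hold, in particular that processes in $\mathcal G$ stay active until the certificates form. For stopping processes I would use \cref{lemma:stopping-process-transition-state}, whose final-flood guarantee ensures that no departure leaves a waiting correct process without an applicable quorum.
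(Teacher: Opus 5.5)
Your proposal follows the paper's proof property by property. BRB Integrity and the certificate-level Agreement of \cref{lemma:brb-safety} give DAG-Integrity and DAG-Consistency, \cref{lemma:local-dag-ancestor-closure} gives Causal Closure, and \cref{lemma:reconfiguration-completion} gives completion. Per-Round Quorum Progress comes from splitting between \cref{corollary:round-advancement} and the transition-progress lemmas.

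Your split, as literally stated, misses one case, and it is the one that settles the exhaustiveness worry you raise. Take round $a_j+4$ at a process that completed one round before another correct member $p_x$ with $a_x=a_j+1$. There $p_x$ may still use the additional condition, so \cref{corollary:round-advancement} does not apply. This case is exactly the final ``Moreover'' clause of \cref{lemma:transition-advancement-alternatives}, restated in \cref{lemma:transition-new-quorum-progress}, which the paper invokes for this purpose.
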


\begin{proof}
\emph{Configuration-scoped BRB Agreement and Integrity.}
\Cref{lemma:brb-integrity} proves Integrity for every
configuration-scoped BRB instance, and \cref{lemma:brb-safety} proves
Agreement for every round-broadcaster slot across the complete
configuration sequence.

\emph{DAG-Integrity.}
Fix a correct process $p_j$. A block
$B=\langle m,r,\mu,C\rangle$ is inserted into $DAG_j[r]$
only after the corresponding BRB instance produces the valid quorum
certificate $C$. BRB Integrity and Agreement
prevent a correct process from accepting two certified values for one
round-broadcaster slot. If the broadcaster $p_i$ is correct, the
initial BRB message for that slot originates from its prescribed
$\texttt{DAG-Append}_i$ or $\texttt{DAG-Transition-Append}_i$
invocation. Thus every block inserted under a correct broadcaster's identity
originates from the prescribed append invocation, and each correct process
DAG-delivers at most one block for each round-broadcaster slot.

\emph{DAG-Consistency.}
Suppose two correct processes DAG-deliver blocks for the same
round-broadcaster slot. BRB Agreement from \cref{lemma:brb-safety} makes
their $(m,r,\mu)$ tuples identical, and
threshold-signature uniqueness makes their certificates identical. Hence
they DAG-deliver the same block
$B=\langle m,r,\mu,C\rangle$.

\emph{Causal Closure.}
Fix a block $B=\langle m,r,\mu,C\rangle$ DAG-delivered by a correct
process, and let $R_c=\mathsf{local}(\mu)$ be the configuration under which
$C$ is formed. By \cref{lemma:local-dag-ancestor-closure}, the certified-block
receive and fetch procedures and the validated state-reply merge procedure
insert every parent and ancestor before its descendant.

The remaining checks establish that every referenced parent is admissible.
If $\mu'$ is the metadata of a parent, then
$\mathsf{number}(\mathsf{local}(\mu'))
\leq\mathsf{number}(\mathsf{local}(\mu))$. The insertion check also verifies
that the parent broadcaster belongs to
$\mathsf{members}(R_d)$ for some
$R_d\in\mathsf{enabled}(\mu')$: the unique configuration when
$\mathsf{enabled}(\mu')$ is a singleton, or either configuration when it is
the authenticated pair of a valid transition. A
correct joining process need not reconstruct the earlier
process-local values $\mathit{enabled\text{-}configs}_k[s]$.

Because $q_c>f_c$, the block's BRB certificate includes a share from a
correct voter $p_v$ that checked the complete $(m,r,\mu)$ tuple and applied
the applicable parent-count rule. Suppose
$\mathsf{JoiningMetadata}(\mu,R_l,R_{l+1})$ holds and the block uses
$q_l$ parents while declaring $R_{l+1}$. If $q_l<q_{l+1}$, that check
required
\[
 \mathit{enabled\text{-}configs}_v[r]
  =\mathsf{enabled}(\mu)
  =\{R_l,R_{l+1}\}.
\]
Here $\{R_l,R_{l+1}\}$ is the authenticated pair satisfying
$\mathsf{ValidReconfiguration}(R_l,R_{l+1})$.
If $q_l=q_{l+1}$, the parent count satisfies the ordinary rule and does
not imply this equality.
Finally, a correct joining process inserts only validated,
ancestor-closed state into the local DAG. Hence every DAG-delivered
block is stored together with every parent and ancestor required by
Causal Closure.


\emph{Per-Round Quorum Progress.}
BRB Agreement for every round-broadcaster slot follows from
\cref{lemma:brb-safety}. When the prescribed correct appends reach
$\texttt{BRB-Broadcast}$ and the correct processes needed to form the
certificates continue their BRB tasks,
\cref{lemma:dag-quorum-dissemination} delivers the certified
current-round blocks and their ancestors. When the additional condition is
unavailable, \cref{corollary:round-advancement} removes the separate lifetime
premise for the processes needed to form the certificates: either a correct
process first stores the current-round quorum and sends it, or all of those
processes remain active long enough to certify it.

During a transition, a process that advances without satisfying its ordinary
parent-quorum condition does so only by satisfying the additional condition,
which requires $q_{l+1}$ admissible blocks from the next round. The
ancestor-closure and final-$\texttt{FloodDAG}$ arguments in
\cref{lemma:transition-advancement-alternatives} then give every affected
correct process either the blocks required by its ordinary parent-quorum
condition or the $q_{l+1}$ admissible next-round blocks allowed by the
additional condition in its last two transition rounds.
\Cref{lemma:transition-predecessor-quorum-progress,%
lemma:transition-new-quorum-progress} apply these alternatives to the
parent-quorum conditions with thresholds $q_l$ and $q_{l+1}$, including the
ordinary condition of a process that completes the transition one round
before another correct member.
Thus every active correct process whose prescribed round advancement
depends on the applicable quorum eventually DAG-delivers sufficient
admissible blocks from distinct broadcaster slots to satisfy an
applicable round-advancement condition and enter the next round.
This establishes Per-Round Quorum Progress as stated in
\cref{def:dynamic-dag-structured-broadcast}.

Finally, \cref{lemma:reconfiguration-completion} proves the completion
clause of \cref{def:dynamic-dag-structured-broadcast}: every retained or
joining correct process completes a valid transition, and every removed
correct process reaches the point at which it may leave.
\end{proof}

\subsection{Leader Integrity and Total Order}
\label{sec:leader-total-order}

\subsubsection{Alignment of Enabled-Configs Values Across Correct Processes
in the Same Logical Round}

\cref{lemma:brb-agreement-transition-rounds} establishes alignment for
correct members of the old configuration throughout one transition. We
first include a correct joining process for that transition and then compose
the result across the finite configuration sequence. For every issued
transition used below, \cref{lemma:configuration-level-induction} supplies
the configuration-level induction hypothesis required by the
lemmas that fix one transition.

\begin{lemma}[Transition windows of distinct valid transitions are
disjoint]
\label{lemma:transition-window-disjointness}
In every execution satisfying the assumptions of
\cref{sec:problem-definition} and the issuance-rule assumption, the
following holds. For a valid issued transition $R_l\rightarrow R_{l+1}$,
let
\[
  W_l=\bigl\{s\;\bigm|\;\exists\text{ correct }p_i:\;
  \mathit{enabled\text{-}configs}_i[s]=\{R_l,R_{l+1}\}\bigr\}
\]
be its transition window. If $W_l\neq\varnothing$, then the earliest
install transition round $t_l$ among correct old members exists and
$W_l\subseteq[t_l,\,t_l+5]$. For any two valid issued transitions with
$l<m$ and $W_l,W_m\neq\varnothing$, every round in $W_m$ exceeds every
round in $W_l$. In particular, $W_l\cap W_m=\varnothing$: for every
logical round $s$ there is at most one index $l$ such that some correct
process $p_i$ satisfies
$\mathit{enabled\text{-}configs}_i[s]=\{R_l,R_{l+1}\}$.
\end{lemma}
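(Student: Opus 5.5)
The plan has two parts: first localize each nonempty window $W_l$ to $[t_l,t_l+5]$, then compare consecutive transitions through the earliest-inclusion bounds. For the first part, fix a valid issued transition $R_l\rightarrow R_{l+1}$ with $W_l\neq\varnothing$. The hypothesis needed by the single-transition lemmas comes from \cref{lemma:configuration-level-induction}. The first fact to record is that the pair $\{R_l,R_{l+1}\}$ is assigned by a correct process only through $\texttt{InstallTransition}$. This holds both for old members in \cref{alg:reconfiguration} and for a joining process in \cref{alg:joining-process}.

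\emph{Existence of $t_l$.} If the witness for $W_l\neq\varnothing$ is an old member, then a correct old member invoked $\texttt{InstallTransition}$ and $t_l$ exists. If the witness is a correct joining process, its $\texttt{DiscoverBegin}$ returned. In that case I would use \cref{lemma:eventual-first-installation}, which needs only the induction hypothesis, to obtain $t_l$.

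\emph{Window bound.} By \cref{corollary:install-transition-round-bound}, every correct old member installs in round $t_l$ or $t_l+1$. It therefore assigns the pair only in rounds within $[t_l+1,t_l+5]$. For a correct joining process, \cref{lemma:joining-install-round-cases} gives $u\in\{t_l-1,t_l,t_l+1\}$, so its pair rounds $u+1,\dots,u+4$ lie in $[t_l,t_l+5]$. Hence $W_l\subseteq[t_l,t_l+5]$.

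For the ordering, it suffices to compare consecutive indices $m=l+1$ and then chain by transitivity. The chaining needs care when some intermediate window is empty. However, issuance of $R_{l+2}$ forces a correct old-member installation by \cref{lemma:following-reconfiguration-requires-installation}, so every window of an issued transition with an issued successor is nonempty. Alternatively, one compares $t$'s directly through the strictly increasing earliest rounds.

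\emph{Consecutive comparison.} Assume $W_{l+1}\neq\varnothing$, so $R_{l+2}$ is issued. By \cref{lemma:following-reconfiguration-earliest-round}, $\mathsf{earliest}(R_{l+2})\geq t_l+6$. By \cref{lemma:earliest-install-lower-bound}, every correct process, whether old member or joining, that installs $R_{l+1}\rightarrow R_{l+2}$ has install round at least $\mathsf{earliest}(R_{l+2})$. Its pair rounds are therefore at least $\mathsf{earliest}(R_{l+2})+1\geq t_l+7>t_l+5$. So every round in $W_{l+1}$ exceeds every round in $W_l$.

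\emph{General $l<m$.} I would show $t_{c+1}\geq t_c+6$ along the sequence and sum these bounds. The window of transition $m$ begins no earlier than $t_m$, actually $t_m+1$ for old members and $t_m-1+1$ for a joiner, and it is contained in $[t_m,t_m+5]$. Both cases exceed $t_l+5$. Disjointness then follows.

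The main obstacle is the existence step when the only witness is a joining process, together with the degenerate case of an empty intermediate window. I would resolve both by invoking \cref{lemma:eventual-first-installation} for every issued transition, which makes every $t_c$ defined unconditionally.
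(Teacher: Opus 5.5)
Your proof is correct and follows essentially the paper's route. The pair is assigned only by $\texttt{InstallTransition}$; the install-round bounds for old members and for a joiner confine $W_l$ to $[t_l,t_l+5]$; and \cref{lemma:following-reconfiguration-earliest-round,lemma:earliest-install-lower-bound} push the next transition's pair rounds to at least $t_l+7$. The paper instead chains $t_{k+1}\geq t_k+6$ directly, which is equivalent to your transitivity argument through nonempty intermediate windows.

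Your explicit appeal to \cref{lemma:eventual-first-installation} for the existence of $t_l$, when the only witness is a joining process, is more careful than the paper. The paper infers that existence directly from any correct invocation of $\texttt{InstallTransition}$.
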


\begin{proof}
A correct process $p_i$ assigns the pair $\{R_l,R_{l+1}\}$ only through an
invocation \\ $\texttt{InstallTransition}_i(R_l,R_{l+1},a)$, which assigns
it exactly to rounds $a+1,\ldots,a+4$, and only correct members of
$\mathsf{members}(R_l)$ and a correct process joining in the transition
invoke it. Hence, if $W_l\neq\varnothing$, some correct process invokes
$\texttt{InstallTransition}$ for the transition and $t_l$ exists.
Because the transition is issued,
\cref{lemma:configuration-level-induction} supplies the
configuration-level induction hypothesis for $R_l$. By
\cref{corollary:install-transition-round-bound}, every correct member of
$\mathsf{members}(R_l)$ that invokes $\texttt{InstallTransition}$ does
so in round $t_l$ or $t_l+1$, and by
\cref{lemma:joining-install-round-cases,%
lemma:joining-old-member-install-rounds}, a correct joining process does
so in a round $u_l\in\{t_l-1,t_l,t_l+1\}$. Every assignment of this
pair therefore covers rounds in $[a+1,a+4]$ with
$t_l-1\leq a\leq t_l+1$, so $W_l\subseteq[t_l,\,t_l+5]$.

Now fix $l<m$ with $W_l,W_m\neq\varnothing$. For every $k$ with
$l\leq k<m$, the transition $R_{k+1}\rightarrow R_{k+2}$ is issued, so
\cref{lemma:following-reconfiguration-requires-installation} yields a
correct invocation of $\texttt{InstallTransition}$ for
$R_k\rightarrow R_{k+1}$, and $t_k$ exists.
\Cref{lemma:following-reconfiguration-earliest-round} gives
$\mathsf{earliest}(R_{k+2})\geq t_k+6$, and
\cref{lemma:earliest-install-lower-bound} gives
$t_{k+1}\geq\mathsf{earliest}(R_{k+2})$. Hence $t_{k+1}\geq t_k+6$ for
every such $k$, and
\[
  \min W_m\;\geq\;t_m\;\geq\;t_l+6(m-l)\;>\;t_l+5\;\geq\;\max W_l.
\]
Every round of $W_m$ therefore exceeds every round of $W_l$, and the
windows are disjoint. 
\end{proof}

\begin{lemma}[Correct members of $R_l$ and $R_{l+1}$ cannot assign
$\{R_l\}$ and $\{R_{l+1}\}$ in the same round]
\label{lemma:same-transition-enabled-configs}
In an execution satisfying the assumptions of
\cref{sec:problem-definition}, fix a valid issued transition
$R_l\rightarrow R_{l+1}$ and let
\[
  \delta_{l+1}=\mathsf{earliest}(R_{l+1}).
\]
Fix a logical round $r\geq\delta_{l+1}$, and suppose some correct
$p_i\in\mathsf{members}(R_l)\cup\mathsf{members}(R_{l+1})$ satisfies
\[
  \mathit{enabled\text{-}configs}_i[r]\neq\bot
  \qquad\text{and}\qquad
  R_l\in\mathit{enabled\text{-}configs}_i[r].
\]
Then every correct process
$p_j\in\mathsf{members}(R_l)\cup\mathsf{members}(R_{l+1})$
with $\mathit{enabled\text{-}configs}_j[r]\neq\bot$ satisfies
\[
  \mathit{enabled\text{-}configs}_j[r]
  \in
  \bigl\{
    \{R_l\},
    \{R_l,R_{l+1}\},
    \{R_{l+1}\}
  \bigr\}.
\]
Moreover, there do not exist two such correct processes $p_j$ and $p_k$
for which
\[
  \mathit{enabled\text{-}configs}_j[r]=\{R_l\}
  \qquad\text{and}\qquad
  \mathit{enabled\text{-}configs}_k[r]=\{R_{l+1}\}.
\]
\end{lemma}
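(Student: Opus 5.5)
The plan is to reduce the statement to \cref{lemma:brb-agreement-transition-rounds}, which already gives alignment for correct old members. What remains is to add a possible correct joining process $p_j\in\mathsf{members}(R_{l+1})\setminus\mathsf{members}(R_l)$. For a removal or unchanged membership, $\mathsf{members}(R_l)\cup\mathsf{members}(R_{l+1})=\mathsf{members}(R_l)$, so the claim is exactly that lemma. Its premise needs a correct \emph{old} member with $R_l$ in its round-$r$ value. If the premise's $p_i$ is itself the joining process, I would handle that case separately, as below.

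\textbf{Step 1: the joining process's assigned values.} The joining process assigns $\{R_1\}$ in round $1$. Through $\texttt{InstallTransition}$ it assigns $\{R_l,R_{l+1}\}$ exactly in rounds $u+1,\dots,u+4$, where $u$ is the round returned by $\texttt{DiscoverBegin}$. After $\texttt{FinishTransition}$, its ordinary $\texttt{UpdateConfigs}$ assigns $\{R_{l+1}\}$ for rounds from $u+5$ on. This lasts until a following transition, which \cref{lemma:transition-window-disjointness} and \cref{lemma:following-reconfiguration-earliest-round} push to rounds beyond $t+5$. Since $r\geq\delta_{l+1}>1$, and any round-$1$ value is irrelevant, the joining process only ever contributes the pair or the new singleton $\{R_{l+1}\}$ in relevant rounds. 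It never contributes $\{R_l\}$. So the first claim, that every assigned value lies in the three-element family, holds for the joining process.

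\textbf{Step 2: first claim for old members.} If the premise holds with an old member $p_i$, apply \cref{lemma:brb-agreement-transition-rounds}. If instead only the joining process has $R_l$ in its round-$r$ value, then $r\in[u+1,u+4]$. By \cref{lemma:joining-install-round-cases,lemma:joining-old-member-install-rounds}, every correct old member has install round in $\{u-1,u,u+1\}$. Hence its assigned round-$r$ value is $\{R_l\}$, the pair, or $\{R_{l+1}\}$ as well. Where it contains $R_l$, \cref{lemma:brb-agreement-transition-rounds} applies with that member. Otherwise every old-member value is already in the family.

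\textbf{Step 3: no coexistence of the two singletons.} A new singleton at a joining process in round $r$ forces $r\geq u+5$. By \cref{lemma:joining-old-member-install-rounds}, $u+5$ exceeds the last paired round of every old member except when $a_i=u+1$, whose pair ends at $u+5$. In every case an old member with an assigned round-$r$ value has already passed its install round, so its value is the pair or $\{R_{l+1}\}$, never $\{R_l\}$.

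Conversely, an old member holding $\{R_l\}$ in round $r$ with $r\geq\delta_{l+1}$ has not installed before round $r$. Its install round is therefore at least $r$, so $r\leq u+1$. The joining process then has in round $r$ either no assigned value or the pair, since its new singleton only starts at $u+5$. Coexistence between two old members is excluded by the old-member lemma. The main obstacle is bookkeeping: making sure the joining process's assigned values outside $[u+1,u+4]$ are never the old singleton in rounds $\geq\delta_{l+1}$, and that $u$ is the same round used in the case analysis. I would settle this by inspecting \cref{alg:joining-process}, which shows that no enabled-configs value is assigned for rounds $2,\dots,u$.
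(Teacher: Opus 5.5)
Your proof is correct, but it is organized differently from the paper's. You delegate everything about correct old members to \cref{lemma:brb-agreement-transition-rounds}. Its alignment conclusion already covers two things inside $\mathsf{members}(R_l)$: the exclusion of foreign configurations, and the non-coexistence of $\{R_l\}$ and $\{R_{l+1}\}$. Your only new work therefore concerns the single possible correct joining process. You read its value sequence off \cref{alg:joining-process} and compare it with old members through \cref{lemma:joining-install-round-cases,lemma:joining-old-member-install-rounds}. The price is a case split on whether the witness $p_i$ is an old member or the joining process.

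The paper instead treats all of $\mathsf{members}(R_l)\cup\mathsf{members}(R_{l+1})$ uniformly. It combines the install-round windows ($t$ or $t+1$ for old members, $u\in\{t-1,t,t+1\}$ for the joining process) with the monotone progression old singleton, then pair, then new singleton, which rules out coexistence. It then separately re-derives that no earlier configuration (via \cref{lemma:transition-window-disjointness,lemma:old-configuration-round-cutoff}) and no configuration after $R_{l+1}$ can occur in round $r$.

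That last exclusion is what your phrase ``in relevant rounds'' relies on, as does the claim in Step~2 that old members' values lie in the family. You should state it explicitly:
\begin{itemize}
\item the premise gives $r\leq a_i+4\leq t+5$ for an old witness and $r\leq u+4\leq t+5$ for the joining witness;
\item values containing $R_{l+2}$ arise only after an install round at least $\mathsf{earliest}(R_{l+2})\geq t+6$, by \cref{lemma:earliest-install-lower-bound,lemma:following-reconfiguration-earliest-round}.
\end{itemize}
Also, every lemma you cite assumes the configuration-level induction hypothesis for $R_l$. Obtain it from \cref{lemma:configuration-level-induction}, and obtain $t$ from \cref{lemma:eventual-first-installation}, as the paper does in its first lines.
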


\begin{proof}
By \cref{lemma:configuration-level-induction}, the configuration-level
induction hypothesis holds for $R_l$.
By \cref{lemma:eventual-first-installation}, let $t$ be the earliest install
transition round. By
\cref{corollary:install-transition-round-bound}, every correct process in
$\mathsf{members}(R_l)$ installs in round $t$ or $t+1$. Its successive
immutable enabled-configs values are $\{R_l\}$, then
$\{R_l,R_{l+1}\}$, and then $\{R_{l+1}\}$, and two such processes cannot have
the two different singleton values in the same round. If a correct
process joins, \cref{lemma:joining-install-round-cases,%
lemma:joining-old-member-install-rounds} gives the same alternatives between
its enabled-configs values and those of the old members.

To exclude earlier configurations, fix $c<l$. By
\cref{lemma:transition-window-disjointness,%
lemma:following-reconfiguration-earliest-round}, every round assigned
the pair $\{R_c,R_{c+1}\}$ is below
$\mathsf{earliest}(R_{c+2})$. Earliest-inclusion rounds increase strictly
along the issued sequence by
\cref{lemma:following-reconfiguration-earliest-round,%
lemma:earliest-install-lower-bound}. Therefore,
\[
  \mathsf{earliest}(R_{c+2})
  \leq\mathsf{earliest}(R_{l+1})
  =\delta_{l+1}\leq r,
\]
so no such pair is assigned in round $r$.
Since $r>1$, a correct process assigning the singleton $\{R_c\}$
in round $r$ is a member of $\mathsf{members}(R_c)$.
\Cref{lemma:old-configuration-round-cutoff} would then give
$r<\mathsf{earliest}(R_{c+2})$, which is also impossible.

The premise and the install transition round bounds above now give
$r\leq t+5$. If $R_{l+2}$ exists,
\cref{lemma:following-reconfiguration-earliest-round} gives
\[
  r\leq t+5<\mathsf{earliest}(R_{l+2})
  \leq\mathsf{earliest}(R_c)
  \qquad\text{for every }c\geq l+2.
\]
A correct process assigns a value containing such an $R_c$ only after
an invocation of $\texttt{InstallTransition}$ for
$R_{c-1}\rightarrow R_c$ or $R_c\rightarrow R_{c+1}$.
By \cref{lemma:earliest-install-lower-bound} and the increasing
earliest-inclusion rounds, either invocation assigns such values only
to rounds greater than $\mathsf{earliest}(R_c)$.
Thus no configuration after $R_{l+1}$ occurs in an assigned round-$r$
value. If $R_{l+2}$ does not exist, authentication excludes every later
configuration directly. The stated alternatives therefore cover all
assigned values in round $r$.
\end{proof}

\begin{lemma}[Enabled-configs values assigned to correct processes in one
logical round involve at most one valid transition]
\label{lemma:same-round-enabled-configs}
Fix a logical round $r$ and correct processes $p_i$ and $p_j$ such that
\[
  E_i:=\mathit{enabled\text{-}configs}_i[r]\neq\bot,
  \qquad
  E_j:=\mathit{enabled\text{-}configs}_j[r]\neq\bot.
\]
Then either
\[
  E_i=E_j=\{R_c\}
\]
for some configuration $R_c$, or there is a valid transition
$R_l\rightarrow R_{l+1}$ such that
\[
  E_i,E_j
  \in
  \bigl\{
    \{R_l\},
    \{R_l,R_{l+1}\},
    \{R_{l+1}\}
  \bigr\}
\]
and at least one of $E_i$ and $E_j$ equals
$\{R_l,R_{l+1}\}$. In particular, if $E_i$ and $E_j$ are both
singletons, then $E_i=E_j$.
\end{lemma}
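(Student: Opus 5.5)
The plan is to classify every assigned enabled-configs value at a correct process by the transition, if any, whose pair covers round $r$, and then to reduce to \cref{lemma:same-transition-enabled-configs}. The case split depends on whether some correct process records a pair in round $r$.

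\emph{Case 1: some correct process assigns a pair in round $r$.} Suppose, without loss of generality, that $E_i=\{R_l,R_{l+1}\}$. By \cref{lemma:transition-window-disjointness}, $r\in W_l$, and $r\geq t_l\geq\mathsf{earliest}(R_{l+1})$ by \cref{lemma:earliest-install-lower-bound}. Then $R_l\in E_i$, so the premise of \cref{lemma:same-transition-enabled-configs} holds.

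\begin{itemize}
\item If $p_i,p_j\in\mathsf{members}(R_l)\cup\mathsf{members}(R_{l+1})$, that lemma puts $E_j$ in the three-element list. This gives the second alternative, with $E_i$ the pair.
\item If $p_j$ lies outside this union, $E_j$ must still be controlled. I would argue that $E_j$ names only configurations $R_c$ with $p_j\in\mathsf{members}(R_c)$, so $c\ne l,l+1$.
\begin{itemize}
\item For $c<l$: the cutoff in \cref{lemma:old-configuration-round-cutoff}, applied to the transition out of $R_c$, gives $r<\mathsf{earliest}(R_{c+2})\le\mathsf{earliest}(R_{l+1})\le r$, a contradiction.
\item For $c\ge l+2$: values containing $R_c$ occur only above $\mathsf{earliest}(R_c)\geq t_{l+1}>t_l+5\geq r$, also a contradiction.
\end{itemize}
\end{itemize}

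\emph{Case 2: both $E_i$ and $E_j$ are singletons.} Write $E_i=\{R_a\}$ and $E_j=\{R_b\}$ with $a\le b$, and suppose $a<b$. We aim for a contradiction.

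The key point is that a correct process assigns singleton $\{R_a\}$ in round $r>1$ only while $R_a$ is installed. By \cref{lemma:old-configuration-round-cutoff}, if $R_{a+2}$ exists then $r<\mathsf{earliest}(R_{a+2})$. Likewise, $\{R_b\}$ is assigned only after $p_j$ finished the transition $R_{b-1}\to R_b$. So $r>t_{b-1}\ge\mathsf{earliest}(R_b)$, since \cref{lemma:earliest-install-lower-bound} applies when $b\ge2$.

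\begin{itemize}
\item If $b\ge a+2$, this gives $r\ge\mathsf{earliest}(R_{a+2})$ by strict monotonicity of earliest rounds, a contradiction.
\item If $b=a+1$, then $r\ge\mathsf{earliest}(R_{a+1})$ and $R_a\in E_i$. \Cref{lemma:same-transition-enabled-configs} then forbids $\{R_a\}$ and $\{R_{a+1}\}$ coexisting. For this I must check that $p_i\in\mathsf{members}(R_a)$ and $p_j\in\mathsf{members}(R_{a+1})$, which holds because singletons are assigned only for an installed configuration.
\end{itemize}

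Hence $a=b$, which gives the first alternative. The round-$1$ case is trivial, since all values are $\{R_1\}$.

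The main obstacle is the bookkeeping establishing that a correct process's singleton value names its installed configuration. A joining process temporarily sets $\mathit{local}=R_1$ and assigns $\{R_1\}$ in round $1$ only, so this case must be excluded. The bookkeeping also needs that a process holding $\{R_b\}$ has passed $t_{b-1}$. For the joining case I would invoke \cref{lemma:joining-install-round-cases}, which gives $u\ge t_{b-1}-1\geq\mathsf{earliest}(R_b)-1$, so round $u+5\geq\mathsf{earliest}(R_b)$ suffices.
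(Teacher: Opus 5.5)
Your proof follows the paper's route. In the pair case you place $r$ in $[t_l,t_l+5]$ and at or above $\mathsf{earliest}(R_{l+1})$. You then rule out configurations below $R_l$ with \cref{lemma:old-configuration-round-cutoff} and configurations above $R_{l+1}$ with the earliest-inclusion bounds, and hand the rest to \cref{lemma:same-transition-enabled-configs}. Your all-singletons case matches the paper's argument: adjacent configurations go through \cref{lemma:same-transition-enabled-configs}, and non-adjacent ones set the cutoff against $r>\mathsf{earliest}(R_b)$.

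Two steps are misstated, though both are easy to repair.

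\begin{itemize}
\item For $c=l+2$, the link $\mathsf{earliest}(R_c)\geq t_{l+1}$ points the wrong way: \cref{lemma:earliest-install-lower-bound} gives $t_{l+1}\geq\mathsf{earliest}(R_{l+2})$, not the reverse. Use instead $\mathsf{earliest}(R_c)\geq\mathsf{earliest}(R_{l+2})\geq t_l+6>r$, which follows from \cref{lemma:following-reconfiguration-earliest-round} and monotonicity. This is what the paper does.
\item The claim that $E_j$ names only configurations containing $p_j$ fails for a joining process's pair $\{R_d,R_{d+1}\}$. That pair names $R_d$ even though $p_j\notin\mathsf{members}(R_d)$. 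Your case analysis still goes through if you apply it to a configuration in $E_j$ that does contain $p_j$; one always exists when $r>1$. Alternatively, exclude every pair other than $\{R_l,R_{l+1}\}$ at once with \cref{lemma:transition-window-disjointness}, as the paper does.
\end{itemize}

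Finally, the detour through \cref{lemma:joining-install-round-cases} for a joining process is unnecessary. \cref{lemma:earliest-install-lower-bound} already covers joining processes and gives $u\geq\mathsf{earliest}(R_b)$ directly.
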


\begin{proof}
It suffices to prove the stronger claim for the collection of all assigned
round-$r$ values at correct processes. For a generic correct process $p_x$,
inspect $\mathit{enabled\text{-}configs}_x[r]$. Each
process moves through the authenticated configuration sequence in consecutive
order, and for each transition its successive enabled-configs values are the
old singleton, then the pair comprising the two configurations of that
valid transition, then the new singleton.

Suppose one such value is $\{R_l,R_{l+1}\}$, and let $t$ be the earliest
install transition round for this transition.
\Cref{lemma:transition-window-disjointness} excludes every other
transition pair in round $r$ and gives $r\leq t+5$.
The pair is assigned only to rounds after its install transition round,
so \cref{lemma:earliest-install-lower-bound} also gives
\[
  r>\mathsf{earliest}(R_{l+1}).
\]

Consider any assigned singleton $\{R_c\}$ with $c<l$.
Since $r>1$, the correct process assigning it is a member of
$\mathsf{members}(R_c)$. By
\cref{lemma:old-configuration-round-cutoff},
\[
  r<\mathsf{earliest}(R_{c+2})
  \leq\mathsf{earliest}(R_{l+1}),
\]
a contradiction. Here earliest-inclusion rounds increase strictly along
the issued sequence by
\cref{lemma:following-reconfiguration-earliest-round,%
lemma:earliest-install-lower-bound}.

If an assigned singleton $\{R_c\}$ has $c\geq l+2$, then $R_{l+2}$
exists, and \cref{lemma:following-reconfiguration-earliest-round} gives
\[
  r\leq t+5<\mathsf{earliest}(R_{l+2})
  \leq\mathsf{earliest}(R_c).
\]
A correct process assigns $\{R_c\}$ only after completing the transition
into $R_c$. By \cref{lemma:earliest-install-lower-bound}, its install
transition round is at least $\mathsf{earliest}(R_c)$, so such a
singleton can be assigned only to a round greater than
$\mathsf{earliest}(R_c)$, again a contradiction.

Thus every assigned value is $\{R_l\}$, $\{R_l,R_{l+1}\}$, or
$\{R_{l+1}\}$. \Cref{lemma:same-transition-enabled-configs} forbids the
two singleton enabled-configs values from coexisting.

Suppose instead that all enabled-configs values in the round are singletons.
If $r=1$, all assigned values equal $\{R_1\}$, so assume $r\geq2$. Consider
two of them, $\{R_a\}$ and $\{R_b\}$, with $a<b$. If $b=a+1$,
\cref{lemma:same-transition-enabled-configs} excludes their coexistence. If
$b\geq a+2$, instance authorization implies that the process assigning
$\{R_a\}$ is a correct member of $\mathsf{members}(R_a)$. Let $p_a$ be
that process; then $R_a\in\mathit{enabled\text{-}configs}_a[r]$, so
\cref{lemma:old-configuration-round-cutoff} gives
\[
  r<\mathsf{earliest}(R_{a+2}).
\]
Consider the correct process assigning $\{R_b\}$. It assigned this
singleton only after completing the transition $R_{b-1}\rightarrow R_b$,
either as a retained member of $\mathsf{members}(R_{b-1})$ or as the
process joining in that transition; a process removed by that transition
stops round advancement and never assigns $\{R_b\}$. In both cases its
install transition round for $R_{b-1}\rightarrow R_b$ is at least
$\mathsf{earliest}(R_b)$ by \cref{lemma:earliest-install-lower-bound},
and it assigns $\{R_b\}$ only for rounds greater than that install
transition round. Hence $r>\mathsf{earliest}(R_b)$. Since $b\geq a+2$ and
earliest-inclusion rounds increase along the issued sequence, as noted
above, $\mathsf{earliest}(R_b)\geq\mathsf{earliest}(R_{a+2})$, so
$r>\mathsf{earliest}(R_{a+2})$ --- the opposite inequality. This
contradiction excludes nonadjacent singleton values.
Thus all singleton enabled-configs values are identical. The two cases prove
both claims.
\end{proof}

By \cref{lemma:dynamic-dag-structured-broadcast}, DAG-Consistency and
Causal Closure hold for every block inserted by a correct process.
Consequently, the ancestor relation between two stored blocks is
independent of the correct process storing them; we omit the process
subscript when reasoning about fixed block objects. By
\cref{alg:brb}, a correct process $p_i$ votes for a configuration-scoped BRB
instance declared under $R_l$ in round $r$ only when
$R_l\in\mathit{enabled\text{-}configs}_i[r]$. Invocations of
$\texttt{Elect}_i$ are serialized by the
requirement in \cref{alg:leader-election_2}.
For a correct process $p_i$, write $\texttt{WaveLeader}_i(R_l,w)$ for the
result of evaluating $\texttt{WaveLeader}(R_l,w)$ at $p_i$.

\subsubsection{Leader Uniqueness and Direct-Anchor Reachability}

\paragraph*{Unique Wave Leaders and Their Reachability from Later Blocks.}
We first show that, in one wave, at most one configuration can have
leader-voting blocks in more distinct broadcaster slots than its fault bound.
It follows that a wave has at most one accepted leader. We then show that
every later-wave block reaches at least $f_l+1$ leader-voting blocks that
reach an earlier direct anchor. This makes backward scans deterministic and
their leader sequences nested.

\begin{lemma}[In one wave, two configurations cannot each have leader-voting
blocks in more slots than their respective fault bounds]
\label{lemma:single-supported-vote-type}
Let $p_i$ and $p_j$ be correct processes and let $w\geq1$. Suppose
$DAG_i$ contains $R_a$-typed leader-voting blocks for wave $w$ in at least
$f_a+1$ distinct round-broadcaster slots, and $DAG_j$ contains $R_b$-typed
leader-voting blocks for that wave in at least $f_b+1$ distinct
round-broadcaster slots.
Then $R_a=R_b$.
\end{lemma}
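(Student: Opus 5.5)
The plan is to find a correct broadcaster in each collection and then apply \cref{lemma:same-round-enabled-configs} to their enabled-configs values in round $r=\mathsf{last}(w)$. By DAG-Integrity (\cref{lemma:dynamic-dag-structured-broadcast}), every block in $DAG_i$ or $DAG_j$ is admissible. An $R_a$-typed leader-voting block therefore carries a valid certificate declared under $R_a$, and its broadcaster lies in $\mathsf{members}(R_a)$. The $f_a+1$ distinct broadcaster slots give $f_a+1$ distinct identities in $\mathsf{members}(R_a)$. Since at most $f_a$ of these are ever corrupted, at least one block $X_a$ has a correct broadcaster $p_x$. In the same way we obtain a block $X_b$ in $DAG_j$ whose broadcaster $p_y\in\mathsf{members}(R_b)$ is correct.

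Next, read each correct broadcaster's local state from its block. By DAG-Integrity, a block in a correct broadcaster's slot is the one produced by its prescribed append. Its metadata therefore record that process's own snapshot: $\mathsf{enabled}(\mu_{X_a})=\mathit{enabled\text{-}configs}_x[r]$, and the same holds for $p_y$. The definition of an $R_a$-typed leader-voting block forces $\mathsf{enabled}(\mu_{X_a})=\{R_a\}$. Hence $\mathit{enabled\text{-}configs}_x[r]=\{R_a\}$ is assigned, and likewise $\mathit{enabled\text{-}configs}_y[r]=\{R_b\}$. There is one subtlety. A correct joining process's transition block records the pair, not a singleton, so it can never be a leader-voting block. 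It therefore cannot be $X_a$ or $X_b$ here, and we may rely on the ordinary $\texttt{DAG-Append}$ snapshot.

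Finally, apply \cref{lemma:same-round-enabled-configs} to $p_x$, $p_y$ and round $r$. Both assigned values are singletons, so the lemma's last clause gives $\{R_a\}=\{R_b\}$, that is $R_a=R_b$.

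The main obstacle is justifying that the recorded $\mathsf{enabled}$ field of a correct broadcaster's certified block equals its process-local value. This needs BRB Integrity (\cref{lemma:brb-integrity}): every valid certificate in a correct broadcaster's slot binds the value that broadcaster supplied. It also needs the fact that a correct append copies $\mathit{enabled\text{-}configs}_x[r]$ verbatim into $E$. Both facts appear in the proof conventions. The rest is counting, because distinct slots in one round mean distinct identities.
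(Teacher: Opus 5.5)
Your proposal is correct and takes the same route as the paper's proof. You extract a correct broadcaster from each collection of at least $f_a+1$ (resp.\ $f_b+1$) slots, identify the recorded $\mathsf{enabled}$ fields $\{R_a\}$ and $\{R_b\}$ with those broadcasters' local round-$\mathsf{last}(w)$ values, and conclude by \cref{lemma:same-round-enabled-configs}. Your extra justifications (membership via admissibility, and BRB Integrity binding the full metadata in a correct broadcaster's slot) make explicit what the paper leaves implicit, and your transition-block caveat is harmless but unnecessary, since $\texttt{DAG-Transition-Append}$ also copies $\mathit{enabled\text{-}configs}_i[s]$ verbatim.
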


\begin{proof}
Each collection contains a block whose broadcaster is correct. Let
$X=\langle m_X,\mathsf{last}(w),\mu_X,C_X\rangle$ and
$Y=\langle m_Y,\mathsf{last}(w),\mu_Y,C_Y\rangle$ be such blocks.
By the definitions of $R_a$-typed and $R_b$-typed leader-voting blocks,
\[
\begin{aligned}
\mathsf{local}(\mu_X)&=R_a,
&\mathsf{enabled}(\mu_X)&=\{R_a\},\\
\mathsf{local}(\mu_Y)&=R_b,
&\mathsf{enabled}(\mu_Y)&=\{R_b\}.
\end{aligned}
\]
Let $r=\mathsf{last}(w)$, and let $p_x$ and $p_y$ be the broadcasters
of $X$ and $Y$. Correct broadcasters set
$\mathsf{enabled}(\mu_X)=\mathit{enabled\text{-}configs}_x[r]$ and
$\mathsf{enabled}(\mu_Y)=\mathit{enabled\text{-}configs}_y[r]$.
By \cref{lemma:same-round-enabled-configs}, the same logical round cannot
be singleton-enabled for two different configurations at correct
broadcasters. Applying this fact to $X$ and $Y$ gives
$R_a=R_b$.
\end{proof}

\begin{lemma}[All blocks identified as direct anchors or accepted during
backward scans for one wave are identical]
\label{lemma:unique-accepted-leader}
For every wave $w$, all blocks that correct processes either identify as a
direct anchor for $w$ or accept for $w$ during a backward scan are identical.
\end{lemma}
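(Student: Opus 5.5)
The plan is to reduce every identification or acceptance for wave $w$ to a single configuration and a single reconstructed coin value, and then to a single block via DAG-Consistency. First I would observe that a block $B$ is identified as a direct anchor for $w$ at a correct $p_i$ only if $B=\texttt{WaveLeader}_i(R_a,w)\neq\bot$ for some $R_a\in\mathsf{WaveConfigs}_i(w)$ with $\mathsf{Support}_i(B,R_a,w)\geq q_a$. It is accepted in a backward scan only if $B=\texttt{WaveLeader}_i(R_a,w)\neq\bot$ with $\mathsf{HistorySupport}_i(B^\star,B,R_a,w)\geq f_a+1$. In both cases $DAG_i$ contains $R_a$-typed leader-voting blocks for $w$ in at least $f_a+1$ distinct broadcaster slots, since $q_a\geq f_a+1$ and both support counts range over distinct broadcaster identities. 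Moreover, $\texttt{WaveLeader}$ returned a non-$\bot$ value, so $|\mathsf{Shares}_i(R_a,w)|\geq f_a+1$.

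Next, fix two such events: $B$ at $p_i$ with configuration $R_a$, and $B'$ at $p_j$ with configuration $R_b$. By \cref{lemma:single-supported-vote-type}, applied to these two collections, $R_a=R_b$. This holds even when $p_i=p_j$ and the two events arise in different invocations; the same lemma also shows that within one invocation the sets $A$ and $E_w$ can contain candidates from only one configuration.

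With the configuration fixed to $R_a$, both processes combine $f_a+1$ valid $R_a$-scoped coin shares on the message $\langle w\rangle$. By threshold-signature uniqueness, they obtain the same signature $S_{a,w}$, and hence the same selected member $p_k\in\Pi_a$; this is coin Agreement. Then $B$ and $B'$ are both stored round-$\mathsf{first}(w)$ blocks in slot $(\mathsf{first}(w),p_k)$ declared under $R_a$. DAG-Integrity (at most one block per slot at a correct process) and DAG-Consistency from \cref{lemma:dynamic-dag-structured-broadcast} give $B=B'$.

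The step needing the most care is the first one: checking that both election paths really exhibit $f_a+1$ distinct-slot $R_a$-typed voting blocks in the processes' own DAGs, so that \cref{lemma:single-supported-vote-type} applies. For the backward scan, the blocks counted by $\mathsf{HistorySupport}$ are in $\mathsf{VotingBlocks}_i(R_a,w)$, so this holds. A minor subtlety is that the shares a process actually combines might come from Byzantine slots. Uniqueness of the threshold signature settles this, because any $f_a+1$ valid shares yield the unique signature regardless of who produced them.
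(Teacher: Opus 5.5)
Your proof is correct and follows essentially the same route as the paper. Both arguments take either election path, extract at least $f_a+1$ distinct-slot $R_a$-typed leader-voting blocks, apply \cref{lemma:single-supported-vote-type} to force a single configuration, and then use threshold-signature uniqueness together with DAG-Consistency to identify the same round-$\mathsf{first}(w)$ block. Your explicit handling of the same-process case via DAG-Integrity, and of shares coming from Byzantine slots, only spells out points the paper leaves implicit.
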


\begin{proof}
Let $p_i$ and $p_j$ be correct processes. Let $B$ be selected as the coin
candidate and then identified as a direct anchor at $p_i$ or accepted there
during a backward scan, and define $\widehat B$ analogously at $p_j$.
Suppose $B$ and $\widehat B$ are obtained using $R_l$-typed and $R_s$-typed
leader-voting blocks, respectively. A direct
anchor identified using $R_l$-typed blocks has at least $q_l\geq f_l+1$
distinct $R_l$-typed leader-voting blocks that reach it. If a backward scan
with current anchor $Y$ accepts $B$, at least $f_l+1$ distinct
$R_l$-typed leader-voting blocks $X$ satisfy
$Y\leadsto X\leadsto B$. Thus either way of obtaining $B$
provides at least $f_l+1$ $R_l$-typed leader-voting blocks. The same
argument provides at least $f_s+1$ $R_s$-typed leader-voting blocks for
$\widehat B$. All blocks counted for $B$ lie in round
$\mathsf{last}(w)$ of $DAG_i$, and all blocks counted for $\widehat B$ lie
in that same round of $DAG_j$. By DAG-Integrity, distinct blocks within
each collection occupy distinct round-broadcaster slots. Hence the two
collections satisfy
the premises of \cref{lemma:single-supported-vote-type}, which gives
$R_l=R_s$.

The two candidates are therefore
$\texttt{WaveLeader}_i(R_l,w)$ and $\texttt{WaveLeader}_j(R_l,w)$.
Threshold-signature uniqueness implies
that any two correct processes that reconstruct the coin for
$(R_l,w)$ obtain the same signature and selected member, hence the same
round-broadcaster slot in round $\mathsf{first}(w)$.
DAG-Consistency makes the block occupying that slot unique, so both
invocations select the same block and $B=\widehat B$. Hence all
direct-anchor identifications and backward-scan acceptances for wave $w$
select the same block.
\end{proof}

By \cref{lemma:unique-accepted-leader}, a wave has at most one direct
anchor. If wave $k$ has one, denote it by $D_k$.
Thus every backward scan begun from wave $k$ starts with scan anchor $D_k$.

The next lemma is the dynamic counterpart of DAG-Rider's leader
connectivity lemma~\cite[Lemma 2]{dagrider}.

\begin{lemma}[If a direct anchor is identified using $R_l$-typed
leader-voting blocks, every certified last-round block is declared under
$R_l$ by an $R_l$-member]
\label{lemma:direct-anchor-round-slots}
Let $w$ be a wave, and suppose a correct process $p_d$ identifies a block
$D_w$ as a direct anchor for $w$ using $R_l$-typed leader-voting blocks.
Let $r=\mathsf{last}(w)$. For every block
$B=\langle m,r,\mu,C\rangle$ such that $C$ is a valid BRB quorum
certificate for $\langle m,r,\mu\rangle$ under the configuration declared
by $\mu$,
\[
  \mathsf{local}(\mu)=R_l
  \qquad\text{and}\qquad
  \mathsf{ID}(\mu)\in\mathsf{members}(R_l).
\]
In particular, its round-broadcaster slot is
$(r,\mathsf{ID}(\mu))$, where
$\mathsf{ID}(\mu)\in\mathsf{members}(R_l)$.
\end{lemma}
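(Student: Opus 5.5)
The plan is to use the $q_l$ supporting leader-voting blocks to pin down, via \cref{lemma:same-round-enabled-configs}, which configurations any correct voter can enable in round $r=\mathsf{last}(w)$, and then to transfer that restriction to every valid certificate. The identification at $p_d$ supplies at least $q_l$ distinct $R_l$-typed leader-voting blocks in round $r$. Since $q_l-f_l\geq f_l+1\geq1$, at least one of them, say $X$ with broadcaster $p_x$, has a correct broadcaster. That broadcaster recorded $\mathit{enabled\text{-}configs}_x[r]=\mathsf{enabled}(\mu_X)=\{R_l\}$. So round $r$ is singleton-enabled for $R_l$ at a correct member of $\mathsf{members}(R_l)$.

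Next, fix an arbitrary certified round-$r$ block $B$ with $R_c=\mathsf{local}(\mu)$. Because $q_c>f_c$, its certificate contains a vote share from some correct $p_v\in\mathsf{members}(R_c)$ with $R_c\in\mathit{enabled\text{-}configs}_v[r]$. I then apply \cref{lemma:same-round-enabled-configs} to $p_x$ and $p_v$. Either both values are the same singleton, which gives $R_c=R_l$, or they come from one valid transition $R_a\rightarrow R_{a+1}$ and at least one of them is the pair. Since $E_x=\{R_l\}$ is a singleton, the pair must be $E_v$, and $R_l\in\{R_a,R_{a+1}\}$.

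The main obstacle is ruling out this second case, in which some correct process holds a pair containing $R_l$ in round $r$. For this I plan to use the full strength of the support: at least $q_l-f_l$ correct broadcasters have the singleton $\{R_l\}$ in round $r$.

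Suppose first that the pair is $\{R_l,R_{l+1}\}$ and that the singletons are old singletons. For a removal or unchanged membership, certificates under $R_{l+1}$ need $q_{l+1}$ votes. But $\mathsf{members}(R_{l+1})\subseteq\mathsf{members}(R_l)$ up to one added slot, and at least $q_l-f_l$ correct old members refuse $R_{l+1}$ in round $r$. Counting then shows fewer than $q_{l+1}$ members can vote, since $n_{l+1}-(q_l-f_l)<q_{l+1}$ follows from $q_l-f_l\geq f_l+1$. For an addition, the joining slot adds at most one voter, and I expect the same inequality to survive because $f_{l+1}\geq f_l$.

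Suppose instead the pair is $\{R_{l-1},R_l\}$, so $\{R_l\}$ is the new singleton. Then any certificate declared under $R_{l-1}$ needs $q_{l-1}$ correct-heavy votes from processes enabling $R_{l-1}$. The $q_l-f_l$ correct singleton-$R_l$ processes are excluded, and the same count gives fewer than $q_{l-1}$ eligible voters. I will check that this count works in each membership case, possibly using $n\geq4$ as in \cref{lemma:stopping-process-transition-state}. Together these cases force $R_c=R_l$.

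Finally, $\mathsf{ID}(\mu)\in\mathsf{members}(R_l)$ follows because the correct voter checked $\mathsf{ValidInstance}$, which requires the broadcaster to belong to $\mathsf{members}(\mathsf{local}(\mu))=\mathsf{members}(R_l)$. The slot statement is then immediate.
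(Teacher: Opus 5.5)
Your proposal is correct and follows the paper's proof. A correct broadcaster with $\{R_l\}$ in round $r$, together with \cref{lemma:same-round-enabled-configs}, forces any correct voter of a non-$R_l$ certificate to hold the pair $\{R_l,R_c\}$. The at least $q_l-f_l$ correct singleton broadcasters, intersected with $\mathsf{members}(R_c)$, then leave at most $q_c-1$ possible voters, and $\mathsf{ValidInstance}$ gives $\mathsf{ID}(\mu)\in\mathsf{members}(R_l)$.

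The arithmetic needs tidying. The paper merges your predecessor and successor cases by asking only whether $\mathsf{members}(R_c)$ is $\mathsf{members}(R_l)$ with one process added or removed. In the removed case only $q_l-f_l-1$ refusers are guaranteed inside $\mathsf{members}(R_c)$, so your inequality $n_{l+1}-(q_l-f_l)<q_{l+1}$ undercounts the possible voters by one there. The bounds you actually need are $n_l-2f_l\geq f_c+1$ and $n_l-2f_l-1\geq f_c+1$. Both hold, but they follow from $f_c=\lfloor(n_c-1)/3\rfloor$, not from $f_{l+1}\geq f_l$.
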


\begin{proof}
Put $r=\mathsf{last}(w)$. Identifying $D_w$ as a direct anchor gives a set
\[
\mathcal S=
\left\{
X\;\middle|\;
X\text{ is an $R_l$-typed leader-voting block for $w$ in }DAG_d,
\ X\leadsto D_w
\right\}
\qquad\text{with}\qquad |\mathcal S|\geq q_l.
\]
By the definition of a direct anchor and DAG-Integrity, the blocks counted in
$\mathcal S$ occupy distinct round-broadcaster slots. At most $f_l$ of their
broadcasters are corrupted, so the set $\mathcal G$ of correct broadcaster
processes represented in $\mathcal S$ satisfies
\[
  |\mathcal G|\geq q_l-f_l=n_l-2f_l.
\]
For each $p_i\in\mathcal G$, write
$X_i=\langle m_i,r,\mu_i,C_i\rangle$ for its block in $\mathcal S$.
Because $X_i$ is $R_l$-typed, it records
$\mathsf{enabled}(\mu_i)=\{R_l\}$. Correctness of the broadcaster gives
\[
  \mathit{enabled\text{-}configs}_i[r]=\{R_l\},
\]
so $p_i$ issues no round-$r$ BRB vote share for an instance declared under
a configuration other than $R_l$.

Suppose, for contradiction, that a valid round-$r$ certificate is declared
under $R_c\neq R_l$. Since $q_c>f_c$, that certificate contains a share
from a correct voter $p_v$. The voter applied
$\mathsf{ValidInstance}_v(r,\mu)$, and hence
$R_c\in\mathit{enabled\text{-}configs}_v[r]$. Choose any
$p_i\in\mathcal G$. Since $p_i$ and $p_v$ are correct,
\cref{lemma:same-round-enabled-configs} applies to their round-$r$
enabled-configs values. The value at $p_i$ is $\{R_l\}$, whereas the value
at $p_v$ contains $R_c\neq R_l$. Hence the latter value must comprise the
two configurations of one valid transition:
\[
  \mathit{enabled\text{-}configs}_v[r]=\{R_l,R_c\}.
\]

For some process $p$, the memberships of $R_l$ and $R_c$ are either
identical or satisfy
\[
  \mathsf{members}(R_c)=\mathsf{members}(R_l)\cup\{p\}
  \qquad\text{or}\qquad
  \mathsf{members}(R_c)=\mathsf{members}(R_l)\setminus\{p\}.
\]
Consequently,
\[
|\mathcal G\cap\mathsf{members}(R_c)|
\geq
\begin{cases}
 n_l-2f_l,
   &\mathsf{members}(R_c)=\mathsf{members}(R_l)
     \text{ or }
     \mathsf{members}(R_c)=\mathsf{members}(R_l)\cup\{p\},\\
 n_l-2f_l-1,
   &\mathsf{members}(R_c)=\mathsf{members}(R_l)\setminus\{p\}.
\end{cases}
\]
From $f_c=\lfloor(n_c-1)/3\rfloor$, the first lower bound is at least
$f_c+1$ in both the unchanged-membership and addition cases, and the second
is at least $f_c+1$ in the removal case. Thus at least $f_c+1$ correct
members of $R_c$ refuse every $R_c$-declared instance in round $r$. At
most
\[
  n_c-(f_c+1)=q_c-1
\]
shares remain available, contradicting the assumed valid $R_c$
certificate. Hence every valid round-$r$ certificate is declared under
$R_l$.

Finally, let $p_v$ be a correct voter whose share occurs in such a
certificate declared under $R_l$; one exists because $q_l>f_l$. Since
$p_v$ applied $\mathsf{ValidInstance}_v(r,\mu)$, the
broadcaster-membership clause gives
\[
  \mathsf{ID}(\mu)\in
  \mathsf{members}(\mathsf{local}(\mu))
  =\mathsf{members}(R_l).
\]
This proves both claims.
\end{proof}

\begin{lemma}[Every later-wave block DAG-delivered by a correct process
reaches $2f_l+1$ distinct blocks broadcast by $R_l$-members in the direct
anchor's last round]
\label{lemma:later-block-last-round-ancestry}
Let $w$ be a wave, and suppose a correct process $p_d$ identifies a block
$D_w$ as a direct anchor for $w$ using $R_l$-typed leader-voting blocks.
Let $r=\mathsf{last}(w)$. Let $p_i$ be correct and let
$Y\in DAG_i[s]$ be DAG-delivered by $p_i$, where $s$ belongs to a wave
$w'>w$. Use $\mathcal H_Y(r)$ as the abbreviation for
$\mathcal H_{Y,i}(r)$ fixed in the Proof Conventions.
Then
\[
  |\mathcal H_Y(r)|\geq2f_l+1.
\]
For every
$X=\langle m_X,r,\mu_X,C_X\rangle\in\mathcal H_Y(r)$,
\[
  \mathsf{local}(\mu_X)=R_l
  \qquad\text{and}\qquad
  \mathsf{ID}(\mu_X)\in\mathsf{members}(R_l).
\]
Moreover, the blocks in $\mathcal H_Y(r)$ have distinct broadcaster
identifiers: for any $X,X'\in\mathcal H_Y(r)$,
\[
  X\neq X'
  \quad\Longrightarrow\quad
  \mathsf{ID}(\mu_X)\neq\mathsf{ID}(\mu_{X'}).
\]
\end{lemma}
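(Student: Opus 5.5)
The plan is to reduce the claim to the round-$r$ layer and then exploit the parent-count rule together with \cref{lemma:direct-anchor-round-slots}. First I dispose of the structural claims. Every $X\in\mathcal H_Y(r)$ is a block in $DAG_i[r]$. Because it was DAG-delivered by a correct process, it carries a valid certificate for its declared configuration. \Cref{lemma:direct-anchor-round-slots} then gives $\mathsf{local}(\mu_X)=R_l$ and $\mathsf{ID}(\mu_X)\in\mathsf{members}(R_l)$ directly. Distinct broadcaster identifiers follow from DAG-Integrity and DAG-Consistency (\cref{lemma:dynamic-dag-structured-broadcast}): $DAG_i[r]$ holds at most one block per round-broadcaster slot, and slots in round $r$ are determined by the identity.

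For the cardinality bound, I would walk down from $Y$ to round $r+1$. Since $s$ lies in a later wave, $s\geq r+1$. Following parent references and using \cref{lemma:local-dag-ancestor-closure}, I choose some $Z\in DAG_i[r+1]$ with $Y\leadsto Z$; the case $s=r+1$ gives $Z=Y$. Then $\mathcal H_Z(r)\subseteq\mathcal H_Y(r)$, so it suffices to show $|\mathcal H_Z(r)|\geq 2f_l+1$. By $\mathsf{CertifiedParentCount}$ and \cref{lemma:ordinary-block-parent-quorum}, $Z$ has pairwise distinct parent identities, all present in $DAG_i[r]$. Their number is $q_{c}$ for $R_c=\mathsf{local}(\mu_Z)$, or possibly $q_a$ under the joining exception. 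Each such parent lies in $\mathcal H_Z(r)$.

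The remaining step is to show that this parent count is at least $2f_l+1$ whatever configuration $Z$ declares. The parents are round-$r$ blocks, so by the first part they are declared under $R_l$. The configuration-number check in $\mathsf{ValidParentSet}$ then forces $\mathsf{number}(\mathsf{local}(\mu_Z))\geq l$. Using \cref{lemma:same-round-enabled-configs} and the transition-window bounds (\cref{lemma:transition-window-disjointness}), the block $Z$ in round $r+1$ can only declare $R_l$ or $R_{l+1}$. For $R_l$ the count is $q_l\geq 2f_l+1$. For $R_{l+1}$ I would treat three subcases. Under an addition, $q_{l+1}\geq q_l$. Under an unchanged membership, the thresholds are equal. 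Under a removal, $q_{l+1}$ may drop below $q_l$, but the parents are $q_{l+1}$ distinct members of $R_l$ and $q_{l+1}\geq 2f_{l+1}+1$; I would check the arithmetic $q_{l+1}=n_l-1-f_{l+1}\geq 2f_l+1$ from $n_l\geq 5$ and the floor formula. The joining exception gives $q_l$ parents, which is again fine.

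I expect the main obstacle to be the removal case with $n_l=3f_l+1$. There $n_{l+1}=3f_l$, so $f_{l+1}=f_l-1$ and $q_{l+1}=2f_l+1$, which is exactly tight and fine. For other residues of $n_l$ the bound holds with slack. If this case analysis fails, the fallback is a different argument. A later-wave block built by a correct process after the transition carries $q_{l+1}$ parents in rounds above $r$, and I would bootstrap via the common-quorum intersection with the $q_l$ voting blocks instead. I would try the direct arithmetic first.
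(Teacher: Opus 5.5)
Your decomposition matches the paper's proof. The structural claims come from \cref{lemma:direct-anchor-round-slots} and one block per slot. The count is reduced to a certified round-$(r+1)$ block $Z$ with $Y\leadsto Z$; the paper phrases this as an upward induction on rounds. The bound then follows from a case analysis on the parent count of $Z$. Your threshold arithmetic is correct, although the removal bound $q_{l+1}\geq 2f_l+1$ is tight for $n_l=3f_l+2$ as well as for $n_l=3f_l+1$. Using the configuration-number check of $\mathsf{ValidParentSet}$ to exclude a $Z$ declared under $R_{l-1}$ is a clean shortcut that the paper does not take.

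The step that is not yet an argument is the exclusion of a $Z$ declared under some $R_c$ with $c\geq l+2$. \Cref{lemma:same-round-enabled-configs} only relates two correct processes' assigned values in the \emph{same} round. Applied to the correct voter in $Z$'s certificate, it says nothing about $R_l$ unless you also exhibit a correct process whose round-$(r+1)$ value contains $R_l$. Citing the transition-window bounds does not supply that process. The paper's first paragraph provides exactly this anchor: a correct broadcaster $p_u$ of one of the $q_l$ leader-voting blocks that identify $D_w$ records $\{R_l\}$ in round $r$. Hence no removal stops $p_u$ before it advances from round $r$, and by the progress results it eventually assigns a round-$(r+1)$ value containing $R_l$.

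You can instead close the step using safety alone. If $R_{l+2}$ exists, that same $p_u$ gives $r<\mathsf{earliest}(R_{l+2})$ by \cref{lemma:old-configuration-round-cutoff}; if it does not exist, there is nothing to exclude. By \cref{lemma:earliest-install-lower-bound} and the strictly increasing earliest-inclusion rounds, a correct process assigns a value containing $R_c$ with $c\geq l+2$ only to rounds above $\mathsf{earliest}(R_c)\geq r+1$. Since $q_c>f_c$, no round-$(r+1)$ certificate can be declared under such an $R_c$.

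There is also a missing subcase. The joining exception can apply to a $Z$ that declares $R_l$ itself, with $\mathsf{enabled}(\mu_Z)=\{R_{l-1},R_l\}$ and $q_{l-1}$ parents, and your number check does not rule this out. In that subcase $R_{l-1}\rightarrow R_l$ is an addition, so $n_{l-1}=n_l-1$, and your removal computation gives $q_{l-1}\geq 2f_l+1$. The paper covers both directions at once through its configuration $R_s$ that is one valid transition away from $R_l$. With these two repairs your argument is complete, and the fallback you mention is not needed.
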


\begin{proof}
We first show that a correct broadcaster remains active through round
$r+1$ and has an enabled-configs value containing $R_l$ in that round. Let
$\mathcal S$ be the set of $R_l$-typed leader-voting
blocks used to identify the direct anchor. Since
$|\mathcal S|\geq q_l>f_l$, some
$U\in\mathcal S$ has a correct broadcaster $p_u$. Because $U$ is
$R_l$-typed and $p_u$ is correct,
\[
  \mathit{enabled\text{-}configs}_u[r]=\{R_l\}.
\]
The process is active when it produces $U$. If a correct process invokes
$\texttt{InstallTransition}$ with install transition round $a$ for a transition
that removes it, \cref{alg:reconfiguration} assigns the authenticated pair
in rounds $a+1$ through $a+4$. The process may produce blocks only through
round $a+3$ and stops before producing a round-$(a+4)$ block. Since $U$
records the singleton $\{R_l\}$, no earlier invocation can make $p_u$
stop before it advances from round $r$ to round $r+1$. Hence $p_u$ remains
active through that advancement. If a transition from $R_l$ is underway,
this advancement precedes its completion or departure, so
\cref{corollary:transition-round-progress} advances $p_u$. Otherwise,
initialization when $l=1$, or conclusion~3 of
\cref{lemma:configuration-step-closure} for the transition into $R_l$ when
$l>1$, supplies a round reached by every correct $R_l$-member. Sequential
round execution and repeated application of
\cref{corollary:round-advancement} while $p_u$ has not invoked
$\texttt{InstallTransition}$ for a transition from $R_l$ then supply at least
$q_l$ correct processes in each round whose prescribed appends reach
$\texttt{BRB-Broadcast}$ and advance $p_u$; if $p_u$ invokes
$\texttt{InstallTransition}$ first,
\cref{corollary:transition-round-progress} supplies the remaining
advancement.
Thus $p_u$ eventually advances from round $r$ to round $r+1$. Its next
enabled-configs value is either $\{R_l\}$ or $\{R_l,R_s\}$ for a valid
transition $R_l\rightarrow R_s$.
In either case,
\[
  R_l\in\mathit{enabled\text{-}configs}_u[r+1].
\]

Now fix a correct process $p_z$ and let
$Z=\langle m_Z,r+1,\mu_Z,C_Z\rangle\in DAG_z[r+1]$ be arbitrary. Put
$R_c=\mathsf{local}(\mu_Z)$. The valid
certificate $C_Z$ declared under $R_c$ contains a share from a correct voter
$p_v$.
The voter applied $\mathsf{ValidInstance}_v(r+1,\mu_Z)$, so
\[
  R_c\in\mathit{enabled\text{-}configs}_v[r+1].
\]
Together with the value at $p_u$,
\cref{lemma:same-round-enabled-configs} implies that either $R_c=R_l$ or
$R_c=R_s$, where either $R_l\rightarrow R_s$ or
$R_s\rightarrow R_l$ is a valid transition.

If the special parent-count rule for a block produced by
$\texttt{DAG-Transition-Append}$ does not apply, then
$|\mathsf{parents}(\mu_Z)|=q_c$: this is $q_l$ when $R_c=R_l$ and $q_s$
when $R_c=R_s$. If that special rule applies, the block declares the
successor configuration but uses the predecessor's quorum size. The correct
voter then requires its round-$(r+1)$ enabled-configs value to equal the
authenticated pair recorded by $\mu_Z$. Same-round alignment with the value
at $p_u$ makes that pair $\{R_l,R_s\}$, where either
$R_l\rightarrow R_s$ or $R_s\rightarrow R_l$ is a valid transition. The
parent count is therefore $q_l$, or it is $q_s$ for a configuration connected
to $R_l$ by one valid transition.

In all cases the count is at least $2f_l+1$. We have
$q_l\geq2f_l+1$. Whenever the $q_s$ case occurs, equal memberships give
$q_s=q_l$. If
$\mathsf{members}(R_s)=\mathsf{members}(R_l)\cup\{p\}$ for some process
$p$, then $q_s\geq q_l$. If
$\mathsf{members}(R_s)=\mathsf{members}(R_l)\setminus\{p\}$ for some process
$p$, then
$q_s\in\{q_l,q_l-1\}$, where the latter is possible only when
$q_l\geq2f_l+2$. Causal
Closure places all distinct parents of $Z$ in the local
round-$r$ layer, and $Z$ reaches each of them. Therefore
\[
  |\mathcal H_Z(r)|\geq2f_l+1.
\]

The preceding argument proves the base case $v=r+1$ for every block
DAG-delivered by a correct process. We now prove by induction on $v>r+1$
that, for every correct process $p_a$, every block $T\in DAG_a[v]$ satisfies
\[
  |\mathcal H_T(r)|\geq2f_l+1.
\]
Choose any round-$(v-1)$ parent $P$ of $T$. Causal Closure gives
$P\in DAG_a[v-1]$, so the induction hypothesis yields
$|\mathcal H_P(r)|\geq2f_l+1$. Since $T\leadsto_a P$, causal transitivity
gives
\[
  \mathcal H_P(r)\subseteq\mathcal H_T(r),
\]
and hence $|\mathcal H_T(r)|\geq2f_l+1$. If $s=r+1$, the base case gives
the claim for $Y$. If $s>r+1$, take $p_a=p_i$ and $v=s$ in the induction.
Thus the bound holds for $Y$ in all cases.

Every $X\in\mathcal H_Y(r)$ is DAG-delivered at $p_i$ and therefore carries
a valid certificate. Applying
\cref{lemma:direct-anchor-round-slots} gives
\[
  \mathsf{local}(\mu_X)=R_l
  \qquad\text{and}\qquad
  \mathsf{ID}(\mu_X)\in\mathsf{members}(R_l).
\]
Finally, if two blocks $X,X'\in\mathcal H_Y(r)$ had the same broadcaster
identity, they would occupy the same round-broadcaster slot. DAG-Integrity
at $p_i$ then gives $X=X'$. Thus
distinct blocks in $\mathcal H_Y(r)$ have distinct broadcaster identities.
\end{proof}

\begin{lemma}[The causal history of every later-wave block contains $f_l+1$
leader-voting blocks that reach the earlier direct anchor]
\label{lemma:direct-leader-carry-over}
Let $w$ be a wave, and suppose a correct process $p_d$ identifies a block
$D_w$ as a direct anchor for $w$ using $R_l$-typed leader-voting blocks.
Let $p_i$ be any correct process, and let $Y$ be a block DAG-delivered by
$p_i$ whose round belongs to a wave $w'>w$. Then $DAG_i$ contains at least
$f_l+1$ $R_l$-typed leader-voting blocks $X$ for wave $w$, with pairwise
distinct broadcaster identities, such that
\[
  Y\leadsto_i X\leadsto_i D_w.
\]
\end{lemma}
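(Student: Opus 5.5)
The plan is a quorum-intersection argument in round $r=\mathsf{last}(w)$, combining the direct-anchor support set with \cref{lemma:later-block-last-round-ancestry}, exactly as in the paper's proof sketch.

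First, fix the set $\mathcal S$ of $R_l$-typed leader-voting blocks in $DAG_d[r]$ with $X\leadsto_d D_w$ and $|\mathcal S|\geq q_l$. By DAG-Integrity at $p_d$, these blocks occupy distinct round-broadcaster slots. By \cref{lemma:direct-anchor-round-slots}, each slot lies in $\mathsf{members}(R_l)$. Second, apply \cref{lemma:later-block-last-round-ancestry} to $Y$. It gives $\mathcal H_Y(r)\subseteq DAG_i[r]$ with $|\mathcal H_Y(r)|\geq 2f_l+1$, and its blocks lie in distinct slots, all with broadcasters in $\mathsf{members}(R_l)$. Both slot sets therefore live inside the same $n_l$-element set of round-$r$ slots, so at least
\[
q_l+(2f_l+1)-n_l=f_l+1
\]
slots are shared.

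Third, I will transport identity across processes. For each common slot, the block $X\in\mathcal S$ at $p_d$ and the block $X'\in\mathcal H_Y(r)$ at $p_i$ are DAG-delivered by correct processes in the same round-broadcaster slot, so DAG-Consistency (\cref{lemma:dynamic-dag-structured-broadcast}) gives $X=X'$. Hence $X\in DAG_i$ and $Y\leadsto_i X$.

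Next I need $X\leadsto_i D_w$ and that $X$ is an $R_l$-typed leader-voting block at $p_i$. Both are properties of the immutable block object. Being leader-voting requires round $\mathsf{last}(w)$, $\mathsf{local}(\mu)=R_l$, $\mathsf{enabled}(\mu)=\{R_l\}$, and a valid $R_l$-coin share; each is determined by $X$'s content and a valid share. For the causal path $X\leadsto D_w$, Causal Closure (via \cref{lemma:local-dag-ancestor-closure}) puts every ancestor of $X$ into $DAG_i$. DAG-Consistency at each slot along the path makes the parent references resolve to the same blocks at $p_i$ as at $p_d$, so the path to $D_w$ exists in $DAG_i$. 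This is the remark, made just before the leader-uniqueness subsection, that ancestor relations are process-independent.

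Membership in $\mathsf{LocalBlocks}_i(R_l,r)$ holds because $X\in DAG_i[r]$ and $\mathsf{ID}(\mu_X)\in\mathsf{members}(R_l)$. The $f_l+1$ common blocks have pairwise distinct broadcaster identities since they occupy distinct slots.

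The main obstacle is the counting step: making sure both collections are genuinely subsets of one $n_l$-slot universe. This is exactly what \cref{lemma:direct-anchor-round-slots} supplies, since it forces every certified round-$r$ block to be declared under $R_l$ by an $R_l$-member even mid-transition, so no extra slot from a joining or removed process can dilute the intersection. The rest is routine bookkeeping.
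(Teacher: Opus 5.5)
Your proposal is correct and follows the paper's proof. Both arguments intersect the broadcaster identities of the direct-anchor support set $\mathcal S$ (at least $q_l$) with those of $\mathcal H_Y(\mathsf{last}(w))$ (at least $2f_l+1$, by \cref{lemma:later-block-last-round-ancestry}) inside $\mathsf{members}(R_l)$, obtaining at least $f_l+1$ common slots, and then use DAG-Consistency to identify the two blocks in each common slot. Your extra bookkeeping, which transports $X\leadsto D_w$ and the leader-voting status of $X$ to $p_i$, only spells out what the paper leaves to its remark that ancestor relations are process-independent.
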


\begin{proof}
Put $r=\mathsf{last}(w)$, and let $\mathcal S$ be the set of at least $q_l$
$R_l$-typed leader-voting blocks that reach $D_w$ in the definition of a
direct anchor. Define
\[
\begin{aligned}
I_{\mathcal S}
  &=
  \left\{
    \mathsf{ID}(\mu_X)
    \;\middle|\;
    X=\langle m_X,r,\mu_X,C_X\rangle\in\mathcal S
  \right\},\\
I_Y
  &=
  \left\{
    \mathsf{ID}(\mu_X)
    \;\middle|\;
    X=\langle m_X,r,\mu_X,C_X\rangle\in\mathcal H_Y(r)
  \right\}.
\end{aligned}
\]
The direct-anchor test counts distinct round-broadcaster slots, so
$|I_{\mathcal S}|\geq q_l$. By
\cref{lemma:later-block-last-round-ancestry},
$|I_Y|\geq2f_l+1$. Both sets are subsets of
$\mathsf{members}(R_l)$, which has size $n_l$. Therefore
\[
\begin{aligned}
|I_{\mathcal S}\cap I_Y|
&\geq |I_{\mathcal S}|+|I_Y|-n_l\\
&\geq (n_l-f_l)+(2f_l+1)-n_l=f_l+1.
\end{aligned}
\]
For each identity in this intersection, $\mathcal S$ and
$\mathcal H_Y(r)$ contain a block in the same round-broadcaster slot.
DAG-Consistency makes those two blocks identical. Hence at least $f_l+1$
blocks lie in $\mathcal S\cap\mathcal H_Y(r)$, and every such block $X$
satisfies
\[
  Y\leadsto_i X\leadsto_i D_w.
\]
Their broadcaster identities are pairwise distinct by construction.
\end{proof}

\begin{corollary}[A later backward scan accepts an earlier direct anchor]
\label{corollary:direct-anchor-scan-acceptance}
Suppose $D_w$ is a direct anchor identified using $R_l$-typed
leader-voting blocks. Let a backward scan at a correct process $p_i$ have
current anchor $Y$, where $Y$ belongs to a wave later than $w$. Then the scan
accepts $D_w$ when it reaches wave $w$.
\end{corollary}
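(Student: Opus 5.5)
The corollary should follow almost directly from \cref{lemma:direct-leader-carry-over} and \cref{lemma:unique-accepted-leader}, applied to the acceptance test in $\texttt{Elect}_i$. Fix the scan at $p_i$ with current anchor $Y$ when the loop reaches wave $w$. Since $Y$ has already been inserted into $DAG_i$ (it is a direct anchor or a previously accepted block, hence stored), \cref{lemma:direct-leader-carry-over} gives at least $f_l+1$ $R_l$-typed leader-voting blocks $X$ for $w$ in $DAG_i$ with pairwise distinct broadcasters and $Y\leadsto_i X\leadsto_i D_w$. Three things remain to show: $R_l\in\mathsf{WaveConfigs}_i(w)$, $\texttt{WaveLeader}_i(R_l,w)=D_w$, and $E_w=\{D_w\}$.

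\textbf{Step 1: the coin output at $p_i$.} The $f_l+1$ blocks $X$ lie in $\mathsf{VotingBlocks}_i(R_l,w)$, so $R_l\in\mathsf{WaveConfigs}_i(w)$. Their distinct broadcasters give $|\mathsf{Shares}_i(R_l,w)|\geq f_l+1$, so $\texttt{WaveLeader}_i(R_l,w)$ reconstructs the coin. By threshold-signature uniqueness this is the same coin $p_d$ reconstructed, so it selects the same member. Because $D_w$ is an ancestor of each $X$ and $DAG_i$ is ancestor-closed (\cref{lemma:local-dag-ancestor-closure}), $D_w\in DAG_i[\mathsf{first}(w)]$. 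By DAG-Consistency it is the unique block in that slot, and it is declared under $R_l$. Hence $\texttt{WaveLeader}_i(R_l,w)=D_w$.

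\textbf{Step 2: the support count.} The blocks $X$ satisfy $Y\leadsto_i X\leadsto_i D_w$, so $\mathsf{HistorySupport}_i(Y,D_w,R_l,w)\geq f_l+1$ and $D_w\in E_w$.

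\textbf{Step 3: uniqueness of $E_w$.} This is the step I expect to need the most care. Any $B'\in E_w$, obtained through some $R_s\in\mathsf{WaveConfigs}_i(w)$, is supported by $f_s+1$ $R_s$-typed voting blocks in distinct slots. It is therefore a block accepted for $w$ during a backward scan in the sense of \cref{lemma:unique-accepted-leader}, and that lemma gives $B'=D_w$. The same conclusion follows directly from \cref{lemma:single-supported-vote-type}, which forces $R_s=R_l$, combined with coin uniqueness. Thus $E_w=\{D_w\}$ and the scan prepends $D_w$.

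One subtlety should be addressed. The statement presupposes that the scan reaches wave $w$, i.e.\ $w>\mathit{processed\text{-}wave}_i$. The current anchor $Y$ may have been updated during the scan to an accepted block of a wave between $w$ and the starting wave. This is harmless: the argument above only uses that $Y\in DAG_i$ lies in a wave later than $w$, which holds for every anchor used by the loop.
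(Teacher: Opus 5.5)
Your proof is correct and follows the paper's argument: \cref{lemma:direct-leader-carry-over} supplies the $f_l+1$ supporting $R_l$-typed voting blocks in $DAG_i$, and threshold-signature uniqueness gives $\texttt{WaveLeader}_i(R_l,w)=D_w$. The paper then appeals to \cref{lemma:unique-accepted-leader} to rule out a competing candidate; your direct check that $E_w=\{D_w\}$, via \cref{lemma:single-supported-vote-type} and coin uniqueness, spells out the point the paper leaves implicit, namely that no second candidate can turn $E_w$ into a non-singleton.
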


\begin{proof}
Causal Closure gives the scanning process $D_w$, the $f_l+1$ distinct
$R_l$-typed leader-voting blocks from
\cref{lemma:direct-leader-carry-over}, and their coin shares. These blocks
satisfy the backward-scan acceptance condition for $D_w$, so the process
reconstructs the unique coin for $(R_l,w)$ and obtains
$\texttt{WaveLeader}_i(R_l,w)=D_w$.
By \cref{lemma:unique-accepted-leader}, it cannot accept a different block
for wave $w$.
\end{proof}

\subsubsection{Backward Scans, Finalized Leader Sequences, and Total Order}

\paragraph*{Deterministic Backward Scans and Prefix-Comparable Leader
Sequences.}
For a wave $k$ with direct anchor $D_k$ and a correct process $p_i$ with
$D_k\in DAG_i$, define $\Phi_{i,k}$ by a conceptual, read-only execution of
the backward-scan rule in $\texttt{Elect}_i$. Initialize the temporary
sequence to $\langle D_k\rangle$ and the scan anchor to $D_k$, and apply the
same acceptance test to waves $k-1,k-2,\ldots,1$. Unlike an actual
invocation, this full scan ignores
$\mathit{processed\text{-}wave}_i$ and does not modify any persistent state.

\begin{lemma}[Full backward scans rooted at the same direct anchor produce the
same leader sequence]
\label{lemma:deterministic-leader-scans}
Let $k$ be a wave with direct anchor $D_k$. For any correct processes
$p_i$ and $p_j$, evaluate $\Phi_{i,k}$ and $\Phi_{j,k}$ at arbitrary local
states in which $D_k\in DAG_i$ and $D_k\in DAG_j$, respectively. Then
\[
  \Phi_{i,k}=\Phi_{j,k}.
\]
Thus the full backward scan rooted at $D_k$ is independent of both the
correct process executing it and any blocks outside $D_k$'s causal history
that are delivered later.
\end{lemma}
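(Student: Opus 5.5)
The plan is to show that every quantity consulted by the full scan $\Phi_{i,k}$ depends only on the causal history of $D_k$. That history is a fixed, ancestor-closed set of certified blocks, and it is the same at every correct process. Fix $\mathcal C=\{X: D_k\leadsto X\}$. By Causal Closure (\cref{lemma:dynamic-dag-structured-broadcast}), $\mathcal C\subseteq DAG_i$ whenever $D_k\in DAG_i$. By DAG-Consistency, the block in each slot of $\mathcal C$ is the same object at $p_i$ and $p_j$. The parent references are hashes inside immutable certified metadata, so the relation $\leadsto$ restricted to $\mathcal C$ is process-independent.

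I would then argue by induction over the scan, from wave $k-1$ down to $1$, with the invariant that $p_i$ and $p_j$ hold the same current scan anchor $B^\star\in\mathcal C$ and the same partial sequence. The base case is the common starting anchor $D_k$. At wave $w$, the acceptance set $E_w$ is built from the blocks $X$ with $B^\star\leadsto X$. Each such $X$ lies in $\mathcal C$, because $\mathcal C$ is ancestor-closed and $B^\star\in\mathcal C$. The candidates $B'$ with $X\leadsto B'$ also lie in $\mathcal C$.

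The key step is to see that $\mathsf{HistorySupport}_i(B^\star,B',R_l,w)$ and the coin output are both computed from $\mathcal C$ only. The first counts only voting blocks inside $\mathcal H_{B^\star}(\mathsf{last}(w))$. Whether such a block is an $R_l$-typed leader-voting block is decided by its own immutable metadata and share, so this count is identical at both processes.

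The main obstacle is $\texttt{WaveLeader}_i(R_l,w)$. As written, it reads $\mathsf{Shares}_i$ and $\mathsf{WaveConfigs}_i$ from the whole local DAG, including blocks outside $\mathcal C$. I would handle it as follows.
\begin{itemize}
\item A candidate enters $E_w$ only if at least $f_l+1$ $R_l$-typed voting blocks lie in $\mathcal C$. Whenever the count is positive at all, this gives $|\mathsf{Shares}|\geq f_l+1$ at both processes.
\item Threshold-signature uniqueness then makes the reconstructed coin independent of which $f_l+1$ valid shares are combined. This holds even when one process holds extra shares from blocks outside $\mathcal C$.
\item The selected first-round block $B'$ must satisfy $X\leadsto B'$, so it lies in $\mathcal C$. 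Its slot is occupied by the same block everywhere by DAG-Consistency, so its presence is not process-dependent.
\item A configuration $R_c$ that contributes voting blocks only outside $\mathcal C$ has $\mathsf{HistorySupport}=0<f_c+1$, so it cannot add a member to $E_w$.
\end{itemize}
Hence $E_w$ is the same set at $p_i$ and $p_j$. The test $E_w=\{B'\}$ therefore has the same outcome at both, and the prepended block and new anchor coincide, which preserves the invariant.

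Finally, $\Phi$ ignores $\mathit{processed\text{-}wave}$ and modifies no state. The induction therefore yields $\Phi_{i,k}=\Phi_{j,k}$, and the same evaluation at any later local state gives the stated independence from blocks delivered later.
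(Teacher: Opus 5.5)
Your proposal is correct and follows essentially the same route as the paper. You fix a common scan anchor and use the $f_l+1$ supporting coin shares inside its causal history, threshold-signature uniqueness, and DAG-Consistency to show that each configuration contributes the same candidate (or none) at both processes, then induct over descending waves. The one difference is that the paper also invokes \cref{lemma:single-supported-vote-type} to note that $E_w$ is empty or a singleton; your argument rightly does without it, because equal candidate sets already force the same outcome of the test $E_w=\{B'\}$.
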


\begin{proof}
Fix a wave $w<k$ and a scan anchor $Y$ stored by correct processes $p_i$
and $p_j$. By DAG-Consistency and Causal Closure, both store the same
causal history of $Y$, including its blocks, metadata, certificates,
coin shares, and parent references.

Although $\mathsf{WaveConfigs}_x(w)$, $\mathsf{Shares}_x(R_c,w)$, and
$\texttt{WaveLeader}_x(R_c,w)$ inspect all of $DAG_x$ for $x\in\{i,j\}$,
the backward-scan candidate set depends only on $Y$'s causal history.
To see this, suppose $R_c$ contributes a candidate at $p_i$:
\[
  B'=\texttt{WaveLeader}_i(R_c,w)\neq\bot,
  \qquad
  \mathsf{HistorySupport}_i(Y,B',R_c,w)\geq f_c+1.
\]
The counted leader-voting blocks have distinct broadcaster identities
and satisfy $Y\leadsto X\leadsto B'$. Hence these blocks, their valid
coin shares, and $B'$ itself all lie in $Y$'s causal history and are
stored by $p_j$. The shares suffice to reconstruct the coin for
$(R_c,w)$. Threshold-signature uniqueness gives the same selected
round-broadcaster slot, and DAG-Consistency gives the same block $B'$
in that slot. Thus $\texttt{WaveLeader}_j(R_c,w)=B'$.
The values $\mathsf{HistorySupport}_i(Y,B',R_c,w)$ and
$\mathsf{HistorySupport}_j(Y,B',R_c,w)$ are equal because both count
exactly the same identities in the shared causal history of $Y$.
Therefore $R_c$ contributes the same candidate at $p_j$.
Applying this argument in both directions shows that the candidate
sets are identical.

The argument applies to arbitrary local states containing $Y$,
including states before and after later deliveries at one process.
In particular, a block outside $Y$'s causal history has history support
zero, while any candidate meeting the threshold already has enough
coin shares in that history. Later deliveries therefore cannot
change the candidate set.

Finally, a contributing configuration $R_c$ supplies at least $f_c+1$
$R_c$-typed leader-voting blocks in distinct round-broadcaster slots.
By \cref{lemma:single-supported-vote-type}, at most one configuration
can contribute a candidate for wave $w$. The candidate set is therefore
either empty or the same singleton at every correct process storing
$Y$. A fixed scan anchor consequently gives the same accept-or-skip
decision and, upon acceptance, the same next anchor. Since both full
backward scans start at $D_k$, induction over descending wave numbers
gives $\Phi_{i,k}=\Phi_{j,k}$.
\end{proof}

By \cref{lemma:deterministic-leader-scans}, denote the common sequence
$\Phi_{i,k}$ by $\Phi_k$.

\begin{lemma}[The leader sequence from an earlier direct anchor is a prefix of
the sequence from a later direct anchor]
\label{lemma:nested-leader-scans}
Write $\preceq$ for the prefix relation. If $h<k$ are waves with direct
anchors, then
\[
  \Phi_h\preceq\Phi_k.
\]
\end{lemma}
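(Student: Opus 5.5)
The proof follows the full backward scan $\Phi_k$ rooted at $D_k$ and shows that it reaches $D_h$ with exactly the same state as the scan $\Phi_h$. From that point on, the two scans coincide by \cref{lemma:deterministic-leader-scans}.

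\emph{Step 1: the scan from $D_k$ accepts $D_h$.} Consider the scan $\Phi_k$ at any correct process storing $D_k$. Let $w$ range over waves from $k-1$ down to $h+1$. Whenever the current scan anchor $Y$ belongs to a wave later than $h$, the scan either accepts a candidate in wave $w$ or skips $w$. In both cases the new anchor is again a block in a wave $\geq h+1>h$: it is either $Y$ itself or an accepted leader in wave $w>h$. Hence, when the scan reaches wave $h$, its current anchor $Y$ lies in a wave later than $h$. It is also stored in the scanning process's DAG, by Causal Closure, since it is in $D_k$'s causal history. \Cref{corollary:direct-anchor-scan-acceptance} applies to this $Y$ and the direct anchor $D_h$, so the scan accepts $D_h$ at wave $h$. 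By \cref{lemma:unique-accepted-leader}, no other block can be accepted there. Therefore $D_h$ occurs in $\Phi_k$, and immediately after wave $h$ the scan anchor equals $D_h$.

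\emph{Step 2: the remainder matches $\Phi_h$.} After wave $h$, the scan from $D_k$ continues over waves $h-1,\ldots,1$ with anchor $D_h$. This is exactly the state in which $\Phi_h$ starts. The acceptance decision at each step depends only on the current anchor's causal history, as shown in the proof of \cref{lemma:deterministic-leader-scans}. Induction over descending waves then shows that both scans make identical accept/skip decisions and produce identical anchors for waves $h-1,\ldots,1$. Since both scans prepend accepted blocks, $\Phi_k$ has the form $\Phi_h$ followed by the blocks accepted in waves $h+1,\ldots,k$, with $D_k$ last. Hence $\Phi_h\preceq\Phi_k$.

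\emph{Main obstacle.} The delicate point is Step 1's invariant that the anchor always lies in a wave strictly later than $h$ when wave $h$ is examined. This is needed because \cref{corollary:direct-anchor-scan-acceptance} requires the current anchor to be in a later wave, not merely to be a descendant of $D_h$. I would make this explicit by induction on the scanned wave. A related subtlety is that Step 2 needs the per-step determinism from inside the proof of \cref{lemma:deterministic-leader-scans}, not just its stated conclusion about full scans rooted at direct anchors. I would restate that fact as a claim: for a fixed anchor and wave, the decision is a function of the anchor alone. Then I would apply it step by step.
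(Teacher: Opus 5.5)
Your proposal is correct and follows essentially the same route as the paper's proof. The scan rooted at $D_k$ still has an anchor in a wave later than $h$ when it reaches wave $h$, so \cref{corollary:direct-anchor-scan-acceptance} makes it accept $D_h$; from there its decisions coincide with those of the scan defining $\Phi_h$, and prepending gives $\Phi_h\preceq\Phi_k$. Your explicit anchor-wave invariant and the per-step determinism are details the paper leaves implicit, and the latter can be avoided: evaluate $\Phi_{i,k}$ and $\Phi_{i,h}$ at the same local state, and the tail after accepting $D_h$ is literally the same computation.
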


\begin{proof}
Now let $h<k$ be waves with direct anchors. When the full backward scan rooted at
$D_k$ reaches wave $h$, its scan anchor belongs to a wave whose index
is greater than $h$. By
\cref{corollary:direct-anchor-scan-acceptance}, the scan accepts $D_h$ and
sets its anchor to $D_h$. Every subsequent
decision below $h$ is then identical to the full backward scan rooted at $D_h$.
Since earlier leaders are prepended, $\Phi_h$ is a prefix of $\Phi_k$.
\end{proof}

Set $\Phi_0=\langle\rangle$. By
\cref{lemma:nested-leader-scans}, $\Phi_0$ and all sequences $\Phi_k$ for
waves $k$ with direct anchors are nested under the prefix relation. Hence
there is a unique finite or infinite sequence $\Phi$ for which every
$\Phi_k$ is a prefix and whose elements are exactly those appearing in the
$\Phi_k$. We call $\Phi$ the \emph{canonical leader sequence}. Every finite
prefix of $\Phi$ is a prefix of some $\Phi_k$, or is empty if no direct
anchor exists.

\begin{lemma}[Every correct process's finalized leader sequence is always a
canonical leader prefix]
\label{lemma:canonical-local-prefixes}
After initialization, at every local state of a correct process $p_i$,
$\mathit{finalized\text{-}values}_i$ is $\Phi_0$ or $\Phi_k$ for some
wave $k$ with a direct anchor.
Immediately after a serialized invocation of $\texttt{Elect}_i$ returns,
let $v=0$ if no invocation $\texttt{Elect}_i(D_x,x)$ has completed at
$p_i$ for any wave $x$; otherwise, let $v$ be the greatest wave index $x$
for which such an invocation has completed. Then
$\mathit{finalized\text{-}values}_i=\Phi_v$, where either $v=0$ or $v$ is
a wave with a direct anchor.
\end{lemma}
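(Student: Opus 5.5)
I would prove \cref{lemma:canonical-local-prefixes} by induction over the local events of $p_i$ that modify $\mathit{finalized\text{-}values}_i$. Only $\texttt{Elect}_i$ writes this sequence, and it writes it in a single atomic append. Invocations are serialized, so it suffices to prove the second statement for every completed invocation. The first statement then follows: between invocations the sequence is unchanged, and initially it is $\langle\rangle=\Phi_0$.

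The invariant I would carry is that, after each completed $\texttt{Elect}_i$, $\mathit{processed\text{-}wave}_i=v$ and $\mathit{finalized\text{-}values}_i=\Phi_v$, with $v=0$ or $v$ a wave having a direct anchor $D_v$. The base case is initialization ($v=0$, the empty sequence).

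\textbf{Step 1: what an invocation is called on.} Consider an invocation $\texttt{Elect}_i(B,k)$ issued by $\texttt{TryElect}_i(k)$, so $A=\{B\}$. I would first argue that $B$ is a direct anchor for $k$: it equals $\texttt{WaveLeader}_i(R_l,k)$ and has support at least $q_l$. By \cref{lemma:unique-accepted-leader}, $B=D_k$.

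\textbf{Step 2: the trivial case.} If $k\le v$, the invocation returns without change, and the invariant persists. Note that a return with $k\le v$ does not count as ``completing'' $\texttt{Elect}_i(D_k,k)$ in the sense of the statement, or else $v$ is already greater, so the identification of $v$ is unaffected.

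\textbf{Step 3: the main case $k>v$.} The actual scan examines waves $k-1$ down to $v+1$ starting from anchor $D_k$. I would compare this partial scan with the full conceptual scan $\Phi_{i,k}=\Phi_k$ of \cref{lemma:deterministic-leader-scans}. Each step of the actual scan uses the same acceptance test and the same anchor update as the full scan, so the two scans agree on waves $k-1,\dots,v+1$. The accepted blocks form the suffix of $\Phi_k$ consisting of leaders from waves greater than $v$.

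\textbf{Step 4: where the full scan lands.} It remains to show that the full scan, on reaching wave $v$ (if $v>0$), accepts exactly $D_v$, and that its remainder equals $\Phi_v$. This follows from \cref{lemma:nested-leader-scans}: the full scan from $D_k$ has current anchor in a wave above $v$ when it reaches $v$, so \cref{corollary:direct-anchor-scan-acceptance} makes it accept $D_v$. From that point the scan coincides with the full scan rooted at $D_v$. Hence
\[
\Phi_k=\Phi_v\cdot L,
\]
where $L$ is exactly the list the actual invocation appends. Appending $L$ to $\Phi_v$ therefore yields $\Phi_k$. Setting $\mathit{processed\text{-}wave}_i\gets k$ restores the invariant with the new $v=k$, which is the greatest completed wave because $k>v$. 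When $v=0$, the actual scan covers all waves, and $L=\Phi_k$ directly.

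\textbf{Expected obstacle.} The main difficulty is the local-state subtlety in Step 3. The actual scan evaluates $\texttt{WaveLeader}_i$ and $\mathsf{HistorySupport}_i$ at $p_i$'s current DAG, whereas $\Phi_{i,k}$ is a conceptual evaluation. I would rely on the claim in \cref{lemma:deterministic-leader-scans} that the result is independent of the local state containing $D_k$. This holds because candidate sets depend only on the anchor's causal history, which is present by Causal Closure.

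A second point needs care. The sequence $L$ starts with $D_k$ and prepends earlier leaders, so it lists leaders in increasing wave order, and $\Phi_k$ also lists the earliest leader first. The concatenation $\Phi_v\cdot L$ is therefore correctly ordered. This matches the "prepend" convention used in the proof of \cref{lemma:nested-leader-scans}.
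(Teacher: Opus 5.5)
Your proposal is correct and follows the paper's proof essentially step for step. Both argue by induction over the serialized $\texttt{Elect}_i$ invocations, dispose of the trivial case $k\le v$, and in the case $k>v$ show that the partial scan agrees with the full scan rooted at $D_k$ above wave $v$, accepts $D_v$ by \cref{corollary:direct-anchor-scan-acceptance}, and yields $\Phi_k=\Phi_v\cdot L$. Your version also makes explicit two points the paper uses only implicitly: you carry $\mathit{processed\text{-}wave}_i=v$ in the invariant, and you cite \cref{lemma:deterministic-leader-scans} for independence from the local state.
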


\begin{proof}
Induct on the serialized $\texttt{Elect}_i$ invocations. Initially no
$\texttt{Elect}_i$ invocation has completed and
$\mathit{finalized\text{-}values}_i=\Phi_0$. After initialization,
$\texttt{Elect}_i$ is the only procedure that modifies this sequence. Assume
the claim holds immediately before a serialized invocation
$\texttt{Elect}_i(D_k,k)$, and let $v$ be the greatest wave for which an
$\texttt{Elect}_i$ invocation has completed, or $0$ if there is none. Wave
$k$ has a direct anchor because the procedure is invoked only after $p_i$
identifies $D_k$ as one.

If $k\leq v$, the invocation returns without changing the sequence. The
greatest completed wave remains $v$, and the sequence remains $\Phi_v$.
Suppose instead that $k>v$. If $v=0$, the invocation executes exactly
the full backward scan defining $\Phi_k$ and changes the finalized leader sequence to
$\Phi_k$. If $v>0$, the induction hypothesis gives
$\mathit{finalized\text{-}values}_i=\Phi_v$, and wave $v$ has a direct
anchor. The invocation performs exactly the iterations of the full backward
scan for $\Phi_k$ above wave $v$. When that full backward scan reaches $v$,
\cref{corollary:direct-anchor-scan-acceptance} makes it accept $D_v$, after which
its decisions below $v$ produce $\Phi_v$. Since
$\Phi_v\preceq\Phi_k$, there is a unique suffix $\Psi$ such that
$\Phi_k=\Phi_v\cdot\Psi$. The invocation appends exactly $\Psi$, changing
the finalized leader sequence from $\Phi_v$ to $\Phi_k$.
In the case $k>v$, wave $k$ is the greatest argument of a completed
$\texttt{Elect}_i$ invocation when the call returns.

During an invocation, the finalized leader sequence is unchanged before its
single append; immediately afterward it is already the next canonical
leader prefix. Thus the invariant holds at every local state.
\end{proof}

\begin{corollary}[Every active correct process finalizes at most one block
per wave, never finalizes a block twice, and finalizes only blocks already
in its DAG]
\label{corollary:leader-integrity}
For every wave $w$ and active correct process $p_i$,
$\mathit{finalized\text{-}values}_i$ contains at most one block from $w$ and contains no
block more than once. Every block in the sequence has already been
inserted into the process's local DAG.
\end{corollary}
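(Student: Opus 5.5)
The plan is to read all three claims off \cref{lemma:canonical-local-prefixes}: at every local state, $\mathit{finalized\text{-}values}_i$ equals $\Phi_0$ or $\Phi_k$ for some wave $k$ with a direct anchor. It therefore suffices to prove the three properties for the abstract sequences $\Phi_k$. I would first recall how $\Phi_k$ is built by the full backward scan. It starts with $\langle D_k\rangle$, visits waves $k-1,k-2,\ldots,1$ exactly once each in strictly decreasing order, and prepends at most one accepted block per visited wave, namely the unique element of $E_w$ when $E_w$ is a singleton.

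\textbf{At most one block per wave.} By construction each entry of $\Phi_k$ is tagged with the wave it was accepted for, and each wave contributes at most once. I must still check that an entry tagged with wave $w$ is really a round-$\mathsf{first}(w)$ block. This holds because $\texttt{WaveLeader}(R_l,w)$ returns only a block in $\mathsf{LocalBlocks}_i(R_l,\mathsf{first}(w))$. Hence the entry's round determines its wave, so distinct entries of $\Phi_k$ lie in distinct waves. Since $\Phi_k$ is written in increasing wave order, its entries have strictly increasing rounds. In particular no block appears twice.

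\textbf{Blocks are in the DAG.} Every block appended by $\texttt{Elect}_i$ at $p_i$ is either the direct anchor $D_k$ or an output of $\texttt{WaveLeader}$ evaluated at $p_i$. Both are by definition blocks of $DAG_i$. The DAG never removes blocks, and the append happens atomically after the scan, so each finalized block was inserted before it was finalized and stays inserted afterward. If it is cleaner, I can alternatively use Causal Closure: every accepted $B'$ satisfies $B^\star\leadsto_i X\leadsto_i B'$ with $D_k\in DAG_i$, so $B'$ is an ancestor of $D_k$.

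\textbf{Main obstacle.} The delicate point is that a wave skipped at $p_i$ because its $E_w$ was ambiguous is never reconsidered, while later calls only append. Someone might worry that two invocations could finalize blocks from the same wave. This is exactly where I would lean on \cref{lemma:canonical-local-prefixes} rather than reasoning invocation by invocation: the whole sequence is always a single $\Phi_k$, so the per-scan argument above covers all invocations at once. \cref{lemma:unique-accepted-leader} is not even needed for integrity; it is needed only for cross-process agreement.
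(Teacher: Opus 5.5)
Your proof is correct, but it takes a different route from the paper's. The paper argues directly from the control flow of \texttt{Elect}:
\begin{itemize}
\item each scan visits a wave at most once;
\item an invocation $\texttt{Elect}_i(B,k)$ scans only waves in $(v,k]$ with $v=\mathit{processed\text{-}wave}_i$, and then sets $\mathit{processed\text{-}wave}_i$ to $k$;
\item with serialization, successive invocations therefore cover disjoint wave ranges;
\item \texttt{WaveLeader} returns a locally stored round-$\mathsf{first}(w)$ block, which gives both distinctness and DAG membership.
\end{itemize}
You instead hand the multi-invocation issue to \cref{lemma:canonical-local-prefixes} and check the three properties for a single full scan $\Phi_k$.

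Given that lemma your argument is short and clean. The cost is that a purely local bookkeeping property now rests on the whole cross-process safety chain: DAG-Consistency, the quorum intersection in \cref{lemma:direct-leader-carry-over}, \cref{corollary:direct-anchor-scan-acceptance}, and \cref{lemma:nested-leader-scans}. The paper's argument uses essentially only local code structure. The ``obstacle'' you flag is settled in one line by the monotonicity of $\mathit{processed\text{-}wave}_i$.

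Your closing remark needs a correction. It is true that the local argument does not need \cref{lemma:unique-accepted-leader}, since the guards $A=\{B\}$ and $E_w=\{B'\}$ already allow at most one block per visited wave. It is false for your route. The notation $D_k$, and hence $\Phi_k$, presupposes a unique direct anchor per wave. Moreover, \cref{corollary:direct-anchor-scan-acceptance}, which \cref{lemma:canonical-local-prefixes} uses, cites \cref{lemma:unique-accepted-leader} explicitly.
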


\begin{proof}
Each backward scan visits a wave at most once, and an invocation for a later
direct anchor scans only waves whose indices exceed the greatest wave already
processed.
Together with \cref{lemma:unique-accepted-leader}, this lets a correct
process $p_i$ append at most one leader for each wave. Serialization in
\cref{alg:leader-election_2} prevents concurrent invocations from
duplicating that output. Every appended block was returned by
$\texttt{WaveLeader}_i(R_l,w)$ for some configuration $R_l$ and wave $w$. By its
definition, every non-$\bot$ return is a locally available first-round
DAG block and has therefore already been inserted into $DAG_i$. Thus
every candidate for wave $w$ belongs to round $\mathsf{first}(w)$. Distinct
waves have distinct first rounds, so the same block cannot be selected for
two different waves. Thus all requirements of Integrity in
\cref{def:dynamic-dag-bab} hold.
\end{proof}

\begin{theorem}[Finalized leader sequences at active correct processes are
always prefix-comparable]
\label{theorem:leader-total-order}
At every time, the finalized leader sequences of any two active correct
processes are prefix-comparable. Consequently, if the sequence of an active
correct process $p_i$ contains a block $B$ before a distinct block $B'$,
then the sequence of every active correct process $p_j$ that contains $B'$
also contains $B$ before $B'$.
\end{theorem}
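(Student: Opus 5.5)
The plan is to derive the theorem almost directly from \cref{lemma:canonical-local-prefixes} and \cref{lemma:nested-leader-scans}.

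\textbf{Step 1: each sequence is a canonical prefix.} Fix any time and two active correct processes $p_i$ and $p_j$. By \cref{lemma:canonical-local-prefixes}, at that time $\mathit{finalized\text{-}values}_i=\Phi_a$ and $\mathit{finalized\text{-}values}_j=\Phi_b$. Here each of $a$ and $b$ is either $0$ or a wave with a direct anchor. The lemma applies at every local state, including states in the middle of an $\texttt{Elect}$ invocation, because the append is atomic. So there is no intermediate state to rule out.

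\textbf{Step 2: comparability.} I would assume without loss of generality that $a\leq b$.
\begin{itemize}
\item If $a=0$, then $\Phi_0=\langle\rangle$ is trivially a prefix of $\Phi_b$.
\item If $a=b$, the two sequences coincide. This uses that $\Phi_k$ is well defined independently of the process, by \cref{lemma:deterministic-leader-scans} and the uniqueness of $D_k$ from \cref{lemma:unique-accepted-leader}.
\item If $0<a<b$, then \cref{lemma:nested-leader-scans} gives $\Phi_a\preceq\Phi_b$.
\end{itemize}
Since the two processes' local states are evaluated at one common real time, and the $\Phi_k$ are process-independent objects, no further coordination argument is needed.

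\textbf{Step 3: the consequence.} Suppose $p_i$'s sequence contains $B$ before a distinct $B'$, and $p_j$'s sequence contains $B'$.
\begin{itemize}
\item If $p_j$'s sequence extends $p_i$'s, then $p_i$'s sequence is a prefix of it. So $B$ sits before $B'$ at the same positions.
\item Otherwise $p_j$'s sequence is a prefix of $p_i$'s and contains $B'$. By \cref{corollary:leader-integrity} no block appears twice, so the position of $B'$ is unique. It follows that $p_j$'s sequence also contains every earlier entry of $p_i$'s sequence, including $B$, in the same order.
\end{itemize}
The only subtlety is precisely this use of duplicate-freeness from \cref{corollary:leader-integrity}. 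Without it, $B'$ could occur at different positions in the two sequences.

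Since all the substantive work is already in \cref{lemma:canonical-local-prefixes,lemma:nested-leader-scans}, I expect the main obstacle to be bookkeeping rather than a new argument. It amounts to checking that "at every time" is covered by the every-local-state clause, and that processes which later leave are excluded by the word "active".
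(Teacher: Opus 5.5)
Your proposal is correct and follows essentially the same route as the paper: \cref{lemma:canonical-local-prefixes} makes each local sequence equal to some $\Phi_k$ or $\Phi_0$, and \cref{lemma:nested-leader-scans} makes all such sequences pairwise prefix-comparable. For the consequence, the paper argues through the canonical sequence $\Phi$ rather than directly between the two local sequences, and it leaves implicit the duplicate-freeness you rightly flag; that duplicate-freeness comes from \cref{corollary:leader-integrity}, or equivalently from distinct waves having distinct first rounds, and it is what makes the position of $B'$ unique.
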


\begin{proof}
By \cref{lemma:canonical-local-prefixes}, at every point each correct
process's finalized leader sequence is a canonical leader prefix. By
\cref{lemma:nested-leader-scans}, all canonical leader prefixes are
pairwise comparable under $\preceq$. Both local sequences are prefixes of
$\Phi$. Since $B$ precedes $B'$ in $\Phi$, every prefix containing $B'$
contains all earlier elements of $\Phi$, including $B$. This is exactly the
stated total-order property.
\end{proof}

\subsection{Probabilistic Progress}
\label{sec:probabilistic-progress}

\subsubsection{Progress After Stabilization at $R_L$}

\paragraph*{Eventual Stability.}
\begin{corollary}[After some round, every correct $R_L$-member that remains
active enables only $R_L$, and every certificate is declared under $R_L$]
\label{corollary:eventual-stable-suffix}
There exist a final configuration $R_L$ and a round $r_\star$ such
that every correct process $p_i\in\mathsf{members}(R_L)$ that remains active
has installed $R_L$ and, for every round $r\geq r_\star$ that it executes,
\[
  \mathit{enabled\text{-}configs}_i[r]=\{R_L\}.
\]
Every certified block in a round $r\geq r_\star$ is declared under $R_L$
and has a broadcaster identity in $\mathsf{members}(R_L)$.
\end{corollary}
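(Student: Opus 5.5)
We take $R_L$ to be the last configuration of the finite authenticated sequence $R_1,\ldots,R_L$; it exists by finiteness of reconfiguration. If $L=1$, every correct process starts with $R_1$ installed. No $\texttt{InstallTransition}$ can ever be invoked, because no $\mathsf{RECONFIG}$ message exists. Hence every assigned enabled-configs value is $\{R_1\}$, and we may take $r_\star=1$.

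For $L\geq2$, the plan is to read $r_\star$ off the last transition $R_{L-1}\rightarrow R_L$.
\begin{enumerate}
  \item By \cref{lemma:configuration-level-induction} the induction hypothesis holds for $R_{L-1}$.
  \item \Cref{lemma:eventual-first-installation} then supplies the earliest install transition round $t$.
  \item \Cref{corollary:install-transition-round-bound} and \cref{lemma:joining-install-round-cases,lemma:joining-old-member-install-rounds} place every correct install transition round $a_i$ in $[t-1,t+1]$.
  \item \Cref{lemma:reconfiguration-completion} shows every correct $p_i\in\mathsf{members}(R_L)$ completes the transition at the beginning of round $a_i+4\leq t+5$ with $\mathit{local}_i=R_L$.
\end{enumerate}
We set $r_\star=t+6$. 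Since no $R_{L+1}$ is ever issued, $\mathit{pending}_i$ stays $\bot$ after $\texttt{FinishTransition}$, as no authenticated successor exists. So $\texttt{UpdateConfigs}_i$ can only assign $\{R_L\}$ to every round after $a_i+4$. Immutability of earlier assignments together with $a_i+4\leq t+5<r_\star$ then gives $\mathit{enabled\text{-}configs}_i[r]=\{R_L\}$ for every $r\geq r_\star$ that $p_i$ executes. A process removed in the last transition is not in $\mathsf{members}(R_L)$, so it is excluded by the statement.

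For the certificate claim, fix $r\geq r_\star$ and a valid certificate declared under $R_c$. It contains a share from a correct voter $p_v\in\mathsf{members}(R_c)$, because $q_c>f_c$, and $p_v$ applied $\mathsf{ValidInstance}_v$, so $R_c\in\mathit{enabled\text{-}configs}_v[r]$. The case analysis has three parts.
\begin{itemize}
  \item Case $c\geq L+1$: excluded by authentication.
  \item Case $c=L-1$: $p_v$ is an old member whose assigned values containing $R_{L-1}$ lie in rounds $\leq a_v+4\leq t+5<r$. Here we use \cref{lemma:transition-window-disjointness}, giving $W_{L-1}\subseteq[t,t+5]$, and the fact that after completion only $R_L$-singletons are assigned. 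If $p_v$ has not advanced, it has no assigned value for round $r$, because \cref{lemma:lagging-process-installation} forces installation before advancing past $t+1$.
  \item Case $c<L-1$: $r<\mathsf{earliest}(R_{c+2})\leq\mathsf{earliest}(R_L)\leq t<r$ by \cref{lemma:old-configuration-round-cutoff} and the increasing earliest-inclusion rounds, a contradiction.
\end{itemize}
Thus $c=L$. The broadcaster identity lies in $\mathsf{members}(R_L)$ by the membership clause of $\mathsf{ValidInstance}_v$, which is also checked in $\mathsf{ValidCertifiedMetadata}$.

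The main obstacle I expect is the case $c=L-1$ with a correct voter that is a \emph{removed} old member still running BRB handlers during its departure wait. I would handle it by noting that its enabled-configs values stop at round $a_v+4\leq t+5$. After completion it stops round advancement, so it never assigns values for later rounds, and future-round messages are merely buffered; hence it cannot vote in round $r\geq t+6$. A secondary subtlety is that the corollary talks only about rounds a process actually executes. This is what lets us ignore unassigned entries, rather than needing every active member to reach $r_\star$, which is proved separately later.
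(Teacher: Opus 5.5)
Your proof is correct and follows essentially the paper's argument. You fix $R_L$ and get completion of $R_{L-1}\rightarrow R_L$ from \cref{lemma:reconfiguration-completion}. You exclude $R_c$ for $c\leq L-2$ via \cref{lemma:old-configuration-round-cutoff} and increasing earliest-inclusion rounds, and bound $R_{L-1}$ by install rounds, noting that removed members stop round advancement after $a_i+4$. You then finish with a correct-voter argument for certificates. The only difference is that you extract the explicit uniform cutoff $r_\star=t+6$ from the window $a_i\in[t-1,t+1]$, whereas the paper simply takes $r_\star$ above the maximum of finitely many per-process cutoffs.
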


\begin{proof}
Finiteness of the reconfiguration sequence gives a final configuration
$R_L$. If $L=1$, then $R_L=R_1$ is installed at initialization. If $L>1$,
the transition into $R_L$ completes by
\cref{lemma:reconfiguration-completion}. In either case, no later
reconfiguration message exists.

Assume first that $L>1$. For every earlier configuration $R_c$, only members of
$\mathsf{members}(R_c)$ can vote for a certificate declared under $R_c$.
Completion of the transition out of each $R_c$ gives a finite logical-round
cutoff $\rho_c$ beyond which no correct member of $R_c$ assigns an
enabled-configs value containing $R_c$. For $c\leq L-2$, this follows from
\cref{lemma:old-configuration-round-cutoff}. For $c=L-1$, every correct
retained old member has a finite install transition round $a_i$ by
\cref{lemma:reconfiguration-completion}, and every correct removed old
member has one by \cref{lemma:removed-process-departure}. An invocation with
install transition round $a_i$ assigns values containing $R_{L-1}$ only
through round $a_i+4$; afterward a retained member assigns only $\{R_L\}$,
while a removed member executes no later protocol rounds after its permitted
departure. Taking the maximum of these finitely many bounds gives
$\rho_{L-1}$. Because every configuration is finite, one cutoff suffices
for all of its correct members.

Completion of the transition into $R_L$ also gives a finite cutoff beyond
which every correct $R_L$-member that remains active has installed $R_L$
and assigns only $\{R_L\}$. When $L>1$, choose $r_\star$ larger than these
finitely many logical-round cutoffs. When $L=1$, take $r_\star=1$;
initialization and the absence of a later configuration imply that every
enabled-configs value assigned by a correct process is $\{R_1\}$.
In either case, every correct process
$p_i\in\mathsf{members}(R_L)$ that remains active then assigns
\[
  \mathit{enabled\text{-}configs}_i[r]=\{R_L\}
  \qquad\text{for every round }r\geq r_\star\text{ that it executes}.
\]
A certificate in round $r\geq r_\star$ declared under an earlier
configuration $R_c$ would require
$q_c>f_c$ votes and hence a vote from a correct process whose round-$r$
enabled-configs value contains $R_c$, which is impossible above the chosen
cutoff. Since no later configuration exists, every certificate in a round
$r\geq r_\star$ is declared under $R_L$. The BRB
broadcaster-membership check then restricts it to one of the $n_L$
round-broadcaster slots of $R_L$.
\end{proof}

\begin{corollary}[Every correct $R_L$-member that remains active completes
every round $r\geq r_\star$]
\label{corollary:stable-suffix-round-progress}
Under the conclusion of \cref{corollary:eventual-stable-suffix}, for every
correct process $p_i\in\mathsf{members}(R_L)$ that remains active and every
$r\geq r_\star$, process $p_i$ eventually DAG-delivers at least $q_L$
admissible round-$r$ blocks in $\mathsf{ParentBlocks}_i(R_L,r)$,
completes round $r$, and enters round $r+1$.
\end{corollary}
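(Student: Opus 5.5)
We prove the statement by induction on the round index $r\geq r_\star$, combining the eventual-stability conclusion of \cref{corollary:eventual-stable-suffix} with \cref{corollary:round-advancement}. The induction invariant for round $r$ has two parts. First, every correct $R_L$-member that remains active eventually enters round $r$ with $R_L$ installed. Second, at least $q_L$ active correct $R_L$-members invoke their prescribed round-$r$ $\texttt{DAG-Append}$ and reach and pass its $\texttt{BRB-Broadcast}$ call. Given the invariant for round $r$, \cref{corollary:round-advancement} with $R_l=R_L$ and $\mathcal G$ these $q_L$ processes makes each such member DAG-deliver $q_L$ blocks in $\mathsf{ParentBlocks}_i(R_L,r)$ and enter round $r+1$. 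Ordinary entry into $r+1$ stores exactly the predecessor set on which the round-$(r+1)$ append waits, so at least $q_L=n_L-f_L$ correct members, by the model's activity assumption, reach $\texttt{BRB-Broadcast}$ in round $r+1$. This propagates the invariant.

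For the base case I would take the round $s$ from conclusion~3 of \cref{lemma:configuration-step-closure} for the transition into $R_L$, or round~$1$ when $L=1$. If $s\leq r_\star$, the induction runs from $s$ and covers $r_\star$. Otherwise every member passes through the rounds below $s$ before reaching $s$, so only rounds $r\geq s$ need the argument. For rounds below $s$ that are still at least $r_\star$, advancement is already implied by the member entering $s$.

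For each round I would check the premises of \cref{corollary:round-advancement} as follows.
\begin{enumerate}
\item BRB Agreement through round $r$ comes from \cref{lemma:brb-safety}.
\item $\mathsf{MaySendBRB}_k(R_L,r)$ holds because $\mathit{enabled\text{-}configs}_k[r]=\{R_L\}$ by \cref{corollary:eventual-stable-suffix}, and after joining no passive rounds remain.
\item The additional transition-round condition is unavailable because $\mathit{transition}_x=\bot$ for every relevant process: no reconfiguration follows $R_L$, so no $\texttt{InstallTransition}$ is invoked after completion.
\end{enumerate}

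The main obstacle I expect is the base case together with the activity premise. \cref{corollary:round-advancement} requires the processes in $\mathcal G$ to be correct $R_L$-members, while the model only guarantees that $q_L$ correct members of $\Pi_L$ remain active. I would first argue that a correct $R_L$-member never leaves, since there is no later removal. Hence all $\geq q_L$ correct members stay active, which discharges the activity premise. Second, certified blocks above $r_\star$ occupy only $R_L$ slots, by \cref{corollary:eventual-stable-suffix}. Therefore every block counted in $\mathsf{ParentBlocks}_i(R_L,r)$ is a genuine $R_L$ block, and the quorum obtained is admissible.
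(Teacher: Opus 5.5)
Your proposal is correct and follows the paper's proof. Both use an induction on rounds with the same two-part invariant, based at round~$1$ or at the round from conclusion~3 of \cref{lemma:configuration-step-closure}, with \cref{lemma:brb-safety} supplying Agreement and \cref{corollary:round-advancement} driving each step (the paper pairs it with \cref{lemma:dag-quorum-dissemination}).

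The only difference is at the boundary. The paper simply replaces $r_\star$ by $\max\{r_\star,r_0\}$, whereas you observe that entering the base round already forces passage through any intervening rounds via the ordinary $R_L$ condition. In rounds between the base round and $r_\star$, your $\mathsf{MaySendBRB}$ and no-transition checks should rest on completion of the transition into $R_L$, not on \cref{corollary:eventual-stable-suffix}, which only covers rounds $r\geq r_\star$.
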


\begin{proof}
If $L=1$, take $r_0=1$; every active correct process in
$\mathsf{members}(R_1)$ enters that round, and at least $q_L$ of them invoke
the prescribed append and reach its parent-free $\texttt{BRB-Broadcast}$
call. If $L>1$,
\cref{lemma:configuration-level-induction} supplies the induction hypothesis
for $R_{L-1}$, and the third conclusion of
\cref{lemma:configuration-step-closure}, applied to
$R_{L-1}\rightarrow R_L$, supplies such a round $r_0$. Because the conclusion
of \cref{corollary:eventual-stable-suffix} remains true when its cutoff is
increased, replace $r_\star$ by $\max\{r_\star,r_0\}$. BRB Agreement for
every round-broadcaster slot from
\cref{lemma:brb-safety} supplies the Agreement premise of
\cref{corollary:round-advancement}. We now induct on rounds $r\geq r_0$.
The invariant is that every correct $R_L$-member that remains active
eventually enters round $r$, and at least $q_L$ active correct
$R_L$-members invoke the prescribed round-$r$ $\texttt{DAG-Append}$
operation and reach its $\texttt{BRB-Broadcast}$ call. The choice of $r_0$
gives the base case. For the induction step, apply
\cref{lemma:dag-quorum-dissemination} to those $q_L$ broadcasters. Every
correct $R_L$-member that remains active eventually DAG-delivers at least
$q_L$ admissible round-$r$ blocks in distinct broadcaster slots and hence
satisfies
\[
  |\mathsf{ParentBlocks}_i(R_L,r)|\geq q_L.
\]
By \cref{corollary:round-advancement}, every such process then completes
round $r$ and enters round $r+1$. At least $q_L$ correct members remain
active by the assumption in \cref{sec:problem-definition} that at least
$q_L$ correct members of $R_L$ remain active, and each invokes its prescribed
append upon entering round $r+1$. Its ordinary entry condition supplies the
same round-$r$ parents on which that append waits, so the append reaches
$\texttt{BRB-Broadcast}$ and establishes the invariant for the next round.
The induction first brings every correct $R_L$-member that remains active to
$r_\star$ and then proves the claim for every round $r\geq r_\star$.
\end{proof}

\subsubsection{A Common Core in Every Stable Wave}

Following the terminology of DAG-Rider~\cite{dagrider}, a set
$\mathcal K_w$ is a \emph{common core} of a stable wave $w$ if it contains
exactly $q_L$ certified blocks declared under $R_L$ in distinct
round-broadcaster slots of round $\mathsf{first}(w)$ and, for every correct
process $p_i$ and every block $Y\in DAG_i[\mathsf{last}(w)]$,
\[
  Y\leadsto_i B
  \qquad\text{for every }B\in\mathcal K_w.
\]
The next lemma proves that the blocks' fixed parent sets determine such a
common core before valid shares have been released by $f_L+1$ distinct
broadcasters.
The following corollary proves that every correct $R_L$-member that remains
active eventually DAG-delivers it.

\begin{lemma}[Before $f_L+1$ distinct broadcasters release valid coin
shares, fixed parent sets determine a common core]
\label{lemma:stable-wave-reachability}
Assume that every correct broadcaster that produces a singleton-enabled
last-round block releases its prescribed valid coin share in that block.
Also assume that a broadcaster that is uncorrupted when it releases such a
share fixes the block's parents before generating the share, and that those
parents cannot later change. Fix a stable wave $w$ under $R_L$ with
$\mathsf{first}(w)\geq r_\star$ from
\cref{corollary:eventual-stable-suffix}, and write
$r_k=\mathsf{first}(w)+k-1$ for $k\in\{1,2,3,4\}$.

Before valid coin shares for $(R_L,w)$ have been released by $f_L+1$
distinct broadcaster identities, the fixed parent sets determine a set
$\mathcal K_w$ of exactly
$q_L$ certified round-$r_1$ blocks declared under $R_L$ in distinct
round-broadcaster slots. Moreover, for every correct process $p_i$ and every
$Y\in DAG_i[r_4]$,
\[
  \mathcal K_w\subseteq DAG_i[r_1]
  \qquad\text{and}\qquad
  Y\leadsto_i B\quad\text{for every }B\in\mathcal K_w.
\]
Thus $\mathcal K_w$ is a common core of $w$.
\end{lemma}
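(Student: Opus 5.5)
This is the DAG-Rider common-core argument, transplanted to the stable suffix where \cref{corollary:eventual-stable-suffix} makes every round-$r_k$ certificate declared under $R_L$. Those certificates occupy only the $n_L$ slots of $\mathsf{members}(R_L)$, and every ordinary block has exactly $q_L$ parents in distinct slots. The transition-block exception cannot arise in these rounds: the voting check requires the voter's own enabled-configs value to be a pair, and every correct process has $\{R_L\}$ here.

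\emph{Step 1: the counting core.} Consider the round-$r_3$ blocks whose parents are fixed at the moment of interest. Let $\mathcal C$ be any set of $q_L$ certified round-$r_3$ blocks in distinct slots; Agreement makes these slot-unique across processes. Each has $q_L$ round-$r_2$ parents, and each of those has $q_L$ round-$r_1$ parents. Define the incidence count $c(B)$, for a round-$r_1$ block $B$, as the number of blocks of $\mathcal C$ that reach $B$ through round $r_2$. The pairs total $q_L\cdot q_L$ over at most $n_L$ slots. If fewer than $q_L$ round-$r_1$ blocks each had $c(B)\geq f_L+1$, the total would be at most $(q_L-1)q_L+(n_L-q_L+1)f_L$. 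This can be checked to be smaller than $q_L^2$ using $n_L\geq 3f_L+1$. Strictly, each $\mathcal C$-block reaches at least $q_L$ round-$r_1$ blocks, but counting one representative path per reached block gives the same bound.

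So at least $q_L$ round-$r_1$ blocks are reached by $f_L+1$ members of $\mathcal C$, and $\mathcal K_w$ is chosen as exactly $q_L$ of them. Which $\mathcal C$ to pick: I would take the round-$r_3$ parent set of the first correct leader-voting block whose share contributes. More robustly, I would take $\mathcal C$ determined by the fixed parent sets of those round-$r_4$ blocks produced before the $(f_L+1)$-st distinct share is released.

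\emph{Step 2: reachability from every round-$r_4$ block.} Any $Y\in DAG_i[r_4]$ is admissible and declared under $R_L$, so it has $q_L$ distinct round-$r_3$ parents. These meet the $q_L$ slots of $\mathcal C$ in at least $2q_L-n_L\geq f_L+1$ slots, and DAG-Consistency identifies the blocks in those slots. That is not yet enough on its own. The right argument is that $Y$'s $q_L$ round-$r_3$ parents, together with the $n_L-q_L$ slots outside them, and the $f_L+1$ $\mathcal C$-blocks reaching each core block $B$, must intersect by $q_L+(f_L+1)-n_L\geq 1$. Hence some parent of $Y$ lies in $\mathcal C$ and reaches $B$, giving $Y\leadsto_i B$. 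Causal Closure (\cref{lemma:local-dag-ancestor-closure}) then puts $B\in DAG_i[r_1]$.

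\emph{Main obstacle: timing.} The core must be defined before $f_L+1$ distinct shares are released, while Step 2 needs $\mathcal C$ to meet every later round-$r_4$ block's parent set. The counting in Step 1 is insensitive to which $\mathcal C$ is used, but Step 2 needs $\mathcal C$ to be a fixed set. The route I would try first is to avoid any dependence on future blocks by proving the stronger statement that $f_L+1$ $\mathcal C$-blocks reach each core block for every round-$r_3$ quorum. That statement fails in general, so instead I would fix $\mathcal C$ as the round-$r_3$ parents of the first correct round-$r_4$ block to fix its parents. Its parents are fixed before its share is generated, by the lemma's hypothesis; and fewer than $f_L+1$ distinct shares exist until at least one correct releaser has fixed its parents, because Byzantine broadcasters number at most $f_L$. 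Step 2 then applies to all later $Y$ with this $\mathcal C$.
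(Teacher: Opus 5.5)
Your overall architecture matches the paper's. You take $\mathcal C$ to be the fixed round-$r_3$ parent set of the first correct releaser's round-$r_4$ block, and argue it is fixed before the $(f_L+1)$-st distinct share because at most $f_L$ identities are ever corrupted. You then finish by intersecting $Y$'s $q_L$ parent slots with $f_L+1$ slots of $\mathcal C$. Your timing paragraph and Step~2 are fine.

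The gap is in Step~1. The inequality $(q_L-1)q_L+(n_L-q_L+1)f_L<q_L^2$ is equivalent to $f_L(f_L+1)<q_L$. It fails as soon as $f_L\geq2$ with $n_L=3f_L+1$: for $n_L=7$, $f_L=2$, $q_L=5$ the left side is $20+6=26\geq25$. Collapsing the two-hop paths into a single incidence count $c(B)$ over round-$r_1$ slots discards the structure you need. With $n_L=3f_L+1$ the averaging alone only guarantees $f_L+2$ round-$r_1$ blocks with $c(B)\geq f_L+1$, which is fewer than $q_L=2f_L+1$ once $f_L\geq2$. So as written you cannot conclude that $q_L$ core blocks each have $f_L+1$ witnesses in $\mathcal C$.

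The missing idea is to count one hop at a time, as the paper does. The $q_L^2$ pairs $(Z,H')$ with $Z\in\mathcal C$ and $H'$ a round-$r_2$ parent of $Z$ fall into at most $n_L$ round-$r_2$ slots. Since $q_L^2/n_L=n_L-2f_L+f_L^2/n_L\geq f_L+1$, some round-$r_2$ block $H$ is a parent of a set $\mathcal W\subseteq\mathcal C$ with $|\mathcal W|\geq f_L+1$. Choose $H$ deterministically, say the qualifying slot whose broadcaster appears first in $\Pi_L$; it is unique in its slot by BRB Agreement.

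Define $\mathcal K_w$ as the $q_L$ round-$r_1$ parents of $H$. Then the single witness set $\mathcal W$ serves every core block at once. Every $Y\in DAG_i[r_4]$ has a parent slot in $\mathcal W$, because $q_L+(f_L+1)\geq n_L+1$. Agreement identifies that parent with some $Z\in\mathcal W$, so $Y\leadsto_i Z\leadsto_i H\leadsto_i B$ for every $B\in\mathcal K_w$. With this replacement for Step~1, the rest of your argument goes through unchanged, including your per-block intersection in Step~2 if you prefer it.
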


\begin{proof}
By \cref{corollary:eventual-stable-suffix}, every certified block in
rounds $r_1,r_2,r_3,r_4$ is declared under $R_L$. Correct voters in
rounds $r_2,r_3,r_4$ use the ordinary parent-count rule, since their
enabled-configs value is $\{R_L\}$.
Consequently, every block in rounds $r_2,r_3,r_4$ that is DAG-delivered
by a correct process has exactly $q_L$ parents in distinct
round-broadcaster slots of the preceding round, with broadcaster
identities in $\mathsf{members}(R_L)$.

By \cref{corollary:stable-suffix-round-progress} and the assumption that
every correct broadcaster producing a singleton-enabled last-round block
releases its prescribed valid coin share in that block, at least
$q_L\geq f_L+1$ active correct broadcasters eventually release such shares.
Consider all valid-share release events for $(R_L,w)$, including events at
Byzantine broadcasters, in the order chosen by the adversarial scheduler.
Among the first $f_L+1$ distinct broadcasters,
at least one is uncorrupted when its release occurs. Let $X$ be the
round-$r_4$ block for the first such release. Because $X$'s broadcaster is
uncorrupted when it releases the share, the second assumption in the lemma
statement ensures that $X$'s $q_L$ round-$r_3$ parents are already fixed
before its share is generated and released. Immediately before that release,
fewer than
$f_L+1$ distinct valid shares have been released. Denote $X$'s fixed parent
set by $\mathcal P_X$. Each block in $\mathcal P_X$ is certified, so its
metadata, including its parents, cannot later change; the same holds
recursively for its certified ancestors.

There are $q_L^2$ pairs $(Z,H')$ such that $Z\in\mathcal P_X$ and $H'$ is a
round-$r_2$ parent of $Z$. These pairs involve at most $n_L$ round-$r_2$
broadcaster slots. Therefore, some round-$r_2$ slot contains a parent of at
least $f_L+1$ blocks in $\mathcal P_X$, because
\[
  \frac{q_L^2}{n_L}
  =n_L-2f_L+\frac{f_L^2}{n_L}
  \geq f_L+1,
\]
where the final inequality follows from $n_L\geq3f_L+1$. Among the
qualifying slots, choose the one whose broadcaster appears first in the
ordered list $\Pi_L$, and let $H$ be its unique certified block. This choice
depends only on the already fixed set $\mathcal P_X$. Let
\[
  \mathcal W
  =\{Z\in\mathcal P_X: H\text{ is a parent of }Z\}.
\]
Uniqueness follows from BRB Agreement for every round-broadcaster slot in
\cref{lemma:brb-safety}, and $|\mathcal W|\geq f_L+1$. Define
$\mathcal K_w$ to be the $q_L$
round-$r_1$ parents of $H$. Thus $\mathcal K_w$ is fixed before the
$(f_L+1)$-st distinct valid share for $(R_L,w)$ is released.

Now consider any $Y\in DAG_i[r_4]$ at a correct process $p_i$. Its $q_L$
round-$r_3$ parent slots intersect the at least $f_L+1$ slots occupied by
$\mathcal W$, since
\[
  q_L+|\mathcal W|
  \geq q_L+(f_L+1)=n_L+1,
\]
and each set uses distinct broadcaster slots. Let $Z\in\mathcal W$ occupy
a slot that also occurs among the parent slots of $Y$. The parent of $Y$ in
that slot and $Z$ are in the same round-broadcaster slot, so BRB Agreement
implies that they are the same block. Since $H$ is a parent of $Z$, for
every $B\in\mathcal K_w$,
\[
  Y\leadsto_i Z\leadsto_i H\leadsto_i B.
\]
Causal Closure places $Z$, $H$, and every block in $\mathcal K_w$ in
$DAG_i$, completing the proof.
\end{proof}


\begin{corollary}[Every correct $R_L$-member that remains active eventually
DAG-delivers the same common core and at least $q_L$ distinct leader-voting
blocks that reach every block in it]
\label{corollary:stable-wave-availability}
Fix a stable wave $w$ satisfying the premises of
\cref{lemma:stable-wave-reachability}, let
$r_4=\mathsf{last}(w)$, and let $\mathcal K_w$ be the common core determined
by that lemma.

Every correct process
$p_i\in\mathsf{members}(R_L)$ that remains active eventually DAG-delivers
every block in $\mathcal K_w$ and at least $q_L$ distinct $R_L$-typed
leader-voting blocks $Y$ such that
\[
  Y\leadsto_i B
  \qquad\text{for every }B\in\mathcal K_w.
\]
Thus the same common core $\mathcal K_w$ is eventually present in the DAG of
every correct $R_L$-member that remains active.
\end{corollary}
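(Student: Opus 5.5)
The plan is to obtain both conclusions from results already established: \cref{corollary:stable-suffix-round-progress} and \cref{lemma:dag-quorum-dissemination} for dissemination of round-$r_4$ blocks, and \cref{lemma:stable-wave-reachability} for reachability of the core. Fix a correct $p_i\in\mathsf{members}(R_L)$ that remains active.

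First I would show that $p_i$ eventually stores at least $q_L$ distinct $R_L$-typed leader-voting blocks for $w$. By \cref{corollary:stable-suffix-round-progress}, every correct $R_L$-member that remains active enters round $r_4$. At least $q_L$ of them invoke the prescribed round-$r_4$ $\texttt{DAG-Append}$ and reach $\texttt{BRB-Broadcast}$, since entry into round $r_4$ supplies the awaited parents. Because $w$ is stable, each such process has $\mathit{enabled\text{-}configs}[r_4]=\{R_L\}$ and declares $R_L$. By the append procedure it therefore sets a valid $R_L$-scoped coin share for $w$ after fixing $h$. Applying \cref{lemma:dag-quorum-dissemination}, using BRB Agreement from \cref{lemma:brb-safety}, $p_i$ eventually DAG-delivers these $q_L$ blocks in distinct slots. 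The removed-process caveats in that lemma are vacuous here, because no transition occurs after $r_\star$ and $p_i$ never leaves. Each delivered block satisfies $\mathsf{local}(\mu)=R_L$, $\mathsf{enabled}(\mu)=\{R_L\}$, has a broadcaster in $\mathsf{members}(R_L)$, and carries a valid share. Hence each lies in $\mathsf{VotingBlocks}_i(R_L,w)$, and these are $q_L$ distinct $R_L$-typed leader-voting blocks.

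Second, I would show that each such $Y$ reaches every block in $\mathcal K_w$. Since $Y\in DAG_i[r_4]$, \cref{lemma:stable-wave-reachability} gives $Y\leadsto_i B$ for every $B\in\mathcal K_w$, together with $\mathcal K_w\subseteq DAG_i[r_1]$. Causal Closure, via \cref{lemma:local-dag-ancestor-closure}, guarantees that these blocks were inserted before $Y$, so $p_i$ DAG-delivers every block of the core.

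Finally, $\mathcal K_w$ is a single set, fixed by $X$'s parent set independently of $p_i$. By DAG-Consistency the same blocks appear at every correct member, which gives the last sentence of the statement.

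The main obstacle is the first step. It must be checked that the premise of \cref{lemma:stable-wave-reachability} (valid shares are released by correct singleton-enabled last-round producers) matches the actual append code. It must also be checked that the leader-voting blocks are exactly those produced by correct broadcasters, so that Byzantine blocks are not needed for the count. I would handle this by counting only the correct broadcasters' blocks, which by themselves reach $q_L$.
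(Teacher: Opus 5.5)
Your proposal is correct and follows essentially the paper's proof. You bring $q_L$ correct, active $R_L$-members to $\mathsf{last}(w)$ via \cref{corollary:stable-suffix-round-progress}, observe that their blocks are $R_L$-typed leader-voting blocks, disseminate them using BRB Agreement from \cref{lemma:brb-safety}, and finish with \cref{lemma:stable-wave-reachability} and Causal Closure. The only difference is that the paper applies \cref{lemma:correct-block-dissemination} directly to each selected broadcaster, which delivers exactly those correct blocks; this is what \cref{lemma:dag-quorum-dissemination} does internally, and invoking it directly avoids relying on that lemma's stated conclusion, which only promises some $q_L$ admissible blocks.
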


\begin{proof}
Choose any $q_L$ correct $R_L$-members that remain active; their existence is
guaranteed by the assumption in \cref{sec:problem-definition} that at least
$q_L$ correct members of $R_L$ remain active.
\Cref{corollary:stable-suffix-round-progress} brings them to round $r_4$.
Upon entering that round, each selected process produces its prescribed
block. Here $r_4=\mathsf{last}(w)$, and stability gives
$\mathsf{local}(\mu)=R_L$ and $\mathsf{enabled}(\mu)=\{R_L\}$. The first
assumption of \cref{lemma:stable-wave-reachability} says that every correct
broadcaster producing such a block releases its prescribed valid
$R_L$-scoped coin share in that block. Thus the resulting $q_L$ blocks are
$R_L$-typed leader-voting blocks in distinct slots. By
\cref{lemma:known-coverage}, for every selected broadcaster $p_k$, every
correct $R_L$-member belongs to $\mathit{known}_k$. The required broadcasters,
processes that send BRB echoes and votes, and recipients remain active and
continue the prescribed tasks in the stable suffix. For each application,
take $\mathcal G$ to be these selected $q_L$ broadcasters. Therefore
BRB Agreement through round $r_4$ follows from \cref{lemma:brb-safety}.
Then \cref{lemma:correct-block-dissemination}, applied to each selected
broadcaster, makes every correct $R_L$-member that remains active eventually
DAG-deliver all $q_L$ blocks. The conclusion now
follows from \cref{lemma:stable-wave-reachability} and Causal Closure.
\end{proof}

\subsubsection{Direct-Anchor Recurrence and Progress}

\paragraph*{Conditional Probability Requirement.}
For sufficiently large security parameter $\lambda$, put
\[
  \varepsilon_L
  =\frac{q_L}{n_L}-\operatorname{negl}(\lambda)>0.
\]
Since $q_L=n_L-f_L$ and $n_L\geq3f_L+1$, this is at least
$2/3-\operatorname{negl}(\lambda)$.

To prove Probabilistic Progress with probability one, we make the following
additional requirement. For every stable wave $w$ satisfying the premises of
\cref{lemma:stable-wave-reachability}, let $\mathcal K_w$ be the common core
determined by that lemma. Consider any finite execution prefix at whose end
$\mathcal K_w$ has been fixed but fewer than $f_L+1$ distinct valid shares
for $(R_L,w)$ have been released. For every adversary view occurring with
positive probability at the end of such a prefix, conditioned on that view,
the probability that the coin selects the broadcaster of a block in
$\mathcal K_w$ is at least $\varepsilon_L$.

The construction in \cref{lemma:stable-wave-reachability} determines
$\mathcal K_w$ from parent sets already fixed at that point. The adversary's
view includes all earlier scheduling decisions and coin outcomes. Thus the
same positive lower bound holds after every possible prior history,
including any pattern of failures in earlier stable waves; coin outcomes in
different waves need not be independent.

For a concrete computational implementation, Unpredictability instead
bounds the overall failure probability across any polynomially bounded
number of waves, up to negligible error. It does not by itself imply this
conditional bound after every finite execution prefix.

\begin{proposition}[With probability at least $\varepsilon_L$, every correct
$R_L$-member that remains active identifies a direct anchor in wave $w$ or
later]
\label{proposition:stable-wave-direct-anchor}
Let $w$ be a stable wave satisfying the premises of
\cref{lemma:stable-wave-reachability}, and let $\mathcal K_w$ be the common
core determined by that lemma. Suppose that, conditioned on the adversary's
current view after any finite execution prefix ending after $\mathcal K_w$ is fixed
and before the $(f_L+1)$-st distinct valid share for $(R_L,w)$ is released,
the coin selects the broadcaster of a block in $\mathcal K_w$ with
probability at least $\varepsilon_L$. Then, after every such prefix, with
conditional probability at least $\varepsilon_L$, every correct process
$p_i\in\mathsf{members}(R_L)$ that remains active eventually identifies a
direct anchor in wave $w$ or in a later wave.
\end{proposition}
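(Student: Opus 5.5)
The plan is to condition on the event $\mathcal E_w$ that the reconstructed $(R_L,w)$ coin selects the broadcaster $p_j$ of some block $B_j\in\mathcal K_w$. By hypothesis, after every qualifying prefix $\mathcal E_w$ has conditional probability at least $\varepsilon_L$. It then suffices to show deterministically that on $\mathcal E_w$ every correct $R_L$-member that remains active eventually identifies a direct anchor in wave $w$ or later.

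Fix such a process $p_i$ and assume $\mathcal E_w$. I would argue in four steps.

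First, by \cref{corollary:stable-wave-availability}, $p_i$ eventually DAG-delivers every block of $\mathcal K_w$. It also delivers a set $\mathcal S$ of at least $q_L\geq f_L+1$ distinct $R_L$-typed leader-voting blocks for $w$, each reaching every block of $\mathcal K_w$.

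Second, $\mathsf{Shares}_i(R_L,w)$ then has at least $f_L+1$ entries, so $\texttt{WaveLeader}_i(R_L,w)$ reconstructs the coin. Threshold-signature uniqueness gives the same selected member $p_j$ as the coin whose distribution the hypothesis controls. Since $B_j$ is certified, declared under $R_L$ and stored at $p_i$, and DAG-Consistency fixes the slot $(\mathsf{first}(w),p_j)$, $\texttt{WaveLeader}_i(R_L,w)=B_j$.

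Third, every $X\in\mathcal S$ satisfies $X\leadsto_i B_j$, so $\mathsf{Support}_i(B_j,R_L,w)\geq q_L$ and $B_j\in A$. By \cref{corollary:eventual-stable-suffix}, every certified last-round block of $w$ is declared under $R_L$. Hence $\mathsf{WaveConfigs}_i(w)=\{R_L\}$ and $A=\{B_j\}$.

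Fourth, the rechecking trigger in \cref{alg:leader-election_2} fires on every delivery of a new voting block for $w$. So once the last block of $\mathcal S$ is delivered, $\texttt{TryElect}_i(w)$ runs with $A=\{B_j\}$. At that point either $w>\mathit{processed\text{-}wave}_i$, and $p_i$ identifies $B_j$ as a direct anchor and calls $\texttt{Elect}_i$, or $w\leq\mathit{processed\text{-}wave}_i$. In the second case $p_i$ has already completed $\texttt{Elect}_i$ for some wave $k\geq w$, which is invoked only after identifying a direct anchor $D_k$. In both cases $p_i$ identifies a direct anchor in wave $w$ or later.

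The main obstacle is the measurability and conditioning bookkeeping. The hypothesis concerns the selected broadcaster, while the conclusion concerns eventual local events at all processes. The link is that, once conditioned on $\mathcal E_w$, the argument above holds on every continuation of the prefix regardless of scheduling, because it uses only the deterministic liveness results of \cref{corollary:stable-wave-availability,corollary:eventual-stable-suffix}. The event "$\mathcal E_w$ and all such $p_i$ succeed" therefore contains $\mathcal E_w$ up to a null set, and its conditional probability is at least $\varepsilon_L$. I would also note that the property of being fixed before the $(f_L+1)$-st share is inherited from \cref{lemma:stable-wave-reachability}, so $\mathcal K_w$ does not depend on the coin value.
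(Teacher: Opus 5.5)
Your proposal is correct and follows the paper's proof. Both condition on the coin selecting a broadcaster in $\mathcal K_w$, use \cref{corollary:stable-wave-availability} to deliver the core and $q_L$ supporting leader-voting blocks, reconstruct the unique coin so that $\texttt{WaveLeader}_i(R_L,w)$ returns the selected core block, and rely on the recheck trigger, with an anchor already identified in a later wave as the only alternative. Your observation that \cref{corollary:eventual-stable-suffix} forces $\mathsf{WaveConfigs}_i(w)=\{R_L\}$, so the candidate set in $\texttt{TryElect}_i$ is the required singleton, makes explicit a step the paper's proof leaves implicit.
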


\begin{proof}
By \cref{lemma:stable-wave-reachability}, $\mathcal K_w$ occupies exactly
$q_L$ of the $n_L$ round-broadcaster slots and is fixed before the
$(f_L+1)$-st distinct valid share for $(R_L,w)$ is released. By the
probability bound in the proposition statement, the coin selects the
broadcaster of some $B\in\mathcal K_w$ with probability at least
$\varepsilon_L$.

On this event, \cref{corollary:stable-wave-availability} makes every correct
$R_L$-member $p_i$ that remains active DAG-deliver $B$ and at least
$q_L\geq f_L+1$ leader-voting blocks carrying valid shares and reaching
$B$. From any $f_L+1$ of these valid shares, $p_i$ reconstructs the coin for
$(R_L,w)$. Its output is $B$'s broadcaster, and $p_i$ has DAG-delivered $B$,
so $\texttt{WaveLeader}_i(R_L,w)$ returns $B$. When the last of these $q_L$
leader-voting blocks arrives, the protocol rechecks $w$ and identifies $B$
as its direct anchor, unless $p_i$ has already identified an anchor in a
later wave. Either case gives the conclusion.
\end{proof}

\begin{lemma}[Every correct $R_L$-member that remains active identifies
direct anchors in arbitrarily late stable waves with probability one]
\label{lemma:infinitely-many-direct-anchors}
Let $R_L$ and $r_\star$ be as in
\cref{corollary:eventual-stable-suffix}. Assume that the premises of
\cref{lemma:stable-wave-reachability} hold for every stable wave $w$ under
$R_L$ with $\mathsf{first}(w)\geq r_\star$. For each such wave, let
$\mathcal K_w$ be the common core determined by that lemma. Also assume that,
for every such wave, conditioned on the adversary's current view after any
finite execution prefix ending after $\mathcal K_w$ is fixed and before the
$(f_L+1)$-st distinct valid share for $(R_L,w)$ is released, the coin selects
the broadcaster of a block in $\mathcal K_w$ with probability at least
$\varepsilon_L$. Then,
over the configuration-scoped perfect coins, with probability one, for every
correct process
$p_i\in\mathsf{members}(R_L)$ that remains active and every wave index
$b\geq1$, there exists a stable wave $h\geq b$ in which $p_i$ identifies a
direct anchor.
\end{lemma}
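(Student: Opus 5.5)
The plan is a standard Borel–Cantelli-style argument over a sequence of successive stable waves. I will use the prefix-conditional bound of \cref{proposition:stable-wave-direct-anchor} in place of independence. First fix a correct $p_i\in\mathsf{members}(R_L)$ that remains active and a wave index $b$. By \cref{corollary:stable-suffix-round-progress}, every wave $w$ with $\mathsf{first}(w)\geq\max\{r_\star,\mathsf{first}(b)\}$ is stable, and $p_i$ eventually completes all of its rounds. Let $w_0<w_1<w_2<\cdots$ enumerate these waves. Let $A_k$ be the event that every correct $R_L$-member that remains active eventually identifies a direct anchor in wave $w_k$ or later. It then suffices to show that $\Pr[\bigcap_{k<m}\overline{A_k}]\leq(1-\varepsilon_L)^m$ for every $m$. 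Letting $m\to\infty$ gives probability zero that no anchor in a wave $\geq b$ is ever identified. Intersecting over the countably many $b$ and the finitely many correct members then yields the statement with probability one.

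To obtain the geometric bound, I would define for each $k$ the stopping point $\tau_k$: the finite execution prefix ending immediately after $\mathcal K_{w_k}$ is fixed and before the $(f_L+1)$-st distinct valid share for $(R_L,w_k)$ is released. This point exists by \cref{lemma:stable-wave-reachability}, together with the fact that at least $q_L$ correct shares are eventually released (\cref{corollary:stable-wave-availability}). Since $w_{k+1}>w_k$, the $(f_L+1)$-st share for $w_{k+1}$ comes from a round after $\mathsf{last}(w_k)$. The first uncorrupted release for $w_{k+1}$ requires its broadcaster to have completed round $\mathsf{last}(w_k)$, so I would argue that the $(R_L,w_k)$ coin outcome is already determined in the adversary's view at $\tau_{k+1}$. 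This is the step I expect to be delicate. Byzantine shares for $w_{k+1}$ could be released early, but they number at most $f_L$, so an uncorrupted release is still needed. I would try to show that by then $f_L+1$ shares for $w_k$ are recoverable, namely from the fixed parents $\mathcal P_X$ of the release block $X$ at round $\mathsf{last}(w_{k+1})-1>\mathsf{last}(w_k)$. If that fails, I would restrict to every other wave, $w_{2k}$, which gives ample separation.

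The bound itself comes from conditioning at $\tau_k$. Applying \cref{proposition:stable-wave-direct-anchor} at the prefix $\tau_k$, conditioned on any positive-probability adversary view there (which includes the outcomes for $w_0,\dots,w_{k-1}$), gives $\Pr[A_k\mid\text{view at }\tau_k]\geq\varepsilon_L$. The event $\bigcap_{j<k}\overline{A_j}$ is not literally measurable with respect to the view at $\tau_k$, since "eventually identifies" refers to the future. I would therefore replace $\overline{A_j}$ with the measurable event $C_j$ that the $(R_L,w_j)$ coin does not select a broadcaster in $\mathcal K_{w_j}$. On the complement of $C_j$, $A_j$ holds by the proof of \cref{proposition:stable-wave-direct-anchor}, so $\overline{A_j}\subseteq C_j$. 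The events $C_j$ for $j<k$ are determined by the view at $\tau_k$. The tower property then gives $\Pr[\bigcap_{j\leq k}C_j]\leq(1-\varepsilon_L)\Pr[\bigcap_{j<k}C_j]$, and induction yields the bound. Finally, an anchor identified in a wave $\geq w_k$ is itself in a stable wave $h\geq b$. By the remark after \cref{lemma:unique-accepted-leader}, the direct anchor for that wave is unique, which completes the argument.
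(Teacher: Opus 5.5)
Your proposal is correct and follows essentially the paper's argument. Both proofs derive a chain-rule bound $(1-\varepsilon_L)^N$ over successive stable waves from the conditional lower bound on the coin-selection events (your complements of the $C_j$), then use \cref{proposition:stable-wave-direct-anchor} to turn a core selection into direct-anchor identification, and finish with a countable union over $b$. The one difference is that you explicitly justify why earlier coin outcomes are already fixed in the adversary's view at each conditioning point: the $q_L$-parent causal history of the releasing block reaches $f_L+1$ correct shares of the earlier wave. The paper simply asserts this, so your argument is a welcome refinement, and your fallback of using every other wave is unnecessary.
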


\begin{proof}
By \cref{corollary:stable-suffix-round-progress}, every correct
$R_L$-member that remains active advances through all rounds of infinitely
many stable waves. In every such wave $w$, at least $q_L\geq f_L+1$ correct
broadcasters release valid shares, which are sufficient to reconstruct the
coin for $(R_L,w)$.
Fix any starting wave index $b$. When each common core is determined,
the adversary's view includes all preceding coin outcomes. Averaging the
conditional lower bound over views in which the coins in earlier stable
waves with indices at least $b$ did not select a broadcaster from their
respective common cores shows that the next coin selects such a broadcaster
with conditional probability at least $\varepsilon_L>0$.
Thus the conditional-probability chain rule gives, for every $N\geq1$,
\[
  \Pr\!\left[
    \begin{array}{c}
      \text{in none of the first $N$ stable waves $h\geq b$ does the coin}\\
      \text{for $(R_L,h)$ select the broadcaster of a block in $\mathcal K_h$}
    \end{array}
  \right]
  \leq(1-\varepsilon_L)^N.
\]
As $N$ tends to infinity, the right-hand side tends to zero. Hence the
probability that no stable wave $h\geq b$ has this property is zero.
Taking the countable union over starting wave indices $b$ shows that,
with probability one, the coin selects a broadcaster from the common core
in arbitrarily late stable waves. Whenever this occurs in a wave $h$,
\cref{proposition:stable-wave-direct-anchor} shows that every correct
$R_L$-member that remains active identifies a direct anchor in wave $h$
or a later wave.
\end{proof}




\begin{theorem}[Probabilistic Progress]
\label{theorem:leader-probabilistic-progress}
Let $R_L$ and $r_\star$ be as in
\cref{corollary:eventual-stable-suffix}, and suppose that the two premises
of \cref{lemma:stable-wave-reachability} hold in every stable wave $w$ with
$\mathsf{first}(w)\geq r_\star$. Also suppose that the conditional
probability premise of \cref{lemma:infinitely-many-direct-anchors} holds for
every such wave. With probability one, every correct $R_L$-member that
remains active finalizes infinitely many blocks.
\end{theorem}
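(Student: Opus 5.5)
The plan is to combine \cref{lemma:infinitely-many-direct-anchors} with the prefix structure from \cref{lemma:canonical-local-prefixes,lemma:nested-leader-scans}.

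\textbf{Step 1: infinitely many direct anchors.} Fix a correct $p_i\in\mathsf{members}(R_L)$ that remains active. The premises assumed in the theorem are exactly those of \cref{lemma:infinitely-many-direct-anchors}. That lemma gives a probability-one event $\Omega^\star$. On $\Omega^\star$, for every $b\geq1$ there is a stable wave $h\geq b$ in which $p_i$ identifies a direct anchor $D_h$. All further reasoning is deterministic on $\Omega^\star$.

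\textbf{Step 2: identification triggers an election.} When $p_i$ identifies $D_h$ inside $\texttt{TryElect}_i(h)$, the set $A$ is a singleton. A second sufficiently supported candidate would be a second direct anchor for $h$, which \cref{lemma:unique-accepted-leader} rules out. So either $h\leq\mathit{processed\text{-}wave}_i$ already holds, or $\texttt{Elect}_i(D_h,h)$ runs.

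In both cases, $\mathit{processed\text{-}wave}_i\geq h$ afterwards. This is because $\mathit{processed\text{-}wave}_i$ is only ever set to the argument $k$ of a completed $\texttt{Elect}_i(D_k,k)$. Since $h$ can be taken arbitrarily large, the greatest completed election wave $v$ at $p_i$ grows without bound.

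I also need that $p_i$ keeps running its leader-election task forever. This holds because $p_i\in\mathsf{members}(R_L)$ is never removed: there is no transition after $R_L$.

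\textbf{Step 3: the finalized sequence grows without bound.} By \cref{lemma:canonical-local-prefixes}, after each election $\mathit{finalized\text{-}values}_i=\Phi_v$, where $v$ is the greatest completed election wave. Each $\Phi_v$ ends with $D_v$, a block from round $\mathsf{first}(v)$. By \cref{corollary:leader-integrity}, $\Phi_v$ contains at most one block per wave and no repeated block.

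Taking $v$ along an unbounded sequence of completed waves $v_1<v_2<\cdots$ gives sequences $\Phi_{v_1}\preceq\Phi_{v_2}\preceq\cdots$, nested by \cref{lemma:nested-leader-scans}. Since $D_{v_{n+1}}$ belongs to wave $v_{n+1}$, it is not in $\Phi_{v_n}$, whose blocks all come from waves at most $v_n$. So each step is strict.

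Hence $\mathit{finalized\text{-}values}_i$ has unbounded length, and $p_i$ finalizes infinitely many blocks. There are finitely many such $p_i$, so intersecting their probability-one events keeps probability one.

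\textbf{Main obstacle.} The delicate point is Step 2, namely that identifying a direct anchor actually leads to an $\texttt{Elect}$ call, or to a later wave having already been processed.

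The worry is a wave $h$ where $A$ is a singleton at the moment of identification, yet $\texttt{Elect}$ is skipped. I would handle this through uniqueness of the supported candidate (\cref{lemma:single-supported-vote-type,lemma:unique-accepted-leader}). Whenever \cref{proposition:stable-wave-direct-anchor} guarantees identification, the protocol's recheck on each new leader-voting block invokes $\texttt{TryElect}_i(h)$. That invocation calls $\texttt{Elect}_i$ unless $\mathit{processed\text{-}wave}_i\geq h$ already holds, and either outcome suffices.
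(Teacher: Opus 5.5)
Your proposal is correct and follows essentially the same route as the paper. The paper also obtains arbitrarily late direct anchors from \cref{lemma:infinitely-many-direct-anchors}, observes that each identification either runs $\texttt{Elect}_i$ or finds a later wave already processed, and uses the nested prefixes $\Phi_h\preceq\Phi_s$ to conclude that the finalized sequence grows without bound. Your bookkeeping via the monotone $\mathit{processed\text{-}wave}_i$ and \cref{lemma:canonical-local-prefixes}, together with your strictness argument, simply spells out what the paper compresses into the remark that direct anchors in distinct waves are distinct blocks.
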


\begin{proof}
Fix a correct $R_L$-member $p_i$ that remains active and any wave index $b$.
By \cref{lemma:infinitely-many-direct-anchors}, $p_i$ eventually identifies
a direct anchor in some wave $h\geq b$. Its serialized
$\texttt{Elect}_i$ invocation either changes its finalized leader sequence to
$\Phi_h$, which contains that direct anchor, or an invocation for some later
wave $s\geq h$ has already changed the sequence to $\Phi_s$. In the latter
case, $\Phi_h\preceq\Phi_s$ by
\cref{lemma:nested-leader-scans}, so the sequence still contains the
wave-$h$ anchor. Because $b$ is arbitrary and direct anchors in distinct
waves are distinct blocks, the lengths of $p_i$'s finalized leader sequences
are unbounded. Hence $p_i$ finalizes infinitely many blocks with probability
one.
\end{proof}

\subsection{Dynamic DAG Byzantine Atomic Broadcast}


The probability-one Probabilistic Progress conclusion in the following
restatement uses the conditional probability bound in
\cref{sec:probabilistic-progress}. The deterministic Integrity and Total
Order conclusions do not use the Unpredictability or Fairness properties of
the configuration-scoped perfect coin.
\dynamicdagbabcorrectness*

\begin{proof}
We verify each component of the abstraction:

\emph{Underlying Dynamic DAG-structured Byzantine Broadcast.}
By \cref{lemma:dynamic-dag-structured-broadcast}, the BRB, DAG,
state-transfer, and reconfiguration protocols satisfy
\cref{def:dynamic-dag-structured-broadcast}, including completion of
every valid transition.

\emph{Integrity.}
By \cref{corollary:leader-integrity}, a correct process finalizes at
most one leader from each wave, never inserts the same block twice in
its finalized leader sequence, and finalizes only blocks already inserted in
its local DAG. These are exactly the Integrity clauses of
\cref{def:dynamic-dag-bab}.

\emph{Total Order.}
The canonical leader prefixes are nested by
\cref{lemma:nested-leader-scans,lemma:canonical-local-prefixes}; hence
\cref{theorem:leader-total-order} gives Total Order throughout every
transition, independently of eventual stability.

\emph{Probabilistic Progress.}
By \cref{corollary:eventual-stable-suffix,%
corollary:stable-suffix-round-progress}, the execution eventually has
infinitely many stable waves under $R_L$. As required by
\cref{sec:our-solution}, every correct broadcaster of a singleton-enabled
last-round block releases its prescribed valid coin share, and every
broadcaster that is uncorrupted when releasing such a share fixes the
block's parents before generating and releasing the share. These are
precisely the two premises of \cref{lemma:stable-wave-reachability}.
The conditional probability bound assumed by the theorem supplies the
remaining premise of \cref{lemma:infinitely-many-direct-anchors}. Hence
\cref{theorem:leader-probabilistic-progress} gives Probabilistic Progress as
defined in \cref{def:dynamic-dag-bab}.
\end{proof}